\newcommand{\Expe}{\mathbb{E}}
\newcommand{\Prob}{\mathbb{P}}
\newcommand{\Var}{\mathrm{Var}}
\newcommand{\R}{\mathbb{R}}
\newcommand{\C}{\mathbb{C}}
\newcommand{\N}{\mathbb{N}}
\newcommand{\indicator}[1]{\mathds{1}_{#1}}
\newcommand{\suchthat}{\; \vert \;}
\newcommand{\widesim}[2][1.5]{\mathrel{\underset{#2}{\scalebox{#1}[1]{$\sim$}}}}
\DeclareMathOperator*{\argmax}{arg\,max}
\DeclareMathOperator*{\argmin}{arg\,min}
\newcommand{\Pcorr}{\mathbb{P}^{\,\text{corr}}}
\newcommand{\Pcorrd}{\mathbb{P}^{\,\text{corr}}_d}
\newcommand{\Pcorrdplus}{\mathbb{P}^{\,\text{corr}}_{d+1}}
\newcommand{\Pcorrdminus}{\mathbb{P}^{\,\text{corr}}_{d-1}}
\newcommand{\Pind}{\mathbb{P}^{\,\text{ind}}}
\newcommand{\Pindd}{\mathbb{P}^{\,\text{ind}}_d}
\newcommand{\Pinddplus}{\mathbb{P}^{\,\text{ind}}_{d+1}}
\newcommand{\Pinddminus}{\mathbb{P}^{\,\text{ind}}_{d-1}}
\newcommand{\Pdisj}{\mathbb{P}^{\,\text{dis}}}
\newcommand{\Pdisjd}{\mathbb{P}^{\,\text{dis}}_d}
\newcommand{\Pdisjdminus}{\mathbb{P}^{\,\text{dis}}_{d-1}}
\newcommand{\Ecorrd}{\mathbb{E}^{\,\text{corr}}_d}
\newcommand{\Ecorrdplus}{\mathbb{E}^{\,\text{corr}}_{d+1}}
\newcommand{\Ecorr}{\mathbb{E}^{\,\text{corr}}}
\newcommand{\Eindd}{\mathbb{E}^{\,\text{ind}}_d}
\newcommand{\Einddplus}{\mathbb{E}^{\,\text{ind}}_{d+1}}
\newcommand{\Einddminus}{\mathbb{E}^{\,\text{ind}}_{d-1}}
\newcommand{\Eind}{\mathbb{E}^{\,\text{ind}}}
\newcommand{\Edisjdminus}{\mathbb{E}^{\, \text{dis}}_{d-1}}
\theoremstyle{plain}
	\newtheorem{theorem}{Theorem}[section]
	\newtheorem{lemma}[theorem]{Lemma}
	\newtheorem{corollary}[theorem]{Corollary}
	\newtheorem{proposition}[theorem]{Proposition}
\theoremstyle{definition}
	\newtheorem{definition}{Definition}[section]
	\newtheorem{remark}[theorem]{Remark}
\numberwithin{equation}{section}
\begin{document}

\title{Asymmetric graph alignment and the phase transition for asymmetric tree correlation testing}
\titlemark{Asymmetric graph alignment and tree correlation testing}



\emsauthor{1}{
	\givenname{Jakob}
	\surname{Maier}
	\mrid{}
	\orcid{0000-0002-7697-8457}}{J.~Maier}
\emsauthor{2}{
	\givenname{Laurent}
	\surname{Massoulié}
	\mrid{354412}
	\orcid{0000-0001-7263-0069}}{L.~Massoulié}

\Emsaffil{1}{
	\department{INRIA Paris}
	\organisation{ENS - PSL Research University}
	\rorid{05a0dhs15}
	\address{48 Rue Barrault}
	\zip{75013}
	\city{Paris}
	\country{France}
	\affemail{jakob.maier@inria.fr}
}
\Emsaffil{2}{
	\department{INRIA Paris}
	\organisation{ENS - PSL Research University}
	\rorid{05a0dhs15}
	\address{48 Rue Barrault}
	\zip{75013}
	\city{Paris}
	\country{France}
	\affemail{laurent.massoulie@inria.fr}
}
\classification[62B10]{05C80}

\keywords{Graph alignment, Erdős--Rényi graphs, Subgraph isomorphism problem}

\begin{abstract}
	Graph alignment --- identifying node correspondences between two graphs --- is a fundamental problem with applications in network analysis, biology, and privacy research. While substantial progress has been made in aligning correlated Erdős--Rényi graphs under symmetric settings, real-world networks often exhibit asymmetry in both node numbers and edge densities. In this work, we introduce a novel framework for asymmetric correlated Erdős--Rényi graphs, generalizing existing models to account for these asymmetries. We conduct a rigorous theoretical analysis of graph alignment in the sparse regime, where local neighborhoods exhibit tree-like structures. Our approach leverages tree correlation testing as the central tool in our polynomial-time algorithm, \textsc{MPAlign}, which achieves one-sided partial alignment under certain conditions.
	
	A key contribution of our work is characterizing these conditions under which asymmetric tree correlation testing is feasible:  If two correlated graphs $G$ and $G'$ have average degrees $\lambda s$ and $\lambda s'$ respectively, where $\lambda$ is their common density and $s,s'$ are marginal correlation parameters, their tree neighborhoods can be aligned if $ss' > \alpha$, where $\alpha$ denotes Otter's constant and  $\lambda$ is supposed large enough. The feasibility of this tree comparison problem undergoes a sharp phase transition since $ss' \leq \alpha$ implies its impossibility. These new results on tree correlation testing allow us to solve a class of random subgraph isomorphism problems, resolving an open problem in the field.
\end{abstract}

\maketitle

\tableofcontents


\section{Introduction}

For a pair of simple graphs $(G, G')$ with randomly labeled node sets $(V, V')$ there are several prominent questions about the joint structure of $G$ and $G'$: 
\begin{itemize}
	\item Are they structurally identical in the sense that there is a bijective function between $V$ and $V'$ mapping edges to edges and non-edges to non-edges?\\This is the \textbf{graph isomorphism problem} (GIP).
	\item Can $G$ be embedded as a subgraph of $G'$, meaning that there is an injective mapping $V \hookrightarrow V'$  preserving the presence of edges? \\This is the \textbf{subgraph isomorphism problem} (SIP).
	\item Without any assumption on the graphs, which one-to-one mapping between their nodes maximizes edge overlap? \\This is the \textbf{graph alignment problem} (also called graph matching problem) formulated as an instance of the quadratic assignment problem (QAP):  Denoting the adjacency matrices of $G$ and $G'$ as $A$ and $A'$, this problem reads
	\begin{equation}\label{QAP}
		\sigma_\text{\tiny{QAP}} \in \argmax_{\substack{\sigma : V \rightarrow V',\\ \sigma \text{ injective}}} \, \, \, \sum_{i, j \in V} A_{i, j} A'_{\sigma(i), \sigma(j)}. \tag{{QAP}}
	\end{equation}
\end{itemize}
These three problems are increasingly difficult since solving one yields a solution for the ones listed above. While graph isomorphisms can be found in quasi-polynomial time \cite{babai2018group}, the subgraph isomorphism problem is NP-hard since it can detect a Hamiltonian path by taking $G$ to be the cycle of length $|V'|$. Even more difficult is ({QAP}), which solves the traveling salesman, $k$-clique, and subgraph isomorphism problems, among many others \cite{cela_quadratic_1998}.

While aligning graphs is NP-hard in the worst case, real-world questions are not necessarily of this complexity. Applications typically consider instances of the graph alignment problem somewhere on a spectrum between hard QAP and easy GIP, potentially allowing for polynomial time algorithms to recover node correspondences.

To model this interpolation between finding simple isomorphisms and solving worst-case QAP, a line of research initiated by \cite{pedarsani2011privacy} has considered $G$ and $G'$ correlated random graphs. This can be seen as an average case analysis of the graph alignment problem, allowing statistical tools to be used. In contrast to the QAP formulation, this approach supposes the existence of a true node correspondence $\sigma_*:V \to V'$ planted into the model. The goal of \textbf{planted graph alignment} is to find an estimator $\hat{\sigma}$ which coincides with $\sigma_*$ as much as possible.

The existing theory on correlated random graph alignment supposes that $G$ and $G'$ have identical node quantities and the same expected edge numbers. This symmetric framework is restrictive in many applications: Protein interaction graphs \cite{aladaug2013spinal} have different edge densities across species, brain graphs \cite{thual_aligning_2022} have different node numbers, and social network analysis requires the retrieval of subgraphs within a big graph \cite{fan_graph_2012}.

The present paper addresses all of these issues by introducing the novel, more general framework of \textbf{asymmetric correlated Erdős--Rényi graphs}, which allows for varying edge densities and differing node numbers. This model can be seen as an average-case interpolation between (QAP) on the one hand and the subgraph isomorphism problem (SIP) -- which is strictly more difficult than the GIP -- on the other hand. 

In this work, we consider \textbf{sparse graphs} with constant average degree, a regime where one cannot recover all node correspondences. We focus on the notion of \textbf{one-sided partial alignment} where the goal is to recover a significant portion of node matchings while making a vanishing fraction of errors.\\
A detailed discussion of the sparse setting is provided in Section~\ref{section:correlated_graph_model}, along with a formal description of the random graph model. For this introductory overview, we give the following definitions:
\begin{itemize}
	\item \textit{Erdős--Rényi model.}  A random graph $G \sim \mathrm{ER}(n, p)$ consists of $n$ nodes where each edge between distinct vertices exists independently with probability $p$. 
	\item \textit{Parameters.} The correlated asymmetric model is parameterized by a global density level $\lambda>0$ and correlation parameters $s,s'\in[0,1]$.
	\item \textit{Correlated asymmetric model (informal description).} Given an intersection graph $G_* \sim \mathrm{ER}(n_*, \lambda s s'/n_*)$, we add edges and nodes twice independently: Once to obtain $G \sim  \mathrm{ER}(n, \lambda s/n)$ and once to get $G' \sim \mathrm{ER}(n', \lambda s'/n')$. The vertex counts $n, n'$ are coupled through $n_*$, while the graphs $G, G'$ are correlated through $G_*$.
\end{itemize} 
This enables an informal presentation of our main results. 

\begin{theorem}[main results --- informal statement]\label{thm_informal}
	For a pair of sparse  asymmetric correlated Erdős--Rényi graphs $(G, G')$ with parameters $\lambda > 0$ and $s, s' \in [0,1]$ the following statements hold:
	\begin{enumerate}[(i)]
		\item The polynomial time algorithm MPAlign achieves one-sided partial alignment under the condition that asymmetric tree correlation testing is feasible (see Theorem \ref{thm:Algo_works}).
		\item There is a sharp phase transition for the feasibility of asymmetric tree correlation testing around $ \alpha \approx 0.338$ known as Otter's constant: Testing is possible if $ss' > \alpha$ and $ \lambda > \lambda_0(s, s')$; it is impossible if $ss' \leq \alpha$ (see Theorem \ref{thm:phase_transition}).
	\end{enumerate}
\end{theorem}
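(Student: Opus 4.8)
The plan is to establish Theorem~\ref{thm_informal} through its two components, Theorem~\ref{thm:Algo_works} and Theorem~\ref{thm:phase_transition}, both of which I would route through the local tree structure of sparse correlated Erdős--Rényi graphs. The first step, common to both, is a \emph{coupling}: in the sparse regime the depth-$d$ neighborhood $B_d(i)$ of a vertex $i$ in $G$ is, up to an error of probability $O_d(1/n)$, a Galton--Watson tree with $\mathrm{Poisson}(\lambda s)$ offspring, and likewise for $G'$ with mean $\lambda s'$; moreover, under the planted correspondence $\sigma_*$ the joint neighborhood $(B_d(i), B_d(\sigma_*(i)))$ couples with a pair of \emph{correlated} Galton--Watson trees (a common subtree thinned independently by $s$ and by $s'$, then augmented with independent noise on each side), whereas for $i' \neq \sigma_*(i)$ the pair $(B_d(i), B_d(i'))$ couples with two \emph{independent} Galton--Watson trees. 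This reduces the question ``is $i'$ the $\sigma_*$-image of $i$?'' to the correlation test for pairs of trees, with $\Pcorrd$ the law of a true pair and $\Pindd$ the law of a wrong pair.

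For part (i) I would analyze \textsc{MPAlign} as follows. For each candidate pair $(i,i')$ the algorithm extracts $(B_d(i), B_d(i'))$ and runs the tree correlation test of part (ii); it sets $\hat\sigma(i)=i'$ only when the test accepts and, to enforce the one-sided guarantee, only when this acceptance is unambiguous in the appropriate sense. Using the coupling together with the feasibility of the test (the possibility half of Theorem~\ref{thm:phase_transition}), a true pair passes with probability bounded below by some $p_d \to 1$, while a fixed wrong pair passes with probability $\varepsilon_d \to 0$. Summing the latter over the $\le |V|\,|V'|$ candidate pairs forces one to let $d = d(n)$ grow slowly enough that $|V|\,|V'|\,\varepsilon_{d(n)} = o(1)$ and the $O(d/n)$-type coupling error stays negligible — this is where a quantitative rate on the test's false-positive probability, not merely $\varepsilon_d \to 0$, is needed. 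Dependencies between overlapping neighborhoods of distinct candidate pairs would be handled by a second-moment (Chebyshev) argument on the number of correctly recovered vertices, exploiting that neighborhoods of far-apart vertices are asymptotically independent. The output is then that $\hat\sigma$ agrees with $\sigma_*$ on a $\Theta(1)$ fraction of the vertices of the smaller graph, with $o(1)$ error rate among declared pairs: one-sided partial alignment.

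For part (ii) I would work with the likelihood ratio $L_d = d\Pcorrd/d\Pindd$ on pairs of depth-$d$ trees, which obeys a branching-type recursive distributional equation: conditionally on the root data, $L_{d+1}$ is assembled from the $L_d$'s of the children's subtrees summed over all ways of matching the children of the two roots. For the \emph{possibility} direction ($ss' > \alpha$, $\lambda$ large), the plain $L^2$ method fails because $\Expe^{\mathrm{ind}}[L_d^2]$ blows up for large $\lambda$; instead I would build an explicit statistic as a weighted sum of co-occurrence counts of small unlabeled tree shapes in the two neighborhoods — equivalently, a version of $L_d$ conditioned on a ``not-too-large'' event — and track its conditional first and second moments through the recursion to show it concentrates at separated values under $\Pcorrd$ and under $\Pindd$. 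Otter's constant $\alpha$ enters as the exponential growth rate of unlabeled rooted trees (the radius of convergence of $T(x)=x\exp\bigl(\sum_{k\ge1}T(x^k)/k\bigr)$): a shape of size $m$ carries signal weight $\sim (ss')^m$ while the number of competing shapes grows like $\alpha^{-m}$, so $ss' > \alpha$ is exactly the condition for the signal series to survive, and $\lambda$ large supplies the supercriticality of the correlated branching process and the non-degeneracy needed to convert the moment bounds into a valid test. For the \emph{impossibility} direction ($ss' \le \alpha$) I would instead show that $\Pcorr$ and $\Pind$ on the infinite tree are \emph{not} mutually singular — equivalently that the total variation distance between $\Pcorrd$ and $\Pindd$ stays bounded away from $1$ — by controlling a divergence such as $\mathrm{KL}(\Pcorrd \,\|\, \Pindd)$ (or a $\chi^2$-type surrogate restricted to a high-probability event) along the recursion and proving that the per-level contribution is summable precisely when $ss' \le \alpha$, again via the enumeration of unlabeled subtrees encoded in Otter's functional equation.

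The main obstacle is the impossibility half of Theorem~\ref{thm:phase_transition}: the likelihood-ratio recursion is a genuine fixed-point equation on distributions over pairs of trees, and extracting the \emph{sharp} constant $\alpha$ from it requires matching its combinatorics exactly to Otter's counting of unlabeled rooted trees rather than to a crude bound. A secondary difficulty, in the possibility regime, is that the second moment of the plain likelihood ratio genuinely diverges, so the argument must be run on a carefully chosen conditioned or truncated statistic while retaining enough power to beat the union bound required in part (i).
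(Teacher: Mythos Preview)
Your overall architecture is right, but part (i) has a concrete gap. Your claim that for $i' \neq \sigma_*(i)$ the pair $(B_d(i), B_d(i'))$ couples with two \emph{independent} Galton--Watson trees is false as stated: if $i'$ is close to $\sigma_*(i)$ in the intersection graph $G_*$ (say a neighbor), the two neighborhoods share a macroscopic common subtree and are far from independent. The paper closes this via the \emph{dangling tree trick} (Lemma~\ref{lemma:danglingTree}): the algorithm requires both roots to have degree at least three and tests not the full neighborhoods but the three pointing sub-neighborhoods $\mathcal{N}_{G,d}(i\!\to\! i_k)$ against $\mathcal{N}_{G',d}(i'\!\to\! i'_k)$. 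For any wrong pair at least one index $k$ yields subtrees that are disjoint in $G_*$ (the unique connecting path can lie in at most two of the three branches); their joint law $\Pdisj$ is then compared to $\Pind$ through a second-moment bound on the Radon--Nikodym ratio $R_d$ (Lemma~\ref{lemma:disj_vs_ind}), and Cauchy--Schwarz against the quantitative type-I rate $\Pindd(\mathcal{T}_d=1)\le\exp(-C(\lambda ss')^d)$ beats the union bound over $n'$ candidates since $d=\Theta(\log N)$. Without this trick your error analysis cannot be carried through. Your second-moment argument for the overlap term is the same as the paper's.

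For part (ii) you correctly flag that the sharp constant is the crux, but the paper's route is different from yours and supplies exactly the tool you are missing. The key device is the likelihood-ratio \emph{diagonalization} (Theorem~\ref{thm_diagonalization}): there are functions $f^{(\mu)}_{d,\beta}$, orthonormal under $\mathrm{GW}^{(\mu)}_d$ and indexed by unlabeled trees $\beta\in\mathcal{X}_d$, such that $L_d(t,t')=\sum_{\beta}\sqrt{ss'}^{\,|\beta|-1}f^{(\lambda s)}_{d,\beta}(t)f^{(\lambda s')}_{d,\beta}(t')$. This gives exactly $\Eindd[L_d^2]=\sum_{n\ge1}|\mathcal{X}_d^{(n)}|(ss')^{n-1}$, so Jensen plus Otter's asymptotic $|\mathcal{X}_{n-1}^{(n)}|\sim D\,n^{-3/2}\alpha^{-n}$ yield $\sup_d\mathrm{KL}_d<\infty$ precisely when $ss'\le\alpha$ --- no recursion to track. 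For the possibility direction the paper also differs from your truncated-statistic plan: it first takes $\lambda\to\infty$ with $d$ fixed, where the centered subtree-count vector becomes Gaussian with cross-covariances $\sqrt{ss'}^{\,|\beta|}$, and lower semicontinuity of $\mathrm{KL}$ gives $\liminf_\lambda\mathrm{KL}_{d+1}\ge\sum_n|\mathcal{X}_{d+1}^{(n)}|(ss')^{n-1}$, which diverges for $ss'>\alpha$; then a separate \emph{amplification} step (Lemmas~\ref{first_final_lemma}--\ref{second_final_lemma}) propagates a good test at some finite $d_0$ to all larger depths with geometrically shrinking type-I error via an explicit bilinear statistic in subtree counts. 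Your co-occurrence-count idea is close in spirit to that amplification step, but the Gaussian entry point is what produces the initial test and the condition ``$\lambda$ large enough.''
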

Taking $s < 1$ and $s' = 1$, our model allows $G$ to be a random subgraph of $G'$. To our knowledge, we contribute the first polynomial-time algorithm to solve this random instance of the SIP. The question about isomorphisms of random graphs is motivated by the work of Babai and Bollobás \cite{babai_random_1980, bollobas_random_2008} on efficient \textit{canonical labeling algorithms} that solve (GIP). Extending these results to \textbf{random subgraph isomorphisms is an open problem} raised in personal communication with Jiaming Xu in 2017 \cite{xu2017personal}. We provide an answer to one instance of that problem.

\subsection{Related work}

\paragraph{Applications.} Research on graph alignment is motivated by a diverse set of applications \cite{conte2004thirty}, and has gained popularity through the success in network de-anonymization and consequent privacy breaches during the late 2000s \cite{narayanan_robust_2008, narayanan2009anonymising}. Researchers in various disciplines have developed graph alignment algorithms for their specific use, such as aligning protein interaction graphs \cite{singh_global_2008, aladaug2013spinal}, enabling computer vision \cite{berg_shape_2005, scholkopf_balanced_2007}, natural language processing \cite{haghighi2005robust}, and cell biology \cite{chen2022one}.

\paragraph{Graph alignment theory.} The rigorous theoretical study of graph alignment algorithms was spearheaded by the introduction of correlated random graphs in \cite{pedarsani2011privacy}. The correlated Erdős--Rényi model has since then been thoroughly explored, essentially from two points of view: Information theoretically - \textit{under which parameters is alignment feasible?} \cite{cullina2016improved, cullina2017exact, mao2023exact, wu2022settling}, and computationally 
- \textit{which polynomial algorithm can retrieve an alignment?} \cite{zaslavskiy2008path, yartseva2013performance, ding2021efficient, mao2023random, fan2023spectral, araya2024graph}. All of these papers have in common that they consider either the \textit{dense regime} or the \textit{intermediate regime} where the average node degree across $G$ and $G'$ is growing with the number of nodes. In a notable recent advancement, \cite{ding2023polynomial} presents a polynomial-time alignment algorithm for the case with an average degree barely above constant. 

\paragraph{Asymmetry.} We present one of the first theoretical results on asymmetric graph alignment, but this question has sparked interest in related fields. One example comes from the optimal transport community where \textit{Unbalanced Gromov-Wasserstein alignment} allows to compare networks of different sizes \cite{thual_aligning_2022, sejourne2021unbalanced}. Another example comes from random graph theory and probabilistic subgraph matching \cite{schellewald2005probabilistic, brocheler2011probabilistic}. Since our work allows for $G$ to be a subgraph of $G'$, our results enable the matching of such graphs.

To our knowledge, only the authors of \cite{huang2024information} conduct a thorough theoretical analysis of graph alignment in the setting with differing node numbers. Their notion of \textit{partially correlated Erdős--Rényi graphs} resembles our correlated graph model but leaves out varying edge densities, and no polynomial matching algorithm is provided. 

\paragraph{Sparsity.} A branch of research spearheaded by \cite{cullina2020partial, hall2023partial, ding2023matching} has focused on the \textit{sparse regime}, where the average node degree remains constant. In this setting, Erdős--Rényi graphs are disconnected, making graph matching impossible outside the giant component \cite{ganassali2021impossibility}. However, the local tree-like structure of such graphs has led to intriguing questions about tree correlation testing for graph alignment \cite{ganassali2020tree}, which were largely resolved in \cite{luca, ganassali2022statistical}. Complementarily, \cite{mao2023random} proposes an algorithm which uses particular rooted-tree substructures called \textit{chandeliers}. This approach yields zero-mismatch partial alignment guarantees around Otter's constant in the sparse regime, while also being applicable to denser regimes.

\subsection{Main contributions}
We introduce the novel asymmetric correlated Erdős–Rényi model and extend the main results of \cite{luca} and \cite{ganassali2022statistical} to that setting. This extension is non-trivial and yields a substantially stronger result. Notably, the phase transition in (ii) of Theorem \ref{thm_informal} depends only on the \emph{product} $ss'$. This simplicity enables a broader interpretation: the density of one graph can compensate for the sparsity of the other, a regime strictly harder than those considered previously. Moreover, local methods, as used here, naturally accommodate asymmetry. This is unlike spectral or first-order optimization approaches, which typically require both graphs to have the same number of nodes.

The non-triviality of this generalization is already apparent in the likelihood ratio diagonalization formula (Theorem \ref{thm_diagonalization}), which differs meaningfully from \cite[Theorem 4]{ganassali2022statistical}: In our case, the functions $f$ are parametrized by $(\lambda s)$ and $(\lambda s')$, not just $\lambda$. This asymmetry propagates through the subsequent proofs, requiring considerable care and new work. We therefore make explicit many technical details omitted in \cite{luca} and \cite{ganassali2022statistical} and adjust the presentation for greater transparency. While these generalizations may seem natural in hindsight, executing them rigorously is a main contribution.

To further illustrate how this work goes beyond a direct generalization of \cite{luca} and \cite{ganassali2022statistical}, we list some major novelties:
\begin{enumerate}[label=(\alph*)]
    \item Theorem \ref{thm:Algo_works} uses concentration of node count measures, distinguishes be\-tween matchable and unmatchable nodes, and adjusts the ``dangling tree'' argument (see Lemma \ref{lemma:danglingTree}). None of these are necessary in the symmetric setting.
    \item Lemma \ref{lemma:disj_vs_ind} replaces the 9-line argument in Appendix C.3 of \cite{ganassali2022statistical} with a 7-page proof using graph exploration techniques, correcting an oversight in the earlier treatment of joint graph laws.
    \item Theorem \ref{thm_diagonalization} and Lemma \ref{gaussian_approx} carefully handle convergence of countable sums over unlabeled trees --- a crucial subtlety in the asymmetric setting.
\end{enumerate}
Preliminary results appeared in the workshop paper \cite{maier2023asymmetric}; this document presents a complete and significantly extended version.

\subsection{Organization of the paper}
Section \ref{section:setting_and_random_graph_model} introduces our asymmetric correlated Erd\H{o}s--R\'enyi model in the sparse\\ regime and formalizes the recovery target, namely one-sided partial alignment.\\
Section~\ref{section:From_graphs_to_trees_and_back} then reduces graph alignment to a local comparison of neighborhood trees and presents our polynomial-time algorithm \textsc{MPAlign}, establishing algorithmic guarantees in Theorem~\ref{thm:Algo_works}. This corresponds to point (i) of Theorem \ref{thm_informal}. \\
Finally, Section~\ref{section:tree_correl_testing} studies the underlying \emph{asymmetric tree correlation testing} problem: It derives the feasibility condition, proves the sharp phase transition stated in Theorem~\ref{thm:phase_transition}, and thereby identifies the parameter regime in which \textsc{MPAlign} succeeds. This corresponds to point (ii) of Theorem \ref{thm_informal}.\\
Section~\ref{section:conclusion} concludes and the Appendix \ref{appendix:main} collects deferred proofs.

\subsection{Notations}

Let $\N = \{0,1,2,3, ...\}$ and write \( \mathbb{N}_{>0}  = \N \setminus \{0\}\). For \( k \in \mathbb{N}_{>0} \),  define
\(
[k] := \{1, \dots, k\}
\)
and denote the sum over all injective mappings from \( [k] \) to \( [n] \) as
\(
\sum_{\sigma: [k] \hookrightarrow [n]}
\).\\
For \( a, b \in \mathbb{R} \), let \( a \wedge b := \min(a, b) \). The complement of a set $S$ is denoted $S^c$.\\
For a graph \( G \), denote its vertex set \( V(G) \) and write  \( i \in G \) as a shorthand for  \( i \in V(G) \). Summing over the nodes of a tree \( t \) is denoted
\(
\sum_{i \in t}
\). The distance between \( i, j \in G \) is 
\[
\mathrm{dist}_G(i, j) := \inf \big\{ \mathrm{length}(p)\,  \vert \, p \text{ is a path in } G \text{ connecting } i \text{ and } j \big\}.
\]
Given two probability measures \( \Prob,  \mathbb{Q} \) on a probablity space $(\Omega, \mathcal{F})$, denote their product measure \( \Prob \otimes \mathbb{Q} \) and their total variation distance
\(
d_{\text{TV}}(\Prob,  \mathbb{Q}) = \sup_{A \in \mathcal{F}} |\Prob(A) - \mathbb{Q}(A)|
\).\\
For \(n\in\mathbb N\), \(p\in[0,1]\), \(\mu>0\), let \(\mathrm{Bin}(n,p)\) and \(\mathrm{Poi}(\mu)\) denote the binomial and Poisson distributions. If \( B \sim \mathrm{Bin}(n,p) , X \sim \mathrm{Poi}(\mu) \), write $\pi_\mu(k):= \mathbb{P}(X = k)$ and recall
\[
 \pi_\mu(k) = e^{-\mu} \frac{\mu^k}{k!},  \quad  \mathbb{P}(B = k) = \binom{n}{k} p^k (1-p)^{n-k}, \quad \text{for all } k \in \N.
\]
To describe asymptotic behaviour of expressions $f,g:\N\to\R\setminus\{0\}$, we use
\[
f(N)\in \left\{
\begin{aligned}
	\mathcal{O}(g(N))\\
	\Theta(g(N))\\
	o(g(N))
\end{aligned}
\right\} \iff \left\{
 \begin{aligned}
&f(N)/g(N) \text{ is bounded above},\\
&f(N)/g(N) \text{ is bounded above and below},\\
&\lim\nolimits_{N\to \infty} f(N)/g(N) = 0.
\end{aligned}
\right.
\]

\section{Setting and random graph model}\label{section:setting_and_random_graph_model}
In this section, we introduce a novel model of correlated random graphs that allows for varying edge densities and different numbers of nodes. We subsequently introduce the \emph{overlap}, which quantifies agreement with the planted alignment, and the notion of \emph{one-sided alignment}, which captures our recovery objective.

\subsection{Asymmetric correlated Erdős--Rényi graphs}\label{section:correlated_graph_model}

The simplest way to describe the law of two asymmetric correlated Erdős--Rényi graphs $(G, G')$ is to algorithmically detail their sampling process.

For this, fix a node number $N \in \N$ and a density parameter $\lambda > 0$. Furthermore, define node correlation parameters $q, q' \in [0,1]$ and edge correlation parameters $r, r' \in (0,1]$. Two correlated random graphs $G, G'$ are sampled as follows:
\begin{enumerate}
	\item Let $n_* \sim \mathrm{Bin}(N, qq')$ and sample $G_* \sim \mathrm{ER}\Big(n_*, rr' \frac{\lambda}{N}\Big)$ which we refer to as the \textit{intersection graph}. 
	\item Take $n_+ \sim \mathrm{Bin}(N, q(1-q'))$ and add $n_+$ isolated nodes to $G_*$ which now has a total of $n := n_* + n_+$ nodes. Then, independently for each newly added node $v$ and every already present node $w \in [n]\setminus\{v\}$, add an edge connecting $v$ and $w$ with probability $rr'\lambda/N$. This process is called \textit{node augmentation}.
	\item Next, independently for each pair of unconnected nodes $v \neq w \in [n]$, connect them with probability $r(1-r')\lambda/N$. This process is called \textit{edge augmentation}. Call $G$ the graph resulting from steps 2 and 3.
	\item Repeat steps 2 and 3 independently with parameters $n'_+ \sim \mathrm{Bin}(N, (1-q)q')$ and edge augmentation probability $(1-r)r'\lambda/N$. Let $n' := n_* + n'_+$ and call the resulting graph $G'$. 
\end{enumerate}
All four steps are illustrated with an example in figure \ref{model_figure}.
\begin{figure}[t]
	\includegraphics[width=\textwidth]{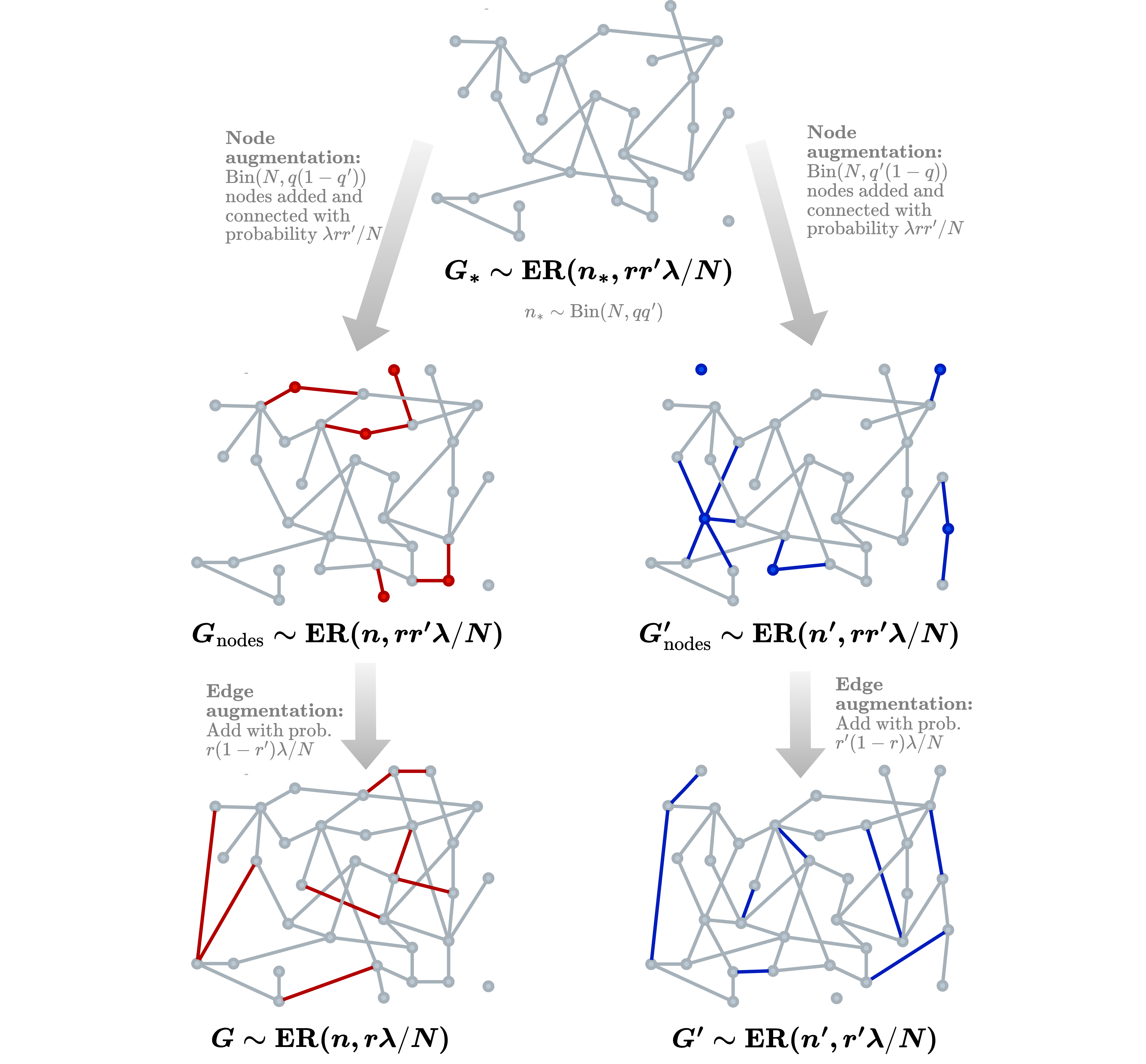}
	\caption{Sampling process of $(G, G') \sim \mathrm{CER}(N, \lambda, q, q', r, r')$. Nodes and edges added during the augmentations are red and blue, respectively.}
	\label{model_figure}
\end{figure}
This sampling process yields two asymmetric correlated Erdős--Rényi graphs $G$ and $G'$ whose joint law shall be denoted as $\mathrm{CER}(N, \lambda, q, q', r, r')$. We define their global correlation parameters as \[
s := qr \quad \text{and} \quad s':= q'r',
\]
which lets us write $(G, G') \sim \mathrm{CER}(N, \lambda, s, s')$. This notation hides how precisely the graphs are correlated, but we will later see that $(s,s')$ alone determines whether our alignment algorithm works, ignoring the precise value of $(q, q', r, r')$.

Since $\Expe[n] = qN$ and $\Expe[n'] = q'N$, the marginal distributions of both graphs are
\begin{equation}\label{marginals}
	G \mathrel{\sim} \mathrm{ER}\Big(n,  \frac{\lambda s}{\Expe[n]}\Big) \quad \text{and} \quad G' \mathrel{\sim} \mathrm{ER}\Big(n',  \frac{\lambda s'}{\Expe[n']}\Big)
\end{equation}
with expected average degrees $\Expe[n \lambda s/ \Expe[n] ] = \lambda s$ and $\Expe[n' \lambda s'/ \Expe[n'] ] = \lambda s'$. Since these average degrees are constant, $G$ and $G'$ are called \textit{sparse} Erdős--Rényi graphs: Their adjacency matrices contain a constant number of entries in each column.
\begin{remark}
	When $q = q' = 1$, both graphs have the same deterministic node number $n = n' = N$. Therefore, the only difference between $G$ and $G'$ are the average degrees $\lambda s$ and $\lambda s'$. This is what \cite{maier2023asymmetric} refer to as the \textit{varying edge densities} setting.
	
	In the case where $r = r' = 1$, there may be different node numbers in $G$ and $G'$, but each pair of nodes in both graphs is connected with the same probability $ \lambda r/N $. However, unless $q = q'$, their expected average degrees are different, equal to $\lambda q$ and $\lambda q'$, respectively. This is what  \cite{maier2023asymmetric} refer to as the \textit{differing node numbers} setting.
	
	Our definition of $\mathrm{CER}(N, \lambda, q, q', r, r')$ therefore combines both models from \cite{maier2023asymmetric} into one comprehensive asymmetric Erdős--Rényi model with differing node numbers and varying edge densities at the same time.
\end{remark}
\begin{remark}
	There are many ways of defining $ \mathrm{CER}(N, \lambda, q, q', r, r')$; here, we have chosen to start with an \textit{intersection graph} $ G_* $ before performing node and edge \textit{augmentations}. Note that an identically distributed pair of random graphs could equivalently be obtained by starting with a \textit{master graph} $\hat G \sim \mathrm{ER}(N, \lambda/N)$, followed by independent node and edge \textit{deletions}. We refer to \cite{maier2023asymmetric} for details on the deletion procedure; one can easily verify that both algorithms sample from the same distribution.
\end{remark}

\subsection{True node correspondences}\label{section:True_node_correspondences}

The planted graph alignment problem consists of retrieving correspondences between the nodes of two graphs. Given a pair $(G, G') \sim  \mathrm{CER}(N, \lambda, q, q', r, r')$ from our generative model, there exists a hidden partial mapping between the nodes of $G$ and $G'$, namely via the nodes inherited from $G_*$ that both graphs have in common. 

To formalize this, we randomly label the nodes of both $G$ and $G'$ with labels in $[n]$ and $[n']$ respectively (for instance, by choosing the node with number 1 uniformly at random, then picking a second node among the remaining ones, etc.). Then, there are two node subsets $V_* \subset [n]$, $V'_* \subset [n']$ of size $n_*$ and a bijective map $\sigma_*: V_* \to V'_*$ which relates all pairs of nodes shared by $G$ and $G'$. 
\begin{remark}
	In the setting $q = q' = 1$ where both graphs have $n$ nodes, one has $V_* = V'_* = [n]$ and $\sigma_*$ is a permutation of $[n]$. If we use the same node labeling for $G_*$ and $G$, ``forgetting'' these labels would formally consist of sampling $\sigma_*$ uniformly among all permutations of $[n]$ and then labeling the nodes of $G'$ as $\sigma_*(1), \sigma_*(2)$ and so on.
\end{remark}
\begin{remark}\label{remark_onlyedges}
	A key novelty of the  $ \mathrm{CER}(N, \lambda, q, q', r, r') $ model is step 2 in the sampling process, which allows for different node numbers in both graphs. Omitting this step by picking $q = q' = 1$, one recovers the correlated Erdős--Rényi model typically studied in the literature. Due to the deterministic node numbers in this setting, one can more easily define $G$ and $G'$ via their adjacency matrices $A$ and $A'$: Their correctly permuted entry pairs $(A_{ij}, A'_{
		\sigma_*(i) \sigma_*(j)})$ are independent tuples of Bernoulli variables with joint distribution
	\begin{equation*}
		(A_{ij}, A'_{\sigma_*(i) \sigma_*(j)}) = \begin{cases}
			(1, 1) \quad&\text{ with probability } (\lambda/n) s s'\\
			(1, 0) \quad&\text{ with probability } (\lambda/n) s (1-s')\\
			(0, 1) \quad&\text{ with probability } (\lambda/n) s' (1-s)\\
			(0, 0) \quad&\text{ with probability } 1 - (\lambda/n)(s + s' - ss').
		\end{cases}
	\end{equation*}
	For $q=q'=1$, if one additionally supposes $r = r'$,  we recover the ``standard'' correlated Erdős--Rényi model first introduced in \cite{pedarsani2011privacy} and studied extensively, for example in \cite{luca}. In the latter paper and most literature, the parameter $\lambda$ corresponds to $\lambda/r$ in our model.
\end{remark}

One speaks of \textit{planted graph alignment} since the objective is to recover $\sigma_*$, which, by construction, is metaphorically planted into the model, together with the intersection graph $G_*$. This objective requires us to define an alignment estimator consisting of two sets $\hat V \subset [n]$, $\hat V' \subset [n']$, which ideally contain the nodes of $ G_* $, together with a bijective map $ \hat \sigma: \hat V \to \hat V'$ matching nodes in $G$ with nodes in $G'$. Such an estimator, therefore, consists of the triplet $ (\hat \sigma, \hat V, \hat V') $, but we will often just write $\hat \sigma$ to denote all three objects.

We introduce the following performance indicator to measure if such a $\hat \sigma$ is close to the true matching $\sigma_*$.
\begin{definition}[overlap]
	For an alignment estimator $\hat \sigma : \hat V \to \hat V'$, define its overlap with the true matching $\sigma_*: V_* \to V'_*$ as
	\[
	\mathrm{ov}(\hat \sigma, \sigma_*) := \frac{1}{n_*} \sum_{i \in \hat V \cap V_*} \indicator{\hat \sigma(i) = \sigma_*(i)}.
	\]
\end{definition}
Since $n_* = |V_*|$, the overlap measures which proportion of the relevant nodes is correctly matched by $\hat{\sigma}$. This tells us how well $\hat \sigma$ acts on the true node set but is entirely agnostic to what the estimator does outside of $V_*$. Consequently, another quantity is required to measure how well $\hat \sigma$ identifies $V_*$ and how many errors it commits in a matching.
\begin{definition}[error fraction]
	For an alignment estimator $\hat \sigma: \hat V \to \hat V'$ , one defines its error fraction as
	\[
	\mathrm{err}(\hat \sigma, \sigma_*) := \frac{1}{n} \sum_{i \in \hat V} \indicator{ i \not \in V_* \text{ or } \hat \sigma(i) \neq \sigma_*(i)}.
	\]
\end{definition}
The goal of planted graph alignment is to find a $\hat \sigma$ that can be computed in polynomial time and maximizes the overlap while keeping the error fraction low. For our specific case, we formalize this goal as follows.

\subsection{One-sided partial alignment}
When seeking to align a pair $(G, G')$ of sparse correlated Erdős--Rényi graphs, it is impossible to recover $\sigma_*$ completely. The reason for this lies within the structure of $G_*$, the graph containing all the information about a potential matching between $G$ and $G'$. Since $G_* \sim \mathrm{ER}(n_*, ss'\lambda/\Expe[n_*])$, this is a sparse Erdős--Rényi graph with one giant connected component while the remaining graph mainly consists of small trees \cite{bollobas1998random}. This property has been exploited in \cite{ganassali2021impossibility} to prove that a matching between $G$ and $G'$, cannot be done outside this giant component. A more intuitive reason for the impossibility of retrieving $\sigma_*$ is isolated nodes: Since average node degrees are constant, there is a non-negligible amount of isolated nodes in $G_*$, which are all indistinguishable.

In the sparse setting, the giant component of $G_*$ contains strictly fewer nodes than the entire graph. Therefore, one can only hope to partially align $G$ with $G'$. A lower number of matchable nodes also means that every wrongly matched node significantly augments the relative error. This is why authors \cite{luca, cullina2020partial} have focused on alignment estimators that commit a \textit{negligible amount of errors} while \textit{recovering a significant proportion} of the possible graph matching. This is referred to as one-sided partial alignment:
\begin{definition}\label{def:one-sided_alignment}
	For a family of correlated graphs $(G_N, G'_N) \sim \mathrm{CER}(N, \lambda, s, s')$, a sequence of estimators $\hat \sigma_N : \hat{V}_N \to \hat{V}'_N$ obtained through the same algorithm is said to achieve one-sided partial alignment if there exists $\beta > 0$ such that
	\begin{equation*}
		\Prob\Big(\mathrm{ov}(\sigma_*, \hat{\sigma}_N) \geq \beta \Big) \xrightarrow[N\to\infty]{} 1 \quad \text{and}\quad \Prob\Big(\mathrm{err}(\sigma_*, \hat{\sigma}_N) = o(1) \Big) \xrightarrow[N\to\infty]{} 1.
	\end{equation*}
	We will often omit indexing by $N$ and write $G, G', \hat{\sigma}$ instead of $G_N, G'_N, \hat{\sigma}_N$. 
\end{definition}
\begin{remark}\label{equivalence:vanishesWHP_convergesINP}
	This definition translates the phrase \textit{``$\hat{\sigma}$ commits a negligible amount of errors''}  into the more precise \textit{``the error fraction of $ \hat\sigma $ is vanishing with high probability''} which in mathematical terms reads
	``$\Prob\big(\mathrm{err}(\sigma_*, \hat{\sigma}) = o(1) \big) \to 1$''. More rigorously, this expression signifies that there is a  sequence $(a_N)_N$ of $\R_{> 0}$ with $a_N \to 0$ and
	\[
	\Prob\big(\mathrm{err}(\sigma_*, \hat{\sigma}_N) \leq a_N \big) \xrightarrow[N\to\infty]{} 1 \quad \iff \quad \Prob\big(\mathrm{err}(\sigma_*, \hat{\sigma}_N) > a_N \big) \xrightarrow[N\to\infty]{} 0. 
	\]
	We claim that this is equivalent to $ \mathrm{err}(\sigma_*, \hat{\sigma}_N) \to 0 $ in probability. To prove the claim, let $\varepsilon > 0$ and take $N$ big enough so that $a_K < \varepsilon$ for all $K > N$. Then, for all such $K$, 
	\[
	\Prob\big(\mathrm{err}(\sigma_*, \hat{\sigma}_K) > \varepsilon \big) \leq \Prob\big(\mathrm{err}(\sigma_*, \hat{\sigma}_K) > a_K \big) \xrightarrow[K \to \infty]{} 0.
	\]
	On the other hand, if $ \mathrm{err}(\sigma_*, \hat{\sigma}_N)  $ converges in probability, one can take the constant sequence $a_N \equiv \varepsilon$ to obtain the opposite direction immediately. This lets us conclude that
	\[
	\Prob\Big(\mathrm{err}(\sigma_*, \hat{\sigma}_N) = o(1) \Big) \xrightarrow[N\to\infty]{} 1 \quad \iff \quad \mathrm{err}(\sigma_*, \hat{\sigma}_N) \xrightarrow[N \to \infty]{\text{in } \Prob} 0.
	\]
\end{remark}
To summarize, sparse Erdős–Rényi graphs admit at best partial alignment, and Definition \ref{def:one-sided_alignment} formalizes what we mean by a successful alignment estimator. In the next section, we show that such graphs are locally tree-like, and we leverage this structure to design an efficient alignment algorithm.

\section{From graphs to trees and back}\label{section:From_graphs_to_trees_and_back}
Local algorithms are a natural choice for aligning sparse Erdős--Rényi graphs, an intuitive reason being that these graphs have large diameters, and far-apart nodes contain negligible information about each other. Therefore, it instinctively makes sense to align nodes based on their local neighborhoods in the respective graphs. 

In this section, we describe the asymptotic structure of these neighborhoods, which are Galton--Watson trees. This will allow us to translate the global graph alignment problem to a local tree correlation test. While tree correlation testing deserves a whole section later, we will leave it as a black box here and focus on the algorithm that bridges the gap between graphs and trees, between global structure and local neighborhoods.

\paragraph{Notation for trees.}  Given $\mu > 0$, the law of a Galton--Watson tree \cite{watson1875probability} with $\mathrm{Poi}(\mu)$ offspring distribution is denoted as $\mathrm{GW}^{(\mu)}$. Next, for any rooted tree $t$ of potentially infinite size, we denote $t_d$ the \textit{pruning} of $t$ at depth $d \in \N$, i.e., the subtree containing all nodes whose distance to the root is at most $d$.\\
For $t \sim \mathrm{GW}^{(\mu)}$, we denote by $\mathrm{GW}_d^{(\mu)}$ the law of the pruned tree $t_d$. Furthermore, if we consider any fixed tree $\tau$, we denote by $\mathrm{GW}_d^{(\mu)}(\tau)$ the likelihood that the pruning $\tau_d$ is of law $ \mathrm{GW}_d^{(\mu)} $. Hence, the quantity $\mathrm{GW}_d^{(\mu)}(\tau)$ depends on $\tau$ only up to depth $d$ while ignoring any nodes which are further from the root. 

\subsection{Tree neighborhoods} 
As a first step towards understanding neighborhoods in sparse correlated Erdős--Rényi graphs, we describe the asymptotic local structure of a single graph $G \sim \mathrm{ER}(N, \mu/N)$.
For a node $i \in G$ and  $d \in \N_{>0}$, we denote the neighborhood of $i$ up to depth $d$ as
\[
\mathcal{N}_{G, d}(i) = \{j \in V \, : \, \mathrm{dist}_G(i,j) \leq d\}.
\]
While $ \mathcal{N}_{G, d}(i)$ can be seen as a ball of radius $d$ around $i$, we denote the sphere as
\[
\mathcal{S}_{G, d}(i) = \mathcal{N}_{G, d}(i) \setminus \mathcal{N}_{G, d-1}(i) = \{j \in V \, : \, \mathrm{dist}_G(i,j) = d\}.
\]
The following three lemmata, first presented as lemmata 2.1, 2.3, and 2.4 in \cite{ganassali2020tree}, describe the behaviors of these neighborhoods.
\begin{lemma}[Uniform bound on the neighborhood sizes]\label{lem_neighborhoodsize}
	Let $G = G_N \sim \mathrm{ER}(N, \mu/N)$ be a family of Erdős--Rényi graphs with $\mu > 1$.  Set $  d= d_N := \lfloor c \log(N)\rfloor$ for some $c$ fulfilling $c \, \log(\mu) < 1/2$.
	Then, for all $\varepsilon > 0$ there exists a constant $ C = C(\varepsilon) > 0$ such that
	\[
	\Prob\Big( \exists \,  i \in V,  t \in [d]  : \, | \mathcal{S}_{G, t}(i) | > C \log(N) \mu^t \Big) \in  \mathcal{O}(N^{-\varepsilon}).
	\]
\end{lemma}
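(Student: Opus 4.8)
The plan is to control the growth of the spheres $\mathcal{S}_{G,t}(i)$ by a first-moment / union-bound argument, stochastically dominating the branching process that generates the neighborhood by a slightly supercritical Galton--Watson process and then bounding the probability that such a process grows faster than $C\log(N)\mu^t$. First I would fix a node $i$ and explore its neighborhood in a breadth-first manner. Conditionally on the exploration up to depth $t-1$, each vertex on the sphere $\mathcal{S}_{G,t-1}(i)$ has a number of children in $\mathcal{S}_{G,t}(i)$ that is stochastically dominated by $\mathrm{Bin}(N, \mu/N)$, independently across vertices of the sphere; hence $|\mathcal{S}_{G,t}(i)|$ is stochastically dominated by the generation sizes $Z_t$ of a Galton--Watson process with $\mathrm{Bin}(N,\mu/N)$ offspring. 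One has $\Expe[Z_t] \le \mu^t$, so by Markov's inequality $\Prob(Z_t > C\log(N)\mu^t) \le 1/(C\log N)$, which is far too weak for a union bound over the $N$ choices of $i$ and $d = \Theta(\log N)$ choices of $t$.

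To get the polynomial decay $\mathcal{O}(N^{-\varepsilon})$ I would instead use an exponential moment (Chernoff-type) bound. The key computation is a bound on $\Expe[e^{\theta Z_t}]$ for a suitable $\theta = \theta_t > 0$ depending on $t$: writing $\varphi(\theta) = \Expe[e^{\theta \, \mathrm{Bin}(N,\mu/N)}] = (1 - \mu/N + (\mu/N)e^\theta)^N \le \exp(\mu(e^\theta-1))$, the generating-function recursion gives $\Expe[e^{\theta Z_t}] = \Expe[\varphi_{t}(\theta)]$ where $\varphi_{t}$ is the $t$-fold composition; controlling this composition for $\theta$ of order $\mu^{-t}$ (so that $\mu^t \theta = \Theta(1)$) yields $\Expe[e^{\theta Z_t}] \le \exp(c' \mu^t \theta)$ for a constant $c'$, and then
\[
\Prob\big(Z_t > C\log(N)\mu^t\big) \le e^{-\theta C \log(N)\mu^t}\,\Expe[e^{\theta Z_t}] \le \exp\big(-(C - c')\,\Theta(1)\,\log N\big) = N^{-(C-c')\Theta(1)}.
\]
Choosing $C = C(\varepsilon)$ large enough makes this at most $N^{-\varepsilon - 2}$, say. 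A union bound over the $N$ vertices $i$ and the $d_N \le c\log N$ values of $t$ then costs only a factor $N \cdot c\log N = N^{1+o(1)}$, leaving the final bound $\mathcal{O}(N^{-\varepsilon})$ as claimed. The condition $c\log\mu < 1/2$ is what ensures $d_N$ stays logarithmic so this union bound does not overwhelm the tail estimate; it also guarantees $\mu^{d_N} \le N^{1/2}$, keeping all intermediate quantities polynomially bounded.

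The main obstacle I anticipate is making the exponential-moment estimate uniform in $t$ over the whole range $t \in [d_N]$: the natural scale $\theta_t \asymp \mu^{-t}$ shrinks with $t$, and one must verify that the fixed-point / contraction argument controlling the iterated generating function $\varphi_t$ does not degrade as $t$ grows up to $\lfloor c\log N\rfloor$. The clean way around this is to prove a single inequality of the form $\varphi(\theta) \le 1 + \mu\theta(1 + K\theta)$ valid for $\theta$ in a fixed small interval, iterate it to get $\varphi_t(\theta) - 1 \le \mu^t \theta \prod_{j<t}(1 + K\mu^j\theta \cdots)$, and observe the product converges because $\sum_j \mu^j \theta_t$ is bounded when $\theta_t \mu^t = O(1)$. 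Alternatively, since the lemma is quoted from \cite{ganassali2020tree}, I would cite their Lemma 2.1 and only reproduce the proof if a self-contained exposition is desired; the adaptation to the present notation is immediate since this statement concerns a single Erdős--Rényi graph and does not involve the correlated model at all.
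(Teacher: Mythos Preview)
The paper does not prove this lemma at all: it simply states it as one of three results ``first presented as lemmata 2.1, 2.3, and 2.4 in \cite{ganassali2020tree}'' and uses it as a black box. Your final paragraph already identifies this correctly. The argument you sketch (stochastic domination of the BFS exploration by a $\mathrm{Bin}(N,\mu/N)$ Galton--Watson process, followed by a Chernoff bound with $\theta_t\asymp\mu^{-t}$ and a union bound over $i\in V$ and $t\in[d]$) is the standard route and is essentially what the cited reference does, so your plan is sound if a self-contained proof is desired; otherwise the citation suffices, exactly as the paper handles it.
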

\begin{lemma}[Different neighborhoods are asymptotically independent]\label{lem_independence_of_different_neighborhoods}
	Let $G, d, \mu$ and $c$ be like in Lemma \ref{lem_neighborhoodsize}.  Then there exists $\varepsilon > 0$ such that for any pair of nodes $i \neq j \in V$, one has
	\begin{equation*}
		d_{\mathrm{TV}} \Big( \mathrm{Law}\big(\mathcal{N}_{G,d}(i), \mathcal{N}_{G,d}(j) \big), \mathrm{Law}\big(\mathcal{N}_{G,d}(i)\big) \otimes  \mathrm{Law}\big(\mathcal{N}_{G,d}(j)\big) \Big) \in \mathcal{O}(N^{-\varepsilon}).
	\end{equation*}
	Consequently, there exists an optimal coupling (see, for instance, \cite[Theorem 4.1]{villani2009optimal}) between $ \big(\mathcal{N}_{G,d}(i), \mathcal{N}_{G,d}(j) \big) $ and $(\mathcal{N}, \mathcal{N}') \sim \mathrm{Law}\big(\mathcal{N}_{G,d}(i)\big) \otimes  \mathrm{Law}\big(\mathcal{N}_{G,d}(j)\big)$ fulfilling
	\[
	\Prob\Big(\, \big(\mathcal{N}_{G,d}(i), \mathcal{N}_{G,d}(j) \big) \neq (\mathcal{N}, \mathcal{N}') \Big)\in o(1).
	\]
\end{lemma}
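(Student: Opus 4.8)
The plan is to bound the total-variation distance by disintegrating over the first neighborhood. Write $X := \mathcal{N}_{G,d}(i)$ and $Y := \mathcal{N}_{G,d}(j)$; both are supported on finitely many rooted subgraphs, so the standard identity
\[
d_{\mathrm{TV}}\big(\mathrm{Law}(X,Y),\, \mathrm{Law}(X)\otimes\mathrm{Law}(Y)\big) = \Expe_X\big[\, d_{\mathrm{TV}}\big(\mathrm{Law}(Y\mid X),\, \mathrm{Law}(Y)\big)\,\big]
\]
holds, and (using $d_{\mathrm{TV}}\le 1$ on an exceptional set) it suffices to show that for every realization $x$ of $X$ in a set of probability $1-\mathcal{O}(N^{-\varepsilon})$ one has $d_{\mathrm{TV}}(\mathrm{Law}(Y\mid X=x),\mathrm{Law}(Y)) = \mathcal{O}(N^{-\varepsilon})$. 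I would take the good set to be the one on which the event $\mathcal{E}$ of Lemma~\ref{lem_neighborhoodsize} occurs, so that $X=x$ is supported on a vertex set $W$ with $|W|\le R_N := C'\log(N)\mu^{d}$ and $j\notin W$; since $d=\lfloor c\log N\rfloor$ with $a := c\log\mu < 1/2$, this gives $R_N\le C'' N^{a}\log N$ with $2a<1$.

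The key step is to describe the conditional law of $G$ given $X=x$: it fixes the edges internal to $W$ (they are exactly those of $x$), forces every slot from $\mathcal{N}_{G,d-1}(i)$ to $W^c$ to be absent, and leaves all remaining slots — those internal to $W^c$, together with those from the last sphere $\mathcal{S}_{G,d}(i)$ to $W^c$ — i.i.d. $\mathrm{Ber}(\mu/N)$. Let $G' := \mathrm{ER}(W^c,\mu/N)$ be built from the slots internal to $W^c$ only, and set $Y' := \mathcal{N}_{G',d}(j)$. Then $Y=Y'$ unless the depth-$d$ ball around $j$ in $G$ reaches $W$, which can happen only through one of the free $\mathcal{S}_{G,d}(i)$-to-$W^c$ slots.

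Two first-moment estimates then finish the conditional bound. First, the ball of $j$ in $G'$ has at most $R_N$ vertices with conditional probability $1-\mathcal{O}(N^{-\varepsilon})$ (again Lemma~\ref{lem_neighborhoodsize}, after a stochastic-domination step for the residual graph on $W^c$), so a union bound over the at most $R_N^{2}$ relevant free slots yields $\Prob(Y\neq Y'\mid X=x) = \mathcal{O}(R_N^{2}/N)+\mathcal{O}(N^{-\varepsilon})$. Second, $\mathrm{Law}(Y'\mid X=x)$ is the law of the depth-$d$ ball of $j$ in $\mathrm{ER}(N-|W|,\mu/N)$, which differs from $\mathrm{Law}(Y)$, the law in $\mathrm{ER}(N,\mu/N)$, by $\mathcal{O}(R_N^{2}/N)+\mathcal{O}(N^{-\varepsilon})$ in total variation: couple the smaller graph as the induced subgraph of the larger on $W^c$ and note the two $j$-balls coincide unless the larger one reaches one of the $|W|\le R_N$ extra vertices. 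Combining, $d_{\mathrm{TV}}(\mathrm{Law}(Y\mid X=x),\mathrm{Law}(Y)) = \mathcal{O}(R_N^{2}/N)$ on the good set, and since $R_N^{2}/N \le C''^{2}N^{2a-1}\log^{2}N$, averaging over $X$ gives the claimed bound with any $\varepsilon<\min\{\varepsilon_0,(1-2a)/2\}$, where $\varepsilon_0$ is the exponent of Lemma~\ref{lem_neighborhoodsize}.

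For the ``consequently'' part one invokes the maximal coupling, which realizes any two laws on a common space with disagreement probability exactly their total-variation distance; applied to $\mathrm{Law}(X,Y)$ and $\mathrm{Law}(X)\otimes\mathrm{Law}(Y)$, and using that $\mathcal{O}(N^{-\varepsilon})$ is $o(1)$, this produces a coupling with $\Prob\big((\mathcal{N}_{G,d}(i),\mathcal{N}_{G,d}(j))\neq(\mathcal{N},\mathcal{N}')\big)=o(1)$. I expect the only genuine obstacle to be getting the conditional description of $G$ given $X=x$ exactly right — in particular, not forgetting that the slots leaving the last sphere $\mathcal{S}_{G,d}(i)$ remain free — and carrying out the first-moment estimates legitimately under the conditioned measure, where the $\mathcal{O}(R_N)$ size bound must be re-derived for the residual Erdős--Rényi graph on $W^c$; this is routine but is where all the bookkeeping lives. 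The probabilistic content is nothing more than the birthday-type inequality $R_N^{2}/N = o(1)$, which is precisely why the hypothesis $c\log\mu<1/2$ (equivalently $\mu^{d}=o(\sqrt N)$ up to logarithms) is the relevant assumption.
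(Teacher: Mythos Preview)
The paper does not give its own proof of this lemma; it simply records it as Lemma~2.3 of \cite{ganassali2020tree} and refers the reader there. So there is no in-paper argument to compare against, and your proposal should be judged on its own.

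Your approach is the standard one and is correct in outline: disintegrate via
\[
d_{\mathrm{TV}}\big(\mathrm{Law}(X,Y),\mathrm{Law}(X)\otimes\mathrm{Law}(Y)\big)=\Expe_X\big[d_{\mathrm{TV}}(\mathrm{Law}(Y\mid X),\mathrm{Law}(Y))\big],
\]
condition on a typical realization $X=x$ with small support $W$, identify the conditional law of $G$ (edges inside $W$ fixed, edges from $\mathcal{N}_{G,d-1}(i)$ to $W^c$ forbidden, all other slots still i.i.d.\ $\mathrm{Ber}(\mu/N)$), and then run two birthday-type first-moment bounds to get $\mathcal{O}(R_N^2/N)=\mathcal{O}(N^{2a-1}\log^2 N)$ with $a=c\log\mu<1/2$. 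The maximal-coupling consequence is immediate. Two small points worth tightening: (i) you include ``$j\notin W$'' in the good set but do not argue that $\Prob(j\in\mathcal{N}_{G,d}(i))$ is itself $\mathcal{O}(N^{-\varepsilon})$; this follows by vertex symmetry, since $\Prob(j\in\mathcal{N}_{G,d}(i))\le \Expe|\mathcal{N}_{G,d}(i)|/(N-1)=\mathcal{O}(\mu^d/N)$; (ii) your conditional description implicitly treats $\mathcal{N}_{G,d}(i)$ as the rooted induced subgraph rather than the bare vertex set---this is indeed how the paper uses it (cf.\ Lemma~\ref{lem_likeGWtrees}), but it is worth saying explicitly, since otherwise ``edges internal to $W$ are exactly those of $x$'' would not be determined by the conditioning.

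This is almost certainly the same argument as in \cite{ganassali2020tree}; there is no materially different route to this estimate in the sparse Erd\H{o}s--R\'enyi setting.
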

\begin{lemma}[Neighborhoods look like Galton--Watson trees]\label{lem_likeGWtrees}
	Let $G, d, \mu$ and $c$ be like in Lemma \ref{lem_neighborhoodsize}. Then there exists $\varepsilon > 0$ such that for all nodes $i \in V$, 
	\[
	d_{\mathrm{TV}}\Big(\mathcal{N}_{G,d}(i), \, \mathrm{GW}^{(\mu)}_d \Big) \in \mathcal{O}(N^{-\varepsilon}).
	\]
\end{lemma}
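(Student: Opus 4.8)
The plan is to prove Lemma~\ref{lem_likeGWtrees} by an explicit exploration (breadth-first search) coupling between the neighborhood $\mathcal N_{G,d}(i)$ revealed generation by generation and a $\mathrm{Poi}(\mu)$ Galton--Watson tree pruned at depth $d$. Concretely, I would run a BFS from $i$: having revealed the vertices at distance $\le t$, each vertex $v$ at distance exactly $t$ from $i$ gets as its "children" the not-yet-revealed vertices of $[N]$ that are adjacent to $v$. If $m_t$ vertices have been revealed so far, the number of such children is distributed as $\mathrm{Bin}(N-m_t,\mu/N)$, independently across the vertices at level $t$ \emph{conditionally} on the number revealed. The corresponding quantity in the Galton--Watson tree is an i.i.d.\ $\mathrm{Poi}(\mu)$ variable. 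So the proof reduces to (a) controlling the total variation distance between $\mathrm{Bin}(N-m,\mu/N)$ and $\mathrm{Poi}(\mu)$ when $m$ is not too large, and (b) showing that two discrepancies — the depletion effect ($N-m$ instead of $N$) and collisions (a candidate child already appearing elsewhere in the BFS tree, which turns the graph neighborhood into a genuine tree only if no such collision occurs) — are both negligible with the claimed polynomial rate.

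The key steps, in order: First, invoke Lemma~\ref{lem_neighborhoodsize} to get a high-probability event $\mathcal E$ on which $|\mathcal N_{G,d}(i)| \le \sum_{t\le d} C\log(N)\mu^t = \mathcal O(\log(N)\mu^d) = \mathcal O(N^{c\log\mu}\log N)$, which is $N^{1/2-\delta+o(1)}$ for some $\delta>0$ by the hypothesis $c\log\mu < 1/2$. On $\mathcal E$ all the $m_t$ appearing in the BFS are bounded by this quantity $M_N = o(\sqrt N)$. Second, bound the probability of a collision: at each of the at most $M_N$ revealed vertices, each of at most $M_N$ candidate neighbors coincides with an already-seen vertex with probability $\le M_N/N$, so by a union bound the probability that the explored neighborhood fails to be a tree is $\mathcal O(M_N^3/N) = \mathcal O(N^{-1/2+o(1)})$. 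Third, on the collision-free event, couple each $\mathrm{Bin}(N-m,\mu/N)$ offspring count with a $\mathrm{Poi}(\mu)$ variable using the standard bound $d_{\mathrm{TV}}(\mathrm{Bin}(k,p),\mathrm{Poi}(kp)) \le kp^2$ together with $d_{\mathrm{TV}}(\mathrm{Poi}(\mu_1),\mathrm{Poi}(\mu_2)) \le |\mu_1-\mu_2|$; here $kp^2 \le \mu^2/N$ and $|\mu - (N-m)\mu/N| = m\mu/N \le M_N\mu/N$, so each coupling fails with probability $\mathcal O(M_N/N)$. Fourth, sum these failure probabilities over all at most $M_N$ vertices explored to get an overall coupling failure probability $\mathcal O(M_N^2/N) + \mathcal O(M_N^3/N) + \Prob(\mathcal E^c) = \mathcal O(N^{-\varepsilon})$ for a suitable $\varepsilon>0$ (any $\varepsilon < 1/2 - c\log\mu$ works, shrinking as needed to absorb the $N^{o(1)}$ and the $\mathcal O(N^{-\varepsilon_0})$ from Lemma~\ref{lem_neighborhoodsize}). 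Since total variation distance is bounded by the probability that an explicit coupling disagrees, this yields the claim.

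I would present the coupling carefully as a recursive construction: simultaneously build the revealed graph-neighborhood and a GW tree, matching vertices level by level, declaring failure as soon as either a collision occurs, an offspring-count coupling fails, or the size bound $M_N$ is exceeded; on the complement of failure the two structures are isomorphic as labeled rooted trees (after the natural identification), so they are equal in the sense relevant to $d_{\mathrm{TV}}$. The bookkeeping that the conditional offspring distributions are exactly $\mathrm{Bin}(N-m_t^{(v)},\mu/N)$ — where $m_t^{(v)}$ is the number of vertices excluded when processing $v$ — and that, conditionally on the sequence of these counts, the offspring sets are independent uniform subsets, is where one must be slightly pedantic; this is the routine but load-bearing part.

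The main obstacle is not any single estimate — each of the three error sources is elementary — but rather setting up the coupling so that the dependence structure is transparent: in the BFS the number of available vertices $N - m$ for a given vertex depends on everything revealed before it, so one cannot naively claim the offspring counts are i.i.d.\ $\mathrm{Bin}(N,\mu/N)$. The clean way is to process vertices in a fixed BFS order and note that, \emph{given} the counts revealed so far (hence given $m$), the next count is genuinely $\mathrm{Bin}(N-m,\mu/N)$; then one couples this conditional law to $\mathrm{Poi}(\mu)$ step by step, and the accumulated coupling error telescopes. Making the "$d$ and $N$ diverge jointly" hypothesis do its job — ensuring $M_N = \mu^{d+o(\log N)} = N^{o(1)}\cdot N^{c\log\mu}$ stays $o(\sqrt N)$ so the collision term $M_N^3/N$ still vanishes polynomially — is the one place where the precise choice $c\log\mu < 1/2$ is used, and it is worth spelling out explicitly.
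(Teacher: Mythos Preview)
The paper does not prove this lemma in-text; it is cited from \cite{ganassali2020tree} (their Lemma~2.4), so there is no in-paper argument to compare against. Your BFS/exploration coupling is the standard route and is correct in outline.

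There is one arithmetic slip to fix. Your collision bound $\mathcal O(M_N^3/N)$ overcounts: the correct estimate is $\mathcal O(M_N^2/N)$, since the number of potential non-tree edges among the revealed vertices is at most $\binom{M_N}{2}$ and each is present independently with probability $\mu/N$. This matters because under the stated hypothesis $c\log\mu < 1/2$ one only has $M_N = N^{c\log\mu + o(1)}$, so $M_N^3/N = N^{3c\log\mu - 1 + o(1)}$ need not vanish (take any $c\log\mu \in [1/3,1/2)$); in particular your claimed exponent $\varepsilon < 1/2 - c\log\mu$ does not follow from the cubic bound. The double-count is in the sentence ``at each of the at most $M_N$ revealed vertices, each of at most $M_N$ candidate neighbors'': the phrase ``at most $M_N$ candidate neighbors'' already totals the children across the whole tree, so the outer factor of $M_N$ vertices should be dropped. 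With the corrected $M_N^2/N$, all three error terms are $\mathcal O(N^{2c\log\mu - 1 + o(1)})$ and the argument goes through with exactly the exponent you state.
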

Asymptotically for $N \to \infty$, Lemma \ref{lem_neighborhoodsize} allows to control the number of nodes in neighborhoods of size $\mathcal{O}(\log(N))$. Furthermore, Lemma \ref{lem_independence_of_different_neighborhoods} states that the neighborhoods of two different fixed nodes behave independently. Finally, Lemma \ref{lem_likeGWtrees} affirms that these neighborhoods look as if they were sampled like a Galton--Watson tree with Poisson offspring.
\begin{remark}
	Consider the more general case of a graph $ G \sim \mathrm{ER}(n, p) $ with random node number $n \sim \mathrm{Bin}(N, q)$ and edge probability $p = \lambda / \Expe[n]$. Since with high probability, $np = n \lambda / \Expe[n] \rightarrow \lambda$, one can easily adjust the proofs in \cite{ganassali2020tree} for the lemmata to remain valid in this random node number case. \\
	We later apply these modified versions of the lemmata to $(G, G') \sim \mathrm{CER}(N, \lambda, s, s')$. To do so, recall from (\ref{marginals}) that the marginal laws are $G \sim \mathrm{ER}(n, \lambda s /\Expe[n])$ and $G' \sim \mathrm{ER}(n', \lambda s' /\Expe[n'])$. The depth--$d$--neighborhoods of all nodes $i$ in $G$ will therefore converge towards $\mathrm{GW}^{(\lambda s)}_d$ and those of $j \in G'$ towards $\mathrm{GW}^{(\lambda s')}_d$.\\
	The same reasoning holds for $G_* \sim \mathrm{ER}(n_*, \lambda rr'qq'/(Nqq')) = \mathrm{ER}(n_*, \lambda ss'/\Expe[n_*])$ whose node neighborhoods asymptotically behave as $\mathrm{GW}^{(\lambda s s')}_d$.
\end{remark}

\subsection{Correlated Galton Watson trees}\label{section_correlated_GW_trees}
From the way we have defined correlated sparse Erdős--Rényi graphs in section \ref{section:correlated_graph_model}, it is straightforward to transpose the results about single graph neighborhoods to joint neighborhoods of graph couples. Recall that all information about the correlation of a pair $(G, G') \sim \mathrm{CER}(N, \lambda, s, s')$ is contained in the intersection graph $G_*$. Translating this to the local level, tree-like neighborhoods of the same node in $G$ and $G'$ are correlated via an \textit{intersection tree} corresponding to the neighborhood in $G_*$. This leads us to introduce the following model of correlated Galton--Watson trees.

\begin{definition}[Correlated Galton--Watson trees]\label{def:corr_GW_trees}
	For $\lambda > 0$ and $s, s' \in [0,1]$, define the law of correlated Galton--Watson trees $(t, t') \sim\Pcorr$ via the following sampling algorithm:
	\begin{enumerate}
		\item Sample a rooted tree $t_* \sim \mathrm{GW}^{(\lambda s s')}$ which we refer to as the intersection tree.
		\item Independently for each node $i$ of $t_*$, sample $c^+(i) \sim \mathrm{Poi}(\lambda s (1-s'))$ and attach $c^+(i)$ additional child nodes to $i$. 
		\item Then, for each child node $j$ added in step 2, independently sample a tree $t^+(j) \sim \mathrm{GW}^{(\lambda s)}$ and attach it, making $j$ its root.
		\item Call the resulting tree $t$. Then repeat steps 2 and 3 independently with a copy of $t_*$, but change the laws $ c^+(i) \sim \mathrm{Poi}(\lambda s' (1-s)) , \, \,   t^+(j) \sim \mathrm{GW}^{(\lambda s')} $ and call the result $t'$.
	\end{enumerate}
	Figure \ref{tree_sampling} illustrates this sampling process. While the procedure yields a pair of potentially infinitely deep trees $ (t, t') $, we will often consider their depth--$d$--truncated versions $(t_d, t'_d)$ whose law we denote as $\Pcorr_d$. 
\end{definition}
\begin{figure}[t]
	\includegraphics[width=0.7\textwidth]{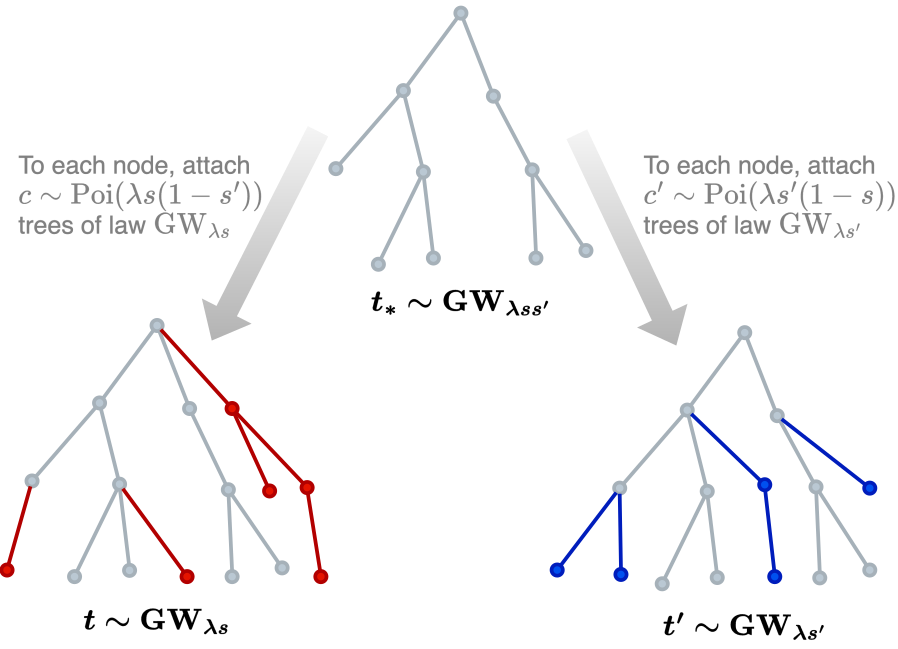}
	\caption{Sampling process of $  (t, t') \sim\Pcorr $ up to depth 3. The blue and red components are independently sampled and attached to the grey intersection tree.}
	\label{tree_sampling}
\end{figure}
For a pair of trees $(t, t') \sim\Pcorr_d$, independently assign random labels to their nodes, like we do for a pair of graphs in section \ref{section:True_node_correspondences}. The correlation between $t$ and $t'$ then relies on the embedded intersection tree $t_*$. It can be expressed via a mapping $\sigma_* : V_* \to V_*'$ where $V_*$ and $ V_*' $ are the node subsets of $t$ and $t'$ corresponding to the nodes of $t_*$.

The trees from Definition \ref{def:corr_GW_trees} are intimately related to sparse correlated Erdős--Rényi graphs. This is formalized in the following lemma, which is an adaptation of \cite[Lemma 6.4]{luca}, and its proof follows directly from Lemma \ref{lem_likeGWtrees}.
\begin{lemma}\label{lemma:correlated_graphs_converge_towards_P1}
	Given $(G, G') \sim \mathrm{CER}(N, \lambda, s, s')$, let $ (i, \sigma_*(i)) \in V_* \times V'_*$ be a pair of aligned nodes in the sense of section \ref{section:True_node_correspondences}. If furthermore $ d = \lfloor c \log(N) \rfloor $ for $c$ fulfilling $2 c \log( \lambda \max\{s, s'\}) < 1$, then there exists $\varepsilon > 0$ such that
	\begin{equation*}
		d_{\mathrm{TV}} \Big(\, \Big(\mathcal{N}_{G,d}(i), \mathcal{N}_{G',d}(\sigma_*(i)) \Big), \, \Pcorr_d \, \Big) \in \mathcal{O}(N^{-\varepsilon}).
	\end{equation*}
	Hence, there exists an optimal coupling \cite[Theorem 4.1]{villani2009optimal} between the neighborhood couple $ \Big(\mathcal{N}_{G,d}(i), \mathcal{N}_{G',d}(\sigma_*(i)) \Big) $ and a pair of trees $(t, t') \sim\Pcorr_d$ fulfilling 
	\[
	\Prob\Big(\, \Big(\mathcal{N}_{G,d}(i), \mathcal{N}_{G',d}(\sigma_*(i)) \Big) \neq (t,t') \Big)\in o(1).
	\]
\end{lemma}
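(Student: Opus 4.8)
The plan is to prove Lemma~\ref{lemma:correlated_graphs_converge_towards_P1} by tracking the total-variation cost of matching up, step by step, the sampling of the correlated Erd\H{o}s--R\'enyi pair with the sampling of the correlated Galton--Watson pair, conditioning on the intersection structure. Concretely, I would first recall that $(G,G')\sim\mathrm{CER}(N,q,q',r,r')$ is built from an intersection graph $G_*\sim\mathrm{ER}(n_*,ss'\lambda/N)$ followed by independent node/edge augmentations, while $(t,t')\sim\Pcorr_d$ is built from an intersection tree $t_*\sim\mathrm{GW}^{(\lambda ss')}$ followed by independent augmentations whose offspring laws are $\mathrm{Poi}(\lambda s(1-s'))$ (resp.\ $\mathrm{Poi}(\lambda s'(1-s))$) plus freshly attached $\mathrm{GW}^{(\lambda s)}$ (resp.\ $\mathrm{GW}^{(\lambda s')}$) subtrees. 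The key observation is that the neighborhood $\mathcal{N}_{G_*,d}(i)$ is itself a depth-$d$ neighborhood of a (random-node-number) sparse Erd\H{o}s--R\'enyi graph with average degree $\to\lambda ss'$, so the remark following Lemma~\ref{lem_likeGWtrees} applies: $d_{\mathrm{TV}}\big(\mathcal{N}_{G_*,d}(i),\mathrm{GW}^{(\lambda ss')}_d\big)\in\mathcal{O}(N^{-\varepsilon_0})$.

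Next I would condition on the realization of the intersection neighborhood and argue that, given it, the two augmentation procedures agree up to a further $\mathcal{O}(N^{-\varepsilon})$ TV error. On the graph side, each node $v$ already discovered in $\mathcal{N}_{G_*,d-1}(i)$ acquires, through node augmentation and edge augmentation in $G$, a set of new $G$-neighbors among the roughly $n$ remaining vertices; since edges appear independently with probabilities $rr'\lambda/N$ and $r(1-r')\lambda/N$ that together sum to $s(1-s')\lambda/N$ conditional on the vertex not lying in $V_*$ (and the aggregate rate of \emph{new} neighbors per node converges to $\lambda s(1-s')$), the number of such children is a Binomial that couples with a $\mathrm{Poi}(\lambda s(1-s'))$ at cost $\mathcal{O}(1/N)$ per node by the standard Poisson approximation, and the subtrees hanging off these new children are again depth-$(<d)$ sparse ER neighborhoods converging to $\mathrm{GW}^{(\lambda s)}$. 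Summing these per-node costs over the $\mathcal{O}(\log N \cdot \lambda^d)=N^{o(1)}$ vertices of the neighborhood (controlled by Lemma~\ref{lem_neighborhoodsize}) keeps the total within $\mathcal{O}(N^{-\varepsilon})$ after shrinking $\varepsilon$; the identical argument on the $G'$ side with $s$ replaced by $s'$ handles $t'$, and the two augmentations are mutually independent given $G_*$, matching the independence built into Definition~\ref{def:corr_GW_trees}. A triangle inequality combining the intersection-tree error with the two augmentation errors yields the claimed $d_{\mathrm{TV}}\big((\mathcal{N}_{G,d}(i),\mathcal{N}_{G',d}(\sigma_*(i))),\Pcorr_d\big)\in\mathcal{O}(N^{-\varepsilon})$.

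Finally, the ``hence'' clause about the coupling is the routine consequence of the TV bound: by the coupling characterization of total variation (e.g.\ \cite[Theorem 4.1]{villani2009optimal}), there is a coupling under which the neighborhood pair and $(t,t')\sim\Pcorr_d$ differ with probability at most $d_{\mathrm{TV}}\in\mathcal{O}(N^{-\varepsilon})=o(1)$. One subtlety worth flagging is the bookkeeping of which vertices are ``available'' for augmentation edges: in the graph, a vertex can simultaneously be reached by several augmentation edges or by both an augmentation edge and an intersection edge, creating cycles, whereas the tree model forbids this; but the probability that any such collision occurs within a size-$N^{o(1)}$ neighborhood is $\mathcal{O}(N^{o(1)}\cdot N^{o(1)}/N)=\mathcal{O}(N^{-1+o(1)})$, which is absorbed into the error term exactly as in the proof of Lemma~\ref{lem_likeGWtrees}.

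\medskip

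\textbf{Main obstacle.} The delicate point is not any single estimate but the \emph{uniform} control of the cumulative coupling error across all $N^{o(1)}$ vertices of the depth-$d=\lfloor c\log N\rfloor$ neighborhood while simultaneously disentangling the three sources of new edges (intersection, node-augmentation, edge-augmentation) so that the conditional offspring distribution of \emph{genuinely new} vertices is exactly the mixture that Definition~\ref{def:corr_GW_trees} prescribes. This is precisely where the hypothesis $c<(2\log\lambda)^{-1}$ is used, via Lemma~\ref{lem_neighborhoodsize}, to guarantee $\lambda^d=N^{c\log\lambda}=o(N^{1/2})$ so that the per-vertex error $\mathcal{O}(1/N)$ times the vertex count $\mathcal{O}(\log N\cdot\lambda^d)$ still decays polynomially; carrying this out carefully, rather than the Poisson approximations themselves, is the real work, and it is essentially the content of adapting \cite[Lemma 6.4]{luca} to the asymmetric $\mathrm{CER}$ model.
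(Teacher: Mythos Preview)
Your proposal is correct and follows essentially the same approach the paper indicates: the paper does not give a standalone proof but states that the result ``follows directly from Lemma~\ref{lem_likeGWtrees}'' as an adaptation of \cite[Lemma~6.4]{luca}, and your write-up is precisely a fleshed-out version of that---coupling the intersection neighborhood to $\mathrm{GW}^{(\lambda ss')}_d$ via Lemma~\ref{lem_likeGWtrees}, then handling the two independent augmentation layers by Poisson approximation with the neighborhood-size control of Lemma~\ref{lem_neighborhoodsize}. One small bookkeeping remark: your sentence about the augmentation edge probabilities summing to $s(1-s')\lambda/N$ is slightly garbled (the correct accounting is that a vertex of the intersection tree gains, in expectation, $\lambda qr(1-q'r')=\lambda s(1-s')$ genuinely new $G$-children once one aggregates over both node- and edge-augmentation and over targets in $V_*$ versus $V\setminus V_*$), but your conclusion and the rest of the argument are right.
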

\begin{remark}
	It may seem surprising that the global correlation parameters $s, s'$ are sufficient in Definition \ref{def:corr_GW_trees} to describe the local tree laws, ignoring the fact whether edges or nodes were added during the sampling process of $(G, G')$. The reason is that adding a new edge or a new node is equivalent in a tree: Both operations result in adding one leaf and one edge. Hence, only the products of edge and node correlation parameters matter, i.e., $s = r q$ and $ s' = r' q'$.
\end{remark}
Having described the neighborhoods of correctly matched nodes $(i, \sigma_*(i))$, the following section describes a way of distinguishing these from incorrect matches.

\subsection{One-sided hypothesis testing}\label{section:one-sided_hypo_testing}

Given two nodes $i \in G$ and $j \in G'$, there are two cases: Either they should be aligned, i.e., $i \in V_*$ and $\sigma_*(i) = j$, in which case Lemma \ref{lemma:correlated_graphs_converge_towards_P1} characterizes the joint law $\Pcorr$ of their neighborhoods. Or they should not be matched, i.e., $i \neq V_*$ or $ \sigma_*(i) \neq j $. Intuitively, if $i$ and $j$ are far apart in $G_*$, the neighborhood laws of $i$ and $j$ are roughly independent, in the sense that these neighborhoods are approximate samples of the following joint tree law: Let $t \sim \mathrm{GW}^{(\lambda s)}$ and $t' \sim \mathrm{GW}^{(\lambda s')}$ with $t, t'$ independent, and define
\begin{equation*}\label{def:independent_GW}
	\Pind:= \mathrm{Law}(t, t') = \mathrm{GW}^{(\lambda s)} \otimes \mathrm{GW}^{(\lambda s')}.
\end{equation*}
Furthermore, for a depth $d \in \N$ we denote by $\Pind_d $ the law of the truncated trees $ (t_d, t'_d) $.  

The approximate independence of the tree neighborhoods of  $i$  and  $j$ can be established regardless of their distance in  $G_*$  by employing the following technique:

\paragraph{Dangling tree trick} Instead of comparing the entire neighborhoods of two nodes, we focus on subsets of these neighborhoods that are \textit{pointing away} from the nodes. Formally, for two adjacent nodes $i$ and $i'$ of $G$, we define
\[
\mathcal{N}_{G, d}(i\to i') := \Big\{ j \in \mathcal{N}_{G, d}(i)\, : \, \mathrm{dist}_G(j,i') < \mathrm{dist}_G(j,i)\Big\}.
\]
If $  \mathcal{N}_{G, d}(i) $ is cycle-free, it can be viewed as a tree rooted in $i$. The subset $ \mathcal{N}_{G, d}(i\to i') $ then corresponds to the subtree rooted in $i'$ which has depth $d-1$. Visually speaking, $ \mathcal{N}_{G, d}(i\to i')$  describes the neighborhood rooted in $i'$ that is pointing away from $i$. 

\textbf{Why introduce $\boldsymbol{ \mathcal{N}_{G, d}(i\to i')  }$?} Take $(G, G') \sim \mathrm{CER}(N, \lambda, s, s')$ and recall that their common subgraph $G_*$ is embedded in $G$ and $G'$ via the node sets $V_* \subset V$ and $V_*'\subset V'$ respectively. The following lemma and definition let us bridge the gap between pointing neighborhoods and approximately independent trees:
\begin{lemma}\label{lemma:danglingTree}
	For $(G, G') \sim \mathrm{CER}(N, \lambda, s, s')$, let $i \in V, j \in V'$ and suppose that
	\begin{enumerate}[label=(\roman*)]
		\item either $i \not \in V_*$ or $\sigma_*(i) \neq j$,
		\item $\mathcal{N}_{G, 2d}(i)$ and $ \mathcal{N}_{G', 2d}(j)$ are cycle-free,
		\item both nodes have at least three neighbors, named $ i_1, i_2, i_3 $ and $j_1, j_2, j_3$.
	\end{enumerate}
	Then, there is at least one pair $(i_k, j_k)$ such that $\mathcal{N}_{G, d}(i\to i_k)\cap V_*$ and $\mathcal{N}_{G, d}(j\to j_k)\cap V_*'$ have disjoint node sets when embedded in $G_*$. 
\end{lemma}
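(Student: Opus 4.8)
The plan is to argue by contradiction, leveraging the tree structure of the neighborhoods (guaranteed by assumption (ii)) together with the fact that $i$ and $j$ are \emph{not} correctly aligned (assumption (i)). Since $\mathcal{N}_{G,d}(i)$ and $\mathcal{N}_{G',d}(j)$ are cycle-free, we may view them as trees rooted at $i$ and $j$; the three neighbors $i_1,i_2,i_3$ and $j_1,j_2,j_3$ split these trees into three pairwise-disjoint subtrees $\mathcal{N}_{G,d}(i\to i_k)$, $k\in\{1,2,3\}$, and likewise on the primed side. The key observation is that for two distinct nodes $u,v$ in a tree, their depth-$(d-1)$ neighborhoods pointing away from a common neighbor can intersect $V_*$ in at most a very controlled way — specifically, the portions of $t$ and $t'$ corresponding to $G_*$ form the embedded intersection graph $G_*$, and within $G_*$ (which is itself locally tree-like inside these cycle-free balls), there is a unique shortest path between any two of the relevant nodes.

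First I would set up the geometry inside $G_*$. Consider the nodes among $\{i_1,i_2,i_3\}$ that lie in $V_*$ and similarly among $\{j_1,j_2,j_3\}$ in $V'_*$; if fewer than three of the $i_k$ lie in $V_*$, or fewer than three of the $j_k$ lie in $V'_*$, then immediately one of the pointing neighborhoods $\mathcal{N}_{G,d}(i\to i_k)\cap V_*$ or $\mathcal{N}_{G',d}(j\to j_k)\cap V'_*$ is empty (or nearly so), and emptiness trivially gives the disjointness we want. So assume all six neighbors lie in the intersection sets. Now suppose, for contradiction, that for \emph{every} $k\in\{1,2,3\}$ the sets $\mathcal{N}_{G,d}(i\to i_k)\cap V_*$ and $\mathcal{N}_{G',d}(j\to j_k)\cap V'_*$, embedded in $G_*$, share a node. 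For each $k$ pick such a shared node $w_k\in G_*$. Because $\mathcal{N}_{G,d}(i)$ is a tree, the path in $G_*$ from $w_k$ to $i$ passes through $i_k$; because $\mathcal{N}_{G',d}(j)$ is a tree, the path in $G_*$ from $w_k$ to $j$ passes through $j_k$.

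Next I would derive the contradiction by a counting/parity argument on these paths inside $G_*$. The three nodes $w_1,w_2,w_3$ lie in $G_*$, and from each $w_k$ there is a $G_*$-path to $i$ entering through the distinct neighbor $i_k$, and a $G_*$-path to $j$ entering through the distinct neighbor $j_k$. Since $G_*$ restricted to the relevant balls is cycle-free (it is a subgraph of the cycle-free $\mathcal{N}_{G,d}(i)$, and its embedding in $G'$ sits inside the cycle-free $\mathcal{N}_{G',d}(j)$), the union of these six paths together with $i$ and $j$ must itself be a tree in $G_*$; but having two distinct "hubs" $i$ and $j$ each connected to all three $w_k$ through distinct ports forces a cycle unless $i=j$ as nodes of $G_*$ — and more precisely unless $i$ and $j$ are the \emph{same} node of $G_*$ with $\sigma_*(i)=j$, contradicting assumption (i). Making this precise amounts to observing that in a tree, a node with three disjoint pending subtrees cannot have another node sharing all three of those subtrees unless the two nodes coincide.

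The main obstacle I anticipate is bookkeeping the identification between $G$-coordinates, $G'$-coordinates, and $G_*$-coordinates cleanly: a node $w_k$ lives simultaneously as a vertex of $G$ (in $V_*$), of $G'$ (in $V'_*$ via $\sigma_*$), and of $G_*$, and "disjoint node sets when embedded in $G_*$" must be interpreted through the bijection $\sigma_*$. One must be careful that $\mathcal{N}_{G,d}(i\to i_k)\cap V_*$ and $\mathcal{N}_{G',d}(j\to j_k)\cap V'_*$ are compared \emph{after} pulling both back to $G_*$ — i.e.\ we are asking whether there is $w\in G_*$ whose $G$-copy lies in the first set and whose $G'$-copy lies in the second. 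Once the notation is fixed, the combinatorial heart is the elementary tree fact above, applied inside the forest $G_*\cap\big(\mathcal{N}_{G,d}(i)\cup\mathcal{N}_{G',d}(j)\big)$, using that $d$ is large enough that the pointing subtrees reach depth $d-1\ge 1$ so the argument is not vacuous. A secondary subtlety is handling the degenerate cases where some $i_k\notin V_*$ or $j_k\notin V'_*$, or where the shared node $w_k$ happens to equal one of the $i_\ell$ or $j_\ell$; these are handled by the same tree-uniqueness-of-paths principle and do not require separate ideas.
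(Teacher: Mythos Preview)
Your proposal has a genuine gap: you are conflating paths in three different graphs. The paths from $w_k$ to $i$ that you invoke are paths in $G$, and the paths from $w_k$ to $j$ are paths in $G'$; neither family need lie in $G_*$, and neither $i$ nor $j$ need be a vertex of $G_*$ at all. So the sentence ``the union of these six paths together with $i$ and $j$ must itself be a tree in $G_*$'' is not well-posed --- you cannot take a union inside $G_*$ of objects that do not live there. Even in the friendliest case $i\in V_*$, $j\in V'_*$, the $G$-path from $w_k$ to $i$ may traverse augmented nodes and edges absent from $G_*$, so the cycle you want to exhibit has no single graph in which to live. Your ``elementary tree fact'' is correct for a \emph{single} tree, but here the two tree structures (around $i$ in $G$ and around $j$ in $G'$) overlap only through $G_*$, and gluing them along $G_*$ can genuinely create cycles without contradicting anything. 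A second, smaller gap: $i_k\notin V_*$ does \emph{not} force $\mathcal{N}_{G,d}(i\to i_k)\cap V_*=\emptyset$, since an augmented node $i_k$ can have neighbors (hence descendants) in $V_*$; so your preliminary case reduction fails.

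The paper sidesteps both issues by never leaving $G_*$. Instead of $i$ and $j$ it works with $i_*$ and $j_*$, the closest $V_*$-nodes to $i$ and $j$ respectively, which are honest vertices of $G_*$. If the two $V_*$-neighborhoods meet in $G_*$, there is a single $G_*$-path from $i_*$ to $j_*$, and a pigeonhole argument finishes: that path can enter at most one of the three branches on the $i$-side and at most one on the $j$-side, so some index $m_*\in\{1,2,3\}$ is avoided on both sides, and for that index the two pointing neighborhoods are disjoint in $G_*$. Your three-witness contradiction could probably be repaired by first passing to $i_*,j_*$ and arguing inside $G_*$, but as written it does not go through.
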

\begin{definition}\label{def:disjTrees}
	A pair of trees $(t, t')$ of depth $d$ is said to follow the law $\Pdisjdminus$ if they are sampled as $\mathcal{N}_{G, d}(i\to i_k)\cap V_*$ and $\mathcal{N}_{G, d}(j\to j_k)\cap V_*'$ in the above lemma conditioned on $(i), (ii), (iii)$ and the fact that they have disjoint node sets when embedded in $G_*$. \\
	In other words, $\Pdisjdminus$ is the law of a tree-like neighborhood pair in correlated Erdős--Rényi graphs, conditioned to avoid each other in the intersection graph $G_*$. 
\end{definition}
\begin{remark}\label{remark:dangling_tree}
	This is called the dangling tree trick because Lemma \ref{lemma:danglingTree} guarantees the existence of disjoint subtrees in the neighborhoods  $\mathcal{N}_{G,d}(i)$ and $ \mathcal{N}_{G',d}(j)$. Since these neighborhoods are assumed to be trees, their pointing subtrees are figuratively \textit{dangling} from their respective roots $i$ and $j$. The usefulness of this trick stems from the similarity of the laws $\Pdisjd$ and $\Pindd$, which we have only hinted at so far and will show later in Lemma \ref{lemma:disj_vs_ind}.
\end{remark}

\begin{proof}[Proof of Lemma \ref{lemma:danglingTree}] 
	Let $\tau$ denote the subgraph of $G_*$ which is spanned by the nodes of $\mathcal{N}_{G, d}(i) \cap V_*$ and rooted in $ i_* \in \argmin_{k \in V_*} d_G(i, k) $. Since $\mathcal{N}_{G,2d}(i)$ is a tree, $\tau$ is also cycle-free. Define $\tau'$ similarly, as the subgraph with node set $\mathcal{N}_{G', d}(j) \cap V_*'$ rooted in $ j_* \in \argmin_{k \in V_*} d_{G'}(j, k)$.
	
	On the one hand, if $\tau$ and $\tau'$ do not share any nodes in $G_*$, they are disjoint subtrees of the same Erdős--Rényi graph. When augmenting both of these subgraphs to become $ \mathcal{N}_{G, d}(i) $ and $ \mathcal{N}_{G', d}(j) $ during the sampling procedure of $(G, G')$, they remain disjoint subtrees of those graphs, potentially changing roots from $i_*$ to $i$ and from $ j_*$ to $j$. Since these trees share no nodes, their subtrees do not share any either, and consequently all pairs $\big( \mathcal{N}_{G, d}(i\to i_m), \mathcal{N}_{G', d}(j\to j_{m}) \big)$ are of law $\Pdisjdminus$.
	
	On the other hand, if $\tau$ and $\tau'$ share a node in $G_*$, then there is exactly one path $\mathcal{P}_{i_* \leftrightarrow j_*}$ connecting $i_*$ with $j_*$ and contained in $ \big[\mathcal{N}_{G, d}(i) \cap V_*\big]  \cup  \big[ \mathcal{N}_{G', d}(j) \big) \cap V'_* \big]  $: If there was another such path $\mathcal{P}'$, then the union of $\mathcal{P}_{i_* \leftrightarrow j_*}$ and $\mathcal{P}'$ would be a cycle in $\mathcal{N}_{G, 2d}(i)$ which is a contradiction to (ii). \\
	Since $  \mathcal{N}_{G, d}(i) \cap V_*$ is a tree, all pointing subtrees $\mathcal{N}_{G, d}(i\to i_m) \cap V_*, \, m\in[3]$ are disjoint, and the same holds for $\mathcal{N}_{G, d}(j\to j_m) \cap V'_*, \, m\in[3]$.\\
	Consequently, $\mathcal{P}_{i_* \leftrightarrow j_*}$ can only be contained in two of the three sets $ \big[\mathcal{N}_{G, d}(i\to i_m) \cap V_* \big] \cup \big[ \mathcal{N}_{G', d}(j\to j_m) \cap V'_*\big], \, m \in [3]$. In other words, there is $m_* \in [3]$ such that $\mathcal{P}_{i_* \leftrightarrow j_*}$ is disjoint from $\big[ \mathcal{N}_{G, d}(i\to i_{m_*})\cap V_* \big] \cup \big[ \mathcal{N}_{G', d}(j\to j_{m_*}) \cap V'_*\big]$. Therefore, $ \mathcal{N}_{G, d}(i\to i_{m_*}) \cap V_*$ and $ \mathcal{N}_{G', d}(j\to j_{m_*}) \cap V'_*$ share no nodes.  Augmenting these two tree-neighborhoods in their respective graphs, one obtains that the couple $\mathcal{N}_{G, d}(i\to i_{m_*}), \,  \mathcal{N}_{G', d}(j\to j_{m_*}) $ follows the law $\Pdisjdminus$. 
\end{proof}
The dangling tree trick and the later Lemma \ref{lemma:disj_vs_ind} allow us to consider pairs of independent Galton--Watson trees and compare them to pairs of trees with law $\Pcorrd$. 

Given a pair of rooted trees $ (t, t') $ and their truncations $(t_d, t'_d)$, our goal is to determine whether they are correlated or independent. For any depth $d$, this translates to the hypothesis testing problem
\begin{equation}\label{testing_problem}
	\begin{aligned}
		H_0^{(d)}&: \quad (t_d, t'_d) \sim 	\Pindd, \quad \text{i.e. the trees are independent,}\\
		\text{vs. }H_1^{(d)} &: \quad (t_d, t'_d) \sim 	\Pcorrd, \quad \text{i.e. the trees are correlated.}
	\end{aligned}
\end{equation}
In the asymptotic regime where $d \to \infty$, the goal is to find a sequence of tests $\mathcal{T}_d$ to distinguish between $H_0^{(d)}$ and $ H_1^{(d)}$ at all depths $d$. It turns out that the correct performance measure for the test we are looking for is defined below; this definition uses the notation $\mathcal{X}_d := \{\text{rooted, unlabeled trees of depth } \leq d\}$ and we refer to section \ref{section_unlabeled_trees} for a rigorous introduction of $\mathcal{X}_d$.
\begin{definition}[one-sided test]\label{def:one-sided_Testing}
	We call a function $ \mathcal{T}_d : \mathcal{X}_d \times \mathcal{X}_d \rightarrow\{0,1\}$ a \textit{test at depth $d$} and let $\Pindd (\mathcal{T}_d = 0)  $ denote the probability that a pair of truncated independent trees $ (t_d, t'_d) \sim\Pindd$ is correctly identified by $\mathcal{T}_d$, which is to say $ \mathcal{T}_d(t_d, t'_d) = 0 $.\\
	An \textit{asymptotic one-sided test } is then defined as a sequence of functions $(\mathcal{T}_d)_{d\in\N} $ with the following two properties:
	\begin{itemize}
		\item $\Pindd (\mathcal{T}_d = 1) \xrightarrow[d \to \infty]{} 0$, i.e., the tests have vanishing type-I-error.
		\item There exists $\beta > 0$ such that  $\Pcorrd(\mathcal{T}_d = 1) \geq \beta$ for all but finitely many $d$, meaning that the tests have significant power $\beta$. In other words $\liminf_{d\to \infty} \Pcorrd(\mathcal{T}_d = 1) > 0$. 
	\end{itemize}
\end{definition}
The entirety of Section \ref{section:tree_correl_testing} is dedicated to the question of whether one-sided tests exist for different parameter choices of $(\lambda, s, s')$ and how to define $(\mathcal{T}_d)_{d\in \N}$. For now, we consider those results a black box and suppose that one-sided tests exist, allowing us to define a graph matching algorithm.

\subsection{An algorithm to align graphs using tree comparison}
Throughout this section, we assume that $\lambda, s$ and $s'$ are such that there is a sequence of tests $(\mathcal{T}_d)_{d\in \N}$ for the problem (\ref{testing_problem}) which fulfill $\lambda ss' > 1$ and
\begin{equation}\label{eq:one-sided-assumptions}
	\Pindd\big(\mathcal{T}_d= 1\big) \leq \exp\big(-C (\lambda ss')^d\big)  \quad  \text{and} \quad \beta :=  \liminf_{d\to \infty} \Pcorrd(\mathcal{T}_d = 1) > 0.
\end{equation}
Following our above discussion, the idea for a graph matching algorithm is straightforward: For all pairs of nodes $i \in G$ and $j \in G'$, whose neighborhoods are trees, compare their dangling subtrees using $\mathcal{T}_{d}$ and decide whether to match $i$ with $j$. The following algorithm has first been introduced \cite{luca}:
\begin{algorithm}[H]
	\caption{ $\, \, $ MPAlign (message-passing for sparse graph alignment)}\label{algo}
	\hspace{-3.5cm}\textbf{Input:} Two graphs $G, G'$, a depth parameter $d$, and a test $\mathcal{T}_d$.\\
	\textbf{Output:} An alignment estimator $ \hat{\sigma}: \hat{V} \to \hat{V}' $ where $\hat{V}, \hat{V}'$ are node subsets of $G, G'$.
	\vspace{4pt}
	\begin{algorithmic}
		\State \textbf{Initialise} $\hat{V} \leftarrow \emptyset, \,  \hat{V}' \leftarrow \emptyset $ and let $\hat{\sigma}$ be the empty map,
		\For {$i \in V(G)$ and $j \in V(G')$}
		\If { neither $\mathcal{N}_{G, 2d}(i)$ nor  $\mathcal{N}_{G', 2d}(j)$ contain cycles,}
		\If { there exist triplets $i_1, i_2, i_3  \in \mathcal{N}_{G, 1}(i)$ and $j_1, j_2, j_3 \in \mathcal{N}_{G', 1}(j)$ such that
			\vspace{-8pt}
			\[
			\mathcal{T}_{d-1}\Big(\mathcal{N}_{G, d}(i \rightarrow i_k), \mathcal{N}_{G', d}(j \rightarrow j_k)\Big) = 1 \, \text{ for all } k\in \{1,2,3\},
			\vspace{-8pt}
			\]
			\hspace{45pt}
		} $\hat{V} \leftarrow \hat{V} \cup \{i\}, \, \hat{V}' \leftarrow \hat{V}' \cup \{j\}$ and set $\hat{\sigma}(i) = j$.
		\EndIf
		\EndIf
		\EndFor\\
		\Return $ \hat{\sigma}: \hat{V} \to \hat{V}'$
	\end{algorithmic}
\end{algorithm}
\begin{remark}[Runtime of MPAlign ]\label{remark: algo-runtime}
	We choose $d \in \mathcal{O}(\log(N))$ and show in Proposition \ref{prop_likhood_recursive}, that $\mathcal{T}_d$ can be computed in polynomial time. Since MPAlign is essentially a \textbf{for}-loop over all $\mathcal{O}(N^2)$ pairs of nodes, this algorithm is therefore polynomial as well. This is the minimum requirement for an efficient graph alignment algorithm.
\end{remark}
The following theorem establishes that MPAlign achieves one-sided partial alignment.
\begin{theorem}\label{thm:Algo_works}
	Let $\lambda > 0$ and $s, s' \in [0,1]$ be such that there exist tests $(\mathcal{T}_d)_{d\in\N}$ satisfying the one-sided feasibility conditions in (\ref{eq:one-sided-assumptions}). Define $(G, G') \sim \mathrm{CER}(N, \lambda, s, s')$ and let $d = \lfloor c \log(N)\rfloor$, where $c$ satisfies $4 c \log(\lambda \max\{s, s'\}) < 1$. Then, Algorithm \ref{algo} with inputs $(G, G', d, \mathcal{T}_d)$ yields an estimator $\hat{\sigma}: \hat{V} \to \hat{V}'$ which achieves one-sided partial alignment. In other words there exists $ \tilde{\beta} > 0 $ such that
	\begin{equation}\label{property:significant_overlap}
		\Prob\Big( \mathrm{ov}(\sigma_*, \hat{\sigma}) \geq \tilde\beta \Big) \xrightarrow[N\to\infty]{} 1 \quad \text{and} 
	\end{equation}
	\begin{equation}\label{property:negligible_error}
		\Prob\Big( \mathrm{err}(\sigma_*, \hat{\sigma}) = o(1) \Big) \xrightarrow[N\to\infty]{} 1.
	\end{equation}
\end{theorem}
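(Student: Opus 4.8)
The plan is to control two sources of error separately: false positives (pairs $(i,j)$ that get matched but shouldn't be) and the false negative rate (how many genuinely matchable pairs $(i, \sigma_*(i))$ the algorithm actually catches). The first gives \eqref{property:negligible_error}, the second gives \eqref{property:significant_overlap}. Throughout I would work under the high-probability ``good event'' that the neighbourhood lemmata hold: by Lemma \ref{lem_neighborhoodsize} every depth-$d$ sphere in $G$ and $G'$ has size $\mathcal O(\log N\, (\lambda\max\{s,s'\})^t)$, and by Lemma \ref{lem_likeGWtrees} the neighbourhoods are tree-like with the expected Galton--Watson law; the constraint $2c\log(\lambda\max\{s,s'\})<1$ is exactly what makes $d=\lfloor c\log N\rfloor$ compatible with these lemmata and keeps $N^{1/2+o(1)}$-type union bounds under control.

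\textbf{Step 1 (negligible error).} I would bound $\Expe[\,|\hat V|\cdot\mathrm{err}(\sigma_*,\hat\sigma)\,]$, i.e. the expected number of matched pairs $(i,j)$ with $i\notin V_*$ or $\sigma_*(i)\neq j$. For each such ``bad'' pair, Lemma \ref{lemma:danglingTree} guarantees that among the three chosen neighbour-branches there is at least one pair $(i_k,j_k)$ whose dangling subtrees, restricted to $V_*$, occupy disjoint parts of $G_*$ — hence (after Lemma \ref{lemma:disj_vs_ind}, to be proved in Section \ref{section:tree_correl_testing}) that branch pair is distributed as $\Pdisjdminus$, which is close in total variation to $\Pinddminus$. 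The test fires on all three branches, so in particular on that one; using the type-I bound from \eqref{eq:one-sided-assumptions}, the probability that this happens is at most $\exp(-C(\lambda ss')^{d-1})+o(1)$ per branch. Summing over the $\mathcal O(N^2)$ node pairs and over the $\mathcal O(\mathrm{polylog}\,N)$ ways to pick the neighbour triplets, the expected number of bad matches is at most $N^{2+o(1)}\exp(-C(\lambda ss')^{c\log N})$. Since $\lambda ss'>1$, choosing $c$ large enough (or simply noting $(\lambda ss')^{c\log N}=N^{c\log(\lambda ss')}$ grows polynomially and beats any fixed power of $N$ once $C$ times that exponent exceeds $2$ — or, if $c$ cannot be enlarged, iterating the argument / using the exponential more carefully) makes this $o(1)$ relative to $n$; a Markov inequality then gives $\mathrm{err}(\sigma_*,\hat\sigma)=o(1)$ with high probability. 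One subtlety to address: the events over different node pairs are not independent, but they are only ever used through a first-moment bound, so independence is not needed; one only has to be a little careful that the dangling-subtree law is $\Pdisjdminus$ and not exactly $\Pinddminus$, which is why Lemma \ref{lemma:disj_vs_ind} is invoked.

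\textbf{Step 2 (significant overlap).} For the positive side I would lower-bound the expected overlap. Fix $i\in V_*$ with $j=\sigma_*(i)$; by Lemma \ref{lemma:correlated_graphs_converge_towards_P1} the pair $(\mathcal N_{G,d}(i),\mathcal N_{G',d}(j))$ couples with a genuine $\Pcorr_d$ pair except on an $o(1)$ event, and on the good event it is cycle-free. Conditionally on the intersection tree, the three dangling branches at the root are (asymptotically) independent $\Pcorr_{d-1}$ trees, so the probability that the test fires on a fixed prescribed triplet of branches is at least $\beta^3-o(1)$ by \eqref{eq:one-sided-assumptions} and the asymptotic independence; since the algorithm only needs \emph{some} triplet to work, this is a valid lower bound for $i$ being added to $\hat V$. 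Hence $\Prob(\hat\sigma(i)=\sigma_*(i))\ge \beta^3 - o(1) =: \tilde\beta' > 0$, uniformly over $i\in V_*$ (one must also check the match is unambiguous — but in MPAlign a node $i$ simply gets the last $j$ that works, and a short argument using Step 1 shows that with high probability no $i\in V_*$ is matched to more than one $j$, or more simply that the contribution of ambiguously-matched nodes is $o(n_*)$). Therefore $\Expe[\mathrm{ov}(\sigma_*,\hat\sigma)]\ge\tilde\beta'$, and a concentration argument — either a second-moment/variance bound using the asymptotic independence of distinct neighbourhoods (Lemma \ref{lem_independence_of_different_neighborhoods}), or a bounded-differences inequality since changing one edge of $G_*$ perturbs $\mathrm{ov}$ by $O(\mathrm{polylog}\,N/n_*)$ — upgrades this to $\mathrm{ov}(\sigma_*,\hat\sigma)\ge\tilde\beta:=\tilde\beta'/2$ with high probability. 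Here $n_*=\Theta(N)$ with high probability (since $n_*\sim\mathrm{Bin}(N,qq')$) is needed so that the normalisation in $\mathrm{ov}$ behaves.

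\textbf{Main obstacle.} The delicate point is Step 1: making the first-moment bound genuinely $o(1)$ after the double union bound over $N^2$ pairs and over neighbour-triplet choices, while the per-pair failure probability is only $\exp(-C(\lambda ss')^{d-1})$ with $d=\lfloor c\log N\rfloor$ and $c$ constrained by $2c\log(\lambda\max\{s,s'\})<1$. If $\lambda ss'$ is only slightly above $1$, then $(\lambda ss')^{d}=N^{c\log(\lambda ss')}$ may be a very small power of $N$, so $C$ must be taken large enough (this is why the hypothesis is ``$\lambda$ large enough'' in the informal statement, ensuring Section \ref{section:tree_correl_testing} delivers a test with an arbitrarily good constant $C$), and one has to track carefully that the $o(1)$ coupling errors from Lemma \ref{lemma:correlated_graphs_converge_towards_P1} and Lemma \ref{lemma:disj_vs_ind}, accumulated over all pairs, do not overwhelm the gain — handled by noting those coupling errors are themselves $\mathcal O(N^{-\varepsilon})$ per pair, hence $\mathcal O(N^{2-\varepsilon})$ in total, which again forces taking $d$ (equivalently the polynomial power) large, i.e. $C$ large. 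The second mild obstacle is the concentration in Step 2, where the cleanest route is bounded differences with respect to the independent coordinates in the generative model (the Bernoulli edge indicators and the node-augmentation choices), each of which influences only $\mathcal O(\mathrm{polylog}\,N)$ nodes' neighbourhoods on the good event.
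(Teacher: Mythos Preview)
Your overall architecture is right --- separate the error and overlap analyses, use the dangling-tree trick for the former, and a second-moment/covariance argument (via Lemma~\ref{lem_independence_of_different_neighborhoods}) for the latter. Step~2 is essentially the paper's approach; your constant $\beta^3$ in place of the paper's $\alpha\beta$ is a cosmetic difference.

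The genuine gap is in Step~1. You propose to control $\Pdisjdminus(\mathcal T_{d-1}=1)$ by total-variation closeness to $\Pinddminus$, writing the per-pair false-positive probability as $\exp(-C(\lambda ss')^{d-1})+o(1)$. But that additive $o(1)$ is fatal: the TV distance between $\Pdisjdminus$ and $\Pinddminus$ is at best $\mathcal O(N^{-\varepsilon})$ for some fixed small $\varepsilon>0$ (governed by the neighbourhood-size constraint $2c\log(\lambda\max\{s,s'\})<1$), and after summing over the $\Theta(N)$ candidate partners $j$ for each $i$ you get an $N^{1-\varepsilon}$ term that does not vanish. Your ``Main obstacle'' paragraph tries to patch this by taking $C$ large, but $C$ is the constant in the test hypothesis \eqref{eq:one-sided-assumptions} and is not at your disposal; more importantly, the TV error is independent of $C$, so enlarging $C$ cannot help. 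Likewise $c$ is upper-bounded by the theorem's hypothesis and cannot be pushed.

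The paper avoids this additive loss by a multiplicative route: write
\[
\Pdisjdminus(\mathcal T_{d-1}=1,\,\mathcal A^c\cap\mathcal B)
=\Einddminus\!\big[R_{d-1}\,\indicator{\mathcal T_{d-1}=1}\,\indicator{\mathcal A^c\cap\mathcal B}\big]
\le \sqrt{\Einddminus\!\big[R_{d-1}^2\,\indicator{\mathcal A^c\cap\mathcal B}\big]}\;\sqrt{\Pinddminus(\mathcal T_{d-1}=1)},
\]
with $R_d=\Pdisjd/\Pindd$. Lemma~\ref{lemma:disj_vs_ind} (which you cite, but which is \emph{not} a TV statement) gives $\Einddminus[R_{d-1}^2\,\indicator{\mathcal A^c\cap\mathcal B}]=\mathcal O(1)$. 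Hence the bound is $\mathcal O(1)\cdot\exp(-\tfrac{C}{2}(\lambda ss')^{d-1})$, and after multiplying by $n'\le N$ the product still tends to zero because $(\lambda ss')^{d}=N^{c\log(\lambda ss')}\to\infty$ (using only $\lambda ss'>1$). This is the missing idea: trade the hopeless additive TV error for a Cauchy--Schwarz step that preserves the exponential decay multiplicatively.
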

This theorem's proof builds on the ideas in \cite[Section 6]{luca}, providing additional details and extending the analysis to settings with random node numbers and unequal edge densities.\\
Although technically involved, the proof offers valuable insights into the interplay between graph- and tree-level structures in the asymmetric setting. This result concludes the chapter and sets the stage for the next section, where we justify the feasibility assumptions in (\ref{eq:one-sided-assumptions}).
\begin{proof}
	We start by recalling that the node numbers of $G_*, G$ and $G'$ marginally follow binomial distributions with means $\Expe[n_*] = Nqq$, $\Expe[n] = Nq$, and $\Expe[n'] = Nq'$. Throughout this proof, we consider these numbers to be sampled before the remaining sampling process from Definition \ref{def:corr_GW_trees}. To make this rigorous, define
	\[
	\gamma :=  \frac{1}{4} \,\Big(\frac12 -  c \log(\lambda \max\{s, s'\}) \Big) \quad \text{and} \quad \varepsilon := N^{ \sfrac{1}{2} \, +\,  \gamma},
 \]
 and condition on the event
	\begin{align*}
		\mathcal{B} := 
		\Big\{n_* > Nqq' - \varepsilon \Big\} \cap \Big\{ n -\varepsilon < Nq < n + \varepsilon \Big\} \cap \Big\{n' - \varepsilon < Nq' < n' + \varepsilon\Big\}.
	\end{align*}
	Note that $\gamma > 0$ because $2 c \log(\lambda \max\{s, s'\}) \leq 4 c \log(\lambda \max\{s, s'\}) < 1$ as assumed in Theorem \ref{thm:Algo_works}. Consequently, using Chernoff's inequality for binomial random variables \cite[Theorem 4]{chung2006concentration}, the event $\mathcal{B}$ occurs with high probability: 
	\begin{align}\label{inequality_measure_binomial}
		\Prob(\mathcal{B}) &\geq 1 - \Prob\Big(n_* - \Expe[n_*] \leq -\varepsilon \Big) -  \Prob\Big( \big \vert n - \Expe[n] \big \vert \geq  \varepsilon \Big) - \Prob\Big( \big \vert n' - \Expe[n'] \big\vert \geq  \varepsilon \Big) \nonumber \\
		&\geq 1 - e^{- \frac{N^{1 + 2 \gamma}}{2Nqq'}} - \Big( e^{- \frac{N^{1 + 2\gamma }}{2Nq} } + e^{-\frac{N^{1 + 2 \gamma}}{2Nq + (\sfrac{2}{3}) N^{\sfrac12+ \gamma}}} \Big) - \Big( e^{- \frac{N^{1 + 2\gamma }}{2Nq' } } + e^{-\frac{N^{1 + 2 \gamma}}{2Nq' + (\sfrac{2}{3}) N^{\sfrac12+ \gamma}}} \Big) \nonumber \\
		& = 1 - 3 \, e^{- \Theta(N^{2 \gamma})}  \xrightarrow[N \to \infty]{} 1.
	\end{align}
	Conditioning on $\mathcal{B}$ will be instrumental to conclude the proofs of both (\ref{property:significant_overlap}) and (\ref{property:negligible_error}).
	
	Next, recall that $ G $ and $G'$ are randomly labeled augmentations of an intersection graph $G_*  \sim \mathrm{ER}(n_*, rr' \lambda /N)$. The nodes of $G_*$ are embedded in $G, G'$, and we identify these embeddings with the node subsets $V_* \subset V, V_*'  \subset V'$ respectively. These subsets are related via the true graph alignment $\sigma_*: V_* \to V_*'$. 
	
	Let $\hat{\sigma}: \hat{V} \to \hat{V}'$ be the alignment estimator from algorithm \ref{algo} whose randomness stems from the randomness of  $(G, G') $. For all pairs $i \in V, j \in V'$, the \textbf{if}-clauses of algorithm \ref{algo} translate to the event
	\begin{align}\label{def:Mij}
		\mathcal{M}(i,j):=&\big\{ \text{MPAlign matches } i \text{ with } j \big\} \nonumber\\
		=& \big\{ \mathcal{N}_{G, 2d}(i) \text{ and }\mathcal{N}_{G', 2d}(j) \text{  are cycle-free} \big\} \nonumber\\
		&  \cap  \big\{i \text{ and }j \text{ have at least three neighbors, } (i_k)_{k=1}^3 \text{ and } (j_k)_{k=1}^3 \big\} \nonumber\\
		&  \cap \big\{ \mathcal{T}_{d-1} \Big(\mathcal{N}_{G, d}(i \rightarrow i_k), \mathcal{N}_{G', d}(j \rightarrow j_k)\Big) = 1 \, \text{ for } k \in [3] \big\}.
	\end{align}
	For all nodes $ i \in V $, define the event of a correct match as
	\begin{equation*}
		\mathcal{M}_*(i) :=  \{i \in  V_*, \,   \hat{\sigma}(i) = \sigma_*(i)\}.
	\end{equation*}
	Recalling that $\square^c$ denotes the complement of a set $\square$, this lets us rewrite
	\begin{equation*}
		\mathrm{ov}(\sigma_*, \hat{\sigma}) = \frac{1}{n_*} \sum_{i \in \hat{V}} \indicator{\mathcal{M}_*(i)} \quad \text{and} \quad \mathrm{err}(\sigma_*, \hat{\sigma}) = \frac{1}{n} \sum_{i \in \hat{V}} \indicator{\mathcal{M}_*(i)^c}.
	\end{equation*}
	
	\paragraph{Proof of (\ref{property:significant_overlap}).} We start by showing significant overlap. For this, set
	\[
	\tilde{\beta} := \beta \,  \underbrace{\Pcorrd(\text{the intersection tree } t_* \text{ from Definition \ref{def:corr_GW_trees} has root degree } \geq 3 )}_{=: \alpha}
	\]
	where $\beta > 0$ is the value from (\ref{eq:one-sided-assumptions}) and $\alpha$ is a strictly positive probability only depending on the parameters $\lambda, s$ and $s'$. Hence $\tilde{\beta}>0$. The reason to introduce $\tilde{\beta}$ is its relationship with $\Prob(\mathcal{M}_*(i))$:\\
	For every $i \in \hat{V}$, denote by $\mathcal{C}_i$  the event of optimal coupling between the neighborhoods $ \Big( \mathcal{N}_{G, 2d}(i), \mathcal{N}_{G', 2d}(\sigma_*(i)) \Big)$ and a tree pair $ (t, t') \sim \Prob^\text{ corr}_{2d}$. Applying Lemma \ref{lemma:correlated_graphs_converge_towards_P1} with parameter $2d$ instead of $d$ and exploiting the hypothesis $4 c \log(\lambda \max\{s, s'\}) < 1$, we get that $C_i$ has probability
	\[
	\Prob(\mathcal{C}_i) = \Prob\Big(\, \Big(\mathcal{N}_{G,2d}(i), \mathcal{N}_{G',2d}(\sigma_*(i)) \Big) =  (t,t') \Big) = 1 - o(1). 
	\]
	Since $\mathcal{M}_*(i) = \{i \in V_*\} \cap \mathcal{M}(i, \sigma_*(i))$, conditioning on $\mathcal{C}_i$ yields
	\begin{align}\label{probMstari}
		\Prob\big(\mathcal{M}_*(i)\big) &=  \Prob\big(\mathcal{C}_i^c, \mathcal{M}_*(i)\big) + \Prob\big(\mathcal{C}_i, i \in V_*,  \mathcal{M}(i,\sigma_*(i)) \big) \nonumber \\
		&\geq o(1) + \alpha \, \Pcorrdminus(\mathcal{T}_{d-1}= 1)).
	\end{align}
	The second inequality translates the decomposition of $\mathcal{M}(i,j)$ from (\ref{def:Mij}) conditioned on $ \mathcal{C}_i$: Under $\mathcal{C}_i$, one has that $ \mathcal{N}_{G, d}(i) $ and $ \mathcal{N}_{G', d}(\sigma_*(i)) $ are cycle-free, the event $\big\{i$ \textit{and} $\sigma_*(i)$ \textit{have at least three neighbors}$\big\}$ has probability $\alpha$ and is independent of what happens below the roots, in particular of the event $ \big\{ \mathcal{T}_{d-1} \big(\mathcal{N}_{G, d}(i \rightarrow i_k), \mathcal{N}_{G', d}(\sigma_*(i) \rightarrow \sigma_*(i)_k)\big) = 1 $ \textit{for} $k \in [3] \big\} $. The probability of this last event is bounded below by $ \Pcorrdminus(\mathcal{T}_{d-1} = 1)$, the probability that one of these correlated neighborhoods yields a positive test. Independence and this lower bound leads to (\ref{probMstari}).\\
	Next, since $\liminf_{d\to \infty} \Pcorrdminus(\mathcal{T}_{d-1} = 1) = \beta > 0$, one has $ \Pcorrdminus(\mathcal{T}_{d-1} = 1) = (1 - o(1)) \beta  $, implying
	\begin{align}\label{probmistarfinal}
		&\Prob\big(\mathcal{M}_*(i)\big)  \geq o(1) + \alpha  \, (1 - o(1)) \beta  \nonumber\\
		& \quad \quad \implies \tilde{\beta} = \alpha \beta \leq \frac{\Prob\big(\mathcal{M}_*(i)\big) - o(1)}{(1 - o(1))} \leq (1+o(1)) \Prob\big(\mathcal{M}_*(i)\big).
	\end{align}
	Finally, in order to show (\ref{property:significant_overlap}), we will prove 
	\begin{equation*}
		\Prob\Big(\mathrm{ov}(\sigma_*, \hat{\sigma}) < \tilde{\beta} - \delta\Big) \xrightarrow[N \to \infty]{} 0  \quad \text{ for all } \delta > 0. 
	\end{equation*}
	To do so, we use the conditioning on $\mathcal{B}$ from (\ref{inequality_measure_binomial}), followed by the tower property of conditional expectation, the result from (\ref{probmistarfinal}), and Markov's inequality to obtain the following sequence of upper bounds: 
	\begin{align*}\label{eq:overlap_prob_upper_bound}
		\Prob\Big(\mathrm{ov}&(\sigma_*, \hat{\sigma}) < \tilde{\beta} - \delta\Big)   \leq \Prob(\mathcal{B}^c) + \Prob\Big(\mathcal{B}, \, \mathrm{ov}(\sigma_*, \hat{\sigma}) < \tilde{\beta} - \delta \Big) \nonumber\\
		&= o(1) + \Expe\Big[ \indicator{\mathcal{B}} \, \Prob\Big( \mathrm{ov}(\sigma_*, \hat{\sigma}) < \tilde{\beta} - \delta \, \Big \vert \, n_*, n, n' \Big)\Big]\nonumber\\
		&  = o(1) + \Expe\Big[ \indicator{\mathcal{B}} \, \Prob \Big(\tilde \beta n_* -  \sum_{i \in \hat{V}} \indicator{\mathcal{M}_*(i)} > \delta n_* \Big) \Big] \nonumber \\
		& \leq o(1) +  \Expe\Big[ \indicator{\mathcal{B}} \, \Prob \Big( (1+o(1)) \sum_{i \in \hat{V}} \Big( \Prob\big(\mathcal{M}_*(i) \big) - \indicator{\mathcal{M}_*(i)}  \Big)> \delta n_* \Big) \Big] \nonumber \\
		& \leq o(1)  + \Expe\Big[ \indicator{\mathcal{B}} \, \frac{1+o(1)}{\delta^2 n_*^2} \Big( n \Var\big(\indicator{\mathcal{M}_*(1)} \big) + n(n-1) \mathrm{Cov}\big(\indicator{\mathcal{M}_*(1)}, \indicator{\mathcal{M}_*(2)}\big)\Big) \Big].
	\end{align*}
	Our goal is now to find upper bounds for the last line: The variance of $\indicator{\mathcal{M}_*(1)} \in [0,1]$ is bounded by  $ 1 $ and the first summand in the expectation vanishes because
	\begin{equation}\label{varbound}
		\Expe\Big[ \indicator{\mathcal{B}} \, \frac{1+o(1)}{\delta^2 n_*^2} n \Var\big(\indicator{\mathcal{M}_*(1)} \big) \Big] \leq \frac{1+o(1)}{\delta^2 (Nqq - N^{\sfrac12 + \gamma})^2} (Nq + N^{\sfrac12 + \gamma}) \xrightarrow[N \to \infty]{} 0. 
	\end{equation}
	For two distinct nodes, for instance $i=1$ and $j=2$, the term $ \mathrm{Cov}\big(\indicator{\mathcal{M}_*(1)}, \indicator{\mathcal{M}_*(2)}\big) \in [-1,1] $ equals $ 0 $ if $\big( \mathcal{N}_{G, d}(i), \mathcal{N}_{G', d}(\sigma_*(i)) \big)$ and $\big( \mathcal{N}_{G, d}(j), \mathcal{N}_{G', d}(\sigma_*(j)) \big)$ are independent. This independence can be obtained asymptotically by using Lemma \ref{lem_independence_of_different_neighborhoods} which leads to $\mathrm{Cov}(\indicator{\mathcal{M}_*(1)}, \indicator{\mathcal{M}_*(2)})  \in o(1)$. Combining this fact with a similar bound as in (\ref{varbound}) lets us conclude the proof of (\ref{property:significant_overlap}) since
	\[ \Prob\Big(\mathrm{ov}(\sigma_*, \hat{\sigma}) < \tilde{\beta} - \delta\Big) \leq o(1) + \Expe\Big[ \indicator{\mathcal{B}} \, \frac{1+o(1)}{\delta^2 n_*^2} n^2 \mathrm{Cov}\big(\indicator{\mathcal{M}_*(1)}, \indicator{\mathcal{M}_*(2)}\big) \Big]\xrightarrow[N \to \infty]{} 0.
	\]
	
	\paragraph{Proof of (\ref{property:negligible_error}).} By Remark \ref{equivalence:vanishesWHP_convergesINP}, we need to show $ \mathrm{err}(\sigma_*, \hat{\sigma}) \xrightarrow[N\to \infty]{} 0$ in probability. For this, let $ \delta > 0 $ and use (\ref{inequality_measure_binomial}) to obtain
	\[
	\Prob\Big(\mathrm{err}(\sigma_*, \hat{\sigma})> \delta \Big)  \leq  \Prob\Big(\mathrm{err}(\sigma_*, \hat{\sigma})> \delta, \mathcal{B}\Big) + \Prob(\mathcal{B}^c) = \Prob\Big(\mathrm{err}(\sigma_*, \hat{\sigma})> \delta, \mathcal{B}\Big) + o(1).
	\]
	Furthermore, Markov's inequality yields
	\begin{align*}
		\Prob\Big( \underbrace{\sum\nolimits_{i \in \hat{V}} \indicator{\mathcal{M}_*(i)^c}}_{=n \times \mathrm{ err}} > n \delta, \mathcal{B} \Big) & \leq \frac{1}{\delta n} \sum_{i \in V} \Prob\Big(\mathcal{M}_*(i)^c, i \in \hat{V}, \mathcal{B} \Big) \notag \\ &\leq \frac{1}{\delta} \max_{i \in V} \Prob\Big(\mathcal{M}_*(i)^c, i \in \hat{V}, \mathcal{B} \Big).
	\end{align*}
	Hence, in order to show $ \mathrm{err}(\sigma_*, \hat{\sigma}) \xrightarrow{\Prob} 0$ it suffices to check  $ \Prob(\mathcal{M}_*(i)^c, i\in\hat{V}, \mathcal{B}) \xrightarrow[]{N \to \infty} 0 $ for all $ i \in V$. 
	
	Fix any $i \in V$ and decompose the event $ \mathcal{M}_*(i)^c \cap \{i \in \hat{V}\}$ of a wrong match by MPAlign as follows:
	\begin{align*}
		\mathcal{M}_*(i)^c \cap \{ i \in \hat{V}\} & = \bigcup_{j \in \, G' \setminus \square}\mathcal{M}(i,j) \quad \text{where } \square = \begin{cases}
			\emptyset &\text{ if } i \in \hat{V} \setminus V_*,\\
			\{\sigma_*(i)\} & \text{ if } i \in \hat{V}, \hat{\sigma}(i) \neq \sigma_*(i).
		\end{cases}
	\end{align*}
	Next, define the event of exploding neighborhood sizes across $G$ and $G'$ as follows:
	\begin{align*}
		\mathcal{A} := &\Big\{ \exists i \in V, \, t \in [d] : \, | \mathcal{S}_{G, t}(i) | > C \log(n) (\lambda s)^t \Big\}\\ &\cup \Big\{ \exists j \in V', \, t' \in [d]  : \, | \mathcal{S}_{G', t'}(j) | > C \log(n') (\lambda s')^t \Big\}.
	\end{align*}
	The probability of $\mathcal{A}$ is vanishing according to Lemma \ref{lem_neighborhoodsize}. Consequently, 
	\begin{align*}
		\Prob\Big(\mathcal{M}_*(i)^c, i\in\hat{V}, \mathcal{B} \Big) &\leq \Prob(\mathcal{A}) + \Prob\Big( \mathcal{M}_*(i)^c, i\in\hat{V}, \mathcal{A}^c, \mathcal{B}\Big) \\ &= o(1) + \Prob\Big( \mathcal{M}_*(i)^c, i\in\hat{V}, \mathcal{A}^c, \mathcal{B}\Big)
	\end{align*}
	which leads us to work on the event $ \mathcal{M}_*(i)^c \cap \{i \in \hat{V}\} \cap \mathcal{A}^c \cap \mathcal{B}$ in the sequel. One can decompose this event using the union bound
	\begin{align*}
		\Prob\Big(\mathcal{M}_*(i)^c,  &i \in \hat{V}, \mathcal{A}^c, \mathcal{B}\Big) \leq \sum_{j \in  G'  \setminus \square} \Prob\Big(\mathcal{M}(i,j)  , \mathcal{A}^c, \mathcal{B}\Big) \,  = \, \sum_{j \in G' \setminus \square} \Expe\Big[ \indicator{\mathcal{A}^c\cap \mathcal{B}} \indicator{\mathcal{M}(i,j)} \Big] \\
		& \leq  \sum_{j \in G' \setminus \square} \Expe\Big[ \indicator{\mathcal{A}^c \cap\mathcal{B}} \indicator{\mathcal{T}_{d-1}(\mathcal{N}_{G, d}(i \to i_k), \mathcal{N}_{G', d}(j \to j_k))  = 1 \textit{ for some special } k } \Big].
	\end{align*}
	The last inquality uses the dangling tree trick: By ``\textit{some special} $k$'', we mean the index $k$ from Lemma \ref{lemma:danglingTree} which lets $\mathcal{N}_{G, d}(i\to i_k)\cap V_*$ and $\mathcal{N}_{G'\, d}(j\to j_k)\cap V_*'$ be disjoint. From this same lemma, we also know that $ \big( \mathcal{N}_{G, d}(i \to i_k), \mathcal{N}_{G', d}(j \to j_k)\big) \sim \Pdisjdminus$ which leads to 
	\begin{align}\label{almost_endspurt}
		\Prob(\mathcal{M}_*(i)^c, \, i \in \hat{V}, \mathcal{A}^c , \mathcal{B} )  &\leq \sum_{j \in G'  \setminus \square} \Edisjdminus\Big[\indicator{\mathcal{A}^c \cap \mathcal{B}} \indicator{\mathcal{T}_{d-1} = 1} \Big]  \nonumber \\
		&\leq  n' \, \Edisjdminus\big[\indicator{\mathcal{A}^c \cap \mathcal{B}} \indicator{\mathcal{T}_{d-1} = 1} \big].
	\end{align}
	For two trees $t$ and $t'$, define the likelihood ratio of $\Pdisjdminus$ and $\Pinddminus$ as
	\[
	R_{d-1}(t, t') :=\Pdisjdminus(t, t') / \Pinddminus(t, t')
	\]
	and write $R_{d-1}$ without arguments for the random variable if the trees follow some indicated law. This lets us write
	\begin{align}\label{endspurt}
		n' \, \Edisjdminus\big[\indicator{\mathcal{A}^c \cap \mathcal{B}} \indicator{\mathcal{T}_{d-1} = 1} \big] &= n' \, \Einddminus\big[ R_{d-1} \indicator{\mathcal{A}^c \cap \mathcal{B}} \indicator{\mathcal{T}_{d-1} = 1} \big]\nonumber \\
		&\leq n' \, \sqrt{ \Einddminus\big[ R_{d-1}^2 \indicator{\mathcal{A}^c \cap \mathcal{B}} \big] } \sqrt{\Pinddminus(\mathcal{T}_{d-1} = 1)}
	\end{align}
	where we have used the Cauchy--Schwarz inequality. At this point, we need to \textit{quantify the similarity between} $\Pdisjdminus$ \textit{and} $ \Pinddminus $ as announced in Remark \ref{remark:dangling_tree}. The following lemma bounds the second moment of their likelihood ratio:
	\begin{lemma}\label{lemma:disj_vs_ind}
		Under the parameters of Theorem \ref{thm:Algo_works} and using the above notations, one has
		\begin{equation*}
			\Einddminus\Big[ R_{d-1}^2 \indicator{\mathcal{A}^c \cap \mathcal{B}} \Big] \in \mathcal{O}(1).
		\end{equation*}
	\end{lemma}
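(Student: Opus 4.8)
The plan is to reduce the lemma to a pointwise bound on the likelihood ratio $R_d = \Pdisjd/\Pindd$. Since $R_d$ is the Radon--Nikodym derivative of $\Pdisjd$ with respect to $\Pindd$, the change-of-measure identity
\[
\Eindd\big[R_d^2\,\indicator{\mathcal{A}^c\cap\mathcal{B}}\big] \;=\; \mathbb{E}^{\,\text{dis}}_{d}\big[R_d\,\indicator{\mathcal{A}^c\cap\mathcal{B}}\big]
\]
shows that it suffices to prove $R_d(t,t') \le 1+o(1)$ uniformly over tree pairs compatible with $\mathcal{A}^c\cap\mathcal{B}$: then the right-hand side is at most $(1+o(1))\,\Pdisjd(\mathcal{A}^c\cap\mathcal{B}) \le 1+o(1) \in \mathcal{O}(1)$.

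To estimate $R_d(t,t')$ I would read $\Pdisjd$ through its generative description inside the common vertex pool $[n_*]$: the two intersection parts of $t$ and $t'$ are two breadth-first explorations of one graph $\mathrm{ER}(n_*, ss'\lambda/n_*)$ conditioned to occupy disjoint vertex sets, each then completed by the independent augmentations of Definition~\ref{def:corr_GW_trees}; whereas in $\Pindd$ the trees are $\mathrm{GW}^{(\lambda s)}\otimes\mathrm{GW}^{(\lambda s')}$. Decomposing $\Pdisjd(t,t')$ over the (unobserved) $G_*$-membership of the nodes of $t,t'$, and matching each term against the corresponding thinning of $\mathrm{GW}^{(\lambda s)}$ resp. $\mathrm{GW}^{(\lambda s')}$, the ratio of matched terms becomes a product of offspring-pmf ratios. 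Using the one-sided comparisons $\mathbb{P}(\mathrm{Bin}(m,p)=k)\le e^{kp}\,\mathbb{P}(\mathrm{Poi}(mp)=k)$ and $\mathbb{P}(\mathrm{Poi}(\mu')=k)\le e^{\mu-\mu'}\,\mathbb{P}(\mathrm{Poi}(\mu)=k)$ for $\mu'\le\mu$ (both valid for all $k$), and recalling that the second exploration runs in a pool depleted by at most $|t|$ vertices while $p = \mathcal{O}(1/n_*)$, each of the at most $M:=|t|+|t'|$ relevant nodes contributes a factor $\le e^{\mathcal{O}(M/n_*)}$; the global normalisation $1/\Prob(\text{the two explorations are vertex-disjoint})$ is $\le 1+\mathcal{O}(M^2/n_*)$ by a union bound over the $\le|t'|$ vertex picks of the second exploration; and the conditioning on items (i)--(iii) of Lemma~\ref{lemma:danglingTree} only reshapes the root offspring and contributes a further $1+o(1)$. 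Altogether this yields $R_d(t,t') \le \exp\!\big(\mathcal{O}(M^2/n_*)\big)$.

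It then remains to check $M^2/n_* = o(1)$ on $\mathcal{A}^c\cap\mathcal{B}$. On $\mathcal{A}^c$ one has $|\mathcal{S}_{G,t}(i)|\le C\log(n)(\lambda s)^t$ and $|\mathcal{S}_{G',t}(j)|\le C\log(n')(\lambda s')^t$ for all $t\le d$, so summing a geometric series both dangling trees contain $M = \mathcal{O}\big(\log(N)\,(\lambda\max\{s,s'\})^{d}\big)$ nodes (only $\mathcal{O}(\log N)$ when $\lambda\max\{s,s'\}\le 1$); on $\mathcal{B}$ one has $n_*\ge Nqq'-N^{2/3}=\Theta(N)$. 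With $d=\lfloor c\log N\rfloor$ and $\gamma := c\log(\lambda\max\{s,s'\})$, the standing hypothesis $2c\log(\lambda\max\{s,s'\})<1$ of Theorem~\ref{thm:Algo_works} forces $2\gamma<1$, hence
\[
\frac{M^2}{n_*} = \mathcal{O}\!\left(\frac{\log^2(N)\,N^{2\gamma}}{N}\right) = \mathcal{O}\big(\log^2(N)\,N^{2\gamma-1}\big) \xrightarrow[N\to\infty]{} 0 .
\]
Thus $R_d \le 1+o(1)$ uniformly on $\mathcal{A}^c\cap\mathcal{B}$, and the reduction above finishes the proof.

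The main obstacle is the middle step: writing the likelihood ratio cleanly enough to get $R_d\le\exp(\mathcal{O}(M^2/n_*))$. The delicate points are (a) the bookkeeping of the unobserved $G_*$-membership — one must decompose $\Pdisjd(t,t')$ over all consistent markings and verify that the comparison goes through term by term against a matching thinning of the product law $\Pindd$, so that the bound survives summation; and (b) the claim that the non-edge information generated by the first breadth-first exploration affects the distribution of the second one only through a depletion of its vertex pool, rather than through a more intricate conditioning, which is what keeps all per-node factors uniform in the (possibly large) degrees. Once the product form is in hand, the remaining estimates are the routine geometric-series computation above, and the threshold $2c\log(\lambda\max\{s,s'\})<1$ is exactly what renders the correction negligible.
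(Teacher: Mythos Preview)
Your plan is correct and follows essentially the same route as the paper: an explicit comparison of the binomial offspring arising in a breadth-first exploration of $(G,G')$ against the Poisson offspring of $\Pindd$, yielding a pointwise bound $R_d \le \exp(\mathcal{O}(M^2/N))$ on $\mathcal{A}^c\cap\mathcal{B}$, followed by the geometric-series estimate and the hypothesis $2c\log(\lambda\max\{s,s'\})<1$. The paper carries out exactly the bookkeeping you flag as the obstacle by tracking active, deactivated, and unvisited vertex subsets of $V_*$, $V\setminus V_*$, $V'\setminus V'_*$ through an interleaved exploration and arrives at the same exponential bound; your opening change-of-measure $\Eindd[R_d^2\,\indicator{}]=\mathbb{E}^{\,\text{dis}}_{d}[R_d\,\indicator{}]$ is a small simplification the paper does not make (it instead takes an $\Pindd$-expectation of the pointwise bound, which is slightly more work but equivalent).
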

	The proof of this lemma, which we defer to Appendix \ref{appendix:RdSecondMoment}, uses well-established techniques to describe the relationship between graph neighborhoods and branching processes. We refer to \cite[p. 161]{alon2016probabilistic} for a concrete example and mention \cite{bollobas2007phase} for an extension to inhomogeneous graphs. The proof presented in Appendix \ref{appendix:RdSecondMoment} considerably extends and clarifies its analog in the symmetric case, the proof of Lemma 6.1 in \cite{luca}.
	
	Finally, combining the hypothesis $\Pinddminus \big(\mathcal{T}_{d-1} =  1\big) \leq \exp\big(-C (\lambda ss')^{d-1}\big) $ from (\ref{eq:one-sided-assumptions}) with Lemma \ref{lemma:disj_vs_ind} and the equations (\ref{almost_endspurt}), (\ref{endspurt}), one obtains
	\[
	\Prob(\mathcal{M}_*(i)^c,  i \in \hat{V}, \mathcal{A}^c, \mathcal{B}  ) \leq n' \sqrt{\mathcal{O}(1)} \sqrt{\exp(-C (\lambda ss')^{d-1})}.
	\]
	Since $d = \lfloor c \log(N) \rfloor $, one has $ n' \leq N < e^{(d+1)/c}$, and hence the right hand side tends to $0$ as we let $d\to \infty$ (and a fortiori for $N \to \infty$). Together with our initial Markov bound on $\Prob(\mathrm{err} > \delta)$, this lets us conclude  $ \mathrm{err}(\sigma_*, \hat{\sigma}) \xrightarrow{\Prob} 0$ which completes the proof.
\end{proof}

\section{Tree correlation testing}\label{section:tree_correl_testing}
As the previous section shows, distinguishing between independent and correlated trees is the key building block for sparse graph alignment. Therefore, the statistical feasibility of tree correlation testing in the asymmetric setting is the central question remaining to be answered. This section presents a one-sided test for the following problem which was introduced in (\ref{testing_problem}):
\begin{equation}\label{testing_problem_2}
	H_0^{(d)} : \,  (t_d, t'_d) \sim 	\Pindd \quad \text{vs} \quad
	H_1^{(d)}: \, (t_d, t'_d) \sim 	\Pcorrd.
\end{equation}
We show that the existence of such one-sided tests achieves a sharp phase transition at $ss' = \alpha$ where $\alpha \approx 0.3383$. To do so, we present an analysis relying on advantageous properties of the likelihood ratio in the tree correlation testing problem, primarily based on \cite{luca, ganassali2022statistical}. This section is the theoretical backbone of this paper and presents our novel contribution to tree correlation testing.

We start by proposing a formalism for unlabeled trees before recalling an essential result on the number of such trees. Next, we introduce the likelihood ratio for (\ref{testing_problem_2}) and present two of its key properties. These will then serve as proof engines for the two main results of this section, Theorem \ref{thm_equivalences} and Theorem \ref{thm:phase_transition}.

\subsection{Unlabeled trees}\label{section_unlabeled_trees}
Since node labels are unknown in the sparse graph alignment problem, local neighborhoods are rooted unlabeled trees. Our goal is to describe the laws of these trees, which requires a rigorous definition of what we mean by the \textit{set of rooted unlabeled trees of depth at most} $d$, which we denote as $\mathcal{X}_d$. 

A first idea is to define $\mathcal{X}_d$ as the set of cycle-free, non-empty graphs $t$  with a distinct root node $\rho$ such that $\mathrm{dist}(\rho, i) \leq d$ for all nodes $i \in t$. For two rooted trees $(t, \rho)$ and $(t', \rho')$, we consider them equal if there is a graph isomorphism mapping $t$ to $t'$ and $\rho$ to $\rho'$. The notion of \textit{equality by isomorphism} translates our intuition of the word \textit{unlabeled} since all relabelings of structurally identical rooted trees are treated as a single element of $\mathcal{X}_d$. However, there is an equivalent definition of $\mathcal{X}_d$ which defines the same object but is better suited for some of our analysis. It defines unlabeled trees recursively as follows:
\begin{definition}[unlabeled rooted trees]\label{def:unlabeled_rooted_trees}
	Set $\mathcal{X}_0 := \{\bullet\}$ where $\bullet$ denotes the graph consisting of a single node. For all $d \in \N$,  recursively define 
	\[
	\mathcal{X}_{d+1}  = \Big\{ (N_\tau)_{\tau \in \mathcal{X}_d} \in \N^{\mathcal{X}_{d}} \, : \, \sum_{\tau \in \mathcal{X}_d} N_\tau < \infty \Big\}. 
	\]
	By this definition, we understand an element $ t \in \mathcal{X}_{d+1} $ as a sequence of natural numbers indexed by unlabeled trees of depth at most $d$, i.e., $ t = (N_\tau)_{\tau \in \mathcal{X}_d}$ where only finitely many $N_\tau$ are non-zero. For every $\tau \in \mathcal{X}_d$, the number $N_\tau$ denotes how many copies of $\tau$ are attached to the root of $t$. This is best understood by looking at Figure \ref{unlabeled_trees}.
	
	We furthermore set $\mathcal{X}_{d+1}^{{(n)}} := \{t = (N_\tau)_{\tau \in \mathcal{X}_d} \in \mathcal{X}_{d+1} \, : \,  |t| = n\}$ where $|t| = 1 + \sum_{\tau \in \mathcal{X}_d} N_\tau |\tau|$ denotes a tree's node number. In other words, $\mathcal{X}_{d+1}^{(n)} $ contains all unlabeled trees with $n$ nodes and depth bounded by $d+1$.  
\end{definition}
\begin{figure}[t]
	\includegraphics[width=\textwidth]{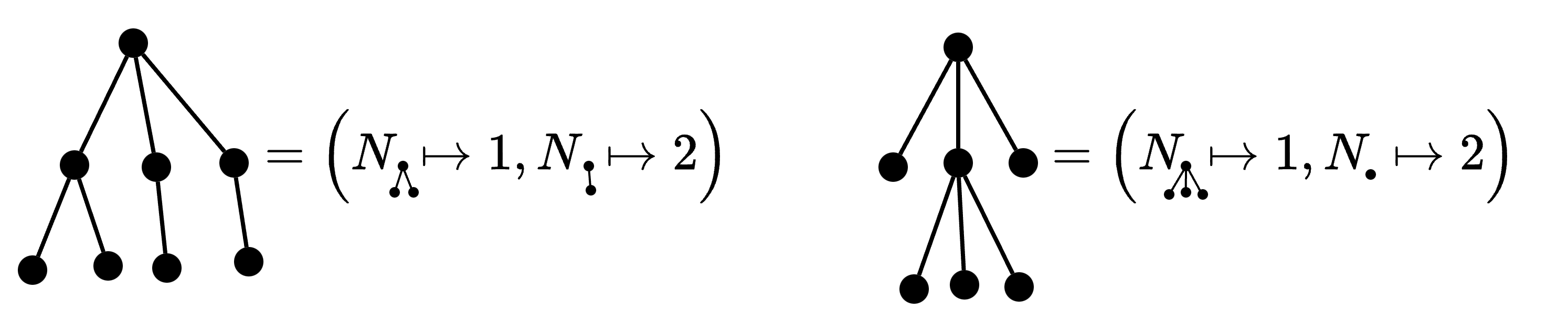}
	\caption{Two examples of unlabeled trees and their representation as sequences of natural numbers indexed by trees.}
	\label{unlabeled_trees}
\end{figure}
It is immediate by induction that $\mathcal{X}_d$ is countable and $\mathcal{X}_{d}^{(n)} $ is finite for all choices of $d, n \in \N$. Note that a tree of size $n$ can have a maximum depth of $n-1$ and hence $\mathcal{X}_{n-1}^{(n)} $ describes all unlabeled trees with $n$ nodes. The asymptotic cardinality of $\mathcal{X}_{n-1}^{(n)} $ is explicitly known thanks to the following result due to Otter \cite{otter1948number}.
\begin{proposition}\label{Otter}
	For $\alpha : =0.3383219...$ which is known as Otter's constant, there exists $D >0$ such that
	\begin{equation*}
		| \mathcal{X}_{n-1}^{(n)} | \widesim{n\to\infty} \frac{D}{n^{\sfrac{3}{2}}} \frac{1}{\alpha^n}.
	\end{equation*}
	This implies that the power series $ \Phi(x) = \sum_{n=1}^\infty | \mathcal{X}_{n-1}^{(n)} | \, x^n$
	has radius of convergence $\alpha > 0$ with $\Phi(\alpha) < \infty$. 
\end{proposition}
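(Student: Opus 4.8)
The plan is to translate the counting of rooted unlabeled trees into an analytic statement about the generating function $\Phi(x)=\sum_{n\ge 1}t_n x^n$, where $t_n:=|\mathcal{X}_{n-1}^{(n)}|$, and then extract coefficient asymptotics by singularity analysis. The first step is purely combinatorial: by Definition~\ref{def:unlabeled_rooted_trees}, a rooted unlabeled tree is a root carrying an unordered multiset of rooted unlabeled subtrees, so the multiset (Euler) construction gives the functional equation
\[
\Phi(x) = x \prod_{n \geq 1} (1 - x^n)^{-t_n} = x \exp\!\left( \sum_{k \geq 1} \frac{1}{k}\, \Phi(x^k) \right),
\]
the second form obtained by expanding $-\log(1-x^n)=\sum_{k\ge1}x^{nk}/k$. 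Since $t_n$ is bounded by the Catalan number $C_{n-1}\le 4^{n}$ and also $t_n\ge 1$, the series $\Phi$ has a radius of convergence $\rho\in[1/4,1]$, and it has nonnegative coefficients.

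Next I would isolate the $k=1$ term and write $\Phi(x)=x\,G(x)\,e^{\Phi(x)}$ with $G(x):=\exp\!\big(\sum_{k\ge 2}\tfrac1k\Phi(x^k)\big)$; as each $\Phi(x^k)$, $k\ge2$, has radius of convergence $\rho^{1/k}\ge\sqrt\rho>\rho$, the function $G$ is analytic with positive coefficients on a disk strictly larger than $|x|\le\rho$. Setting $y=\Phi(x)$, the equation becomes $y e^{-y}=x\,G(x)=:h(x)$. On $[0,1]$ the map $y\mapsto ye^{-y}$ increases bijectively onto $[0,e^{-1}]$ with vanishing derivative exactly at $y=1$. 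Monotonicity of $\Phi$ on $[0,\rho)$ forces a finite limit $\Phi(\rho^-)=\tau$ (otherwise $ye^{-y}\to0$ while $h(\rho)=\rho G(\rho)>0$), and the implicit-function theorem applied to $ye^{-y}=h(x)$ (with $h'(\rho)=G(\rho)+\rho G'(\rho)>0$) shows that $\Phi$ can fail to continue analytically past $\rho$ only if the inversion degenerates, i.e. $\tau=1$ and $h(\rho)=e^{-1}$. This pins down $\rho$ via $\rho\,G(\rho)=e^{-1}$, gives $0<\rho<1$ (if $\rho=1$ then $G(1)=\infty\ne e$), and yields $\Phi(\rho)=1<\infty$ in particular.

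The heart of the argument is the local behaviour at $\rho$. Inverting $ye^{-y}=h(x)$ near the branch point $y=1$ (Weierstrass preparation / Newton--Puiseux) produces
\[
\Phi(x) = 1 - c\,\sqrt{1 - x/\rho} + O\!\left(1 - x/\rho\right), \qquad x \to \rho,
\]
for an explicit constant $c>0$ depending on $\rho,G(\rho),G'(\rho)$. One must also check that $\rho$ is the \emph{only} singularity on $|x|=\rho$: this is the aperiodicity of $(t_n)$, immediate from $t_1=t_2=1$, $t_3=2$, so $\gcd\{n: t_n>0\}=1$, which together with positivity of coefficients and Pringsheim's theorem excludes other dominant singularities. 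The Flajolet--Odlyzko transfer theorem then converts the square-root singularity into coefficient asymptotics,
\[
t_n = [x^n]\Phi(x) \;\widesim{n\to\infty}\; \frac{c}{2\sqrt{\pi}}\,\frac{1}{n^{3/2}}\,\rho^{-n},
\]
which is the claim with $\alpha:=\rho$ and $D:=c/(2\sqrt\pi)$. The two remaining assertions about $\Phi$ are then immediate: $1/\rho=\limsup_n t_n^{1/n}$ gives radius of convergence $\alpha$, and $\Phi(\alpha)=\sum_n t_n\alpha^n$ converges because $t_n\alpha^n\in\Theta(n^{-3/2})$ (and in fact equals the value $1$ found above).

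I expect the genuinely delicate point to be the passage from the functional equation to the square-root Puiseux expansion at $\rho$, together with the verification that $\rho$ is the sole dominant singularity --- i.e. the classical P\'olya--Otter analytic-combinatorics analysis of $ye^{-y}=h(x)$ at its critical point. The multiset functional equation, the a priori bounds on $\rho$, Pringsheim positivity, and the final transfer-theorem bookkeeping are all routine by comparison. In a self-contained treatment one would either reproduce this analysis (for instance following Flajolet--Sedgewick's singularity-analysis framework) in an appendix, or simply invoke Otter's original computation \cite{otter1948number}.
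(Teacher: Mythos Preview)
The paper does not provide its own proof of this proposition: it is stated as a classical result due to Otter and simply cited as \cite{otter1948number}. Your sketch is the standard P\'olya--Otter analytic-combinatorics argument (multiset functional equation, isolation of the dominant square-root singularity via $ye^{-y}=h(x)$, aperiodicity, then transfer), and it is correct in outline; in the context of this paper one would simply invoke the reference rather than reproduce the analysis.
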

This proposition plays a key role when examining the likelihood ratio, as we explain in section \ref{section_likhood_ratio_properties}. Before delving into that, we first need to translate the laws of correlated Galton--Watson trees as introduced in section \ref{section_correlated_GW_trees} to the new formalism of $\mathcal{X}_{d+1}$. 

\subsection{Correlated Galton--Watson trees, revisited}
Given a pair of trees $ (t, t') \sim \Pcorrdplus $ represented by their subtree tuples $ t = (N_\tau)_{\tau \in \mathcal{X}_d} $ and $ t'= (N'_{\tau'})_{\tau' \in \mathcal{X}_d} $ as in Definition \ref{def:unlabeled_rooted_trees}, how can we describe their joint law using only $N_\tau$ and $N'_{\tau'}$? The following definition and lemma answer this question:
\begin{definition}\label{def:new_galton_watson_trees}
	A pair of trees $ t = (N_\tau)_{\tau}, \;  t'= (N'_{\tau'})_{\tau'} \in \mathcal{X}_{d+1}$ has joint law $ \widetilde\Pcorrdplus $ if for every $\tau, \tau' \in \mathcal{X}_d$, one has 
	\[
	N_\tau = \Delta_\tau + \sum_{\tau' \in \mathcal{X}_d} \Gamma_{\tau, \tau'} \quad \text{and} \quad N'_{\tau'} = \Delta'_{\tau'} + \sum_{\tau \in \mathcal{X}_d} \Gamma_{\tau, \tau'}
	\]
	where the independent random variables $\Delta, \Delta'$ and $\Gamma$ follow the laws
	\begin{align*}
		\Delta_\tau \sim \mathrm{Poi}\Big(\lambda s (1-s') \,\mathrm{GW}_d^{(\lambda s)}(\tau)\Big)&,\quad  \Delta'_{\tau'} \sim \mathrm{Poi}\Big(\lambda s' (1-s) \, \mathrm{GW}_d^{(\lambda s')}(\tau')\Big), \\ 
		\text{ and } \Gamma_{\tau, \tau'} &\sim \mathrm{Poi}\Big(\lambda s s' \, \Pcorrd(\tau, \tau')\Big).
	\end{align*}
\end{definition}
\begin{lemma}
	The laws $ \Pcorrdplus $ from Definition \ref{def:corr_GW_trees} and $ \widetilde\Pcorrdplus $ from Definition \ref{def:new_galton_watson_trees} are equal. Hence, we always write $ \Pcorrdplus $, also when working with the subtree tuple representation. 
\end{lemma}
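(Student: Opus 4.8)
The plan is to show that the two sampling procedures produce the same probability distribution on pairs $(t,t') \in \mathcal{X}_{d+1} \times \mathcal{X}_{d+1}$ by unwinding both constructions into a superposition of independent Poisson processes indexed by isomorphism classes of depth-$d$ trees. First I would fix $d$ and work with the subtree-tuple representation: a pair of trees in $\mathcal{X}_{d+1}$ is the data $\big((N_\tau)_{\tau\in\mathcal{X}_d},(N'_{\tau'})_{\tau'\in\mathcal{X}_d}\big)$, and the joint law is determined by the joint law of these counts. So the goal reduces to checking that, under Definition~\ref{def:corr_GW_trees}, the vector $\big((N_\tau)_\tau,(N'_{\tau'})_{\tau'}\big)$ has exactly the law prescribed in Definition~\ref{def:new_galton_watson_trees}.

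The key step is to decompose the children of the root of $t$ (resp.\ $t'$) according to their origin in the sampling algorithm of Definition~\ref{def:corr_GW_trees}. The root of $t_*$ has $\mathrm{Poi}(\lambda s s')$ children; each such child $i$ is the root of a subtree of $t_*$ which, after attaching the blue/red augmentation pieces, becomes a depth-$d$ subtree hanging from the root of $t$ (call it $\tau_i$) and, via the independent copy, a depth-$d$ subtree hanging from the root of $t'$ (call it $\tau'_i$). By Definition~\ref{def:corr_GW_trees} the pair $(\tau_i,\tau'_i)$ is distributed according to $\Pcorrd$, independently across $i$. By the standard marking/thinning theorem for Poisson random variables, the counts $\Gamma_{\tau,\tau'} := \#\{i : (\tau_i,\tau'_i) = (\tau,\tau')\}$ are independent with $\Gamma_{\tau,\tau'}\sim\mathrm{Poi}\big(\lambda s s'\,\Pcorrd(\tau,\tau')\big)$ — here one must be slightly careful that $\Pcorrd(\tau,\tau')$ is the probability on unlabeled pairs, matching the convention that $N_\tau$ counts isomorphism-class copies. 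Similarly, step~2--3 attach $\mathrm{Poi}(\lambda s(1-s'))$ extra children to the root of $t$ (the root is a node of $t_*$), each carrying an independent $\mathrm{GW}^{(\lambda s)}$ subtree truncated to depth $d$, i.e.\ an independent $\mathrm{GW}_d^{(\lambda s)}$-distributed tree; marking by isomorphism class gives independent $\Delta_\tau\sim\mathrm{Poi}\big(\lambda s(1-s')\,\mathrm{GW}_d^{(\lambda s)}(\tau)\big)$, and symmetrically for $\Delta'_{\tau'}$. Since steps~2--3 for $t$, steps~2--3 for $t'$, and the intersection tree $t_*$ are mutually independent by construction, the families $\Delta$, $\Delta'$, $\Gamma$ are independent, and $N_\tau = \Delta_\tau + \sum_{\tau'}\Gamma_{\tau,\tau'}$, $N'_{\tau'} = \Delta'_{\tau'} + \sum_\tau \Gamma_{\tau,\tau'}$, exactly as in Definition~\ref{def:new_galton_watson_trees}.

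To make this a genuine proof rather than a picture I would add one structural remark up front: the depth-$d$ truncation commutes with the construction, i.e.\ $(t,t')_{d+1}$ depends only on $(t_*)_{d+1}$ and on the depth-$d$ truncations of the attached augmentation subtrees, so everything genuinely lives in the finite-depth world and the tuple representation is legitimate. I would also note that summing the Poisson parameters is consistent: $\sum_\tau \lambda s s'\,\Pcorrd(\tau,\tau')$ summed against the marginalization $\sum_{\tau'}$ recovers $\lambda s s'$, and the root-degree marginals $\sum_\tau N_\tau$ come out $\mathrm{Poi}(\lambda s)$ for $t$ and $\mathrm{Poi}(\lambda s')$ for $t'$, matching the fact that the neighborhoods of a single graph converge to $\mathrm{GW}^{(\lambda s)}$ resp.\ $\mathrm{GW}^{(\lambda s')}$ (Lemma~\ref{lem_likeGWtrees} and the remark following it). The proof is then completed by induction on $d$: the base case $d=0$ is trivial ($\mathcal{X}_0 = \{\bullet\}$), and the inductive step is precisely the Poisson decomposition above, using the induction hypothesis to identify the law of each pair $(\tau_i,\tau'_i)$ with $\Pcorrd$ in the tuple representation.

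The main obstacle I anticipate is purely bookkeeping rather than conceptual: one must be scrupulous about the passage between \emph{labeled} Galton--Watson trees (where children come in a definite order or with distinct labels) and \emph{unlabeled} isomorphism classes, because the Poisson thinning argument is cleanest when children are genuinely exchangeable and i.i.d.\ marked points. The resolution is that in Definition~\ref{def:corr_GW_trees} the offspring are produced by i.i.d.\ sampling with no reference to labels, so the multiset of (isomorphism classes of) attached subtrees is all that survives, and the map ``Poisson($\mu$) i.i.d.\ marks with mark-distribution $p$'' $\mapsto$ ``independent Poisson($\mu p(\cdot)$) counts'' applies verbatim. A secondary subtlety is ensuring that $\Pcorrd(\tau,\tau')$ in Definition~\ref{def:new_galton_watson_trees} is interpreted as the law on $\mathcal{X}_d\times\mathcal{X}_d$ induced by $\Pcorr_d$ under the tuple representation — but this is exactly the content of the induction hypothesis, so no circularity arises.
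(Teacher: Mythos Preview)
Your proposal is correct and takes essentially the same approach as the paper: both arguments split the root's children into those inherited from $t_*$ versus those added in the augmentation step, then apply Poisson thinning/marking to identify the resulting counts with the independent variables $\Gamma_{\tau,\tau'}$ and $\Delta_\tau,\Delta'_{\tau'}$ of Definition~\ref{def:new_galton_watson_trees}. Your write-up is in fact more explicit than the paper's on two points---the inductive framing and the labeled-versus-unlabeled bookkeeping---but these are elaborations of the same argument rather than a different route.
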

\begin{proof}
	Let $(t, t') \sim \Pcorrdplus$ and recall that $t, t'$ are independent augmentations of a common intersection tree $t_* \sim \mathrm{GW}^{(\lambda s s')}$. Every node in $t$ adjacent to the root falls into one of two categories:
	\begin{enumerate}[label=(\alph*)]
		\item The node was present in $t_*$ before the augmentation, in which case the subtree attached to that node is shared between $t$ and $t'$, 
		\item The node was not in $t_*$, in which case the subtree attached to that node is of law $\mathrm{GW}_{d}^{(\lambda s)}$ and independent from $t'$. 
	\end{enumerate}
	For category (b), let $c_+ \sim \mathrm{Poi}(\lambda s(1-s'))$ be the number of non-shared children of the root in $t$.
	Conditionally on $c_+$, the attached subtrees are i.i.d. with law $\mathrm{GW}_d^{(\lambda s)}$, meaning that each child carries a mark
	$\tau \in \mathcal X_d$ with probability $\mathrm{GW}_d^{(\lambda s)}(\tau)$.
	By Poisson thinning for marked Poisson variables, the counts
	\[
	\Delta_\tau := \#\{\text{category (b) root-children in $t$ whose subtree equals }\tau\}
	\]
	are independent and satisfy $
	\Delta_\tau \sim \mathrm{Poi} \big(\lambda s(1-s')\,\mathrm{GW}_d^{(\lambda s)}(\tau)\big)$ for all $\tau\in\mathcal X_d$. 
	The same argument applies to $t'$ yielding $\Delta'_{\tau'}$.
	
	For category (a), let $k \sim \mathrm{Poi}(\lambda ss')$ be the number of shared root-children coming from the intersection tree $t_*$.
	Conditionally on $k$, the pairs of attached subtrees are i.i.d. with joint law $(\tau,\tau')\sim \Pcorrd$.
	By Poisson thinning, the pair-counts
	\[
	\Gamma_{\tau,\tau'} := \#\{\text{category (a) shared root-children with pair }(\tau,\tau')\}
	\]
	are independent and satisfy $\Gamma_{\tau,\tau'} \sim \mathrm{Poi} \big(\lambda ss'\,\Pcorrd(\tau,\tau')\big)$ for all pairs $(\tau,\tau')\in\mathcal X_d^2$.
	Hence the total number of category (a) root subtrees equal to $\tau$ in $t$ is $\sum_{\tau'} \Gamma_{\tau,\tau'}$. Analogously, $\sum_{\tau} \Gamma_{\tau,\tau'}$ counts the $\tau'$-subtrees in $t'$.
	
	Since $N_\tau$ and $N'_{\tau'}$ count the number of occurrences of $\tau$ and $\tau'$ as root subtrees in $t$ and $t'$ respectively, and because every root subtree is either in category (a) or (b), we obtain the expressions for $N$ and $N'$ in Definition \ref{def:new_galton_watson_trees}.
\end{proof}
The description of $\Pcorrdplus$ using subtree tuples was the last necessary ingredient before examining the likelihood ratio of the testing problem (\ref{testing_problem_2}).

\subsection{The likelihood ratio and its properties}\label{section_likhood_ratio_properties}

The Neyman-Pearson Lemma strongly suggests using the likelihood ratio when it comes to hypothesis testing. Therefore, we introduce this object in the context of unlabeled trees as a first step.

For two unlabeled trees $\tau, \tau' \in \mathcal{X}_d$ recall that $\tau = \tau'$ means that they are equal up to relabeling. With this in mind, we denote the likelihood of two given trees $t, t'$ being correlated up to depth $d \in \N$ as
\[ 
\Pcorrd(t, t') := \Prob_{(\tau, \tau')\sim  \Pcorr}\Big((\tau_d, \tau'_d) = (t_d, t'_d)\Big) \, = \, \Pcorrd\Big((\tau, \tau') = (t, t')\Big). 
\]
In the last expression and for the rest of this section, $(\tau, \tau')$ implicitly denotes the random variable issued from the indicated law ($\Pcorr_d $ in this case). Since $\Pcorr_d $ only considers trees up to depth $d$, our short notation $\Pcorrd(t, t') $ more accurately refers to $\Pcorrd(t_d, t'_d)$. We define $ \Pindd(t, t') $ analogously as the likelihood of $ (t_d, t'_d)$ following the distribution $\Pindd$.

With these notations, the likelihood ratio for the testing problem (\ref{testing_problem_2}) is
\[ 
L_d(t, t') := \dfrac{\Pcorrd(t, t')}{\Pindd(t, t')}.
\]
If $(t, t')$ is a pair of random trees, $L_d(t, t')$ becomes a random variable which we will often denote as $ L_d $ when the law of $ (t, t') $ is evident from the context.

\paragraph{Recursive property.} The reason for studying $L_d(t, t')$ is that it fully determines one-sided testability, as we show later, in ``(a) $\Leftrightarrow$ (b)'' of Theorem \ref{thm_equivalences}. Consequently, we will test for tree correlation using the likelihood ratio, making it crucial to find efficient ways of computing it. The following proposition solves this problem by providing a recursive formula for $L_d$.
\begin{proposition}[Recursive likelihood ratio formula]\label{prop_likhood_recursive}
	For $t, t' \in \mathcal{X}_d$, let $c, c'$ be the degrees of their respective root nodes. Denote by $t_{[1]}, \dots, t_{[c]}$ respectively $t'_{[1]}, \dots, t'_{[c']}$ the subtrees attached to these roots, labeled arbitrarily. Recall that $\sigma : [k] \xhookrightarrow{} [c]$ denotes an injective map from $[k]$ to $[c]$ and let $\sum_{\sigma:[k] \hookrightarrow [c]}$ be the sum over all such mappings.  Then, for all $d \in \N_{>0}$, the likelihood ratio at depth $d$ can be expressed recursively as
	\begin{equation*}
		L_d(t, t') = \sum_{k=0}^{c \wedge c'} \psi(k, c, c') \sum_{\substack{\sigma : [k] \xhookrightarrow{} [c],\\\sigma' : [k] \xhookrightarrow{} [c']}} \prod_{i=1}^k L_{d-1}\Big(t_{[\sigma(i)]}, t'_{[\sigma'(i)]}\Big)
	\end{equation*}
	where 
	\begin{equation*}
		\psi(k, c, c')= \exp(\lambda s s')\frac{1}{\lambda^k \, k!} \, (1-s')^{c - k} (1-s)^{c'-k}.
	\end{equation*}
\end{proposition}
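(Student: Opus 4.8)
The plan is to compute $\Pcorrd(t,t')$ and $\Pindd(t,t')$ separately at the level of the root and its subtrees, then take the ratio. First I would use the subtree-tuple representation of $\Pcorrdplus$ from Definition~\ref{def:new_galton_watson_trees}: a pair $(t,t') \sim \Pcorrd$ decomposes into $\Gamma_{\tau,\tau'}$-many \emph{paired} subtrees attached to both roots via the common intersection tree, and $\Delta_\tau$, $\Delta'_{\tau'}$-many \emph{independent} subtrees attached only on one side. Writing $t = (N_\tau)_\tau$ and $t' = (N'_{\tau'})_{\tau'}$, the likelihood $\Pcorrd(t,t')$ is obtained by summing over all ways to choose how many of the $N_\tau$ copies of each $\tau$ (and $N'_{\tau'}$ copies of each $\tau'$) came from the paired part versus the independent part. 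Because all the $\Delta, \Delta', \Gamma$ are independent Poisson variables, this sum factors nicely; the key bookkeeping is that a ``matching'' between root subtrees of $t$ and root subtrees of $t'$ of size $k$ corresponds to a choice of how many intersection-tree children there were and which subtree pairs they produced.

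Concretely, I would introduce $k$ as the number of root children of $t$ that are matched to root children of $t'$ through the intersection tree (equivalently, the number of intersection-tree nodes adjacent to the root whose attached subtree is nontrivially recorded on both sides up to depth $d$ — but since we condition on the realized trees $t,t'$, it is cleaner to sum over $k \le c\wedge c'$ and over injections $\sigma:[k]\hookrightarrow[c]$, $\sigma':[k]\hookrightarrow[c']$ picking which labeled subtrees are paired). For a fixed such matching, the contribution to $\Pcorrd(t,t')$ is: a Poisson weight for having exactly the right number of paired children carrying exactly the pairs $(t_{[\sigma(i)]}, t'_{[\sigma'(i)]})$, times $\prod_{i=1}^k \Pcorrdminus(t_{[\sigma(i)]}, t'_{[\sigma'(i)]})$, times the independent-subtree weights for the remaining $c-k$ children of $t$ (each contributing a $\mathrm{GW}_{d-1}^{(\lambda s)}$ factor and a Poisson-count factor with rate $\lambda s(1-s')$) and the remaining $c'-k$ children of $t'$ (rate $\lambda s'(1-s)$). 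Meanwhile $\Pindd(t,t')$ is simply the product of two independent Galton--Watson likelihoods, which likewise factors as a Poisson$(\lambda s)$ (resp. Poisson$(\lambda s')$) count over $c$ (resp. $c'$) children times $\prod \mathrm{GW}_{d-1}^{(\lambda s)}(t_{[i]}) \cdot \prod \mathrm{GW}_{d-1}^{(\lambda s')}(t'_{[j]})$. Taking the ratio, all the pure $\mathrm{GW}_{d-1}$ likelihoods of unmatched subtrees cancel against the denominator, each matched pair turns its $\Pcorrdminus/\mathrm{GW}_{d-1}^{(\lambda s)}\mathrm{GW}_{d-1}^{(\lambda s')}$ factor into $L_{d-1}(t_{[\sigma(i)]}, t'_{[\sigma'(i)]})$, and the surviving Poisson normalizations and factorial/combinatorial factors collapse exactly into $\psi(k,c,c') = \exp(\lambda s s')\,\lambda^{-k}(k!)^{-1}(1-s')^{c-k}(1-s)^{c'-k}$.

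The main obstacle — and the step deserving the most care — is the combinatorial matching argument: correctly counting, for given subtree multiplicities $(N_\tau)$ and $(N'_{\tau'})$, the number of ways a Poisson number of intersection-tree children can realize a prescribed configuration of paired subtrees, and checking that after summing over these multiplicities one recovers exactly the double sum over injections $\sigma,\sigma'$ rather than some over- or under-counted variant. The cleanest way to handle this is to \emph{not} pass through the $(N_\tau)$-tuples at all but instead label the $c$ children of $t$ and the $c'$ children of $t'$ arbitrarily (as the statement does), think of the paired children as a partial matching $[k]\leftrightarrow[k]$ between the two labeled child-sets, and use the standard Poisson-thinning/colouring fact that independently marking each of the paired vs. unpaired roles gives independent Poissons; the factor $\exp(\lambda ss')$ is the ratio of normalizing constants between the correlated model (which only ``pays'' $e^{-\lambda ss'}$ once, for the intersection-tree root offspring) and the product model (which pays $e^{-\lambda s}e^{-\lambda s'}$), and the $\lambda^{-k}/k!$ absorbs the Poisson$(\lambda ss')$ point mass at $k$ together with the overcounting of unordered matchings. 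I would verify the base case $d=1$ (where all $L_0 \equiv 1$, so the formula reduces to a closed-form expression that can be checked directly against $\Pcorr_1/\Pind_1$) and then note that the general-$d$ computation is formally identical since conditioning on the depth-$1$ structure reduces everything to independent depth-$(d-1)$ subproblems.
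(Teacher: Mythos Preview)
Your proposal is correct and follows essentially the same route as the paper: compute $\Pcorrd(t,t')$ by conditioning on the number $k$ of intersection-tree children at the root and on which labeled subtrees are paired, compute $\Pindd(t,t')$ as a product of Galton--Watson likelihoods, and take the ratio so that the $\mathrm{GW}_{d-1}$ factors cancel and the Poisson normalizations collapse into $\psi(k,c,c')$. The only cosmetic difference is that the paper first sums over full permutations $\sigma\in\mathfrak{S}_c$, $\sigma'\in\mathfrak{S}_{c'}$ (arising from the random labeling of children) and then reduces to injections via $|\mathfrak{S}_c| = (c-k)!\cdot|\{[k]\hookrightarrow[c]\}|$, whereas you aim directly for the injections; your instinct to bypass the $(N_\tau)$-tuple bookkeeping and work with labeled children is exactly what the paper does.
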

In Appendix \ref{appendix:prop_likhood_recursive}, the reader can retrieve the proof of this result.
\begin{remark}
	Recall from Remark \ref{remark: algo-runtime} that the graph alignment algorithm MPAlign is polynomial if tree correlation tests can be computed in polynomial time. Since both $c$ and $c'$ are of order $\mathcal{O}(1)$ in the sparse regime, the recursive formula from Proposition \ref{prop_likhood_recursive} enables the desired efficient computation of $L_d$. With $d \in \mathcal{O}(N)$, this computation is even in $\mathcal{O}(\mathrm{polylog}(N))$. 
\end{remark}
One can repeatedly apply the recursive likelihood ratio formula to obtain an explicit lower bound for $L_{d+k}$, which will serve later in the proof of Theorem \ref{thm_equivalences}.
\begin{proposition}\label{prop_explicit_likhood}
	Let \( (t, t') \) be a pair of rooted trees sharing a common rooted subtree \( t_* \). Denote by 
	\( \sigma_*: t_* \hookrightarrow t \) and \( \sigma'_*: t_* \hookrightarrow t' \) 
	the embeddings of \( t_* \) into \( t \) and \( t' \), respectively. 
	
	For each \( i \in t \), let \( c_t(i) \) denote its number of child nodes in \( t \), and define 
	\( t_{[i]} \) as the subtree rooted at \( i \), containing all its descendants. 
	This notation extends naturally to \( t' \) and \( t_* \), and we recall that \( (t_*)_d \) denotes the depth-\( d \) truncation of \( t_* \).
	
	Then, for all \( d, k \in \mathbb{N}_{>0} \), the following inequality holds:
	\[
	L_{d+k}(t, t') \geq 
	\prod_{i \in (t_*)_{d-1}} \hspace{-8pt}
	\psi\Big(c_{t_*}(i), c_t(\sigma_*(i)), c_{t'}(\sigma'_*(i)) \Big) 
	\prod_{j \in (t_*)_d \setminus (t_*)_{d-1}} \hspace{-8pt}
	L_k(t_{[\sigma_*(j)]}, t'_{[\sigma'_*(j)]}).
	\]
\end{proposition}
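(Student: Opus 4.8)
The plan is to prove Proposition~\ref{prop_explicit_likhood} by induction on the truncation depth of the common subtree $t_*$, peeling off one generation of $t_*$ at a time using the recursive formula of Proposition~\ref{prop_likhood_recursive}. The key observation is that in the recursion
\[
L_{d+k}(t,t') = \sum_{m=0}^{c\wedge c'}\psi(m,c,c')\sum_{\substack{\sigma:[m]\hookrightarrow[c]\\\sigma':[m]\hookrightarrow[c']}}\prod_{i=1}^m L_{d+k-1}\bigl(t_{[\sigma(i)]},t'_{[\sigma'(i)]}\bigr),
\]
every term of the double sum is nonnegative (each $L_{d+k-1}$ is a likelihood ratio, hence $\ge 0$, and $\psi>0$), so we may lower-bound $L_{d+k}(t,t')$ by retaining a single well-chosen term. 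The natural choice is $m = c_{t_*}(\rho)$, the number of children of the root $\rho$ of $t_*$, and to take $\sigma,\sigma'$ to be exactly the injections that match the $c_{t_*}(\rho)$ children of $\rho$ in $t_*$ with their images $\sigma_*$, $\sigma'_*$ in $t$ and $t'$. Those matched children are precisely the roots of subtrees that again share a common rooted subtree (namely the corresponding subtree of $t_*$), so the induction hypothesis applies to each factor $L_{d+k-1}(t_{[\sigma_*(\cdot)]},t'_{[\sigma'_*(\cdot)]})$.

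Concretely, I would set up the induction on $d$ (with $k$ fixed), the base case $d=1$ being a direct application of Proposition~\ref{prop_likhood_recursive}: keeping only the term $m = c_{t_*}(\rho)$ with $\sigma,\sigma'$ given by the embeddings $\sigma_*,\sigma'_*$ restricted to the root's children gives
\[
L_{1+k}(t,t') \ge \psi\bigl(c_{t_*}(\rho),c_t(\sigma_*(\rho)),c_{t'}(\sigma'_*(\rho))\bigr)\prod_{j}L_k(t_{[\sigma_*(j)]},t'_{[\sigma'_*(j)]}),
\]
where $j$ ranges over the children of $\rho$, which is exactly the claimed bound since $(t_*)_0 = \{\rho\}$ and $(t_*)_1\setminus(t_*)_0$ is the set of those children. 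For the inductive step, apply the recursion once at the root to reduce $L_{d+k}(t,t')$ to a product over the children $j$ of $\rho$ of terms $L_{(d-1)+k}(t_{[\sigma_*(j)]},t'_{[\sigma'_*(j)]})$, times the single factor $\psi(c_{t_*}(\rho),c_t(\sigma_*(\rho)),c_{t'}(\sigma'_*(\rho)))$; then apply the induction hypothesis (with depth $d-1$) to each child factor and reassemble the products. The bookkeeping is that $(t_*)_{d-1}$ decomposes as the root together with the union over children $j$ of the depth-$(d-2)$ truncations of the subtrees $(t_*)_{[j]}$, and likewise $(t_*)_d\setminus(t_*)_{d-1}$ decomposes into the corresponding leaf-generation pieces; these identities make the products telescope correctly.

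The main technical obstacle I anticipate is handling the indexing carefully: the subtrees attached to a root are "labeled arbitrarily" in Proposition~\ref{prop_likhood_recursive}, so one must fix, once and for all, an arbitrary labeling that is \emph{compatible} with the embeddings $\sigma_*,\sigma'_*$ — i.e.\ such that the first $c_{t_*}(\rho)$ subtrees of $t$ are the images of the children of $\rho$ in $t_*$, and similarly for $t'$ — and verify that the resulting restricted subtrees genuinely share the subtree $(t_*)_{[j]}$ with the induced embeddings. Another small point is that $k$ plays no active role in the recursion (only $d$ is consumed), so the induction is cleanly on $d$ alone; one should also note the edge case where a node of $t_*$ has a child $j$ such that the subtree $(t_*)_{[j]}$ truncated at depth $0$ is a single node, which is consistent with $L_0 \equiv 1$ and causes no difficulty. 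Apart from this careful combinatorial bookkeeping, the argument is a routine "keep one nonnegative term and induct" estimate.
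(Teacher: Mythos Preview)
Your proposal is correct. The paper's proof takes a slightly different route: it first unrolls the recursion of Proposition~\ref{prop_likhood_recursive} completely into the explicit expansion
\[
L_{d+k}(t,t') = \sum_{\tau \in \mathcal{X}_{d+k}} \sum_{\substack{\sigma:\tau\hookrightarrow t\\\sigma':\tau\hookrightarrow t'}} \prod_{i\in\tau_{d+k-1}} \psi\bigl(c_\tau(i),c_t(\sigma(i)),c_{t'}(\sigma'(i))\bigr),
\]
(citing \cite[Lemma~2.1]{luca} for this identity), and then lower-bounds by restricting the outer sum to those $\tau$ with $\tau_d = t_*$ and the inner sum to embeddings agreeing with $\sigma_*,\sigma'_*$ on $\tau_d$; a swap of sums and products then yields the claimed bound. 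Your direct induction on $d$ is the same idea---keep only the nonnegative terms compatible with the embedding of $t_*$---but organized one generation at a time rather than all at once. The advantage of your approach is that it is self-contained and does not require the auxiliary expansion formula as a separate lemma; the paper's version, on the other hand, makes the structure of the lower bound (a restriction of a sum over all common subtrees to those extending $t_*$) more transparent in a single step. The bookkeeping you flag about compatible labelings and the decomposition of $(t_*)_{d-1}$ and $(t_*)_d\setminus(t_*)_{d-1}$ across the children of the root is exactly what is needed, and the edge cases you mention cause no trouble.
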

The proof of this proposition is deferred to Appendix \ref{appendix:prop_explicit_likhood}.

The representations of $ L_d $ from these propositions have their worth outside of computational inquiries, for instance, when proving the next property of $ L_d $:

\paragraph{Martingale property.}
If we consider the subsequent likelihood ratios $(L_d(t, t'))_d$ as a random process, it is natural to suspect it to be a martingale with respect to $\Pind$ and the filtration $\mathcal{F}_d = \sigma((t_d, t'_d))$. The martingale properties are automatically true if the random variables in the likelihood ratio have Lebesgue densities \cite{hansen}. In our case, it can easily be checked by computation: Following the reasoning in Section 2.3 of \cite{luca}, it suffices to verify that the following expression equals 1:
\begin{align*}
	&\Eind\bigg[\sum_{k=0}^{c \wedge c'} \frac{\pi_{\lambda s s'}(k) \pi_{\lambda s (1-s')}(c-k) \pi_{\lambda s' (1-s)}(c'-k)}{\pi_{\lambda s}(c) \,\pi_{\lambda s'}(c')}\bigg] \\
	&= \sum_{k=0}^{\infty} \Eind\bigg[e^{\lambda ss'}\frac{1}{\lambda^k \, k!} \, (1-s')^{c - k} (1-s)^{c'-k} \dfrac{c!\,  c'!}{(c - k)!(c'-k)!} \, \indicator{k \leq c \wedge c'}\bigg]\\
	&= e^{\lambda ss'} \sum_{k=0}^\infty \frac{1}{\lambda^k k!} \Eind\bigg[(1-s')^{c-k}\frac{c!}{(c-k)!}\indicator{k \leq c}\bigg] \Eind\bigg[(1-s)^{c'-k}\frac{c'!}{(c'-k)!}\indicator{k \leq c'}\bigg]\\
	&= 1.
\end{align*}
The last two equalities follow from the independence of \( c \) and \( c' \) and their distributions, 
\( c \sim \mathrm{Poi}(\lambda s) \) and \( c' \sim \mathrm{Poi}(\lambda s') \). 
Due to \( \Eindd[L_d] = 1 \) and \( (L_d)_d \) is trivially adapted to the filtration, 
it follows that the likelihood ratio forms a martingale.

Since $L_d \geq 0$, the Martingale Convergence Theorem yields an almost sure limit $L_\infty \in L^1$. As a consequence, we obtain a fixed point formula for $ \Eind[L_\infty] $ from the recursive representation of Proposition \ref{prop_likhood_recursive}: Using that $c, c'$ are independent from $t_{[\sigma(i)]}, t'_{[\sigma'(i)]}$ in Galton--Watson trees, letting $d \to \infty$ and then taking $\Eind$, we obtain
\begin{align*}
	\Eind[L_\infty] &= \Eind\bigg[\sum_{k=0}^{c \wedge c'} \psi(k, c, c') \sum_{\substack{\sigma : [k] \xhookrightarrow{} [c],\\\sigma' : [k] \xhookrightarrow{} [c']}} \prod_{i=1}^k  \Eind[ L_\infty]\bigg] \\
	&= \Eind\bigg[\sum_{k=0}^{c \wedge c'} \frac{ \pi_{\lambda s (1-s')}(c-k) \pi_{\lambda s' (1-s)}(c'-k)}{\pi_{\lambda s}(c) \,\pi_{\lambda s'}(c')}\; \pi_{\lambda s s'}(k) \Eind[L_\infty]^k\bigg]\\
	& = \sum_{k = 0}^\infty \pi_{\lambda s s'}(k) \Eind[L_\infty]^k
\end{align*} 
where the last step follows from a similar computation as above. 

Consequently, $\Eind[L_\infty]$ is a fixed point of the probability generating function of a $ \mathrm{Poi}(\lambda s s') $ random variable. This lets us directly link $\Eind[L_\infty]$ to the theory of Galton--Watson processes with average offspring $ \lambda s s' $:\\
Recall that in the supercritical regime of such processes, their extinction probability is equal to the smallest solution to the same fixed point equation \cite{abraham2015introduction}. In particular, supposing $\lambda s s' > 1$ and $\Eind[L_\infty] <1$ implies that $\Eind[L_\infty]$ is equal to the extinction probability of a Galton--Watson tree with $\mathrm{Poi}(\lambda s s')$ offspring which we denote by $\Prob(\mathrm{Ext}(\mathrm{GW}^{(\lambda ss')}))$. This probability is also equal to the relative size of the giant component in a sparse Erdős--Rényi graph \cite{bollobas1998random}, illustrating another close link between these graphs and the trees we consider.

Aside from the connection to extinction probabilities of Galton--Watson trees, the condition $\Eind[L_\infty] < 1$ is equivalent to an additional property of $(L_d)_d$:
\begin{lemma}\label{lem_nonUI_equivalence}
	One has $\, \Eind[L_\infty] < 1$ if and only if $(L_d)_d$ is not uniformly integrable.
\end{lemma}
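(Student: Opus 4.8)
The plan is to establish the equivalence via the standard martingale dichotomy. Recall that $(L_d)_d$ is a nonnegative $\Pind$-martingale with $\Eindd[L_d] = 1$ for all $d$, converging $\Pind$-almost surely to an integrable limit $L_\infty$. By a classical fact (see e.g. the characterization of uniformly integrable martingales), a nonnegative martingale $(L_d)_d$ is uniformly integrable if and only if it converges in $L^1$, which for a nonnegative martingale is equivalent to $\Eind[L_\infty] = \Eindd[L_d] = 1$. Hence the failure of uniform integrability is equivalent to $\Eind[L_\infty] < 1$, because Fatou's lemma always gives $\Eind[L_\infty] \leq \liminf_d \Eindd[L_d] = 1$, so the only two possibilities are $\Eind[L_\infty] = 1$ (the UI case) and $\Eind[L_\infty] < 1$ (the non-UI case). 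This is really the whole argument; the content is in citing the right martingale theory and checking Fatou applies.

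The first step I would write out: invoke that $(L_d)_d$ is a nonnegative martingale under $\Pind$ (established just above in the excerpt) so that the Martingale Convergence Theorem gives $L_d \to L_\infty$ $\Pind$-a.s.\ with $L_\infty \in L^1(\Pind)$. Second step: recall the general theorem that for a martingale $(M_d)_d$, uniform integrability is equivalent to $M_d \to M_\infty$ in $L^1$; and that for a \emph{nonnegative} martingale, $L^1$-convergence holds if and only if $\Expe[M_\infty] = \Expe[M_0]$ (this is Scheff\'e's lemma combined with Fatou: a.s.\ convergence plus convergence of the means of nonnegative integrands forces $L^1$-convergence). Third step: since $\Eindd[L_d] = 1$ for every $d$, conclude $(L_d)_d$ is UI $\iff \Eind[L_\infty] = 1$, and therefore $(L_d)_d$ is \emph{not} UI $\iff \Eind[L_\infty] < 1$ (using Fatou to rule out $\Eind[L_\infty] > 1$).

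I do not anticipate a genuine obstacle here — it is a textbook dichotomy — but the one point requiring mild care is making the logic airtight that $\Eind[L_\infty] \neq 1$ forces $\Eind[L_\infty] < 1$ rather than leaving open $\Eind[L_\infty] > 1$; this is handled cleanly by Fatou's lemma applied to the a.s.\ limit, $\Eind[L_\infty] = \Eind[\liminf_d L_d] \leq \liminf_d \Eindd[L_d] = 1$. One could also phrase the proof entirely self-containedly: if $(L_d)_d$ is UI then it converges in $L^1$, so $\Eind[L_\infty] = \lim_d \Eindd[L_d] = 1$; conversely if $\Eind[L_\infty] = 1$, then $L_d \to L_\infty$ a.s.\ together with $\Eindd[L_d] \to \Eind[L_\infty]$ and nonnegativity gives $L^1$-convergence by Scheff\'e, hence UI. Either route is short; I would present the self-contained version with a pointer to a standard reference (e.g.\ \cite{hansen} or any graduate probability text) for Scheff\'e's lemma and the UI-martingale characterization.
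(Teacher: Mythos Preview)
Your proposal is correct and follows essentially the same route as the paper: reformulate as ``$(L_d)_d$ converges in $L^1$ iff $\Eind[L_\infty]=1$,'' get one direction from $L^1$-convergence of the means, and the other from Scheff\'e's lemma applied to the nonnegative a.s.\ limit. Your explicit use of Fatou to pin down $\Eind[L_\infty]\le 1$ is a nice clarification the paper leaves implicit.
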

\begin{proof}
	Since martingales are uniformly integrable if and only if they converge in $ L^1 $, the equivalence to be shown can be reformulated as
	\[
	(L_d)_d \text{ converges in } L^1 \quad \iff \quad \Eind[L_\infty] = 1.
	\]
	Assuming $ L^1 $-convergence, one has $1 = \Eind[L_d] \to \Eind[L_\infty]$ yielding $ \Eind[L_\infty] = 1$. 
	
	On the other hand, suppose that $\Eind[L_\infty] = 1 = \lim_{d \to \infty} \Eind[L_d]$. Since $L_d \to L_\infty$ almost surely, Scheffé's Lemma also implies convergence in $ L^1 $, which concludes the proof.
\end{proof}

\paragraph{Diagonalization property.} With the recursion formula and the martingale property of $L_d$ at hand, we lack a final tool for a more fine-grained analysis of information-theoretic thresholds. The following stunning result generalizes \cite[Theorem 4]{ganassali2022statistical} to the asymmetric case, affirming that $ L_d $ can be \textit{diagonalized} over the space of unlabeled trees. We refer to this result as the likelihood ratio diagonalization formula since it resembles the Spectral Theorem for matrices, including the orthonormal eigenbasis.
\begin{theorem}\label{thm_diagonalization}
	There exists a set of functions $f^{(\mu)}_{d, \beta}:\mathcal{X}_d \to \R$ with parameters $\mu >0, d\in \N$ and indexed by trees $\beta \in \mathcal{X}_d$ such that for all choices of $s, s' \in [0,1]$, $\lambda >0$, the following \textbf{likelihood ratio diagonalization} formula holds:
	\begin{equation}\label{likhoodratio_bigformula}
		\forall t, t' \in \mathcal{X}_d\, : \, L_d(t, t') = \sum_{\beta\in\mathcal{X}_d} \sqrt{ss'}^{|\beta|-1} f^{(\lambda s)}_{d, \beta}(t) f^{(\lambda s')}_{d, \beta}(t').
	\end{equation}
	Furthermore, the $f^{(\mu)}_{d, \beta}$ have the following properties:
	\begin{itemize}
		\item Constant value for $\beta = \bullet$, the trivial tree:
		\begin{equation}\label{trivialtreevalue}
			\forall \mu >0, \,  \forall t \in \mathcal{X}_d: \quad f^{(\mu)}_{d, \bullet}(t)  = 1.
		\end{equation}
		\item \textbf{First orthogonality property} (w.r.t the Galton--Watson-measure):
		\begin{equation}\label{orthogonaliy_1_GW}
			\forall \mu >0, \, \forall \beta, \beta' \in \mathcal{X}_d: \quad \sum_{t \in \mathcal{X}_d} \mathrm{GW}_d^{(\mu)}(t) \, f^{(\mu)}_{d, \beta}(t) \, f^{(\mu)}_{d, \beta'}(t) \, =\, \indicator{\beta = \beta'}.
		\end{equation}
		\item \textbf{Second orthogonality property} (summing over $\beta$):
		\begin{equation}\label{orthogonaliy_2_betasum}
			\forall \mu >0, \, \forall t, t' \in \mathcal{X}_d: \quad \sum_{\beta \in \mathcal{X}_d} f^{(\mu)}_{d, \beta}(t) \, f^{(\mu)}_{d, \beta}(t') \,=\, \frac{\indicator{t = t'}}{\mathrm{GW}_d^{(\mu)}(t)}.
		\end{equation}
	\end{itemize}
\end{theorem}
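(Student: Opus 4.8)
\textbf{Proof proposal for Theorem \ref{thm_diagonalization}.}

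The plan is to construct the functions $f^{(\mu)}_{d,\beta}$ recursively on the depth $d$, mirroring the recursive structure of $L_d$ from Proposition \ref{prop_likhood_recursive}, and to verify the four claimed properties by induction. For the base case $d=0$ we have $\mathcal{X}_0 = \{\bullet\}$, so we must set $f^{(\mu)}_{0,\bullet}(\bullet) = 1$; then \eqref{likhoodratio_bigformula}, \eqref{trivialtreevalue}, \eqref{orthogonaliy_1_GW} and \eqref{orthogonaliy_2_betasum} are all immediate since $\mathrm{GW}_0^{(\mu)}(\bullet) = 1$ and $L_0(\bullet,\bullet) = 1$. For the inductive step, I would plug the recursion for $L_d$ from Proposition \ref{prop_likhood_recursive} into the factorized form and try to show that the product $\prod_{i=1}^k L_{d-1}(t_{[\sigma(i)]}, t'_{[\sigma'(i)]})$, after expanding each factor via the inductive hypothesis, organizes itself into a sum over trees $\beta \in \mathcal{X}_d$ of a separable expression $\sqrt{ss'}^{|\beta|-1} f^{(\lambda s)}_{d,\beta}(t) f^{(\lambda s')}_{d,\beta}(t')$. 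The tree $\beta$ at depth $d$ should be read off as: take $k$ subtrees $\beta_{[1]},\dots,\beta_{[k]}\in\mathcal{X}_{d-1}$ (with multiplicity), attach them to a new root, and sum over all ways of distributing them; the factor $\sqrt{ss'}^{|\beta|-1}$ then comes from the telescoping $|\beta| - 1 = \sum_i |\beta_{[i]}|$ together with the $\sqrt{ss'}^{|\beta_{[i]}|-1}$ contributed by each inductive factor and the $(\sqrt{ss'})^k / (\text{stuff})$ coming from $\psi(k,c,c')$ and the Poisson normalizations.

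Concretely, the key computational identity to establish is that, writing $c,c'$ for the root degrees of $t,t'$ and using $\psi(k,c,c') = \exp(\lambda ss')\,\lambda^{-k}(k!)^{-1}(1-s')^{c-k}(1-s)^{c'-k}$, one can define for a depth-$d$ tree $\beta$ with root subtrees $\beta_{[1]},\dots,\beta_{[m]}$ the function
\[
f^{(\mu)}_{d,\beta}(t) \;=\; \frac{1}{\sqrt{\mathrm{aut}(\beta)}}\;\frac{1}{\pi_\mu(c)}\;\frac{1}{\mu^{?}}\;\sum_{\sigma:[m]\hookrightarrow[c]} \prod_{i=1}^m f^{(\mu)}_{d-1,\beta_{[i]}}\bigl(t_{[\sigma(i)]}\bigr)\,\times(\text{correction terms in }\mu,s),
\]
where the normalization involves $|\mathrm{Aut}(\beta)|$ because unlabeled trees with repeated subtrees are overcounted by the sum over injections; the precise constants are fixed by demanding \eqref{orthogonaliy_1_GW}. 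I would first guess the normalization by matching the $\beta = \bullet$ case and a one-level example, then verify \eqref{orthogonaliy_1_GW} directly: the Galton--Watson measure factorizes as $\mathrm{GW}_d^{(\mu)}(t) = \pi_\mu(c)\,(\text{multinomial count})^{-1}\prod_i \mathrm{GW}_{d-1}^{(\mu)}(t_{[i]})$, and summing over $t\in\mathcal{X}_d$ amounts to summing over $c\sim\mathrm{Poi}(\mu)$ and over the subtree multiset; the cross terms vanish by the inductive orthogonality, and the diagonal terms collapse to $\indicator{\beta=\beta'}$ after the automorphism bookkeeping works out. Property \eqref{orthogonaliy_2_betasum} is the ``completeness'' dual of \eqref{orthogonaliy_1_GW} and should follow either by a dimension/duality argument (the $f^{(\mu)}_{d,\beta}$ form an orthonormal basis of $\ell^2(\mathcal{X}_d, \mathrm{GW}_d^{(\mu)})$ indexed by the same countable set, and \eqref{orthogonaliy_2_betasum} is just the statement that the change-of-basis matrix is orthogonal in both directions) or by a parallel induction. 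Property \eqref{trivialtreevalue} should drop out of the recursion since attaching zero subtrees ($k=0$) to a root gives $\beta=\bullet$ and the $k=0$ term of $\psi$ is $\exp(\lambda ss')$, which must be absorbed into the normalization so that $f_{d,\bullet}\equiv 1$.

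The main obstacle I anticipate is getting the combinatorial normalization exactly right: because we are working with \emph{unlabeled} trees, the sum over injective maps $\sigma:[k]\hookrightarrow[c]$ in Proposition \ref{prop_likhood_recursive} overcounts by the automorphisms of the multiset of root subtrees, and the factors $\sqrt{ss'}^{|\beta|-1}$ must be distributed so that the separability into an $s$-part and an $s'$-part survives — in particular the asymmetry ($s\neq s'$) means the $(1-s')^{c-k}$ and $(1-s)^{c'-k}$ factors in $\psi$ must be split as $(1-s')^{c-k}$ going entirely with the $t$-factor $f^{(\lambda s)}_{d,\beta}(t)$ and $(1-s)^{c'-k}$ going with $f^{(\lambda s')}_{d,\beta}(t')$, which is exactly what makes the construction work but requires care to check that the $\mu$-dependence of $f^{(\mu)}_{d,\beta}$ is ``the same function with $\mu$ plugged in'' rather than two genuinely different families. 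Once the bookkeeping for one level is pinned down and shown consistent with \eqref{orthogonaliy_1_GW}, the induction should propagate cleanly, and I would relegate the heaviest algebra — verifying that $\sum_{\sigma,\sigma'}\prod_i L_{d-1}$ reassembles into $\sum_\beta \sqrt{ss'}^{|\beta|-1}f^{(\lambda s)}_{d,\beta}(t)f^{(\lambda s')}_{d,\beta}(t')$ — to an appendix, as the paper does for its other recursive-formula proofs.
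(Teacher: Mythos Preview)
Your overall plan --- induction on $d$ with the trivial base case $f^{(\mu)}_{0,\bullet}\equiv 1$ --- matches the paper. However, your inductive step takes a different route from the paper's and, as written, contains a genuine unresolved gap.

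You propose to plug the recursion of Proposition~\ref{prop_likhood_recursive} into the diagonalization and read off $f^{(\mu)}_{d,\beta}$ by collecting terms. The obstacle you yourself flag is real and not resolved: in $\psi(k,c,c') = e^{\lambda ss'}\lambda^{-k}(k!)^{-1}(1-s')^{c-k}(1-s)^{c'-k}$, the factor $(1-s')^{c-k}$ depends on $s'$ but is attached to $c$ (the $t$-side), and $(1-s)^{c'-k}$ depends on $s$ but is attached to $c'$. Since $f^{(\lambda s)}_{d,\beta}(t)$ is allowed to depend only on $\mu=\lambda s$, your sentence ``$(1-s')^{c-k}$ going entirely with the $t$-factor $f^{(\lambda s)}_{d,\beta}(t)$'' is literally impossible. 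Likewise $e^{\lambda ss'}$ does not split into a function of $\lambda s$ times a function of $\lambda s'$. So the naive term-by-term matching over the index $\beta$ (with $k$ fixed equal to the root degree of $\beta$) cannot produce a separable expression; some further identity or reorganization is needed, and your proposal does not supply it. The placeholder formula with ``$\mu^{?}$'' and ``correction terms'' is not a definition.

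The paper sidesteps this entirely. Its inductive step does \emph{not} go through Proposition~\ref{prop_likhood_recursive}; instead it works with the Poisson description of $\Pcorrdplus$ from Definition~\ref{def:new_galton_watson_trees} (the variables $\Delta_\tau,\Delta'_{\tau'},\Gamma_{\tau,\tau'}$), computes the joint characteristic function $\widehat{\Pcorrdplus}(u,u')$, Fourier-inverts, and expands the resulting exponential as a power series. After applying the induction hypothesis to $\Pcorrd(\tau,\tau')$ inside the exponent and using the multinomial formula, the expression \emph{automatically} factors into a product of two integrals, one depending only on $(\lambda s, t)$ and one only on $(\lambda s', t')$. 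This yields an explicit formula for $f^{(\mu)}_{d+1,\gamma}$ as a formal-power-series coefficient,
\[
f^{(\mu)}_{d+1,\gamma}(t) \;=\; \sqrt{\textstyle\prod_\beta \gamma_\beta!}\;\,[x^\gamma]\; e^{-\sqrt{\mu}\sum_{\beta,\tau} x_\beta g^{(\mu)}_{d,\beta}(\tau)}\prod_\tau\Bigl(1+\sum_\beta \tfrac{x_\beta}{\sqrt{\mu}}f^{(\mu)}_{d,\beta}(\tau)\Bigr)^{N_\tau},
\]
from which \eqref{trivialtreevalue} is immediate and \eqref{orthogonaliy_1_GW}, \eqref{orthogonaliy_2_betasum} are inherited from the symmetric case in \cite{ganassali2022statistical}. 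The $\sqrt{\prod_\beta \gamma_\beta!}$ is exactly the automorphism bookkeeping you anticipated, so your structural intuition is sound; what is missing is the mechanism (Fourier/generating-function rather than the $L_d$-recursion) that makes the $s/s'$ separation happen.
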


This theorem, particularly the orthogonality properties, is central in our strategy for showing Theorem \ref{thm:phase_transition} later. Its proof is a computationally heavy induction, which we defer to appendix \ref{appendix:thm_diagonalization}. 

A first opportunity to see Theorem \ref{thm_diagonalization} in action is by relating the second moment of the likelihood ratio to the number of unlabeled trees: 
\begin{corollary}\label{second_moment}
	The second moment of the likelihood ratio under the independence assumption can be written as
	\[
	\Eindd[L_d^2] =  \sum_{n=1}^{\infty} \Big| \mathcal{X}_{d}^{(n)}\Big| (ss')^{n-1}.
	\]
\end{corollary}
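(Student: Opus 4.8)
The plan is to use the diagonalization formula \eqref{likhoodratio_bigformula} together with the first orthogonality property \eqref{orthogonaliy_1_GW}. Writing the likelihood ratio as $L_d(t,t') = \sum_{\beta\in\mathcal{X}_d} \sqrt{ss'}^{\,|\beta|-1} f^{(\lambda s)}_{d,\beta}(t) f^{(\lambda s')}_{d,\beta}(t')$, I would square it, obtaining a double sum over $\beta,\beta'\in\mathcal{X}_d$ of the product $\sqrt{ss'}^{\,|\beta|+|\beta'|-2} f^{(\lambda s)}_{d,\beta}(t) f^{(\lambda s)}_{d,\beta'}(t) f^{(\lambda s')}_{d,\beta}(t') f^{(\lambda s')}_{d,\beta'}(t')$. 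Then I take the expectation under $\Pindd$, recalling that under independence $(t_d,t'_d)$ has law $\mathrm{GW}_d^{(\lambda s)} \otimes \mathrm{GW}_d^{(\lambda s')}$, so the expectation factorizes: the $t$-part becomes $\sum_{t\in\mathcal{X}_d} \mathrm{GW}_d^{(\lambda s)}(t) f^{(\lambda s)}_{d,\beta}(t) f^{(\lambda s)}_{d,\beta'}(t)$, which by \eqref{orthogonaliy_1_GW} equals $\indicator{\beta=\beta'}$, and likewise the $t'$-part equals $\indicator{\beta=\beta'}$.

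Combining these, the double sum collapses to the diagonal $\beta=\beta'$, leaving $\Eindd[L_d^2] = \sum_{\beta\in\mathcal{X}_d} (ss')^{|\beta|-1}$. Finally I regroup this sum according to the size of $\beta$: since $\mathcal{X}_d^{(n)}$ is the set of trees in $\mathcal{X}_d$ with exactly $n$ nodes, and every tree of depth at most $d$ has some finite number of nodes $n\geq 1$, we get $\sum_{\beta\in\mathcal{X}_d}(ss')^{|\beta|-1} = \sum_{n=1}^\infty |\mathcal{X}_d^{(n)}| (ss')^{n-1}$, which is the claimed formula. (One should note the sum over $n$ is effectively finite when $d$ is fixed, or else converges when $ss'<\alpha$ by Proposition \ref{Otter}; in any case the identity holds in $[0,\infty]$.)

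The main subtlety — rather than a genuine obstacle — is justifying the interchange of expectation and the (a priori infinite) sums over $\beta,\beta'$, i.e.\ applying Tonelli/Fubini. Since all the terms after squaring and before taking expectation are not manifestly nonnegative (the $f^{(\mu)}_{d,\beta}$ can be negative), one should argue via the nonnegativity of $L_d^2$ itself and monotone approximation by finite truncations of $\mathcal{X}_d$, or simply observe that for fixed $d$ the trees of bounded depth that receive positive $\mathrm{GW}$-mass together with the structure of the identity make everything absolutely summable; the cleanest route is to first establish the finite-$\mathcal{X}_d$ version and pass to the limit using the fact that $\Eindd[L_d^2]$ is well-defined in $[0,\infty]$. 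All of this is routine given Theorem \ref{thm_diagonalization}.
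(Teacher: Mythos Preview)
Your proposal is correct and follows essentially the same route as the paper: square the diagonalization formula \eqref{likhoodratio_bigformula}, take the $\Pindd$-expectation, factor by independence of $t$ and $t'$, apply the first orthogonality property \eqref{orthogonaliy_1_GW} to collapse the double sum to the diagonal, and regroup by tree size. The only point where the paper is more explicit than you is the interchange of expectation and double sum: rather than appealing to monotone approximation or an identity in $[0,\infty]$, the paper bounds the absolute expectation of each summand by $1$ via Cauchy--Schwarz together with \eqref{orthogonaliy_1_GW} (since $\Eindd\big[f^{(\mu)}_{d,\beta}(t)^2\big]=1$), and then invokes $ss'<1$ to control the weights $\sqrt{ss'}^{\,|\beta|+|\beta'|-2}$.
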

\begin{proof}
	Using the likelihood ratio diagonalization formula (\ref{likhoodratio_bigformula}), we compute
	\begin{align*}
		\Eindd[L_d^2] &= \Eindd\Big[ \sum_{\beta, \beta' \in\mathcal{X}_d} \sqrt{ss'}^{|\beta| + |\beta'|-2} f^{(\lambda s)}_{d, \beta}(t) f^{(\lambda s)}_{d, \beta'}(t)  f^{(\lambda s')}_{d, \beta}(t') f^{(\lambda s')}_{d, \beta'}(t') \Big]\\
		&\overset{(a)}{=} \sum_{\beta, \beta' \in\mathcal{X}_d} \Eindd\Big[ \sqrt{ss'}^{|\beta| + |\beta'|-2}  f^{(\lambda s)}_{d, \beta}(t) f^{(\lambda s)}_{d, \beta'}(t)  f^{(\lambda s')}_{d, \beta}(t') f^{(\lambda s')}_{d, \beta'}(t')\Big]  \\
		&\overset{(b)}{=} \sum_{\beta, \beta' \in\mathcal{X}_d} \sqrt{ss'}^{|\beta| + |\beta'|-2} \Eindd\Big[ f^{(\lambda s)}_{d, \beta}(t) f^{(\lambda s)}_{d, \beta'}(t) \Big]  \Eindd\Big[ f^{(\lambda s')}_{d, \beta}(t') f^{(\lambda s')}_{d, \beta'}(t')\Big]  \\
		&\overset{(c)}{=} \sum_{\beta, \beta' \in\mathcal{X}_d} \sqrt{ss'}^{|\beta| + |\beta'|-2}\indicator{\beta = \beta'}^2  \\
        & = \sum_{\beta \in\mathcal{X}_d} (ss')^{|\beta| -1}\\
		&= \sum_{n=1}^{\infty} \Big| \mathcal{X}_{d}^{(n)} \Big|(ss')^{n-1}.
	\end{align*}
	The equation $(b)$ uses independence of $t$ and $t'$ under $\Pindd$, while $(c)$ exploits the first orthogonality property (\ref{orthogonaliy_1_GW}) in the form
	\[
	\sum_{t \in \mathcal{X}_d} \mathrm{GW}_d^{(\mu)}(t) \, f^{(\mu)}_{d, \beta}(t) \, f^{(\mu)}_{d, \beta'}(t) = \Expe_{t \sim \mathrm{GW}_d^{(\mu)}}\Big[f^{(\mu)}_{d, \beta}(t) \, f^{(\mu)}_{d, \beta'}(t)\Big]  = \indicator{\beta = \beta'}
	\]
	for parameter choices $\mu = \lambda s$ and $\mu = \lambda s'$.
	In $(a)$, we use both of these facts to apply Fubini's Theorem in order to exchange expectation and summation over the countable set $\mathcal{X}_d \times \mathcal{X}_d$. Using Fubini is possible because of $ss' < 1$ and the following application of the Cauchy-Schwartz inequality :
	\begin{align*}
		\Eindd&\Big[\Big|f^{(\lambda s)}_{d, \beta}(t) f^{(\lambda s)}_{d, \beta'}(t)  f^{(\lambda s')}_{d, \beta}(t') f^{(\lambda s')}_{d, \beta'}(t') \Big|\Big]\\ &= \Eindd\Big[\Big|f^{(\lambda s)}_{d, \beta}(t) f^{(\lambda s)}_{d, \beta'}(t) \Big|\Big] \Eindd\Big[\Big|f^{(\lambda s')}_{d, \beta}(t') f^{(\lambda s')}_{d, \beta'}(t')  \Big|\Big] \\
		&\leq \sqrt{\Eindd\Big[ f^{(\lambda s)}_{d, \beta}(t)^2 \Big] \Eindd\Big[ f^{(\lambda s)}_{d, \beta'}(t)^2 \Big] \Eindd\Big[ f^{(\lambda s')}_{d, \beta}(t')^2  \Big] \Eindd\Big[ f^{(\lambda s')}_{d, \beta'}(t')^2 \Big]} = 1.
	\end{align*}
	This concludes the corollary's proof.
\end{proof}
With a full toolbox of results about $L_d$, we are now ready to present our main theorems about tree correlation testing.	

\subsection{One-sided testability equivalences}

Motivated by the question whether there exist one-sided tests for the tree correlation testing problem (\ref{testing_problem_2}), we start by reformulating it in two steps: First, make the question uniquely about likelihood ratio tests; second, relate the feasibility to more amenable information-theoretic quantities. It turns out that testability is closely linked to the Kullback--Leibler divergence between $ \Pcorrd $ and $ \Pindd $ which is defined as
\begin{equation*}
	\mathrm{KL}_d := \mathrm{KL}(\Pcorrd \Vert \Pindd) = \Ecorrd[\log L_d].
\end{equation*}
This quantity is \textit{nondecreasing} as a function of the depth $d$. To see this, let $\mathcal{F}_d$ be the sigma-field containing information up to depth $d$ and apply Jensen's inequality with the convex function $x \mapsto x \log(x)$ as well as the martingale property $ \Eind_{d+1}\Big[ L_{d+1}\suchthat \mathcal{F}_d \Big] = L_{d}  $ to obtain
\begin{align*}
	\mathrm{KL}_{d+1} &= \Eind_{d+1}\Big[\Eind_{d+1}\Big[L_{d+1} \log(L_{d+1}) \suchthat \mathcal{F}_d \Big]\Big]\\ &\geq \Eind_{d+1} \, \Big[\Eind_{d+1} \Big[L_{d+1} \suchthat \mathcal{F}_d \Big]  \log\Big(\Eind_{d+1}\Big[ L_{d+1}\suchthat \mathcal{F}_d \Big]\Big)\, \Big] \\ &= \Eind_{d}[L_d \log(L_d)] = \mathrm{KL}_d.
\end{align*}
Consequently, the sequence $(\mathrm{KL}_d)_d$ has a (potentially infinite) limit $\mathrm{KL}_\infty \in [0, \infty]$.

The following theorem achieves the desired reformulation of one-sided testability by providing equivalent conditions that relate $\mathrm{KL}_d$, $L_d$, and the parameters $\lambda, s $ and $s'$. 

\begin{theorem}\label{thm_equivalences}
	In the tree correlation testing problem (\ref{testing_problem_2}), the following are equivalent:
	\begin{enumerate}[label=(\alph*)]
		\item There exist one-sided tests $\mathcal{T}_d$ to decide $\Pindd$ vs. $ \Pcorrd $,
		\item There is a sequence of thresholds $\theta_d \to \infty $ such that ${\Pindd(L_d > \theta_d) \to 0}$ and $\allowbreak{\liminf_{d\to\infty} \Pcorrd(L_d > \theta_d) > 0}$, 
		\item The $ \Pind $--martingale $(L_d)_d$ is not uniformly integrable,
		\item $\lambda s s' > 1$ and $KL_\infty = \infty$,
		\item $\lambda s s' > 1$ and $\,  \Pcorr \Big(\liminf_d (\lambda s s')^{-d} \log(L_d) \geq C\Big) \geq 1 - \Prob(\mathrm{Ext}(\mathrm{GW}^{(\lambda s s')})) $ for a constant $C>0$ only depending on $(\lambda, s, s')$ and $\mathrm{Ext}(\mathrm{GW}^{(\lambda s s')})$ denoting the extinction event of the Galton--Watson process with offspring $\mathrm{Poi}(\lambda ss')$.
	\end{enumerate}
\end{theorem}
This result was first presented in the symmetric model in \cite[Theorem 1]{luca} and was extended to asymmetric correlated graphs in \cite[Theorem 1]{maier2023asymmetric}. Here, we provide a complete and rigorous proof that was partially omitted in \cite{maier2023asymmetric}.

\begin{proof}
	We start by showing the equivalence (b) $\Leftrightarrow$ (c) followed by (a) $\Leftrightarrow$ (b) and concluding with the circular implications (b)  $\Rightarrow$ (d) $\Rightarrow$ (e) $\Rightarrow$ (b). We defer longer technical passages to Appendix \ref{appendix:thm_equivalences} for better readability. 
	
	\paragraph{(b) $\Leftrightarrow$ (c)} Given (b), there exists a sequence $\theta_d \to \infty$ such that $\liminf_{d \to \infty} \Pcorrd(L_d > \theta_d) > 0$. Reformulating this inequality, there exist $\varepsilon > 0$ and $d_0(\varepsilon) \in \N$ such that for all $d \geq d_0(\varepsilon)$,
	\begin{equation*}
		\Pcorrd(L_d > \theta_d) = \Eind\Big[L_d \indicator{L_d > \theta_d}\Big] \geq \varepsilon.
	\end{equation*}
	This contradicts the definition of uniform integrability of $(L_d)_d$. 
	
	For the opposite direction, suppose (c) to be true and invoke Lemma \ref{lem_nonUI_equivalence} to translate it to $\Eind[L_\infty] < 1$. From this, one can construct a sequence $(\theta_d)_d$ which goes to infinity and satisfies the desired properties, i.e.
	\begin{equation*}
		\liminf_{d\to\infty} \Pcorr(L_d > \theta_d)  > 0 \quad \text{and} \quad \Pind(L_d > \theta_d) \xrightarrow[d\to\infty]{} 0.
	\end{equation*}
	For details on the construction of $(\theta_d)_d$, we refer the reader to Appendix \ref{appendix:thm_equivalences}.
	
	\paragraph{(a) $\Leftrightarrow$ (b)} While (a) follows from (b) by taking $\mathcal{T}_d := \indicator{L_d > \theta_d}$, the other implication is more involved. Suppose that there exists a sequence of one-sided tests $ \mathcal{T}_d $ fulfilling
	\begin{equation*}
		\Pind(\mathcal{T}_d = 1) =: \alpha_d \to 0 \quad \text{and} \quad \liminf_{d \to \infty}\underbrace{\Pcorr(\mathcal{T}_d = 1)}_{=: \beta_d} > 0.
	\end{equation*}
	The Neyman-Pearson Lemma as presented in \cite{hallin2006neyman} then yields an optimal test sequence $\hat{\mathcal{T}}_d$ of the form
	\begin{align*}
		\hat{\mathcal{T}}_d = \begin{cases}
			1 &\text{ if } L_d > \hat{\theta}_d,\\
			\indicator{U < \gamma_d} & \text{ for } U \sim \mathrm{Unif}([0,1]) \text{ and } \gamma_d \in [0,1] \text{ if } L_d = \hat{\theta}_d, , \\
			0  &\text{ if } L_d \leq \hat{\theta}_d
		\end{cases}
	\end{align*}
	with type I error equal to that of $\mathcal{T}_d$, i.e.,
	\begin{equation}\label{eq_alphatozero}
		\alpha_d = \Pind(\hat{\mathcal{T}}_d = 1) = \Pind(L_d > \hat{\theta}_d) + \gamma_d \Pind(L_d = \hat{\theta}_d).
	\end{equation}
	We differentiate between two cases depending on whether $\hat{\theta}_d \to \infty$ or not.\\
	Beginning with the case $\hat{\theta}_d \not \to \infty$  we show in Appendix \ref{appendix:thm_equivalences} that
	\begin{equation}\label{implication_to_be_shown}
		(\hat\theta_d)_d \text{ is bounded}\implies \Eind[L_\infty] < 1.
	\end{equation}
	Since $\Eind[L_\infty] < 1$ is equivalent to (c) which we have already shown to imply (b), this concludes the first case.\\
	Assuming now that $\hat{\theta}_d \to \infty$, we set $\theta_d := \hat{\theta}_d - 1$ and use Markov's inequality to obtain
	\begin{equation*}
		\Pindd\Big(L_d > \theta_d\Big) \leq \frac{\Eindd[L_d]}{\theta_d} = \frac{1}{\hat\theta_d - 1} \xrightarrow[d \to \infty]{} 0.
	\end{equation*}
	This shows that the likelihood ratio test has vanishing type-I-error. It also has significant power since
	\begin{equation*}
		\Pcorrd(L_d \geq \theta_d) \geq \Pcorrd(L_d \geq \hat{\theta}_d) \geq \Pcorrd(\mathcal{T}_d = 1) = \beta_d \implies \liminf_{d \to \infty}\Pcorrd(L_d \geq \theta_d) > 0.
	\end{equation*}
	Hence, (b) holds true, concluding the equivalence we wanted to show.
	
	\paragraph{(b) $\Rightarrow$ (d)} Supposing that (b) is true, we refer to step 5 in the proof of Theorem 1 in \cite{luca} for a demonstration of $\mathrm{KL}_\infty = \infty$. This proof completely ignores the symmetry assumption; it is therefore independent of $s, s'$ and directly transferrable to our asymmetric setting.
	
	In order to show $\lambda s s' > 1$, we assume the contrary, i.e. $\lambda s s' \leq 1$. A standard result from Galton--Watson tree theory (see, for instance, \cite{abraham2015introduction}) states that in this case, $1$ is the only fixed point in the interval $[0,1]$ of the probability generating function of $ \mathrm{Poi}(\lambda s s')$. As seen in section \ref{section_likhood_ratio_properties} (``martingale property''), the value $\Eind[L_\infty]$ fulfills the same fixed point equation, leading to $\Eind[L_\infty] = 1$. Consequently, one has trivial convergence of the means $\Eind[L_d] = 1 \to 1 = \Eind[L_\infty]$, which we combine with the almost sure convergence $L_d \to L_\infty$ and Scheffé's Lemma to obtain $L^1$-convergence of $L_d$  to $L_\infty$. This, in turn, yields that $(L_d)_d$ is a flat martingale, hence uniformly integrable. Since we have already shown (b) $\Leftrightarrow$ (c), this contradicts one-sided testability and lets us conclude that the assumption  $\lambda s s' \leq 1$ must be wrong.
	
	\paragraph{(d) $\Rightarrow$ (e)}  This implication is long and technical, requiring additional results about Galton--Watson trees. We defer it to appendix \ref{appendix:thm_equivalences}.
	
	\paragraph{(e) $\Rightarrow$ (b)} The assumption $\lambda s s' > 1$ implies $ 1 - \Prob\Big( \mathrm{Ext}\Big(\mathrm{GW}^{(\lambda ss')}\Big)\Big) > 0$  and hence
	\begin{equation*}
		\Pcorr \Big(\liminf_{d \to \infty} (\lambda s s')^{-d} \log(L_d) \geq C\Big) \, > \, 0.
	\end{equation*}
	Setting $\theta_d := \exp\Big(C \, (\lambda ss')^d\Big)$, one has $\theta_d \to \infty$ and Markov's inequality implies vanishing type-I-error of the test $\indicator{L_d > \theta_d} \, :$
	\begin{equation*}
		\Pind(L_d \geq \theta_d) \leq \frac{\Eind[L_d]}{\theta_d} = \frac{1}{\theta_d} \xrightarrow[d\to\infty]{} 0.
	\end{equation*}
	To show significant power, we use Fatou's Lemma to obtain
	\begin{align*}
		\liminf_{d \to \infty} \Pcorr(L_d \geq \theta_d) &\geq  \Pcorr\Big(\liminf_{d \to \infty} \{L_d \geq \theta_d\}\Big) \\
		&= \Pcorr\Big(\exists D \, \forall d \geq D: L_d  \geq \exp\Big(C \, (\lambda ss')^d \Big)\Big) \\
		&= \Pcorr\Big(\liminf_{d \to \infty} (\lambda ss')^{-d}\log(L_d) \geq C\Big) > 0.
	\end{align*}
	This implies (b) and concludes the proof of all equivalences we wanted to show.
\end{proof}

\subsection{Sharp information-theoretic limits}
The previous section translates feasibility of tree correlation testing to concrete statistical properties. Especially (a) $\Leftrightarrow$ (d) of Theorem \ref{thm_equivalences} lets us focus on the Kullback--Leibler between $\Pcorrd$ and $\Pindd$. The following theorem relates the convergence of $(\mathrm{KL}_d)_d$ directly to the parameters $\lambda, s$ and $s'$, characterizing a sharp phase transition in asymmetric tree correlation testing around Otter's constant.
\begin{theorem}\label{thm:phase_transition}
	In the tree correlation testing problem (\ref{testing_problem_2}), parameterized by $s, s' \in (0,1]$ and $\lambda >0$, there is a feasibility phase transition around Otter's constant $\alpha \approx 0.3383$  which is characterized as follows:
	\begin{enumerate}[label = (\roman*)]
		\item If $s s' \leq \alpha$, then $ \sup_{d\in\N} \mathrm{KL}_d < \infty$ and tree correlation testing is impossible.
		\item If $ss' > \alpha$  and $\lambda > \lambda_0\, $ for a threshold $ \, \lambda_0 = \lambda_0(s, s')$ coming from a gaussian approximation argument, then $\mathrm{KL}_d \to \infty$ and tree correlation testing is possible.
	\end{enumerate}
\end{theorem}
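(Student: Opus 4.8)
\emph{Part (i): impossibility for $ss'\le\alpha$.} The plan is to bound the second moment $\Eindd[L_d^2]$ uniformly in $d$. By Corollary~\ref{second_moment}, $\Eindd[L_d^2]=\sum_{n\ge1}|\mathcal{X}_d^{(n)}|(ss')^{n-1}$, and since a tree on $n$ nodes automatically has depth $\le n-1$ we have $\mathcal{X}_d^{(n)}\subseteq\mathcal{X}_{n-1}^{(n)}$, hence
\[
\Eindd[L_d^2]\ \le\ \frac{1}{ss'}\sum_{n\ge1}\bigl|\mathcal{X}_{n-1}^{(n)}\bigr|(ss')^{n}\ =\ \frac{\Phi(ss')}{ss'}.
\]
Proposition~\ref{Otter} tells us $\Phi$ has radius of convergence $\alpha$ with $\Phi(\alpha)<\infty$, so for $ss'\le\alpha$ the right-hand side is finite and $\sup_d\Eindd[L_d^2]<\infty$. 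Two consequences follow. Using $x\log x\le x^2-x$ for $x\ge0$ (equivalently $\log x\le x-1$) together with $\Eindd[L_d]=1$ gives $\mathrm{KL}_d=\Eindd[L_d\log L_d]\le\Eindd[L_d^2]-1$, so $\sup_d\mathrm{KL}_d<\infty$. Moreover an $L^2$-bounded martingale is uniformly integrable, so by the equivalence (a)$\Leftrightarrow$(c) in Theorem~\ref{thm_equivalences} no one-sided test exists: testing is impossible.

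\emph{Part (ii): feasibility for $ss'>\alpha$ and $\lambda$ large, the slice statistic.} Because $ss'>\alpha>0$ and $\lambda$ is large, $\lambda ss'>1$ is automatic, so by Theorem~\ref{thm_equivalences} it suffices to exhibit a sequence of one-sided tests; the claim $\mathrm{KL}_d\to\infty$ then follows for free from the equivalence (a)$\Leftrightarrow$(d). I would build the test from the diagonalization of $L_d$. For $m\ge2$ and $d\ge m-1$, set
\[
S_d^{(m)}(t,t')\ :=\ \sum_{\beta\in\mathcal{X}_d:\,|\beta|=m}f^{(\lambda s)}_{d,\beta}(t)\,f^{(\lambda s')}_{d,\beta}(t'),
\]
the ``level-$m$ slice'' of the eigen-expansion (\ref{likhoodratio_bigformula}) of $L_d$. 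Using $f^{(\mu)}_{d,\bullet}\equiv1$ and the first orthogonality property (\ref{orthogonaliy_1_GW}) one computes $\Eindd[S_d^{(m)}]=0$ and $\mathrm{Var}_{\Pindd}(S_d^{(m)})=|\mathcal{X}_{m-1}^{(m)}|$, while expanding $L_d$ in the same basis (exactly as in the proof of Corollary~\ref{second_moment}) gives $\Ecorrd[S_d^{(m)}]=\Eindd[L_dS_d^{(m)}]=|\mathcal{X}_{m-1}^{(m)}|\,(ss')^{(m-1)/2}$. Hence the signal-to-noise ratio is $\sqrt{|\mathcal{X}_{m-1}^{(m)}|\,(ss')^{m-1}}$, which by Proposition~\ref{Otter} grows like a constant times $m^{-3/4}(ss'/\alpha)^{m/2}$ and therefore diverges as $m\to\infty$ \emph{precisely because} $ss'>\alpha$ (it vanishes at $ss'=\alpha$, matching part~(i)).

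\emph{Turning the slice into a test.} Pick $m=m(d)\to\infty$ slowly with $m(d)\le d-1$, and take $\mathcal{T}_d:=\indicator{S_d^{(m(d))}>\tfrac12\,\Ecorrd[S_d^{(m(d))}]}$. Chebyshev under $\Pindd$ (where $S_d^{(m)}$ is centred) bounds the type-I error by $4/\bigl(|\mathcal{X}_{m-1}^{(m)}|(ss')^{m-1}\bigr)\to0$. For the power, Chebyshev under $\Pcorrd$ gives
\[
\Pcorrd(\mathcal{T}_d=1)\ \ge\ 1-\frac{4\,\mathrm{Var}_{\Pcorrd}(S_d^{(m)})}{\bigl(\Ecorrd[S_d^{(m)}]\bigr)^2},
\]
so the only thing left is to show that, \emph{for $\lambda$ large}, $\mathrm{Var}_{\Pcorrd}(S_d^{(m)})$ is small compared with $\bigl(\Ecorrd[S_d^{(m)}]\bigr)^2=|\mathcal{X}_{m-1}^{(m)}|^2(ss')^{m-1}$, uniformly in $d$. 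Writing $\Ecorrd[(S_d^{(m)})^2]=\Eindd[L_d(S_d^{(m)})^2]$ and expanding all three factors in the eigenbasis reduces this to a bound on weighted sums of ``structure constants'' $\Expe_{t\sim\mathrm{GW}_d^{(\mu)}}\!\bigl[f^{(\mu)}_{d,\gamma}(t)f^{(\mu)}_{d,\beta}(t)f^{(\mu)}_{d,\beta'}(t)\bigr]$ for $\mu\in\{\lambda s,\lambda s'\}$, which can be analysed through the recursion and orthogonality properties behind Theorem~\ref{thm_diagonalization}; for $\lambda$ large the ``off-diagonal'' triples contribute negligibly and the diagonal part reproduces $\bigl(\Ecorrd[S_d^{(m)}]\bigr)^2$ up to a factor $1+o(1)$. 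Granting this, $(\mathcal{T}_d)_d$ is a one-sided test and Theorem~\ref{thm_equivalences} then yields $\mathrm{KL}_d\to\infty$.

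\emph{Main obstacle.} Part~(i) is immediate once Corollary~\ref{second_moment} and Otter's estimate are available. The substantive difficulty is the last displayed bound of part~(ii): controlling, uniformly in the depth $d$, the second moment under $\Pcorrd$ of the level-$m$ component of the likelihood ratio. This is the only place the hypothesis ``$\lambda$ large'' is used, and executing it demands a careful analysis of the multiplication rule for the eigenfunctions $f^{(\mu)}_{d,\beta}$ — keeping the tree-indexed sums summable via the second orthogonality property (\ref{orthogonaliy_2_betasum}) while tracking the combinatorial growth $|\mathcal{X}_{m-1}^{(m)}|\sim D\,m^{-3/2}\alpha^{-m}$ from Proposition~\ref{Otter}.
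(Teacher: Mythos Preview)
Your part (i) is correct and essentially the paper's argument, with a cosmetic difference: the paper uses Jensen to get $\mathrm{KL}_d\le\log\bigl(\Eindd[L_d^2]\bigr)$ whereas you use $x\log x\le x^2-x$; both finish via Corollary~\ref{second_moment} and Proposition~\ref{Otter}. Your addendum that $L^2$-boundedness forces uniform integrability (hence (c) fails in Theorem~\ref{thm_equivalences}) is a clean way to conclude impossibility.

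Part (ii), however, has a genuine gap. Your slice statistic $S_d^{(m)}$ is a natural candidate and your first-moment computations are correct, but the entire argument hinges on the bound you yourself flag as the ``main obstacle'': that $\mathrm{Var}_{\Pcorrd}\bigl(S_d^{(m)}\bigr)=o\bigl(|\mathcal{X}_{m-1}^{(m)}|^2(ss')^{m-1}\bigr)$ for large $\lambda$, \emph{uniformly in $d$}. You assert that the triple ``structure constants'' $\Expe_{t\sim\mathrm{GW}_d^{(\mu)}}\bigl[f_{d,\gamma}^{(\mu)}f_{d,\beta}^{(\mu)}f_{d,\beta'}^{(\mu)}\bigr]$ become negligible off-diagonal as $\lambda\to\infty$, but the eigenfunctions $f_{d,\beta}^{(\mu)}$ themselves depend on $\mu=\lambda s$, so there is no fixed basis in which to take this limit; neither orthogonality property~(\ref{orthogonaliy_1_GW}) nor~(\ref{orthogonaliy_2_betasum}) says anything about products of three eigenfunctions, and the recursion behind Theorem~\ref{thm_diagonalization} does not give this for free. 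Worse, you let $m=m(d)\to\infty$, so the $\lambda$-threshold you would need for the variance bound could a~priori depend on $m$, hence on $d$---which contradicts the requirement that $\lambda_0$ depend only on $s,s'$.

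The paper sidesteps both issues by a different route. First, for \emph{fixed} $d$ it shows (Lemma~\ref{gaussian_approx}) that under $\Pcorrdplus$ a suitable affine transform $(y,y')$ of the subtree counts converges weakly to an explicit Gaussian $(z,z')$ as $\lambda\to\infty$; lower semi-continuity of the KL divergence then gives $\liminf_{\lambda\to\infty}\mathrm{KL}_{d+1}\ge\mathrm{KL}\bigl(\mathcal{L}(z,z')\|\mathcal{L}(z)\otimes\mathcal{L}(z')\bigr)=\tfrac12\log\prod_{\beta}(1-(ss')^{|\beta|})^{-1}$, which diverges as $d\to\infty$ precisely when $ss'>\alpha$ (Lemma~\ref{liminf_lower_bd}). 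Second, since the limits are in the wrong order, the paper uses a \emph{test-amplification} scheme (Lemmas~\ref{first_final_lemma} and~\ref{second_final_lemma}): from the divergence of $\liminf_\lambda\mathrm{KL}_d$ one extracts a single depth $d_0$ and a test with power $\ge\beta$ and type-I error $\le\varepsilon$; then a concrete statistic $Z(t,t')=\sum_{\mathcal{T}_d(\tau,\tau')=1}\widetilde N_\tau\widetilde N'_{\tau'}$ built from the depth-$d$ test yields a depth-$(d{+}1)$ test with the same power bound and type-I error $\le\varepsilon/2$, for all $\lambda\ge\lambda_0(s,s',\beta)$. Iterating gives a one-sided test sequence, and Theorem~\ref{thm_equivalences} closes the loop. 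The moment computations needed (Lemma~\ref{Z_moments}) involve only the Poisson structure of $(N_\tau,N'_{\tau'})$, not triple products of eigenfunctions, and this is what makes the $\lambda$-threshold independent of $d$.
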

The conclusions in (i) and (ii) about tree correlation testing are immediate consequences of Theorem \ref{thm_equivalences}. In particular, one needs to require $\lambda_0 ss' > 1$  for the conclusion in (ii) to be true, adding a requirement on the magnitude of $\lambda_0$.
\begin{remark}
	For large enough $\lambda$, the only parameter determining feasibility of the testing problem is the product $ss'$. This has the effect that one density parameter $s$ of the first tree can compensate for lower density $s'$ in the second tree. We refer to the Conclusion Section \ref{section:conclusion} for the implications of this result on graph alignment.
\end{remark}
\begin{remark}[On the size of $\lambda_0$]
	The sharpness statement in Theorem~\ref{thm:phase_transition} holds only under the additional condition $\lambda>\lambda_0$, which arises from the gaussian approximation argument later in this section. It is natural to conjecture that $\lambda s s' > 1$ is sufficient for $\mathrm{KL}_d \to \infty$, even more so in the light of Corollary \ref{second_moment} which states that $\Eind[L_d^2]$ only depends on $ss'$, not $\lambda$. However, the only link we can establish between $\mathrm{KL}_d$ and $\Eind[L_d^2]$ goes through gaussian approximation, and we do not have a principled reason to expect such a link to hold beyond that regime. Determining the minimal value of $\lambda_0$ therefore remains open.
\end{remark}

For the remainder of this section, we prove Theorem \ref{thm:phase_transition}. Showing (i) is straightforward due to our preparatory work, the proof of (ii) requires more involved techniques.
\begin{proof}[Proof of (i) in Theorem \ref{thm:phase_transition}.]
Let $ss' \leq \alpha$. We start by upper bounding the Kullback--Leibler by using Jensen's inequality:
	\begin{align*}
		\mathrm{KL}(\Pcorrd \, \Vert \, \Pindd )& = \Eindd\Big[L_d\log(L_d)\Big]  = \Ecorrd\Big[\log(L_d)\Big]\\
		&\leq \log\Big(\Ecorrd\Big[L_d\Big]\Big) = \log\Big(\Eindd[L_d^2]\Big).
	\end{align*}
	This second moment is characterized in Corollary \ref{second_moment}. Combining that result with $| \mathcal{X}_{d}^{(n)}|\leq | \mathcal{X}_{n-1}^{(n)}|$ yields
	\[
	\Eindd[L_d^2] = \sum_{n=1}^{\infty} \Big| \mathcal{X}_{d}^{(n)}\Big| (ss')^{n-1} \leq (ss')^{-1}\sum_{n=1}^{\infty} \Big| \mathcal{X}_{n-1}^{(n)}\Big| (ss')^{n}.
	\]
	Due to the asymptotic formula for $| \mathcal{X}_{n-1}^{(n)}| $ from Proposition \ref{Otter}, we have
	\[
	\sum_{n=1}^{\infty} \Big| \mathcal{X}_{n-1}^{(n)}\Big|  (ss')^{n} < \infty \quad \iff \quad \sum_{n=1}^{\infty} C \, n^{-\sfrac{3}{2}} \frac{(ss')^{n}}{\alpha^n} < \infty
	\]
	where finiteness on the right follows from $ss'\leq \alpha$. Consequently, $\mathrm{KL}(\Pcorrd \, \Vert \, \Pindd)$ is bounded above by the logarithm of a finite value, which concludes the proof. 
\end{proof}
The use of Jensen's inequality in the above proof prevents us from using the same argument to show  $\mathrm{KL}\Big(\Pcorrd \, \Vert \, \Pindd \Big) = \infty$ in the case where $ss' > \alpha$. This direction requires some additional tools, which we will introduce next.
\begin{proof}[Proof of (ii) in Theorem \ref{thm:phase_transition}.]
	Instead of being true for any value of $\lambda$, we recall that point (ii) of Theorem \ref{thm:phase_transition} only holds for all $\lambda$ surpassing a certain threshold $\lambda_0$. This phenomenon has its roots in our proof technique: We will first examine the asymptotic behavior of $\mathrm{KL}_d$ when letting $\lambda \to \infty$ and leaving $d$ fixed. This is done via a Gaussian approximation argument, which is introduced in the following lemma.\\
	This lemma will not be precisely enough to show the desired statement, which requires some refined analysis afterward. Nonetheless, it contains the key idea of the proof. 
	\begin{lemma}[gaussian approximation]\label{gaussian_approx}
		Let $d \in \N$ and recall that $\mathcal{X}_{d+1} = \N^{\mathcal{X}_d}$ by the subtree tuple identification. There exists a pair of functions
		\[
		(y, y'): \mathcal{X}_{d+1}\times \mathcal{X}_{d+1} \to \R^{\mathcal{X}_d} \times  \R^{\mathcal{X}_d}, \, (t, t')  \mapsto \Big(y_\beta (t), y'_\beta(t') \Big)_{\beta \in \mathcal{X}_d}
		\]
		with the property that the above map is affine and bijective. The law of the random vector $\big(y(t), y'(t')\big)$ shall be denoted either as $\mathcal{L}(y, y')$ in the case $(t, t') \sim \Pcorrdplus$, or as $\mathcal{L}(y) \otimes \mathcal{L}(y')$ in the case $(t, t') \sim \Pinddplus$.
		
		One has
		\[
		\mathrm{KL}(\Pcorrdplus \, \Vert \, \Pinddplus ) = \mathrm{KL}\Big(\mathcal{L}(y, y') \, \Vert \, \mathcal{L}(y) \otimes \mathcal{L}(y')\Big)
		\]
		and that $\mathcal{L}(y, y')$ converges weakly towards the joint law of a pair of random variables
		\[
		\mathcal{L}(y, y') \xrightarrow[\lambda \to \infty]{w} \mathcal{L}(z, z').
		\]
		The pair $(z, z') = \Big( (z_\beta)_{\beta \in \mathcal{X}_d}, (z'_\beta)_{\beta \in \mathcal{X}_d}\Big)$ is a gaussian vector of infinite dimension with zero mean and covariances defined by
		\[
		\Expe[z_\beta z_\gamma] = \Expe[z'_\beta z'_\gamma] = \indicator{\beta = \gamma}, \quad \Expe[z_\beta z'_\gamma] = \sqrt{ss'}^{|\beta|}\indicator{\beta = \gamma} \quad \text{for all } \beta, \gamma \in \mathcal{X}_d.
		\]
	\end{lemma}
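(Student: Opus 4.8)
The plan is to realize both sides of the claimed Kullback--Leibler identity through the eigenfunctions of Theorem~\ref{thm_diagonalization}, and then to obtain the weak limit by a central limit theorem for sums of independent Poisson variables.

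\emph{Construction of $(y,y')$.} Writing a tree $t\in\mathcal{X}_{d+1}$ in its subtree-tuple form $t = (N_\tau(t))_{\tau\in\mathcal{X}_d}$, I would set, for each $\beta\in\mathcal{X}_d$,
\[
y_\beta(t) := \frac{1}{\sqrt{\lambda s}}\sum_{\tau\in\mathcal{X}_d} f^{(\lambda s)}_{d,\beta}(\tau)\Big(N_\tau(t) - \lambda s\,\mathrm{GW}_d^{(\lambda s)}(\tau)\Big),
\]
and $y'_\beta(t')$ by the same formula with $\lambda s$ replaced by $\lambda s'$. This map is affine in $(t,t')$ by inspection, and injective — hence a bijection onto its image — because the orthogonality relations \eqref{orthogonaliy_1_GW} and \eqref{orthogonaliy_2_betasum} say precisely that the (infinite) matrix $\big(f^{(\mu)}_{d,\beta}(\tau)\big)_{\beta,\tau\in\mathcal{X}_d}$ is invertible for every $\mu>0$. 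Since the Kullback--Leibler divergence is invariant under bi-measurable injections, and since the $t$-marginal of $\Pcorrdplus$ coincides with that of $\Pinddplus$ — both equal $\mathrm{GW}_{d+1}^{(\lambda s)}$, as one checks directly from Definition~\ref{def:new_galton_watson_trees} since $N_\tau = \Delta_\tau + \sum_{\tau'}\Gamma_{\tau,\tau'}$ has mean $\lambda s(1-s')\mathrm{GW}_d^{(\lambda s)}(\tau) + \lambda ss'\,\mathrm{GW}_d^{(\lambda s)}(\tau) = \lambda s\,\mathrm{GW}_d^{(\lambda s)}(\tau)$ — the identity $\mathrm{KL}(\Pcorrdplus\Vert\Pinddplus) = \mathrm{KL}\big(\mathcal{L}(y,y')\Vert\mathcal{L}(y)\otimes\mathcal{L}(y')\big)$ follows at once.

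\emph{The weak limit.} On $\R^{\mathcal{X}_d}\times\R^{\mathcal{X}_d}$ with the product topology, convergence of all finite-dimensional marginals forces tightness, hence weak convergence, so it suffices to treat marginals. Substituting $N_\tau(t) = \Delta_\tau + \sum_{\tau'}\Gamma_{\tau,\tau'}$ and $N'_{\tau'}(t') = \Delta'_{\tau'} + \sum_\tau\Gamma_{\tau,\tau'}$ from Definition~\ref{def:new_galton_watson_trees}, under $\Pcorrdplus$ each $y_\beta(t)$ and $y'_\gamma(t')$ becomes a centered linear functional of the \emph{independent} Poisson variables $\{\Delta_\tau\}$, $\{\Delta'_{\tau'}\}$, $\{\Gamma_{\tau,\tau'}\}$. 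A short computation — using \eqref{orthogonaliy_1_GW}, the identity $\sum_{\tau'}\Pcorrd(\tau,\tau') = \mathrm{GW}_d^{(\lambda s)}(\tau)$, and, for the cross-covariances, the diagonalization formula \eqref{likhoodratio_bigformula} in the form $\Expe_{\Pcorrd}\big[f^{(\lambda s)}_{d,\beta}(\tau)\,f^{(\lambda s')}_{d,\gamma}(\tau')\big] = \sqrt{ss'}^{\,|\beta|-1}\indicator{\beta=\gamma}$ — gives, \emph{for every $\lambda>0$ and without any approximation},
\[
\Expe[y_\beta(t)y_\gamma(t)] = \Expe[y'_\beta(t')y'_\gamma(t')] = \indicator{\beta=\gamma},\qquad \Expe[y_\beta(t)y'_\gamma(t')] = \sqrt{ss'}^{\,|\beta|}\indicator{\beta=\gamma},
\]
which are exactly the covariances of $(z,z')$. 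It therefore only remains to establish asymptotic normality as $\lambda\to\infty$: splitting each Poisson variable into pieces of mean $O(1)$ and invoking the Lindeberg--Feller CLT for triangular arrays together with the Cramér--Wold device, the statement reduces to the Lindeberg condition $\sum_{\tau} f^{(\lambda s)}_{d,\beta}(\tau)^2\,\mathrm{GW}_d^{(\lambda s)}(\tau)\,\indicator{|f^{(\lambda s)}_{d,\beta}(\tau)|>\varepsilon\sqrt\lambda}\to 0$ (and its analogue for $f^{(\lambda s')}_{d,\gamma}$), i.e. uniform integrability of $f^{(\lambda s)}_{d,\beta}(\tau)^2$ under $\mathrm{GW}_d^{(\lambda s)}$ as $\lambda\to\infty$.

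\emph{The main obstacle.} Establishing this last uniform integrability is the crux, and it is the reason the statement fixes $d$ and lets only $\lambda$ grow: both the weights $\mathrm{GW}_d^{(\lambda s)}(\tau)$ and the functions $f^{(\lambda s)}_{d,\beta}$ move with $\lambda$, so one faces a genuine triangular array rather than a fixed law. I would resolve it from the explicit description of $f^{(\mu)}_{d,\beta}$ obtained in the proof of Theorem~\ref{thm_diagonalization} (Appendix~\ref{appendix:thm_diagonalization}), which exhibits $f^{(\mu)}_{d,\beta}$ as a polynomial of degree $|\beta|-1$ in the (Poisson-distributed) subtree counts: since $\Expe_{\mathrm{GW}_d^{(\mu)}}[(f^{(\mu)}_{d,\beta})^2]=1$ for every $\mu$ by \eqref{orthogonaliy_1_GW}, Poisson hypercontractivity bounds $\Expe_{\mathrm{GW}_d^{(\mu)}}[|f^{(\mu)}_{d,\beta}|^{2+\epsilon}]$ by a constant depending only on $|\beta|$ and $\epsilon$, uniformly in $\mu$, which yields the required uniform integrability. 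Combining the exact covariance identities with this CLT then delivers $\mathcal{L}(y,y')\xrightarrow[\lambda\to\infty]{w}\mathcal{L}(z,z')$ and completes the proof.
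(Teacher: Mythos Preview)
Your construction of $(y,y')$, the bijectivity via the orthogonality relations, and the KL identity all coincide with the paper's treatment. Your explicit covariance computation---showing that $(y,y')$ already has \emph{exactly} the target covariance structure of $(z,z')$ for every $\lambda$---is correct and in fact sharper than what the paper states; it cleanly isolates the content of the convergence as pure asymptotic normality.

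For the weak convergence itself, the paper takes a different route that bypasses your ``main obstacle''. Rather than invoking Lindeberg--Feller and then Poisson hypercontractivity, the paper computes the characteristic function $\Ecorrdplus[e^{iv\cdot y+iv'\cdot y'}]$ \emph{in closed form}: since $(y,y')$ is an affine functional of the independent Poisson variables $\Delta_\tau,\Delta'_{\tau'},\Gamma_{\tau,\tau'}$ from Definition~\ref{def:new_galton_watson_trees}, the log-characteristic function is an explicit sum of terms $\mu_\bullet(e^{ix_\bullet}-1)$ with $x_\bullet=O(\lambda^{-1/2})$. Taylor-expanding each exponential to second order, the $O(\sqrt\lambda)$ contributions cancel exactly, and the $O(1)$ contributions collapse---via~\eqref{orthogonaliy_1_GW} and the diagonalization formula~\eqref{likhoodratio_bigformula}---to precisely the Gaussian log-characteristic function with the stated covariance. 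Lévy's continuity theorem then gives the limit.

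Your CLT approach is valid in outline, but the hypercontractivity step is less routine than you indicate: the recursive formula for $f^{(\mu)}_{d+1,\gamma}$ in Appendix~\ref{appendix:thm_diagonalization} involves products of lower-depth eigenfunctions evaluated at subtrees $\tau$ whose Poisson parameters are $\mu\,\mathrm{GW}_d^{(\mu)}(\tau)$, and these need not be large even when $\mu$ is, so a uniform $L^{2+\varepsilon}$ bound does not follow immediately from degree considerations alone. The paper's approach needs the analogous control $\Ecorrd[|w(\tau)|^3]=o(\sqrt\lambda)$ to justify the Taylor remainder (a point it passes over quickly), so the two arguments are closer in substance than they first appear; the characteristic-function route simply makes the algebraic cancellations explicit rather than packaging them into a CLT.
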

The proof of this lemma is rather technical and exploits the orthogonality properties from Theorem \ref{thm_diagonalization}. The details are in Appendix \ref{appendix:gaussian_approx}, in particular how the vector $(y, y')$ is defined and how Levy's Continuity Theorem implies its weak convergence to $(z, z')$.
	
	For the remainder of this section, let $ss' \geq \alpha$ and recall that our goal is to show $\mathrm{KL}_d \xrightarrow[d\to\infty]{} \infty$ for sufficiently large $\lambda$. With Lemma \ref{gaussian_approx} at hand, we can take a first step towards this goal and establish a lower bound on $\mathrm{KL}$ for large $\lambda$:
	\begin{lemma}\label{liminf_lower_bd}
		For any depth $d \in \N$, one has the following asymptotic lower bound on $\mathrm{KL}_{d+1}$:
		\[
		\liminf_{\lambda \to \infty} \mathrm{KL}(\Pcorrdplus \, \Vert \, \Pinddplus ) \geq  \sum_{n=1}^{\infty} \Big| \mathcal{X}_{d+1}^{(n)}\Big| (ss')^{n-1}.
		\]
	\end{lemma}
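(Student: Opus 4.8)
The plan is to combine the Gaussian approximation from Lemma~\ref{gaussian_approx} with lower semicontinuity of Kullback--Leibler divergence under weak convergence, and then to compute the divergence of the limiting Gaussian pair explicitly. The identity $\mathrm{KL}(\Pcorrdplus \,\Vert\, \Pinddplus) = \mathrm{KL}\big(\mathcal{L}(y,y') \,\Vert\, \mathcal{L}(y)\otimes\mathcal{L}(y')\big)$ turns the problem into one about the affinely-transformed coordinates $(y,y')$. Since KL divergence is jointly lower semicontinuous with respect to weak convergence of probability measures (a standard fact; it follows, e.g., from the Donsker--Varadhan variational formula $\mathrm{KL}(\mu\Vert\nu) = \sup_{\varphi}\{\Expe_\mu[\varphi] - \log\Expe_\nu[e^{\varphi}]\}$ as a supremum of weakly continuous functionals), and since $\mathcal{L}(y,y') \xrightarrow{w} \mathcal{L}(z,z')$ with $\mathcal{L}(y)\otimes\mathcal{L}(y') \xrightarrow{w} \mathcal{L}(z)\otimes\mathcal{L}(z')$ (the marginals being independent, their product also converges weakly to the product of the limits), I would conclude
\[
\liminf_{\lambda \to \infty} \mathrm{KL}(\Pcorrdplus \,\Vert\, \Pinddplus) \geq \mathrm{KL}\big(\mathcal{L}(z,z') \,\Vert\, \mathcal{L}(z)\otimes\mathcal{L}(z')\big).
\]

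The remaining task is to evaluate the right-hand side. The pair $(z,z')$ is a centered Gaussian family indexed by $\beta \in \mathcal{X}_d$ with $\Expe[z_\beta z_\gamma] = \Expe[z'_\beta z'_\gamma] = \indicator{\beta=\gamma}$ and $\Expe[z_\beta z'_\gamma] = \sqrt{ss'}^{\,|\beta|}\indicator{\beta=\gamma}$; hence it decomposes into independent two-dimensional blocks, one for each $\beta$, where the block $(z_\beta, z'_\beta)$ is bivariate Gaussian with unit variances and correlation coefficient $\rho_\beta := \sqrt{ss'}^{\,|\beta|}$. The KL divergence between a bivariate standard Gaussian with correlation $\rho$ and the product of its marginals is the classical mutual-information formula $-\tfrac12 \log(1-\rho^2)$. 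Summing over the independent blocks gives
\[
\mathrm{KL}\big(\mathcal{L}(z,z') \,\Vert\, \mathcal{L}(z)\otimes\mathcal{L}(z')\big) = \sum_{\beta \in \mathcal{X}_d} -\frac{1}{2}\log\big(1 - (ss')^{|\beta|}\big).
\]
Using the elementary bound $-\tfrac12\log(1-x) \geq \tfrac12 x$ for $x \in [0,1)$, together with the grading of $\mathcal{X}_d$ by node number ($|\mathcal{X}_d^{(n)}|$ trees of size $n$, and $|\beta| \le d+1$ so that $|\beta|$ can equal any $n$ up to $d+1$ — note $\mathcal{X}_{d+1}^{(n)} = \mathcal{X}_d^{(n)}$ for $n \le d+1$ since a tree on $n$ nodes has depth at most $n-1$), this yields
\[
\sum_{\beta \in \mathcal{X}_d} -\frac{1}{2}\log\big(1 - (ss')^{|\beta|}\big) \;\geq\; \frac{1}{2}\sum_{\beta \in \mathcal{X}_d} (ss')^{|\beta|} \;=\; \frac{ss'}{2}\sum_{n=1}^{\infty} \big|\mathcal{X}_d^{(n)}\big| (ss')^{n-1}.
\]
The factor $ss'/2$ is harmless for the qualitative divergence argument later, but to match the stated bound $\sum_n |\mathcal{X}_{d+1}^{(n)}|(ss')^{n-1}$ exactly one should instead keep the full sum $\sum_\beta -\tfrac12\log(1-(ss')^{|\beta|})$ and observe this is at least $\sum_\beta (ss')^{|\beta|-1} \cdot \tfrac{ss'}{2(1 - (ss')^{|\beta|})} \ge \sum_\beta (ss')^{|\beta|-1}$ once $ss'$ is, say, above some absolute constant — or, more cleanly, simply note that for the purposes of Theorem~\ref{thm:phase_transition} it suffices to have $\liminf_\lambda \mathrm{KL}_{d+1}$ bounded below by a positive multiple of $\sum_n |\mathcal{X}_{d+1}^{(n)}|(ss')^{n-1}$; I will state the lemma in whichever of these forms is actually used downstream and adjust the constant accordingly.

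The main obstacle is the lower-semicontinuity step: one must be careful that $(y,y')$ and $(z,z')$ live in the infinite-dimensional space $\R^{\mathcal{X}_d}\times\R^{\mathcal{X}_d}$, so ``weak convergence'' must be interpreted in the product topology (convergence of all finite-dimensional marginals), and KL divergence is still lower semicontinuous there — but to be safe I would argue at the level of finite truncations: for any finite $S \subset \mathcal{X}_d$, the divergence $\mathrm{KL}(\Pcorrdplus\Vert\Pinddplus)$ dominates the divergence between the $S$-marginals by the data-processing inequality, the $S$-marginals of $\mathcal{L}(y,y')$ converge weakly (in the ordinary finite-dimensional sense) to those of $\mathcal{L}(z,z')$, finite-dimensional lower semicontinuity applies, and finally one lets $S \uparrow \mathcal{X}_d$ by monotone convergence of the block sum. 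A secondary technical point is confirming that the limiting covariance is genuinely the correlated Gaussian described (so that $1 - \rho_\beta^2 = 1 - (ss')^{|\beta|} > 0$ strictly, keeping each term finite), which is exactly what Lemma~\ref{gaussian_approx} provides.
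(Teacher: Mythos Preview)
Your strategy --- lower semicontinuity of KL plus the Gaussian approximation of Lemma~\ref{gaussian_approx}, then the block-diagonal computation yielding $\sum_{\beta\in\mathcal{X}_d}-\tfrac12\log\big(1-(ss')^{|\beta|}\big)$ --- matches the paper exactly, including your care about the infinite-dimensional lower-semicontinuity step via finite truncations (which the paper leaves implicit). The divergence is only in the final conversion of this expression.

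You use the crude bound $-\tfrac12\log(1-x)\geq x/2$ and land on a constant multiple of $\sum_{\beta\in\mathcal{X}_d}(ss')^{|\beta|-1}=\sum_n|\mathcal{X}_d^{(n)}|(ss')^{n-1}$, which, as you correctly observe, already suffices for Theorem~\ref{thm:phase_transition}. The paper instead keeps the product $\prod_{\beta\in\mathcal{X}_d}(1-(ss')^{|\beta|})^{-1}$, expands each factor as a geometric series, and then invokes the subtree-tuple identification $\mathcal{X}_{d+1}=\N^{\mathcal{X}_d}$ of Definition~\ref{def:unlabeled_rooted_trees}:
\[
\prod_{\beta\in\mathcal{X}_d}\frac{1}{1-(ss')^{|\beta|}}
=\prod_{\beta\in\mathcal{X}_d}\sum_{\gamma_\beta\geq 0}(ss')^{\gamma_\beta|\beta|}
=\sum_{\gamma=(\gamma_\beta)_\beta\in\N^{\mathcal{X}_d}}(ss')^{\sum_\beta\gamma_\beta|\beta|}
=\sum_{\gamma\in\mathcal{X}_{d+1}}(ss')^{|\gamma|-1},
\]
using $|\gamma|=1+\sum_\beta\gamma_\beta|\beta|$. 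This generating-function identity is precisely what produces the depth shift $d\mapsto d+1$ in the statement and gives exactly $\sum_n|\mathcal{X}_{d+1}^{(n)}|(ss')^{n-1}$ (inside the $\tfrac12\log$, which is what the downstream argument actually needs). Your route is more elementary and perfectly adequate; the paper's identity is what the lemma's statement is designed around, and it is worth knowing since the same bijection drives other computations in the paper (e.g.\ Corollary~\ref{second_moment}).
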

	\begin{proof}
		The key idea for a lower bound on $\mathrm{KL}(\Pcorrdplus \, \Vert \, \Pinddplus )$ is the \textit{lower semi-continuity of the Kullback--Leibler divergence} (see, for instance, \cite{mackay2003information}): For two sequences of weakly converging random variables 
		\[
		X_\lambda \xrightarrow[\lambda \to \infty]{w} X \quad \text{and}\quad Y_\lambda \xrightarrow[\lambda \to \infty]{w} Y,
		\]
		one has
		\[
		\liminf_{\lambda \to \infty} \mathrm{KL}(\mathcal{L}(X_n) \, \Vert \, \mathcal{L}(Y_n) ) \, \geq \, \mathrm{KL}( \mathcal{L}(X) \, \Vert \, \mathcal{L}(Y) ).
		\]
		Using Lemma \ref{gaussian_approx}, we therefore obtain
		\begin{align*}
			\liminf_{\lambda \to \infty} \mathrm{KL}(\Pcorrdplus \, \Vert \, \Pinddplus ) &= \liminf_{\lambda \to \infty} \mathrm{KL}\Big(\mathcal{L}(y, y') \, \Vert \, \mathcal{L}(y) \otimes \mathcal{L}(y')\Big)\\ &\geq \mathrm{KL}\Big(\mathcal{L}(z, z') \, \Vert \, \mathcal{L}(z) \otimes \mathcal{L}(z')\Big).
		\end{align*}
		The right-hand side can be explicitly computed after some observations: Recall that the covariance structure of $(z, z')$ is given by
		\[
		\Expe[z_\beta z_\gamma] = \Expe[z'_\beta z'_\gamma] = \indicator{\beta = \gamma}, \quad \Expe[z_\beta z'_\gamma] = \sqrt{ss'}^{|\beta|}\indicator{\beta = \gamma} \quad \text{for all } \beta, \gamma \in \mathcal{X}_d.
		\]
		This leads us to observe, on the one hand, that an infinite Gaussian vector of law $\mathcal{L}(z) \otimes \mathcal{L}(z')$ has the infinite covariance matrix $\mathrm{diag}(1, 1, 1, ...)$. On the other hand, for a vector $Z$ of law $\mathcal{L}(z, z')$, we choose an arbitrary numbering of the trees in $\mathcal{X}_d$, namely $\beta_1, \beta_2, \beta_3,...$ and suppose that the coordinates are ordered as
		\[
		Z = (z_{\beta_1}, z'_{\beta_1}, z_{\beta_2}, z'_{\beta_2}, z_{\beta_3}, z'_{\beta_3}, \dots).
		\]
		Note that this ordering does not affect the Kullback--Leibler divergence since applying a permutation to the covariance matrix is a similarity transformation.\\
		Observe that the covariance of $Z$ is an infinite block diagonal matrix with the blocks
		\[
		\begin{pmatrix}
			1 & \sqrt{ss'}^{|\beta_i |}\\
			\sqrt{ss'}^{|\beta_i |} & 1
		\end{pmatrix}
		\quad \text{ for } i = 1, 2, 3, ... \, .
		\]
		This enables us to compute
		\begin{align*}
			\mathrm{KL}\Big(\mathcal{L}(z, z') \, \Vert \, \mathcal{L}(z) \otimes \mathcal{L}(z')\Big) &= \mathrm{KL}\Big( \mathcal{N}(0, \mathrm{Cov}(Z)) \, \Vert \, \mathcal{N}(0, I)\Big)\\
			&= -\frac{1}{2}\log\Big(\mathrm{det}(\mathrm{Cov}(Z))\Big)
		\end{align*}
		where the last equality is a simple computation, see for instance \cite{MathSE_MutualInfo_2024}.\\
		One can deduce $\mathrm{det}\big(\mathrm{Cov}(Z)\big) = \prod_{i=1}^\infty (1 - (ss')^{|\beta_i|})$ from the block diagonal structure of the covariance matrix. Consequently,
		\begin{align*}
			\liminf_{\lambda \to \infty} \mathrm{KL}(\Pcorrdplus \, \Vert \, \Pinddplus ) \geq - \frac{1}{2} \log \Big(\prod_{\beta \in \mathcal{X}_d} (1 - (ss')^{|\beta|})\Big) = \frac{1}{2} \log \Big(\prod_{\beta \in \mathcal{X}_d} \frac{1}{1 - (ss')^{|\beta|}}\Big).
		\end{align*}
		Since $ss' < 1$ and $\mathcal{X}_d$ is countable, we may develop the geometric series
		\begin{align*}
			\prod_{\beta \in \mathcal{X}_d} \frac{1}{1 - (ss')^{|\beta|}} = \prod_{\beta \in \mathcal{X}_d} \sum_{\gamma_\beta = 0}^\infty (ss')^{\gamma_\beta |\beta|}   = \sum_{(\gamma_\beta)_{\beta} \in \N^{\mathcal{X}_{d}}} (ss')^{\sum_\beta \gamma_\beta |\beta|} = \sum_{\gamma \in \mathcal{X}_{d+1}} (ss')^{|\gamma| - 1}.
		\end{align*}
		where for the last equality we recall that  $|\gamma|  = 1 + \sum_\beta \gamma_\beta |\beta|$ for any $\gamma = (\gamma_\beta)_{\beta \in \mathcal{X}_d} \in \mathcal{X}_{d+1}$. Developing the right-hand expression like in the proof of Corollary \ref{second_moment} yields
		\[
		\sum_{\gamma \in \mathcal{X}_{d+1}} (ss')^{|\gamma| - 1} = \sum_{n = 1}^\infty  \sum_{\substack{\gamma \in \mathcal{X}_{d+1},\\ |\gamma| = n}} (ss')^{n - 1} = \sum_{n=1}^{\infty} \Big| \mathcal{X}_{d+1}^{(n)}\Big| (ss')^{n-1}.
		\]
		This concludes the lemma's proof. 
	\end{proof}
	Letting $d \to \infty$ in Lemma \ref{liminf_lower_bd} yields
	\[
	\lim_{d\to\infty} \liminf_{\lambda \to \infty} \mathrm{KL}(\Pcorrdplus \, \Vert \, \Pinddplus ) \geq  \lim_{d\to\infty}\sum_{n=1}^{\infty} \Big| \mathcal{X}_{d}^{(n)}\Big|  (ss')^{n-1} = \sum_{n=1}^{\infty} \Big| \mathcal{X}_{n-1}^{(n)}\Big|  (ss')^{n-1}
	\]
	where the right hand side diverges due to $ss'\geq \alpha$ and the asymptotics of $\Big| \mathcal{X}_{n-1}^{(n)}\Big| $ from Proposition \ref{Otter}. If the limits in $d$ and $\lambda$ were swapped, this would give the desired results, but we are not allowed to do so. Instead, we need another proof strategy, which we sketch here before providing the details:
	\begin{enumerate}
		\item From the fact that $\liminf_\lambda \mathrm{KL}(\Pcorrdplus \, \Vert \, \Pinddplus )$ diverges for $d \to \infty$, one can derive that there is a depth $d_0$ and a test $\mathcal{T}_{d_0}$ fulfilling
		\begin{equation}\label{onesided_d0}
			\Pcorr_{d_0}(\mathcal{T}_{d_0}= 1) \geq \beta \quad \text{and} \quad \Pind_{d_0}(\mathcal{T}_{d_0} = 1) \leq \varepsilon.
		\end{equation}
		\item  This test can then be amplified in the next generation $d_0+1$, i.e., there exists a test $\mathcal{T}_{d_0+ 1}$ based on $\mathcal{T}_{d_0}$ such that for the same constants as before
		\begin{equation}\label{onesided_d0plus1}
			\Pcorr_{d_0 + 1}(\mathcal{T}_{d_0+1}= 1) \geq \beta \quad \text{and} \quad \Pind_{d_0 + 1}(\mathcal{T}_{d_0+ 1} = 1) \leq \frac{\varepsilon}{2}.
		\end{equation}
		\item The test amplification strategy is independent of the generation $d$, meaning we can propagate the bound arbitrarily often. This leads to a sequence of tests $(\mathcal{T}_{d})_d$ which share the lower bound $\beta>0$ on their power and have vanishing type-1-error since $\Pind_{d}( \mathcal{T}_{d}= 1) \leq \varepsilon/2^{-(d - d_0)}$. In other words, we obtain a sequence of one-sided tests.
		\item By Theorem \ref{thm_equivalences}, this is equivalent to the divergence of $\mathrm{KL}(\Pcorrdplus \, \Vert \, \Pinddplus )$, which concludes the proof of Lemma \ref{gaussian_approx}. 
	\end{enumerate}
	While points 3 and 4 of this sketch are rigorous, the points 1 and 2 require more detail. This is the content of the following two lemmata, which are proved afterward.
	\begin{lemma}[point 1 of the proof sketch]\label{first_final_lemma}
		Let $\beta \in (0, \sfrac{2}{15})$ and $\varepsilon > 0$. Then, there exists $\lambda_1 > 0$ and $d_0 \in \N$ such that for all $\lambda \geq \lambda_1$ there is a test $\mathcal{T}_{d_0}: \mathcal{X}_{d_0} \times \mathcal{X}_{d_0} \to \{0,1\}$ fulfilling
		\[
		\Pcorr_{d_0}(\mathcal{T}_{d_0} = 1) \geq \beta \quad \text{and} \quad \Pind_{d_0}(\mathcal{T}_{d_0} = 0) \leq \varepsilon.
		\]
	\end{lemma}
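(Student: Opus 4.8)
The plan is to realise $\mathcal T_{d_0}$ as a likelihood-ratio test for the infinite-dimensional Gaussian model that the depth-$d_0$ testing problem degenerates to as $\lambda\to\infty$, and to transport it back by weak convergence. Concretely I will produce, for $\lambda$ large, a test satisfying (\ref{onesided_d0}), i.e.\ $\Pcorr_{d_0}(\mathcal T_{d_0}=1)\ge\beta$ and $\Pind_{d_0}(\mathcal T_{d_0}=1)\le\varepsilon$ (equivalently $\Pind_{d_0}(\mathcal T_{d_0}=0)\ge 1-\varepsilon$), which is what ``point 1 of the proof sketch'' requires. Fix $\delta:=\tfrac12\min(\varepsilon,1-\beta)>0$. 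The boundary case $ss'=1$, i.e.\ $s=s'=1$, is immediate: there $t=t'$ almost surely under $\Pcorrd$, so $\indicator{t=t'}$ has power $1$ and type-I error $\sum_t\mathrm{GW}_d^{(\lambda)}(t)^2\to 0$ as $\lambda\to\infty$; so from now on assume $\alpha\le ss'<1$.

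I would apply Lemma~\ref{gaussian_approx} at depth $d_0$ (its parameter $d$ taken equal to $d_0-1$, with $d_0$ fixed later): there is an affine bijection $(t,t')\mapsto\big(y_\beta(t),y'_\beta(t')\big)_{\beta\in\mathcal X_{d_0-1}}$ under which $\mathcal L(y,y')\xrightarrow{w}\mathcal L(z,z')$, and hence also $\mathcal L(y)\otimes\mathcal L(y')\xrightarrow{w}\mathcal L(z)\otimes\mathcal L(z')$ by passing to marginals, where $(z,z')$ is centred Gaussian with $\Expe[z_\beta z_\gamma]=\Expe[z'_\beta z'_\gamma]=\indicator{\beta=\gamma}$ and $\Expe[z_\beta z'_\gamma]=\rho_\beta\indicator{\beta=\gamma}$, $\rho_\beta:=\sqrt{ss'}^{\,|\beta|}\in(0,1)$. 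For a finite set $B\subset\mathcal X_{d_0-1}$ chosen below, let $\ell_\beta(u,v):=-\tfrac12\log(1-\rho_\beta^2)+\tfrac{\rho_\beta}{2(1-\rho_\beta^2)}\big(2uv-\rho_\beta(u^2+v^2)\big)$ be the per-coordinate log-likelihood ratio of the bivariate-Gaussian correlation-detection problem, and set $S(t,t'):=\sum_{\beta\in B}\ell_\beta\big(y_\beta(t),y'_\beta(t')\big)$, a polynomial -- hence continuous -- functional of the finitely many affine coordinates indexed by $B$.

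To choose $B$ and $d_0$, write $L':=\sum_{\beta\in B}\rho_\beta^2=\sum_{\beta\in B}(ss')^{|\beta|}$. A fourth-moment computation for a bivariate normal with correlation $\rho_\beta\le\sqrt{ss'}<1$ gives, with constants $c_1,C_0>0$ depending only on $ss'$: $\Expe_{H_1}[\ell_\beta]=-\tfrac12\log(1-\rho_\beta^2)\ge\tfrac12\rho_\beta^2$, $\Expe_{H_0}[\ell_\beta]=-\mathrm{KL}\big(\mathcal N(0,I)\,\Vert\,\mathcal N(0,\Sigma_{\rho_\beta})\big)\le-c_1\rho_\beta^2$, and $\Var_{H_1}[\ell_\beta],\Var_{H_0}[\ell_\beta]\le C_0\rho_\beta^2$, where $H_1,H_0$ denote $\mathcal L(z,z')$ and $\mathcal L(z)\otimes\mathcal L(z')$. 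The pairs $(z_\beta,z'_\beta)$ being independent across $\beta$, summing over $B$ and applying Chebyshev's inequality to $S$ with threshold $\tau:=L'/4$ gives $\Prob_{H_1}(S>\tau)\ge 1-16C_0/L'$ and $\Prob_{H_0}(S>\tau)\le C_0\big((\tfrac14+c_1)^2L'\big)^{-1}$. Since $ss'\ge\alpha$, Proposition~\ref{Otter} makes $\sum_{n\ge1}|\mathcal X_{n-1}^{(n)}|(ss')^n$ diverge; so I can pick $m$ with $\sum_{n=1}^m|\mathcal X_{n-1}^{(n)}|(ss')^n$ as large as desired, set $d_0:=m$ and $B:=\{\beta\in\mathcal X_{d_0-1}:|\beta|\le d_0\}=\bigcup_{n\le d_0}\mathcal X_{n-1}^{(n)}$; then $L'=\sum_{n=1}^{d_0}|\mathcal X_{n-1}^{(n)}|(ss')^n$ is arbitrarily large, and I take it large enough that both Chebyshev bounds are at most $\delta$.

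Finally, set $\mathcal T_{d_0}:=\indicator{S>\tau}$ (nudging $\tau$ slightly if it lies in the at-most-countable atom set of the absolutely continuous laws of $S$ under $H_0$ and $H_1$). As $S$ is continuous in the affine coordinates indexed by $B$, and these converge weakly -- under both $\Pcorr_{d_0}$ and $\Pind_{d_0}$ -- to $(z_\beta,z'_\beta)_{\beta\in B}$, the Portmanteau theorem gives $\Pcorr_{d_0}(\mathcal T_{d_0}=1)\to\Prob_{H_1}(S>\tau)\ge 1-\delta$ and $\Pind_{d_0}(\mathcal T_{d_0}=1)\to\Prob_{H_0}(S>\tau)\le\delta$, so for all $\lambda$ beyond some $\lambda_1$ one has $\Pcorr_{d_0}(\mathcal T_{d_0}=1)\ge 1-2\delta\ge\beta$ and $\Pind_{d_0}(\mathcal T_{d_0}=1)\le 2\delta\le\varepsilon$, as claimed; the hypothesis $\beta<2/15$ is not used by this route -- it even yields power close to $1$ -- and enters only in the later amplification step. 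The parts I expect to demand genuine care are the moment bounds on the $\ell_\beta$ with constants $c_1,C_0$ \emph{uniform} in $\beta$ (elementary, but the number of summands in $S$ grows with $d_0$), and the weak-convergence transfer: verifying that $S$ is a continuous functional of finitely many affine coordinates, that Lemma~\ref{gaussian_approx} also yields convergence under $\Pind_{d_0}$, and that $\tau$ can be taken non-atomic.
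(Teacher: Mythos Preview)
Your approach is correct and mirrors the paper's (referenced) strategy: use the Gaussian approximation Lemma~\ref{gaussian_approx} to pass to the limiting correlated-Gaussian model, build an explicit test there (you take the log-likelihood ratio on finitely many coordinates, controlled via Chebyshev and the divergence of $\sum_n|\mathcal X_{n-1}^{(n)}|(ss')^n$), and transport it back to finite $\lambda$ by Portmanteau. One small slip: the divergence of that series requires $ss'>\alpha$ strictly---at $ss'=\alpha$ it behaves like $\sum_n n^{-3/2}$ and converges---but this is exactly the standing hypothesis of Theorem~\ref{thm:phase_transition}(ii) under which the lemma is invoked, so your ``$\alpha\le ss'$'' should simply read ``$\alpha< ss'$''.
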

	\begin{lemma}[point 2 of the proof sketch]\label{second_final_lemma}
		Given $\beta \in (0, 1)$, there exist constants $\varepsilon > 0$ and $\lambda_0 > 0$ such that the following implication holds:\\
		If for all $\lambda \geq \lambda_0$ and any depth $d \in \N$, there exists a test $\mathcal{T}_{d}: \mathcal{X}_{d} \times \mathcal{X}_{d} \to \{0,1\}$ with
		\[
		\Pcorrd(\mathcal{T}_{d}= 1) \geq \beta \quad \text{and} \quad \Pindd(\mathcal{T}_{d} = 1) \leq \varepsilon,
		\]
		then one can define another test $\mathcal{T}_{d+1}: \mathcal{X}_{d+1} \times \mathcal{X}_{d+1} \to \{0,1\}$ fulfilling the amplified test conditions
		\[
		\Pcorrdplus(\mathcal{T}_{d+1}= 1) \geq \beta \quad \text{and} \quad \Pinddplus(\mathcal{T}_{d+1} = 1) \leq \frac{\varepsilon}{2}
		\]
		for all $\lambda \geq \lambda_0$. 
	\end{lemma}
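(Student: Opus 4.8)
I would amplify $\mathcal T_d$ into $\mathcal T_{d+1}$ by recursing on the subtrees hanging from the two roots — the mechanism behind MPAlign — but tuned only so as to keep the power $\ge\beta$ while halving the type-I error. Write $(t,t')\in\mathcal X_{d+1}\times\mathcal X_{d+1}$ with root degrees $c,c'$ and root-subtrees $t_{[1]},\dots,t_{[c]}$ and $t'_{[1]},\dots,t'_{[c']}$, all of depth $\le d$. Form the bipartite ``pass graph'' $H=H(t,t')$ on vertex classes $[c]$ and $[c']$ with an edge $\{i,j\}$ whenever $\mathcal T_d(t_{[i]},t'_{[j]})=1$, and set
\[
\mathcal T_{d+1}(t,t'):=\mathbb 1\big[\ \nu(H)\ge k\ \big],
\]
where $\nu(H)$ is the size of a maximum matching of $H$ and $k=k(\beta)$ is a fixed integer; the constants $\varepsilon=\varepsilon(\beta)$ and $\lambda_0=\lambda_0(\beta,s,s')$ are pinned down afterwards. (The base case $\beta<2/15$ in Lemma~\ref{first_final_lemma} is what makes the first such test available; here $\beta<1$ is enough.)

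\textbf{Power.} Under $\Pcorrdplus$ the pair $(t,t')$ consists of two independent augmentations of a common $\mathrm{GW}^{(\lambda ss')}$ intersection tree, whose root has $M\sim\mathrm{Poi}(\lambda ss')$ children, with the $M$ associated subtree pairs i.i.d.\ of law $\Pcorrd$. By Poisson thinning, the number of these matched pairs that pass $\mathcal T_d$ is $\mathrm{Poi}\big(\lambda ss'\,\Pcorrd(\mathcal T_d=1)\big)$, a Poisson variable of mean at least $\lambda ss'\beta$; since those matched pairs already form a matching in the complete bipartite graph, any $k$ of them that pass $\mathcal T_d$ witness $\nu(H)\ge k$. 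Hence $\Pcorrdplus(\mathcal T_{d+1}=1)\ge\Prob\big(\mathrm{Poi}(\lambda ss'\beta)\ge k\big)$, which for $\lambda_0$ large enough in terms of $k,\beta,s,s'$ exceeds $\beta$ — indeed it tends to $1$. This step uses only $\Pcorrd(\mathcal T_d=1)\ge\beta$ and $\lambda ss'>1$.

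\textbf{Type-I error — the main obstacle.} Under $\Pinddplus$ the left-subtrees are i.i.d.\ $\mathrm{GW}_d^{(\lambda s)}$, the right ones i.i.d.\ $\mathrm{GW}_d^{(\lambda s')}$, and the two families independent, so any fixed collection of $k$ row/column-disjoint pairs is an i.i.d.\ $\Pindd$ sample and all pass $\mathcal T_d$ with probability $\le\Pindd(\mathcal T_d=1)^k\le\varepsilon^k$. A naive union bound over the $\asymp(\lambda^2 ss')^k/k!$ potential $k$-matchings gives only $\Pinddplus(\mathcal T_{d+1}=1)\le(\lambda^2 ss'\varepsilon)^k/k!$, which diverges with $\lambda$ for fixed $k,\varepsilon$; a second-moment bound on $\nu(H)$ is no better, its signal-to-noise ratio degrading as $\lambda\to\infty$. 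A $\lambda$-uniform control of $\nu(H)$ under $\Pinddplus$ is therefore the real content of the lemma, and it is here that both ``$\lambda$ large'' and ``$\varepsilon$ small'' (each chosen from $\beta$) are genuinely used. The plan is: first replace $\mathcal T_d$ by the likelihood-ratio test $\mathbb 1[L_d>\theta_d]$ of the same size, which is admissible by the Neyman–Pearson lemma since it only increases the power, and which has the extra feature that $q(u):=\Prob_{v\sim\mathrm{GW}_d^{(\lambda s')}}(L_d(u,v)>\theta_d)$ obeys the pointwise bound $q(u)\le 1/\theta_d$ coming from the martingale identity $\Expe_{v\sim\mathrm{GW}_d^{(\lambda s')}}[L_d(u,v)]=1$ valid for $\mathrm{GW}_d^{(\lambda s)}$-a.e.\ $u$; then use the diagonalization formula of Theorem~\ref{thm_diagonalization} to describe the joint law of the pass-events across the $cc'$ pairs, and the Gaussian approximation of Lemma~\ref{gaussian_approx} — legitimate precisely in the large-$\lambda$ regime — to reduce the upper tail of $\nu(H)$ to a tractable Gaussian/quadratic-form estimate, concluding $\Pinddplus(\nu(H)\ge k)\le\varepsilon/2$ for $\lambda\ge\lambda_0$. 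I expect this to be the technically heaviest step of the section.

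\textbf{Conclusion.} The construction is depth-agnostic, so iterating it from the base test $\mathcal T_{d_0}$ of Lemma~\ref{first_final_lemma} yields tests $(\mathcal T_d)_{d\ge d_0}$ with power $\ge\beta$ and type-I error $\le\varepsilon\,2^{-(d-d_0)}\to0$, i.e.\ a one-sided test in the sense of Definition~\ref{def:one-sided_Testing}; by the equivalence (a)$\Leftrightarrow$(d) of Theorem~\ref{thm_equivalences} this forces $\mathrm{KL}_d\to\infty$, which is exactly part (ii) of Theorem~\ref{thm:phase_transition}.
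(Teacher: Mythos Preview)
Your construction via the maximum matching $\nu(H)\ge k$ is \emph{not} the paper's argument, and the type-I error step --- which you yourself flag as the crux --- does not go through along the lines you propose. The issue is structural, not merely technical: under $\Pinddplus$, with $c\sim\mathrm{Poi}(\lambda s)$ and $c'\sim\mathrm{Poi}(\lambda s')$, the expected number of edges in $H$ is $\asymp \lambda^2 ss'\,\Pindd(\mathcal T_d=1)$, which diverges with $\lambda$ for any \emph{fixed} $\varepsilon$. Consequently, for any fixed $k$, one has $\Pinddplus(\nu(H)\ge k)\to 1$ as $\lambda\to\infty$, so the test $\mathbb 1[\nu(H)\ge k]$ cannot have type-I error $\le\varepsilon/2$ uniformly over $\lambda\ge\lambda_0$. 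Your proposed remedies do not save this: the Neyman--Pearson replacement by $\mathbb 1[L_d>\theta_d]$ gives $q(u)\le 1/\theta_d$, but the hypothesis $\Pindd(L_d>\theta_d)\le\varepsilon$ gives no lower bound on $\theta_d$ in terms of $\lambda$ (Markov goes the other way), so left-degrees in $H$ are still $\asymp\lambda s'/\theta_d$ with $\theta_d$ potentially $O(1)$; and the Gaussian approximation of Lemma~\ref{gaussian_approx} concerns the centered count-vector $(y,y')$, not the combinatorial quantity $\nu(H)$, so there is no clear path from it to a tail bound on the matching.

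The paper's actual test statistic is the \emph{centered bilinear form}
\[
Z(t,t')=\sum_{(\tau,\tau'):\,\mathcal T_d(\tau,\tau')=1}\bigl(N_\tau-\lambda s\,\mathrm{GW}_d^{(\lambda s)}(\tau)\bigr)\bigl(N'_{\tau'}-\lambda s'\,\mathrm{GW}_d^{(\lambda s')}(\tau')\bigr),
\]
with $\mathcal T_{d+1}=\mathbb 1[Z\ge\xi]$. The centering is the whole point: it forces $\Einddplus[Z]=0$ for \emph{every} $\lambda$, while $\Ecorrdplus[Z]=\lambda ss'\,\Pcorrd(\mathcal T_d=1)\ge\lambda ss'\beta$. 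An explicit fourth-moment calculation (Lemma~\ref{Z_moments}) gives $\Einddplus[Z^4]\le 36(\lambda^2 ss')^2\Pindd(\mathcal T_d=1)^2+13\lambda^3 ss'\,\Pindd(\mathcal T_d=1)$, so Markov with $\xi\asymp\lambda\sqrt{ss'}\,\varepsilon^{1/4}\vee\lambda^{3/4}(ss')^{1/4}$ yields the type-I bound, while Chebyshev on the correlated side (via $\mathrm{Var}^{\text{corr}}[Z]$) handles the power. The threshold $\xi$ sits below $\Ecorrdplus[Z]/2$ precisely when $\varepsilon\le(\sqrt{ss'}\beta/8)^4$ and $\lambda\ge 6^4/(\beta^4(ss')^3)$, which is how $\varepsilon$ and $\lambda_0$ are determined from $\beta,s,s'$ alone. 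No likelihood-ratio replacement, no diagonalization, and no Gaussian limit are used here --- just direct Poisson moment computations for a well-chosen centered statistic.
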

	To see how these results imply points 1 and 2 from the above proof sketch, fix a constant $\beta \in (0, 2/15)$ and let $(\varepsilon, \lambda_0)$ be the corresponding quantities from Lemma \ref{second_final_lemma}. By Lemma \ref{first_final_lemma}, there exist $\lambda_1$ and $d_0$  such that (\ref{onesided_d0}) holds for all $\lambda \geq \lambda_1$. Hence, for $\lambda \geq \max\{\lambda_0, \lambda_1\}$, both lemmata can be applied. \\
	Since (\ref{onesided_d0}) is nothing but the implication assumption in Lemma \ref{second_final_lemma} for $d = d_0$, this lemma generates another test at generation $d_0 + 1$ fulfilling the amplified test conditions presented in (\ref{onesided_d0plus1}).  Point (3) of the proof sketch then iterates the utilization of Lemma \ref{second_final_lemma}, which concludes the proof of Theorem \ref{thm:phase_transition} (ii).
\end{proof}

All that remains to do now is show both lemmata \ref{first_final_lemma} and \ref{second_final_lemma}. 

The proof of Lemma \ref{first_final_lemma} can be copied line-by-line from \cite[Appendix B.1]{ganassali2022statistical}, up to notational differences and the fact that the initial condition changes from $s > \sqrt{\alpha}$ to the asymmetric counterpart $ss' > \alpha$. We refer the reader to the original source.

The same cannot be said about the proof of Lemma \ref{second_final_lemma}, which may be of particular interest since it shows how to obtain an amplified test $\mathcal{T}_{d+1}$ from a lower-depth test $\mathcal{T}_d$.
\begin{proof}[Proof of Lemma \ref{second_final_lemma}]
	Fix a constant $\beta \in (0,1)$ and a depth $d$. Next, suppose that there exist $\varepsilon > 0, \, \lambda_0 > 0$ and a test $\mathcal{T}_{d}: \mathcal{X}_{d} \times \mathcal{X}_{d} \to \{0,1\}$ such that for all $\lambda \geq \lambda_0$, 
	\[
	\Pcorrd(\mathcal{T}_{d}= 1) \geq \beta \quad \text{and} \quad \Pindd(\mathcal{T}_{d} = 1) \leq \varepsilon.
	\]
	To find an amplified test $\mathcal{T}_{d+1}$, we define the value
	\[
	Z(t, t'):= \sum_{\substack{(\tau, \tau') \in \mathcal{X}_d^2,\\ \mathcal{T}_d(\tau, \tau') = 1}} \Big(N_\tau - \lambda s \, \mathrm{GW}_d^{(\lambda s)} (\tau)\Big) \Big(N'_{\tau'} - \lambda s' \,\mathrm{GW}_d^{(\lambda s')} (\tau') \Big)
	\]
	where $t = (N_\tau)_\tau$ and $t' = (N'_{\tau'})_{\tau'}$ form a pair of trees in $\mathcal{X}_{d+1}$. For some threshold $\xi$ to be determined, we derive the test
	$
	\mathcal{T}_{d+1}(t, t') := \indicator{Z(t, t') \geq \xi}
	$
	which must fulfill the two requirements
	\begin{equation}\label{testreq_1_c}
		\Pcorrdplus(\mathcal{T}_{d+1}= 1)  = \Pcorrdplus(Z(t, t') \geq \xi) \geq \beta
	\end{equation}
	and
	\begin{equation}\label{testreq_2_eps}
		\Pinddplus(\mathcal{T}_{d+1}= 1)  = \Pinddplus(Z(t, t') \geq \xi) \leq \frac{\varepsilon}{2}.
	\end{equation}
	The technical tool for showing these equations is the following lemma, whose proof can be found in Appendix \ref{appendix:Z_moments}.
	\begin{lemma}[moments of $Z$]\label{Z_moments}
		The random variable $Z := Z(t, t')$ has the following moments under $\Pind$ and $\Pcorr$:
		\begin{enumerate}[label=(\alph*)]
			\item $\Einddplus\Big[Z\Big] = 0$,
			\item $\Ecorrdplus\Big[Z\Big] = \lambda s s' \,\Pcorr_d\Big(\mathcal{T}_d = 1\Big)$,
			\item $\Einddplus\Big[Z^4\Big] \leq 36 \, (\lambda^2 s s')^2 \,\Pind_d\Big(\mathcal{T}_d = 1\Big)^2 + 13 \lambda^3 s s' \, \Pind_d\Big(\mathcal{T}_d = 1\Big)$,
			\item $\mathrm{Var}_{d+1}^\text{corr}\Big[Z\Big] \leq \lambda^2 s s'\Big(1 + ss' \Big) \,\Pind_d\Big(\mathcal{T}_d= 1\Big)  +  \Ecorrdplus[Z] $.
		\end{enumerate}
	\end{lemma}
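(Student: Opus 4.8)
The plan is to compute the moments by expanding $Z$ into its defining double sum and using the subtree-tuple description of $\Pcorrdplus$ and $\Pinddplus$ from Definition \ref{def:new_galton_watson_trees}. Write $Z = \sum_{(\tau,\tau') \in S} X_\tau X'_{\tau'}$ where $S = \{(\tau,\tau') : \mathcal{T}_d(\tau,\tau')=1\}$, $X_\tau := N_\tau - \lambda s\,\mathrm{GW}_d^{(\lambda s)}(\tau)$ and $X'_{\tau'} := N'_{\tau'} - \lambda s'\,\mathrm{GW}_d^{(\lambda s')}(\tau')$; these are the centered versions of the coordinate counts. The key structural facts I would invoke repeatedly are: under $\Pinddplus$, the $(N_\tau)_\tau$ and $(N'_{\tau'})_{\tau'}$ are independent, with $N_\tau \sim \mathrm{Poi}(\lambda s\,\mathrm{GW}_d^{(\lambda s)}(\tau))$ (and symmetrically for $N'$), so $\Einddplus[X_\tau] = 0$ and $\mathbb{E}[X_\tau X_\sigma] = \indicator{\tau=\sigma}\,\lambda s\,\mathrm{GW}_d^{(\lambda s)}(\tau)$; under $\Pcorrdplus$ the decomposition $N_\tau = \Delta_\tau + \sum_{\tau'}\Gamma_{\tau,\tau'}$ (and $N'_{\tau'} = \Delta'_{\tau'} + \sum_\tau \Gamma_{\tau,\tau'}$) with all $\Delta,\Delta',\Gamma$ independent Poissons lets me compute the cross-covariance $\mathrm{Cov}^{\text{corr}}(N_\tau, N'_{\tau'}) = \lambda s s'\,\Pcorrd(\tau,\tau')$ (the mean of $\Gamma_{\tau,\tau'}$), which is the source of the $\lambda s s'$ factors appearing in (b) and (d).

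For part (a): $\Einddplus[Z] = \sum_{(\tau,\tau')\in S}\Einddplus[X_\tau]\,\Einddplus[X'_{\tau'}] = 0$ by independence and centering. For part (b): under $\Pcorrdplus$, $\Ecorrdplus[X_\tau X'_{\tau'}] = \mathrm{Cov}^{\text{corr}}(N_\tau,N'_{\tau'}) = \lambda s s'\,\Pcorrd(\tau,\tau')$ since the marginal means of $N_\tau$ and $N'_{\tau'}$ under $\Pcorrdplus$ agree with the centering constants (one should double-check this: $\mathbb{E}^{\text{corr}}[N_\tau] = \lambda s(1-s')\,\mathrm{GW}_d^{(\lambda s)}(\tau) + \lambda s s'\sum_{\tau'}\Pcorrd(\tau,\tau') = \lambda s\,\mathrm{GW}_d^{(\lambda s)}(\tau)$ because $\sum_{\tau'}\Pcorrd(\tau,\tau')$ is the $\Pcorrd$-marginal of the first coordinate, which is $\mathrm{GW}_d^{(\lambda s)}(\tau)$). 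Summing over $(\tau,\tau')\in S$ gives $\lambda s s'\sum_{(\tau,\tau')\in S}\Pcorrd(\tau,\tau') = \lambda s s'\,\Pcorrd(\mathcal{T}_d=1)$.

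For the fourth-moment bound (c) and the variance bound (d) — which I expect to be the main obstacle — I would expand $Z^4 = \sum X_{\tau_1}X'_{\tau'_1}X_{\tau_2}X'_{\tau'_2}X_{\tau_3}X'_{\tau'_3}X_{\tau_4}X'_{\tau'_4}$ over quadruples in $S$, and under $\Pinddplus$ factor each term as $\mathbb{E}[X_{\tau_1}X_{\tau_2}X_{\tau_3}X_{\tau_4}]\cdot\mathbb{E}[X'_{\tau'_1}X'_{\tau'_2}X'_{\tau'_3}X'_{\tau'_4}]$. A centered-Poisson variable $X_\tau$ with parameter $a_\tau := \lambda s\,\mathrm{GW}_d^{(\lambda s)}(\tau)$ has $\mathbb{E}[X_\tau^2]=a_\tau$, $\mathbb{E}[X_\tau^3]=a_\tau$, $\mathbb{E}[X_\tau^4]=a_\tau+3a_\tau^2$; the mixed fourth moments vanish unless the indices coincide in pairs. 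So the only surviving index patterns contribute either a "fully paired" term $\sim(\sum_\tau a_\tau \indicator{\cdot})^2 \cdot(\text{same for }\tau')$, giving the $(\lambda^2 s s')^2\,\Pind_d(\mathcal{T}_d=1)^2$ piece via $\sum_\tau a_\tau \leq \lambda s$ and Cauchy–Schwarz against the $\Pind_d(\mathcal{T}_d=1)$-weighted sum, or a "all four equal" term $\sim\sum_\tau a_\tau^2\cdot(\dots)$ bounded by $\lambda^3 s s'\,\Pind_d(\mathcal{T}_d=1)$ using $a_\tau \leq \lambda s\cdot\mathrm{GW}$; careful bookkeeping of the multinomial coefficients (the $3$ from the paired pattern, etc.) produces the explicit constants $36$ and $13$. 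For (d), $\mathrm{Var}^{\text{corr}}[Z] = \mathbb{E}^{\text{corr}}[Z^2] - (\mathbb{E}^{\text{corr}}[Z])^2$; expanding $Z^2$ under $\Pcorrdplus$ and using the covariance structure of the $\Delta,\Gamma$ variables, one isolates a diagonal term $\sum_{(\tau,\tau')\in S}\mathrm{Var}^{\text{corr}}(X_\tau X'_{\tau'})$ — bounded using $\mathbb{E}^{\text{corr}}[(X_\tau X'_{\tau'})^2] \leq \mathbb{E}^{\text{corr}}[X_\tau^2]\mathbb{E}^{\text{corr}}[X'^2_{\tau'}] + (\text{cross term})$ and the Poisson variances $\leq \lambda s\,\mathrm{GW}$, $\leq \lambda s'\,\mathrm{GW}$ — yielding the $\lambda^2 s s'(1+ss')\,\Pind_d(\mathcal{T}_d=1)$ term, plus an off-diagonal remainder which, after using $\mathbb{E}^{\text{corr}}[Z]^2 = (\lambda s s')^2\Pcorrd(\mathcal{T}_d=1)^2 \geq 0$ to absorb the genuinely positive part, is controlled by $\Ecorrdplus[Z]$ itself. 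The delicate point throughout is tracking which index coincidences survive after centering and bounding the combinatorial multiplicities tightly enough to land the stated numerical constants; the probabilistic content is elementary Poisson moment algebra, but the accounting is where errors creep in, so I would organize it by index-coincidence pattern and handle the $\Pcorr$ case by first reducing to the $\Delta,\Gamma$ decomposition.
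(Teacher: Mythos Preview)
Your proposal is correct and follows essentially the same route as the paper: you introduce the centered counts, use independence and centering for (a), the $\Delta/\Gamma$ decomposition for (b), expand $Z^4$ under $\Pinddplus$ and classify by index-coincidence pattern for (c), and expand $Z^2$ under $\Pcorrdplus$ via the $\Delta,\Gamma$ variables for (d). The paper organizes (c) by the pair $(|\{\tau_i\}|,|\{\tau'_i\}|)\in\{1,2\}^2$ and (d) by the monomial type in $(\widetilde\Delta,\widetilde\Delta',\widetilde\Gamma)$, which is exactly the ``index-coincidence pattern'' bookkeeping you describe; the numerical constants $36$ and $13$ and the factor $(1+ss')$ fall out of that accounting.
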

	\textbf{To show equation (\ref{testreq_2_eps})}, we simply combine Markov's inequality with point (c) and the hypothesis $\Pind_d\Big(\mathcal{T}_d = 1\Big) \leq \varepsilon$, which yields
	\[
	\Pinddplus(Z(t, t') \geq \xi) \leq \frac{1}{\xi^4} \Einddplus\Big[Z(t, t')^4\Big] \leq \frac{1}{\xi^4} \Big[36 \, (\lambda^2 s s')^2  \, \Pind_d\Big(\mathcal{T}_d = 1\Big) \, \varepsilon  + 13 \lambda^3 s s' \, \varepsilon\Big].
	\]
	Since $A + B \leq 2 \max\Big\{A, B\Big\}$, choosing $\xi^4 \geq 2 \, \max\Big\{ 72 (\lambda^2 s s')^2 \, \Pind_d\Big(\mathcal{T}_d = 1\Big),\,  26 \lambda^3 s s' \Big\}$ lets us bound the right hand side by $\varepsilon/2$ which implies $ \Pinddplus(Z(t, t') \geq \xi) \leq \varepsilon/2 $.\\
	Consequently, setting $\xi = \max\Big\{ 4\lambda \sqrt{s s'} \, \Pind_d\Big(\mathcal{T}_d = 1\Big)^{1/4},\,  3 \lambda^{3/4} \sqrt[4]{s s'} \Big\}$ implies (\ref{testreq_2_eps}). 
	
	\textbf{To show equation (\ref{testreq_1_c})}, we first assume $\xi \leq \Ecorrdplus\Big[Z\Big]  / 2$ and compute
	\begin{align*}
		\Pcorrdplus(Z \leq \xi) \,  & \leq \, \Pcorrdplus\Big(Z \leq \Ecorrdplus\Big[Z\Big]  / 2 \Big) \\
		& \leq \,  \Pcorrdplus\Big( \Big|Z - \Ecorrdplus\Big[Z\Big] \Big| \geq \frac{1}{2}\Ecorrdplus\Big[Z\Big] \Big)\\
		& \leq \, 4\,  \frac{\mathrm{Var}_{d+1}^\text{corr}\Big[Z\Big]}{\Ecorrdplus\Big[Z\Big]^2} \\
		& \leq \,  4 \, \frac{\lambda^2 s s'\Big(1 + ss' \Big) \,\Pind_d\Big(\mathcal{T}_d= 1\Big)  +  \lambda s s' \,\Pcorr_d\Big(\mathcal{T}_d = 1\Big)}{ \lambda^2 s^2 (s')^2\,\Pcorr_d\Big(\mathcal{T}_d = 1\Big)^2}\\
		& \leq \,  \frac{8\,\Pind_d\Big(\mathcal{T}_d= 1\Big) }{ss'\beta^2} + \frac{4}{\lambda ss' \, \beta}.
	\end{align*}
	We explain the inequalities one by one: The first one exploits that $\xi \leq \Ecorrdplus\Big[Z\Big]  / 2$ and the second line is true because  $0 \leq A \leq B$ implies $|A - 2B| \geq B$. Thirdly, we apply Markov's inequality; next, we use points (b) and (d) from Lemma \ref{Z_moments}, and the final inequality follows from $1+ ss' \leq 2$ as well as the hypothesis $ 	\Pcorrd(\mathcal{T}_{d}= 1) \geq \beta  $.
	
	Note that one has the implication $\Pcorrdplus(Z \leq \xi) \leq 1 - \beta \, \,  \implies \, \,  \Pcorrdplus(Z \leq \xi) \geq \beta $, where the latter expression corresponds to (\ref{testreq_1_c}). Hence, all that remains to be done is choosing $\lambda_0$ and $\varepsilon$ such that for all $\lambda \geq \lambda_0$, the following two equations hold:
	\begin{equation}\label{req_xi}
		\max\Big\{ 4\lambda \sqrt{s s'} \, \Pind_d\Big(\mathcal{T}_d = 1\Big)^{1/4},\,  3 \lambda^{3/4} \sqrt[4]{s s'} \Big\} \, = \xi \, \leq \, \Ecorrdplus\Big[Z\Big]  / 2, 
	\end{equation}
	\begin{equation}\label{req_c}
		\frac{8\,\Pind_d\Big(\mathcal{T}_d= 1\Big) }{ss'\beta^2} + \frac{4}{\lambda ss' \, \beta} \, \leq \, 1-\beta.
	\end{equation}
	Imposing $\lambda \geq \frac{8}{ss'(1-\beta)}$ and $\Pind_d\Big(\mathcal{T}_d= 1\Big) \leq \frac{(1-\beta)ss' \beta^2}{16} $ ensures (\ref{req_c}). Next, by combining $\Ecorrdplus\Big[Z\Big]  =  \lambda s s' \,\Pcorr_d\Big(\mathcal{T}_d = 1\Big) $ with $ 	\Pcorrd(\mathcal{T}_{d}= 1) \geq \beta  $ one observes that (\ref{req_xi}) is implied by
	\begin{equation*}
		4 \lambda \sqrt{s s'} \, \Pind_d\Big(\mathcal{T}_d = 1\Big)^{1/4} \leq \, \frac{1}{2}\, \lambda s s' \beta \quad \text{and} \quad 3 \lambda^{3/4} \sqrt[4]{s s'} \leq \, \frac{1}{2}\, \lambda s s'\beta.
	\end{equation*}
	This is equivalent to
	\begin{equation*}
		\Pindd\Big(\mathcal{T}_d = 1\Big) \leq \Big(\frac{ \sqrt{ss'} \beta}{8} \Big)^4\quad \text{and} \quad \frac{6^4}{\beta^4(ss')^3}\leq \lambda.
	\end{equation*}
	Finally, we set
	\begin{equation*}
		\lambda_0 := \max\Big\{\frac{6^4}{\beta^4(ss')^3}\, , \, \frac{8}{ss'(1-\beta)}\Big\} \quad \text{and} \quad \varepsilon := \min\Big\{\Big(\frac{ \sqrt{ss'} \beta}{8} \Big)^4\, , \, \frac{(1-\beta)ss' \beta^2}{16}\Big\}.
	\end{equation*}
	Note that both variables only depend on $\beta, s$ and $s'$. Consequently, we could have set these values from the start, and all arguments would work without extra assumptions, independently from the depth $d$. This concludes the proof.
\end{proof}

\section{Conclusion}\label{section:conclusion}
In this work, we extend the graph alignment problem to the asymmetric case with varying edge densities and different node numbers. This advances the applicability of graph alignment, but our analysis remains theoretical within the well-studied Erdős--Rényi setting. Section \ref{section:setting_and_random_graph_model} provides a rigorous model definition, adjusted performance metrics, and the idea of one-sided partial alignment in the asymmetric case.

Since our study focuses on sparse graphs, Section \ref{section:From_graphs_to_trees_and_back} translates the problem to correlation testing for local Galton--Watson trees. We generalize prior work to an asymmetric tree model and show that Algorithm \ref{algo} achieves alignment on the graph level by comparisons on the tree level.

Finally, Section \ref{section:tree_correl_testing} investigates asymmetric tree correlation testing in more detail. Leveraging advantageous properties of the likelihood ratio, particularly its diagonalization formula, we demonstrate a phase transition in the feasibility of tree correlation testing. The sharp threshold  $ss' = \alpha$  provides insights not only for tree correlation testing but also for graph alignment, as summarized in the following corollary:
\begin{corollary}[Combination of Theorem \ref{thm:Algo_works} and Theorem \ref{thm:phase_transition}.]\label{corollary:final}
	For $\lambda > 0$ large enough and $ss' > \alpha$ where $\alpha$ is Otter's constant, Algorithm \ref{algo} achieves one-sided recovery of the true alignment $\sigma_*$.\\
	In particular for $s = 1$ and $\alpha < s' < 1$, in which case $G'$ is a proper random subgraph of $G$, one recovers a significant portion of node correspondences while making negligible error.
\end{corollary}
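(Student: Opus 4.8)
The plan is to chain the two main theorems, the only genuine work being to upgrade the qualitative ``one-sided tests exist'' of Theorem~\ref{thm_equivalences} into the quantitative feasibility conditions~(\ref{eq:one-sided-assumptions}) demanded by Theorem~\ref{thm:Algo_works}. First I would fix $\lambda > \lambda_0$ and invoke Theorem~\ref{thm:phase_transition}(ii): since $ss' > \alpha$ this gives $\mathrm{KL}_d \to \infty$, and since $\lambda_0$ is chosen with $\lambda_0 ss' > 1$ we also have $\lambda ss' > 1$. Thus condition (d) of Theorem~\ref{thm_equivalences} holds, and by the equivalence (d) $\Leftrightarrow$ (e) so does the tail estimate $\Pcorr\big(\liminf_d (\lambda ss')^{-d}\log L_d \geq C\big) \geq 1 - \Prob(\mathrm{Ext}(\mathrm{GW}^{(\lambda ss')})) > 0$ for some constant $C = C(\lambda, s, s') > 0$.

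Next I would exhibit the explicit test sequence. Take the likelihood-ratio test $\mathcal{T}_d := \indicator{L_d > \theta_d}$ with threshold $\theta_d := \exp\big(C(\lambda ss')^d\big)$, exactly the construction used in the implication (e) $\Rightarrow$ (b) of Theorem~\ref{thm_equivalences}. On the null side, Markov's inequality together with the martingale identity $\Eindd[L_d] = 1$ gives $\Pindd(\mathcal{T}_d = 1) = \Pindd(L_d > \theta_d) \leq 1/\theta_d = \exp(-C(\lambda ss')^d)$, which is precisely the first half of~(\ref{eq:one-sided-assumptions}). On the alternative side, Fatou's lemma applied to the events $\{L_d \geq \theta_d\}$ yields $\liminf_d \Pcorrd(\mathcal{T}_d = 1) \geq \Pcorr\big(\liminf_d (\lambda ss')^{-d}\log L_d \geq C\big) \geq 1 - \Prob(\mathrm{Ext}(\mathrm{GW}^{(\lambda ss')})) > 0$, so $\beta := \liminf_d \Pcorrd(\mathcal{T}_d = 1) > 0$; combined with $\lambda ss' > 1$ this verifies all hypotheses of Theorem~\ref{thm:Algo_works}. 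Choosing $d = \lfloor c \log N\rfloor$ with $2c\log(\lambda\max\{s,s'\}) < 1$ and running Algorithm~\ref{algo} on $(G, G', d, \mathcal{T}_d)$ then delivers one-sided partial alignment, i.e.\ $\Prob(\mathrm{ov}(\sigma_*,\hat\sigma) \geq \tilde\beta) \to 1$ and $\Prob(\mathrm{err}(\sigma_*,\hat\sigma) = o(1)) \to 1$ for some $\tilde\beta > 0$.

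For the second assertion I would specialize the sampling model of Section~\ref{section:correlated_graph_model} to $s = qr = 1$, which forces $q = r = 1$. Unwinding the four steps: step~1 yields $G_* \sim \mathrm{ER}(n_*, r'\lambda/N)$ with $n_* \sim \mathrm{Bin}(N, q')$; step~2 adds $n_+ \sim \mathrm{Bin}(N, 1-q')$ nodes to $G$ together with further edges to it; step~4 adds $n'_+ \sim \mathrm{Bin}(N, 0) = 0$ nodes and performs edge augmentation with probability $(1-r)r'\lambda/N = 0$, so that $G' = G_*$ verbatim. Hence $\sigma_*^{-1} : V'_* \to V_*$ maps every edge of $G'$ to an edge of $G$, i.e.\ $G'$ is (isomorphic to) a subgraph of $G$, and since $s' \in (\alpha, 1)$ gives $ss' = s' \in (\alpha, 1)$, the first part applies for $\lambda$ large; one-sided recovery of $\sigma_*$ is then exactly the recovery of a significant fraction of this subgraph embedding with vanishing error fraction.

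The main obstacle is the bookkeeping in the second paragraph: Theorem~\ref{thm:Algo_works} does not merely require the abstract existence of one-sided tests but the sharper rate $\Pindd(\mathcal{T}_d = 1) \leq \exp(-C(\lambda ss')^d)$, so one cannot simply cite equivalence (a) of Theorem~\ref{thm_equivalences}; the point is that equivalence (e) supplies the precise tail behaviour that makes the Markov bound on the likelihood-ratio test exponentially small, while Fatou's lemma keeps its power bounded away from zero. Everything else is a direct substitution of parameters.
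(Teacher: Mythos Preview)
Your proposal is correct and matches the paper's intended route: the corollary is stated without proof as a direct combination of the two theorems, and you have filled in precisely the detail the paper leaves implicit, namely that one must pass through equivalence~(e) of Theorem~\ref{thm_equivalences} (not merely~(a)) so that the likelihood-ratio test with threshold $\theta_d = \exp(C(\lambda ss')^d)$ achieves the quantitative type-I bound $\Pindd(\mathcal{T}_d=1)\le\exp(-C(\lambda ss')^d)$ demanded in~(\ref{eq:one-sided-assumptions}). Your unwinding of the $s=1$ case to verify that $G'=G_*$ is a genuine subgraph of $G$ is also accurate.
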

This result has new theoretical implications for random subgraph matching. While the general subgraph isomorphism problem is NP-hard, the here presented random version is feasible in polynomial time. This resolves an open problem discussed in personal communication with Jiaming Xu \cite{xu2017personal} and naturally extends the work of \cite{bollobas_random_2008, babai_random_1980}.

\paragraph{Open questions.} While Corollary \ref{corollary:final} represents a positive result for asymmetric tree correlation testing, some questions remain:
\begin{itemize}
	\item We know from Theorem \ref{thm:phase_transition} that $ss' \leq \alpha$ implies impossibility of tree correlation testing. It remains an open question whether sparse graph alignment is also infeasible in this regime or if a non-local algorithm could function.
	\item MPAlign achieves one-sided recovery but its theoretical power is lowered by the dangling tree trick. Disregarding nodes with fewer than three neighbors limits the algorithm’s ability to align a significant portion of the giant component. A more refined analysis to extend the algorithm's scope is left for future research.
\end{itemize}
Aside from a refined analysis of the problem at hand, there are several directions in which the concept of asymmetric correlated graphs could enrich the alignment literature. Here are two examples:
\begin{itemize}
	\item Within the sparse regime, one natural generalization is to study the alignment of correlated configuration models with fixed degree sequences.
	\item In the denser settings of the correlated Erdős--Rényi model, numerous findings can be generalized to the asymmetric setting. We expect that some extend naturally, while others may require new techniques to accommodate varying node numbers and edge densities.
\end{itemize}
Many geometric graphs and real-world networks exhibit inherent asymmetry, necessitating alignment techniques beyond the classical symmetric setting. We hope to contribute a step towards deeper understanding of correlated networks beyond the theoretical playground of symmetrical pairs of Erdős--Rényi graphs.

\appendix

\section{Appendix: Deferred proofs.}\label{appendix:main}

\subsection{Proof of Lemma \ref{lemma:disj_vs_ind}}\label{appendix:RdSecondMoment}
We recall the lemma about the second moment of $R_d$ before moving to its proof. Here, we shift the index from $d-1$ to $d$ compared to Lemma \ref{lemma:disj_vs_ind}, which does not impact the result and simplifies notations. 
\begin{lemma}(copy of Lemma \ref{lemma:disj_vs_ind})
	Under the parameters of Theorem \ref{thm:Algo_works} and using the notations from the proof of Theorem \ref{thm:Algo_works}, one has
	\begin{equation*}
		\Eindd\Big[ R_ d^2 \indicator{\mathcal{A}^c \cap \mathcal{B}} \Big] \in \mathcal{O}(1).
	\end{equation*}
\end{lemma}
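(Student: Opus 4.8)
The plan is to compare the law $\Pdisjd$ to the law $\Pindd = \mathrm{GW}^{(\lambda s)}_d\otimes\mathrm{GW}^{(\lambda s')}_d$ by a change-of-measure argument, exploiting that the only difference between the two is the conditioning that the two intersection subtrees $\tau\subset t$ and $\tau'\subset t'$ avoid each other inside the master graph $G_*$. First I would set up the coupling carefully: under $\Pdisjd$ a pair $(t,t')$ is obtained as $\mathcal{N}_{G,d}(i\to i_k)$, $\mathcal{N}_{G',d}(j\to j_k)$ conditioned on (i),(ii),(iii) of Lemma \ref{lemma:danglingTree} and on the two intersection-subtrees being node-disjoint in $G_*$. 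The disjointness is exactly the event that allows one to grow the two trees as two \emph{independent} explorations of the sparse Erdős--Rényi graph $G_*\sim\mathrm{ER}(n_*,ss'\lambda/N)$ together with the independent node/edge augmentations; thus $\Pdisjd$ is, up to the cost of the conditioning, a product of two independent neighborhood explorations, which is precisely $\Pindd$. Concretely, I would write $R_d(t,t') = \Pdisjd(t,t')/\Pindd(t,t')$ as a ratio of exploration probabilities and bound it on the event $\mathcal{A}^c\cap\mathcal{B}$, where the neighborhood sizes are controlled: on $\mathcal{B}$ we have $n,n',n_*=\Theta(N)$, and on $\mathcal{A}^c$ every sphere $\mathcal S_{G,t}(i)$ has size $\le C\log(n)(\lambda s)^t$, so the total size of a depth-$d$ neighborhood is at most $\mathrm{polylog}(N)\cdot(\lambda\max\{s,s'\})^d = N^{o(1)}$ by the choice $2c\log(\lambda\max\{s,s'\})<1$.

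The core estimate is then a first/second-moment (or direct) bound on $R_d$. Here is the mechanism I expect to work: when exploring $\mathcal N_{G,d}(i\to i_k)$ and $\mathcal N_{G',d}(j\to j_k)$ as subgraphs of $G_*$ (plus augmentations), the "avoid each other" conditioning changes each edge probability from $ss'\lambda/N$ (unconditioned) to something like $ss'\lambda/(N - m)$ where $m$ is the number of already-explored vertices, and it forbids a bounded-by-$m$ number of edges between the two explorations. Since $m = N^{o(1)}$ on $\mathcal A^c\cap\mathcal B$, each such correction is $1 + O(N^{o(1)}/N) = 1+N^{-1+o(1)}$, and the number of corrections along the way is again $N^{o(1)}$; multiplying them gives $R_d \le \exp(N^{-1+o(1)}) = 1+o(1)$ deterministically on $\mathcal A^c\cap\mathcal B$. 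Once I have a pointwise bound $R_d\mathbf 1_{\mathcal A^c\cap\mathcal B}\le 1+o(1)$, squaring and taking $\Eindd$ immediately gives $\Eindd[R_d^2\mathbf 1_{\mathcal A^c\cap\mathcal B}]\le (1+o(1))^2 = O(1)$. The same strategy appears in \cite{luca} for the symmetric case and in the branching-process-vs-graph-neighborhood comparisons of \cite[p.~161]{alon2016probabilistic}.

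The main obstacle, and where the proof in Appendix \ref{appendix:RdSecondMoment} will have to be careful, is twofold. First, one must correctly identify the Radon--Nikodym derivative $R_d$ as a product over the exploration steps, which requires bookkeeping of the node/edge augmentation steps (steps 2--3 in the $\mathrm{CER}$ sampling) on top of the intersection-graph exploration — the augmentation contributes leaves that are present in $t$ but not in $t_*$, and these must be treated as part of the "independent" part so they do not enter the avoidance conditioning at all. Second, the conditioning event "$\tau$ and $\tau'$ are node-disjoint in $G_*$" is a global event on $G_*$, not just on the two explored neighborhoods, so one has to argue (as in Lemma \ref{lemma:danglingTree}'s proof) that conditioning on it is equivalent to conditioning only on finitely many local edge-non-appearances, whose probability is $1-N^{-1+o(1)}$. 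Handling these two points rigorously — rather than the final moment computation, which is then trivial — is the real content. An alternative that avoids pointwise control is a second-moment bound: write $R_d$ via the recursive structure of the trees and bound $\Eindd[R_d^2\mathbf 1_{\mathcal A^c\cap\mathcal B}]$ by a sum over "collision patterns" between the two explorations, each contributing a factor $O(N^{-1}\cdot N^{o(1)})$, with at most $N^{o(1)}$ terms; this also yields $1+o(1)$ but is messier, so I would only fall back to it if the pointwise approach hits a snag with the conditioning.
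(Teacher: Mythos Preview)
Your overall strategy---write $R_d$ as a product of per-step corrections coming from the exploration process and bound it pointwise on $\mathcal A^c\cap\mathcal B$---is exactly the paper's approach. The paper carries this out by writing $\Pdisjd(t,t')$ explicitly as a product of binomial probabilities over a breadth-first exploration of $G$ and $G'$, writing $\Pindd(t,t')$ as the corresponding Poisson product, and taking the ratio; the resulting bound is $R_d\indicator{\mathcal A^c\cap\mathcal B}\lesssim\exp\big(\tfrac{\lambda s}{n}|t|^2+\tfrac{\lambda s}{n}\sum_\kappa c_\kappa\big)\times(\text{same for }t')$, after which one shows the exponent is $o(1)$.

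There is, however, a genuine arithmetic error in your plan that obscures the key mechanism. You claim the depth-$d$ neighborhood has size $\mathrm{polylog}(N)\cdot(\lambda\max\{s,s'\})^d=N^{o(1)}$. This is false: with $d=\lfloor c\log N\rfloor$ one has $(\lambda s)^d\asymp N^{c\log(\lambda s)}$, which is a \emph{positive power} of $N$, not $N^{o(1)}$. The hypothesis $2c\log(\lambda\max\{s,s'\})<1$ says only that this power is strictly below $1/2$, so $|t|=O(N^{1/2-\delta}\log N)$ for some $\delta>0$. Your subsequent accounting (``each correction $1+N^{-1+o(1)}$, number of corrections $N^{o(1)}$, product $\exp(N^{-1+o(1)})$'') therefore collapses. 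The correct accounting is: the $k$-th exploration step contributes a correction $1+O(k/N)$, and summing the logs over $k\le |t|$ gives a total correction $\exp\big(O(|t|^2/N)\big)$. The reason this is $1+o(1)$ is precisely that
\[
\frac{|t|^2}{N}\;=\;O\!\Big(\frac{(\lambda\max\{s,s'\})^{2d}\log^2 N}{N}\Big)\;=\;O\!\big(N^{\,2c\log(\lambda\max\{s,s'\})-1}\log^2 N\big)\;\longrightarrow\;0,
\]
and this is exactly where the factor $2$ in the hypothesis $2c\log(\lambda\max\{s,s'\})<1$ enters. Your sketch misses this: if the neighborhood were really $N^{o(1)}$, any $c<(\log\lambda)^{-1}$ would suffice, which is not the case. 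Once you fix the arithmetic, the pointwise bound $R_d\indicator{\mathcal A^c\cap\mathcal B}\le 1+o(1)$ is indeed valid (since $\sum_\kappa c_\kappa=|t|-1$ is itself controlled on $\mathcal A^c$), and the second moment bound is immediate. The paper instead keeps $\sum_\kappa c_\kappa$ random and closes with a Poisson moment-generating-function computation, but your pointwise route is just as valid and slightly cleaner.
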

\begin{proof}[Proof]
	This proof uses vocabulary and ideas from techniques to establish relationships between graph neighborhoods and branching processes; c.f. \cite[p. 161]{alon2016probabilistic} and \cite{bollobas2007phase}.
	
	Let $(G, G') \sim \mathrm{CER}(N, \lambda, s, s')$ and let $i \in V$ as well as $j \in V'$. We describe the sampling process from $\Pdisj$ as a simultaneous exploration of $G$ and $G'$, starting at $i$ and $j$. This process is conditioned on the following hypotheses from Definition \ref{def:disjTrees}:
	\begin{enumerate}[label=(\alph*)]
		\item Either $i \not \in V_*$ or $\sigma_*(i) \neq j$,
		\item  $\mathcal{N}_{G, 2d}(i)$ and $\mathcal{N}_{G', 2d}(j) $ are trees,
		\item $\mathcal{N}_{G, d}(i)\cap V_*$ and $\mathcal{N}_{G', d}(j) \cap V'_*$ are disjoint as subgraphs of $G_*$.
	\end{enumerate}
	Note that this is not precisely the vocabulary of dangling trees since we are not talking about pointing neighborhoods: Considering $ \mathcal{N}_{G, d}(i)$ instead of $\mathcal{N}_{G, d}(i \to i_k)$ simplifies notation but does not influence this proof which could well be written with pointed neighborhoods. 
	
	Sampling from $\Pdisj$ can be seen as the following graph exploration algorithm: It describes the sampling process of $(\mathcal{N}_{G, d}(i), \mathcal{N}_{G', d}(j))$ conditioned on (a), (b), (c). 

	\paragraph{Initialisation.}  Initialise $ \mathcal{N}_{G, d}(i) := \{i\}$ and $ \mathcal{N}_{G', d}(j):=\{j\}$ as the trivial trees consisting only of their roots. These roots are assigned depth $0$ in their respective trees.  Next, we consider three types of node sets during the exploration process:
	\begin{itemize}
		\item The active sets $\mathcal{A}^*$, $\mathcal{A}$ and $ \mathcal{A}' $ which are subsets of $V_*$ (which is identified as $V'_*$ in $G'$), $V\setminus V_*$, and $V' \setminus V'_*$ respectively. If $i$ is a node of $G_*$, one initialises $\mathcal{A}^* = \{i\}, \mathcal{A} = \emptyset$; else if $i \in V \setminus V_*$ then $\mathcal{A}^* = \emptyset, \mathcal{A} = \{i\}$. Similarly, add $j$ to $\mathcal{A}^*$ if contained in $G_*$, or add it to $\mathcal{A}'$ otherwise. 
		\item The deactivated sets $\mathcal{D}^*, \mathcal{D}$, and $\mathcal{D}'$ which are all initialized as $\emptyset$. 
		\item The sets of unvisited nodes $\mathcal{U}^* := V_* \setminus \mathcal{A}^*$, $ \mathcal{U} := V \setminus \mathcal{A}$, and $ \mathcal{U}' := V'_* \setminus \mathcal{A}'$. 
	\end{itemize}
	Let $A^*$ denote the cardinality of $\mathcal{A}^*$, $D'$ the cardinality of $\mathcal{D}'$, and so forth for all sets. At all times during the sampling process, one has
	\[ 
	U^* = n_* - A^* - D^*, \, U = n_+ - A - D, \text{ and } U' = n'_+ - A' - D'.
	\] 
	
	\paragraph{Exploration of the unvisited nodes.} While at least one active set ($ \mathcal{A}^*, \mathcal{A} $, or $\mathcal{A}'$) is non-empty and contains nodes of depth $d$ or less (with respect to the root in their respective graph), execute the following steps:
	\begin{enumerate}
		\item Arbitrarily pick one node $k \in \mathcal{A}^* \cup \mathcal{A} \cup \mathcal{A}'$ with minimal depth among all such nodes, remove it from its respective active set, and add it to the corresponding deactivated set. For instance if $k \in \mathcal{A}'$ then move $k$ to $\mathcal{D}'$. 
		\item If $k$ is part of $ \mathcal{N}_{G, d}(i) $, then sample $ \mathrm{Bin}(U^*, \lambda r r'/N) $ child nodes from $\mathcal{U}^*$ and $ \mathrm{Bin}(U, \lambda r (1-r')/N) $  child nodes from $\mathcal{U}$. Remove these child nodes from their respective unvisited node sets and add them to the corresponding active sets. 
		\item Else if $k$ is part of $\mathcal{N}_{G', d}(j)$, then sample $ \mathrm{Bin}(U^*, \lambda rr'/N) $ child nodes from $\mathcal{U}^*$ and $ \mathrm{Bin}(U', \lambda r'(1-r)/N) $  child nodes from $\mathcal{U}'$. Again, remove these child nodes from their respective unvisited node sets and add them to the corresponding active sets. 
		\item Assign correct depths to all new active nodes: If $k$ was of depth $d_k$, then all children are assigned depth $d_k + 1$. 
	\end{enumerate}
	
	\paragraph{Computation of the likelihood under $\Pdisjd$.} Fix two trees $t$ and $t'$ of depth $d$ and impose an arbitrary breadth-first-search numbering across both trees: Assign the numbers $1$ and $2$ to the two roots, then number subsequent generations such that nodes of higher depth have strictly higher numbers. Let $K = |t| + |t'|$ be the total number of nodes across both trees and denote by $c_\kappa$ the number of children of each node $\kappa \in [K]$.
	
	We start by computing $\Pdisjd(t, t')$ which is independent of the arbitrary node numbering $\kappa \in [K]$. For this computation, one goes through the nodes of $t$ and $t'$, computing the likelihood that each child number $c_\kappa$ equals the binomial distributions from the exploration process.
	
	Note that in steps (2) and (3) of the exploration process, one needs to condition on the fact that edges do not connect active nodes amongst each other. This corresponds to the condition that $ \mathcal{N}_{G, d}(i) $ and $ \mathcal{N}_{G', d}(j) $ are non-intersecting and cycle-free. It translates to the event $\{\mathrm{Bin}(A^* - 1, \lambda s s') = 0\}$ for nodes of the intersection graph and similarly for other subgraphs. Subtracting the number $ 1 $ from $A^*$ compensates for the fact that at the beginning of each iteration, the current node $k$ is still in the active set. 
	
	We index the exploration steps by $\kappa \in [K]$ and use that same notation for the evolving set cardinalities $A^*_\kappa, U_\kappa, D'_\kappa$ etc. By ``$\kappa \in t$'', we denote an iteration over all exploration indices $ \kappa \in [K]$ that correspond to nodes of $t$. The likelihood of $ (t, t') $ to be sampled from $\Pdisjd$ can then be computed as follows:
	\begin{align*}
		\Pdisjd(t, t') &= \prod_{\kappa \in t} \Prob\Big( \mathrm{Bin}(U_\kappa^* + U_\kappa , \lambda r/N) = c_\kappa\Big) \Prob\Big(\mathrm{Bin}(A_\kappa^* + A_\kappa - 1, \lambda r /N) = 0 \Big)\\
		&\quad \quad \times  \prod_{\kappa  \in t'} \Prob\Big( \mathrm{Bin}(U_\kappa^* + U'_\kappa , \lambda r'/N) = c_\kappa\Big) \Prob\Big(\mathrm{Bin}(A_\kappa^* + A'_\kappa - 1, \lambda r' /N) = 0 \Big)\\
		&= \prod_{\kappa \in  t} \binom{U_\kappa^* + U_\kappa}{c_\kappa} \Big(\frac{\lambda r}{N}\Big)^{c_\kappa} \Big(1 -\frac{\lambda r}{N} \Big)^{U_\kappa^* + U_\kappa - c_\kappa} \, \Big(1 - \frac{\lambda r}{N}\Big)^{A_\kappa^* + A_\kappa - 1}\\
		&\quad \quad \times  \prod_{\kappa \in  t'} \binom{U_\kappa^* + U'_\kappa}{c_\kappa} \Big(\frac{\lambda r'}{N}\Big)^{c_\kappa} \Big(1 -\frac{\lambda r'}{N} \Big)^{U_\kappa^* + U'_\kappa - c_\kappa} \, \Big(1 - \frac{\lambda r'}{N}\Big)^{A_\kappa^* + A'_\kappa - 1}\\
		&= \prod_{\kappa  \in t} \binom{U_\kappa^* + U_\kappa}{c_\kappa} \Big(\frac{\lambda r}{N}\Big)^{c_\kappa} \Big(1 -\frac{\lambda r}{N} \Big)^{U_\kappa^* + U_\kappa + A_\kappa^* + A_\kappa  - c_\kappa - 1}\\
		&\quad \quad \times  \prod_{\kappa \in  t'} \binom{U_\kappa^* + U'_\kappa}{c_\kappa} \Big(\frac{\lambda r'}{N}\Big)^{c_\kappa} \Big(1 -\frac{\lambda r'}{N} \Big)^{U_\kappa^* + U'_\kappa + A_\kappa^* + A'_\kappa - c_\kappa - 1}.
	\end{align*}
	To simplify this expression, we track the quantities $A_\kappa, D'_\kappa, U^*_\kappa, ...$ throughout the sampling process. The initialization leads to 
	\begin{align*}
		A^*_1 &= \indicator{i \in V_*} + \indicator{j \in V'_*}, \, A_1 = \indicator{i \not \in V_*}, \, A'_1 = \indicator{j \not \in V'_*}, \\
		D^*_1 &= D_1 = D'_1 = 0.
	\end{align*}
	These quantities evolve as follows with $\kappa \in [K]$:
	\begin{align*}
		A_\kappa^* &= A^*_1 - D^*_\kappa + \sum_{\eta \in [\kappa]} c^*_\eta, \quad A_\kappa = A_1 - D_\kappa+ \sum_{\eta \in [\kappa]} c^+_\eta, \quad A'_\kappa= A'_1 - D'_\kappa+ \sum_{\eta \in [\kappa]} c'^{+}_\eta, \\
		D_\kappa^*& = \sum_{\eta \in [\kappa]} \indicator{\eta \in V_*},\quad D_\kappa = \sum_{\eta \in [\kappa]} \indicator{\eta \in V \setminus V_*}, \quad  D'_\kappa = \sum_{\eta \in [\kappa]} \indicator{\eta  \in V'\setminus V'_*}, \\
		U_\kappa^* &= n_* - A^*_\kappa - D^*_\kappa, \quad U_\kappa= n_+ - A_\kappa - D_\kappa, \quad U'_\kappa = n'_+ - A'_\kappa - D'_\kappa 
	\end{align*}
	where $c_\eta = c^*_\eta + c^+_\eta$ denotes the decomposition of children among those in the intersection graph $G_*$ and those in the added nodes $V \setminus V_*$ respectively $V' \setminus V'_*$. Since the product expansion of $ \Pdisjd(t, t') $ is only a function of the sums $A_\kappa^* + A'_\kappa, U_\kappa^* + U_\kappa$, $U_\kappa^* + U'_\kappa$ and $A_\kappa^* + A_\kappa$, we compute these sums directly:
	\begin{align*}
		A_\kappa^* + A_\kappa &= A^*_1 + A_1 - D^*_\kappa - D_\kappa + \sum_{\eta \in [\kappa]} (c^*_\eta+ c^+_\eta) \\
		 &= 1 + \indicator{j\in V'_*}- \sum_{\eta \in [\kappa]} \indicator{\eta \in V} + \sum_{\eta \in [\kappa], \eta \in V} c_\eta,
	\end{align*}
	\begin{align*}
		A_\kappa^* + A'_\kappa & =  A^*_1 + A'_1 - D^*_\kappa - D'_\kappa + \sum_{\eta \in [\kappa]} (c^*_\eta + c'^+_\eta) \\
		&= 1 + \indicator{i \in V_*} - \sum_{\eta \in [\kappa]} \indicator{\eta \in V'} + \sum_{\eta \in [\kappa], \eta \in V'} c_\eta ,
	\end{align*}
\begin{align*}
		U_\kappa^* + U_\kappa &= n_* +  n_+ - A^*_\kappa - A_\kappa - D^*_\kappa  - D_\kappa\\
		&= n - 1 -\indicator{j \in V'_*} - \sum_{\eta \in [\kappa], \eta \in V} c_\eta,\\
		U_\kappa^* + U'_\kappa &= n_* +  n'_+ - A^*_\kappa - A'_\kappa - D^*_\kappa  - D'_\kappa\\
		& n' - 1 -\indicator{i \in V_*} - \sum_{\eta \in [\kappa], \eta \in V'} c_\eta. 
	\end{align*}
	Introducing the notations $\mathbb{I} := 1 + \indicator{j \in V'_*}$ and $\mathbb{I}' := 1 + \indicator{i \in V_*}$, this simplifies to 
	\begin{align*}
		U_\kappa^* + U_\kappa & = n - \mathbb{I} - \sum_{\eta \in [\kappa] \cap V} c_\eta, \\
		U_\kappa^* + U'_\kappa &= n' - \mathbb{I}' - \sum_{\eta \in [\kappa] \cap V'} c_\eta, \\
		U_\kappa^* + U_\kappa + A_\kappa^* + A_\kappa &=  n - \sum_{\eta \in [\kappa]} \indicator{\eta \in V}, \\
		U_\kappa^* + U'_\kappa + A_\kappa^* + A'_\kappa & =  n' - \sum_{\eta \in [\kappa]} \indicator{\eta \in V'}.
	\end{align*}
	Plugging these into the expression for $\Pdisjd(t, t')$ yields
	\begin{align*}
		&\Pdisjd(t, t') = \prod_{\kappa \in t} \binom{n - \mathbb{I} - \sum_{\eta \in [\kappa] \cap V} c_\eta}{c_\kappa} \Big(\frac{\lambda r}{N}\Big)^{c_\kappa} \Big(1 -\frac{\lambda r}{N} \Big)^{n - 1 - c_\kappa - \sum_{\eta \in [\kappa]} \indicator{\eta \in V}}\\
		&\qquad \qquad \qquad \times  \prod_{\kappa \in t'}  \binom{n' - \mathbb{I}' - \sum_{\eta \in [\kappa] \cap V'} c_\eta}{c_\kappa} \Big(\frac{\lambda r'}{N}\Big)^{c_\kappa} \Big(1 -\frac{\lambda r'}{N} \Big)^{n' - 1 - c_\kappa - \sum_{\eta \in [\kappa]} \indicator{\eta \in V'}}\\
		& = \prod_{\kappa \in t} \frac{\big( n - \mathbb{I} - \sum_{\eta \in [\kappa] \cap V} c_\eta \big)!}{c_\kappa ! \, \big( n - \mathbb{I} - \sum_{\eta \in [\kappa - 1] \cap V} c_\eta\big)!} \Big(\frac{\lambda s}{Nq}\Big)^{c_\kappa}  \Big(1  - \frac{\lambda s}{Nq}\Big)^{n - 1 - c_\kappa - \big \vert [\kappa] \cap V\big \vert}   \\
		& \quad \quad \times \prod_{\kappa \in t'} \frac{\big( n' - \mathbb{I}' - \sum_{\eta \in [\kappa] \cap V'} c_\eta \big)!}{c_\kappa ! \, \big( n' - \mathbb{I}' - \sum_{\eta \in [\kappa - 1] \cap V'} c_\eta\big)!} \Big(\frac{\lambda s'}{Nq'}\Big)^{c_\kappa}  \Big(1  - \frac{\lambda s'}{Nq'}\Big)^{n' - 1 - c_\kappa - \big \vert [\kappa] \cap V'\big \vert}.
	\end{align*}
   	Recall that $\mathcal{B} \subset \{ n - \varepsilon \leq Nq \leq n + \varepsilon\} \cap \{n' - \varepsilon \leq Nq' \leq n' + \varepsilon\}$ where 
	\[
	\varepsilon = N^{\sfrac12 + \gamma } \quad \text{and} \quad \gamma =  \frac{1}{8} - \frac{1}{4} c \log(\lambda \max\{s, s'\})  > 0.
	\]
	Conditioning on $\mathcal{B}$, one can upper bound $\Pdisjd(t, t')$ as follows:
	\begin{align*}
		\Pdisjd(t,& t') \indicator{\mathcal{B}} \leq \prod_{\kappa \in t} \frac{\big( n - \mathbb{I} - \sum_{\eta \in [\kappa] \cap V} c_\eta \big)!}{c_\kappa ! \, \big( n - \mathbb{I} - \sum_{\eta \in [\kappa - 1] \cap V} c_\eta\big)!} \frac{(\lambda s)^{c_\kappa} }{(n - \varepsilon)^{c_\kappa} } \Big(1  - \frac{\lambda s}{n+\varepsilon}\Big)^{n - c_\kappa - \big \vert [\kappa] \cap V\big \vert}   \\
		& \times  \prod_{\kappa \in t'} \frac{\big( n' - \mathbb{I}'- \sum_{\eta \in [\kappa] \cap V'} c_\eta \big)!}{c_\kappa ! \, \big( n' - \mathbb{I}' - \sum_{\eta \in [\kappa - 1] \cap V'} c_\eta\big)!} \frac{(\lambda s')^{c_\kappa} }{(n' - \varepsilon)^{c_\kappa} } \Big(1  - \frac{\lambda s'}{n'+\varepsilon}\Big)^{n' - c_\kappa - \big \vert [\kappa] \cap V'\big \vert}.
	\end{align*}
	In the fraction $ ( n - \mathbb{I} - \sum_{\eta \in [\kappa] \cap V} c_\eta \big)!/( n - \mathbb{I} - \sum_{\eta \in [\kappa - 1] \cap V} c_\eta\big)!$, the numerator is equal to the denominator plus $c_\kappa$. Hence, using that the numerator is a product of factors bounded by $n$, one has  
	\[
	\frac{\big( n - \mathbb{I} - \sum_{\eta \in [\kappa] \cap V} c_\eta \big)!}{ \big( n - \mathbb{I} - \sum_{\eta \in [\kappa - 1] \cap V} c_\eta\big)!} \times \frac{1}{n^{c_\kappa}} \leq 1
	\]
	which still holds for sufficiently large $n$ if one replaces $1/n^{c_\kappa}$ by $1/(n-\varepsilon)^{c_\kappa}$ because $\varepsilon \in o(N)$. Consequently, there exists $N_0 \in \N$ such that for all $N \geq N_0$,
	\begin{align}\label{asym_upper_bound_Pdis}
		\Pdisjd(t, t') \indicator{\mathcal{B}} &\leq \prod_{\kappa \in t} \frac{1}{c_\kappa !} (\lambda s)^{c_\kappa} \Big(1  - \frac{\lambda s}{n+\varepsilon}\Big)^{n - c_\kappa - \big \vert [\kappa] \cap V\big \vert}.  \nonumber \\
		&\quad \quad \times  \prod_{\kappa \in t'} \frac{1}{c_\kappa !} (\lambda s')^{c_\kappa} \Big(1  - \frac{\lambda s'}{n'+\varepsilon}\Big)^{n' - c_\kappa - \big \vert [\kappa] \cap V'\big \vert}.
	\end{align}
	For the rest of this proof, we assume $N \geq N_0$ in order for (\ref{asym_upper_bound_Pdis}) to hold. 
	
	\paragraph{Computation of $R_d$, the likelihood ratio of $\Pdisjd$ and $\Pindd$.} It is quite simple to compute $\Pindd(t, t')$ because of the independence in Galton--Watson trees: 
	\begin{align*}
		\Pindd(t, t') &= \prod_{\kappa \in t} \Prob\Big(\mathrm{Poi}(\lambda s) = c_\kappa\Big) \times \prod_{\kappa \in t'} \Prob\Big(\mathrm{Poi}(\lambda s') = c_\kappa\Big)\\
		&= \prod_{\kappa \in t} e^{-\lambda s }  (\lambda s)^{c_\kappa}  \frac{1}{c_\kappa !} \prod_{\kappa \in t'} e^{-\lambda s'}  (\lambda s')^{c_\kappa}  \frac{1}{c_\kappa !}.
	\end{align*}
	Comparing this to the upper bound on $\Pdisjd(t, t')$ from (\ref{asym_upper_bound_Pdis}), one obtains
	\begin{align}\label{R_d_init_upper_bound}
		&R_d(t, t')\indicator{\mathcal{B}} = \Big(\Pdisjd(t, t')/ \Pindd(t, t')\Big) \indicator{\mathcal{B}}\\
		& \leq \prod_{\kappa \in t} e^{\lambda s} \Big(1  - \frac{\lambda s}{n+\varepsilon}\Big)^{n - c_\kappa - \big \vert [\kappa] \cap V\big \vert} \times  \prod_{\kappa \in t'}e^{\lambda s'}  \Big(1  - \frac{\lambda s'}{n'+\varepsilon}\Big)^{n' - c_\kappa - \big \vert [\kappa] \cap V'\big \vert}  \indicator{\mathcal{B}}.\nonumber
	\end{align}
	Moving forward, we condition on the event $\mathcal{A}^c$ which we recall to be defined as
	\begin{align*}
		\mathcal{A}^c &= \Big\{ \forall i \in V, \, \rho \in [d] : \, | \mathcal{S}_{G, \rho}(i) |  \leq C \log(n) (\lambda s)^\rho \Big\}\\ & \quad \quad \cup \Big\{ \forall j \in V', \, \rho' \in [d]  : \, | \mathcal{S}_{G', \rho'}(j) | \leq  C \log(n') (\lambda s')^\rho \Big\}\\
		& \subset \Big \{ |t| \leq C \log(n) \sum_{\rho = 0}^d (\lambda s)^\rho  \Big\} \cup \Big \{   |t'| \leq C \log(n') \sum_{\rho' = 0}^d (\lambda s')^{\rho'}    \Big\}\\
		& \leq  \Big \{ |t| \leq \tilde{C} \log(n) (\lambda s)^{d}\Big\} \cup \Big \{   |t'| \leq \tilde{C} \log(n') (\lambda s')^{d} \Big\},
	\end{align*}
	where we have introduced the constant $\tilde{C} := C \frac{\lambda \max\{s, s'\}}{\lambda \min\{s, s'\} - 1} > 0$. Under $\mathcal{A}^c$, we have
	\begin{align*}
		\sum_{\kappa \in t} \big \vert [\kappa] \cap V\big \vert \, =  \, 1  + 2 + ... + |t| \, = \frac{|t| \big (|t| + 1 \big)}{2}  \, \leq \,  |t|^2 \, \leq \,  \tilde{C}^2 \log(n)^2 (\lambda s)^{2d}.
	\end{align*}
	Plugging the bounds on $|t|$ and $\sum_\kappa |[\kappa] \cap V|$ into the first product ``$\prod_{\kappa \in  t}$'' in (\ref{R_d_init_upper_bound}), we obtain
	\begin{align}\label{bounding_product_kappa}
	&\prod_{\kappa \in t} e^{\lambda s} \Big(1  - \frac{\lambda s}{n+\varepsilon}\Big)^{n - c_\kappa - \big \vert [\kappa] \cap V\big \vert} \indicator{\mathcal{A}^c}\\
	&\quad =  \exp\Big[ \lambda s \, \vert t \vert +  \Big(n   \vert t \vert -  \sum_{\kappa \in t} c_\kappa  - \sum_{\kappa \in t} \big \vert [\kappa] \cap V\big \vert \Big) \, \log \Big( 1 - \frac{\lambda s}{n + \varepsilon} \Big)  \Big] \indicator{\mathcal{A}^c} \nonumber\\
	&\quad \overset{(\star)}{\leq} \exp\Big[ \lambda s \, \vert t \vert -  \frac{\lambda s}{n + \varepsilon}\Big(n   \vert t \vert -  \sum_{\kappa \in t} c_\kappa  - \sum_{\kappa \in t} \big \vert [\kappa] \cap V\big \vert \Big)  \Big] \indicator{\mathcal{A}^c} \nonumber\\
	& \quad = \exp\Big[  \Big (1  -  \frac{n}{n + \varepsilon} \Big) \lambda s \, \vert t \vert  + \frac{1}{n+ \varepsilon} \lambda s \, \frac{ \vert t \vert \big(\vert t \vert + 1 \big)}{2}  \Big]  \exp\Big[\frac{\lambda s}{n + \varepsilon} \sum_{\kappa \in t} c_\kappa\Big] \indicator{\mathcal{A}^c} \nonumber\\
	&\quad \leq \exp\Big[ \frac{\varepsilon}{n + \varepsilon} \tilde{C} \log(n) (\lambda s)^{d+1} \, + \,  \frac{1}{n+ \varepsilon} \tilde{C}^2 \log(n)^2 (\lambda s)^{2d + 1} \Big] \exp\Big[\frac{\lambda s}{n + \varepsilon} \sum_{\kappa \in t} c_\kappa\Big] \nonumber
	\end{align}
	where the inequality $(\star)$ uses $\log(1 - x) \leq -x$ for $x < 1$. Recalling that $Nq \leq n+ \varepsilon$ and $\varepsilon = N^{\sfrac12 + \gamma}$, we have 
	\[
	\frac{\varepsilon}{n+\varepsilon} \leq \frac{N^{\sfrac12 + \gamma}}{Nq} \quad \text{and} \quad \frac{1}{n+\varepsilon} \leq \frac{1}{Nq}.
	\]
	We further bound $\log(n) \leq \log(N)$ and use the assumption $d \leq c \log(N)$ from Theorem \ref{thm:Algo_works}, which leads to
	\begin{equation}\label{bound_on_lambda_s_d}
			(\lambda s)^d \leq (\lambda s)^{c \log(N)} = N^{c \log(\lambda s)} \leq N^{c \log(\lambda \max\{s, s'\})} \leq N^{\sfrac12 - 2 \gamma} \quad 
	\end{equation}
	where the last inequality follows from $c \log(\lambda \max\{s, s'\}) \leq \frac{1}{4}< \frac{1}{2}$ and
	\[
	\frac{1}{2} - 2 \gamma  \geq \frac12 - \frac14 + \frac12 c \log(\lambda \max\{s, s'\}) > c \log(\lambda \max\{s, s'\}.
	\]
	Going back to (\ref{bounding_product_kappa}), we can thus bound the first term of the last line as follows:
	\begin{align*}
		&\exp\Big[ \frac{\varepsilon}{n + \varepsilon} \tilde{C} \log(n) (\lambda s)^{d+1} \, + \,  \frac{1}{n+ \varepsilon} \tilde{C}^2 \log(n)^2 (\lambda s)^{2d + 1} \Big]\\
		&\qquad \leq \exp\Big[ \frac{N^{\sfrac12 + \gamma}}{Nq} \tilde{C} \log(N) \, \lambda s\, N^{\sfrac12 - 2\gamma} + \frac{1}{Nq} \tilde{C}^2 \log(N)^2 \, \lambda s \, N^{1 - 4 \gamma}  \Big]\\
		& \qquad = \exp\Big[ \frac{\tilde{C}\, \lambda s}{q}  \log(N) N^{-  \gamma  } + \frac{\tilde{C}^2 \lambda s}{q} \log(N)^2 N^{-4\gamma} \Big].
	\end{align*}
	Since $\gamma > 0$, the last line is in $\mathcal{O}(1)$ and we note that its value is deterministic.
	
	The same argumentation which we started in (\ref{bounding_product_kappa})  can be used verbatim by replacing $s, n, q, t$ with $s', n', q', t'$ in order to show
	\begin{align*}
		&\prod_{\kappa \in t'} e^{\lambda s'} \Big(1  - \frac{\lambda s'}{n'+\varepsilon}\Big)^{n' - c_\kappa - \big \vert [\kappa] \cap V'\big \vert} \indicator{\mathcal{B} \cap \mathcal{A}^c}\\
		&\ \leq \exp\Big[ \frac{\varepsilon}{n' + \varepsilon} \tilde{C} \log(n') (\lambda s')^{d+1} +  \frac{1}{n'+ \varepsilon} \tilde{C}^2 \log(n')^2 (\lambda s')^{2d + 1} \Big] \exp\Big[\frac{\lambda s'}{n' + \varepsilon} \sum_{\kappa \in t'} c_\kappa\Big]\\
		&\ \leq \mathcal{O}(1) \times \exp\Big[\frac{\lambda s'}{n' + \varepsilon} \sum_{\kappa \in t'} c_\kappa\Big].
	\end{align*}
	Using the thus obtained bounds on both products ``$\prod_{\kappa \in t}$'' and ``$\prod_{\kappa \in t'}$'' in (\ref{R_d_init_upper_bound}) yields
	\begin{align*}
	R_d(t, t')\indicator{\mathcal{B} \cap \mathcal{A}^c} \leq  \mathcal{O}(1) \times \exp\Big[\frac{\lambda s}{n + \varepsilon} \sum_{\kappa \in t} c_\kappa\Big] \exp\Big[\frac{\lambda s'}{n' + \varepsilon} \sum_{\kappa \in t'} c_\kappa\Big].
	\end{align*}
	Squaring this inequality and using $\frac{1}{n+\varepsilon} \leq \frac{1}{Nq}$ on $\mathcal{B}$, we get
	\begin{align}\label{upper_bound_final_on_Rd2}
		R_ d^2(t, t')\indicator{\mathcal{B} \cap \mathcal{A}^c} \leq  \mathcal{O}(1) \times \exp\Big[\frac{2 \lambda s}{Nq} \sum_{\kappa \in t} c_\kappa\Big] \exp\Big[\frac{2 \lambda s'}{Nq'} \sum_{\kappa \in t'} c_\kappa\Big].
	\end{align}
	
	\paragraph{Computing the expectation.} To conclude the proof of Lemma \ref{lemma:disj_vs_ind}, one needs to compute the expectation of $R_d^2$ under $\Pindd$. The assumption that $(t, t') \sim \Pindd$ is equivalent to $c_\kappa \sim \mathrm{Poi}(\lambda s)$ for $\kappa \in t$ and $c_\kappa \sim \mathrm{Poi}(\lambda s')$ for $\kappa \in t'$ with all $c_\kappa$ being independent. Hence, taking expectations in (\ref{upper_bound_final_on_Rd2}) yields
	\begin{align}\label{holadrio}
		\Eindd\Big[ R_ d^2 \indicator{\mathcal{B} \cap \mathcal{A}^c}  \Big]  \leq  \mathcal{O}(1) \times\prod_{\kappa \in t}  \Eindd\big[e^{\frac{2\lambda s}{Nq} c_{\kappa}}\big]   \prod_{\kappa \in t'}  \Eindd\big[e^{\frac{2\lambda s'}{Nq'} c_{\kappa}}\big] \indicator{ \mathcal{A}^c}
	\end{align}
	Since the generating function of $c_\kappa$ equals $\mu \mapsto \Expe[e^{\mu c_\kappa}] = \exp(\lambda s(e^\mu - 1))$, we obtain
	\begin{align*}
		\prod_{\kappa \in t} \Eindd\big[ e^{\frac{2\lambda s}{Nq} c_{\kappa}} \big] \indicator{\mathcal{A}^c} &= \exp\Big(\lambda s (e^{\frac{2\lambda s}{Nq}} -1)\Big)^{|t|}  \indicator{\mathcal{A}^c} \\
		& \leq  \exp\bigg(\lambda s (e^{\frac{2\lambda s}{Nq}}  - 1) \tilde{C} \log(n) (\lambda s)^{d} \bigg).
	\end{align*}
	A first order approximation yields $e^{\frac{2\lambda s}{Nq}} - 1  \in \mathcal{O}(N^{-1})$ and we recall from (\ref{bound_on_lambda_s_d}) that $(\lambda s)^d \leq N^{\sfrac12 - 2\gamma}$. Hence,
\[
 (e^{\frac{2\lambda s}{Nq}}  - 1) \log(n) (\lambda s)^{d} \leq \mathcal{O}\Big( N^{-1} \log(N) N^{\sfrac12 - 2\gamma}\Big) \subset o(1)
\] 
which leads to $\exp\bigg(\lambda s (e^{\frac{2\lambda s}{Nq}}  - 1) \tilde{C} \log(n) (\lambda s)^{d} \bigg) \in \mathcal{O}(1)$. Plugging this result into (\ref{holadrio}), we conclude that $d \mapsto \Eindd\Big[ R_ d^2 \indicator{\mathcal{A}^c} \indicator{\mathcal{B}} \Big]$ is bounded. This finishes the proof.
\end{proof}

\subsection{Proof of Proposition \ref{prop_likhood_recursive}}\label{appendix:prop_likhood_recursive}
We recall the recursive likelihood ratio property before moving to its proof.
\begin{proposition}[copy of Proposition \ref{prop_likhood_recursive}]
	For $t, t' \in \mathcal{X}_d$, let $c, c'$ be the degrees of their respective root nodes. Denote by $t_{[1]}, \dots, t_{[c]}$ respectively $t'_{[1]}, \dots, t'_{[c']}$ the subtrees attached to the roots, labeled randomly. Recall that $\sigma : [k] \xhookrightarrow{} [c]$ denotes an injective map from $[k]$ to $[c]$ and let $\sum_{\sigma:[k] \hookrightarrow [c]}$ be the sum over all such mappings.  Then, for all $d \in \N_{>1}$, the likelihood ratio at depth $d$ can be expressed recursively as
	\begin{equation*}
		L_d(t, t') = \sum_{k=0}^{c \wedge c'} \psi(k, c, c') \sum_{\substack{\sigma : [k] \xhookrightarrow{} [c],\\\sigma' : [k] \xhookrightarrow{} [c']}} \prod_{i=1}^k L_{d-1}\Big(t_{[\sigma(i)]}, t'_{[\sigma'(i)]}\Big).
	\end{equation*}
	where 
	\begin{equation*}
		\psi(k, c, c')= \exp(\lambda s s')\frac{1}{\lambda^k \, k!} \, (1-s')^{c - k} (1-s)^{c'-k}.
	\end{equation*}
\end{proposition}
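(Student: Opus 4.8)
The strategy is to compute both $\Pcorrd(t,t')$ and $\Pindd(t,t')$ directly from the generative description of $\Pcorr$ in terms of the subtree-tuple representation (Definition \ref{def:new_galton_watson_trees}), then take their quotient. Write $t$ as having root subtrees $t_{[1]},\dots,t_{[c]}$ and $t'$ as having root subtrees $t'_{[1]},\dots,t'_{[c']}$. Under $\Pind$, the root offspring counts are $c\sim\mathrm{Poi}(\lambda s)$ and $c'\sim\mathrm{Poi}(\lambda s')$ independently, and conditionally on these the subtrees are i.i.d.\ samples from $\mathrm{GW}_{d-1}^{(\lambda s)}$ and $\mathrm{GW}_{d-1}^{(\lambda s')}$; so $\Pindd(t,t')$ factors as $\pi_{\lambda s}(c)\pi_{\lambda s'}(c')\cdot(\text{multinomial/ordering factor})\cdot\prod_i \mathrm{GW}_{d-1}^{(\lambda s)}(t_{[i]})\prod_j\mathrm{GW}_{d-1}^{(\lambda s')}(t'_{[j]})$. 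The key is to get the matching combinatorial bookkeeping for $\Pcorrd(t,t')$.

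For the numerator, use that $(t,t')\sim\Pcorr$ decomposes each root subtree into ``category (a)'' (shared, coming from the intersection tree, with $k\sim\mathrm{Poi}(\lambda ss')$ of them, joint law $\Pcorrd$ on the pair) and ``category (b)'' (private: $c-k$ extra children on the $t$-side of law $\mathrm{GW}_{d-1}^{(\lambda s)}$, and $c'-k$ on the $t'$-side of law $\mathrm{GW}_{d-1}^{(\lambda s')}$), as in Definition \ref{def:new_galton_watson_trees}. Summing over the number $k$ of shared subtrees and over the ways to assign which $k$ of the $c$ labeled $t$-subtrees and which $k$ of the $c'$ labeled $t'$-subtrees are the shared ones — i.e.\ over injections $\sigma:[k]\hookrightarrow[c]$ and $\sigma':[k]\hookrightarrow[c']$ — one obtains
\[
\Pcorrd(t,t') = \sum_{k=0}^{c\wedge c'} \frac{\pi_{\lambda ss'}(k)}{(\text{symmetry})} \sum_{\sigma,\sigma'} \prod_{i=1}^k \Pcorrd\!\left(t_{[\sigma(i)]},t'_{[\sigma'(i)]}\right) \prod_{i\notin\mathrm{im}\,\sigma}\!\!\mathrm{GW}_{d-1}^{(\lambda s)}(t_{[i]}) \prod_{j\notin\mathrm{im}\,\sigma'}\!\!\mathrm{GW}_{d-1}^{(\lambda s')}(t'_{[j]}),
\]
with Poisson-thinning arguments (as used in the ``martingale property'' computation) handling the fact that a $\mathrm{Poi}(\mu)$ number of children, each independently choosing its subtree shape, gives independent Poisson counts per shape. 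Dividing by $\Pindd(t,t')$, the $\mathrm{GW}_{d-1}$ factors for the private subtrees cancel against the corresponding factors in the denominator, the shared-subtree factors $\Pcorrd(t_{[\sigma(i)]},t'_{[\sigma'(i)]})$ become $L_{d-1}(t_{[\sigma(i)]},t'_{[\sigma'(i)]})$ times leftover $\mathrm{GW}_{d-1}$ factors that also cancel, and the Poisson pmf ratio $\pi_{\lambda ss'}(k)\,\pi_{\lambda s(1-s')}(c-k)\,\pi_{\lambda s'(1-s)}(c'-k)/(\pi_{\lambda s}(c)\,\pi_{\lambda s'}(c'))$ collapses — after expanding the Poisson densities and cancelling the factorials against the count of injections — to exactly $\psi(k,c,c')=\exp(\lambda ss')\,\lambda^{-k}(k!)^{-1}(1-s')^{c-k}(1-s)^{c'-k}$.

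The main obstacle is the combinatorial accounting: making sure the symmetry/ordering factors are handled consistently between numerator and denominator, since the root subtrees are only defined up to permutation but are labeled arbitrarily for the purpose of the formula, and one must verify that summing over \emph{ordered} injections $\sigma,\sigma'$ (rather than over subsets) together with the $1/k!$ in $\psi$ correctly counts each unordered configuration of shared subtrees exactly once. The cleanest way to make this airtight is to first do the computation with fully \emph{labeled} (ordered) root subtrees — where everything is a clean product of independent Poisson-governed choices — and only at the end observe that the symmetrization factors on top and bottom are identical and hence cancel, leaving the stated formula. I would present the proof in that order: labeled computation of $\Pcorrd$ and $\Pindd$, cancellation, then identification of the Poisson-ratio with $\psi$ via the factorial identity $c!/(c-k)! = \sum_{\sigma:[k]\hookrightarrow[c]} 1$.
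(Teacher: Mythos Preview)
Your proposal is correct and follows essentially the same route as the paper: compute $\Pcorrd(t,t')$ by conditioning on the root degree $k$ of the intersection tree, sum over the assignments of which $k$ subtrees on each side are the shared ones, compute $\Pindd(t,t')$ as a straightforward product, take the quotient so that the $\mathrm{GW}_{d-1}$ factors cancel, and simplify the Poisson pmf ratio $\pi_{\lambda ss'}(k)\pi_{\lambda s(1-s')}(c-k)\pi_{\lambda s'(1-s)}(c'-k)/(\pi_{\lambda s}(c)\pi_{\lambda s'}(c'))$ to $\psi(k,c,c')$. The only cosmetic difference is bookkeeping: the paper sums over full permutations $\sigma\in\mathfrak{S}_c,\ \sigma'\in\mathfrak{S}_{c'}$ with weight $1/(c!\,c'!)$ and converts to injections at the very end via $|\mathfrak{S}_c|=(c-k)!\cdot|\{\sigma:[k]\hookrightarrow[c]\}|$, whereas you go straight to injections and handle the symmetrization by a labeled-then-cancel argument---same computation, different organization.
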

\begin{proof} For a rooted tree $t$ , denote its root node by $\rho_t$.
	The values $c = \mathrm{deg}(\rho_t)$ and $c' = \mathrm{deg}(\rho_{t'})$ represent the  number of root children in $t$ and $t'$ respectively. Since $t$ and $t'$ are deterministic, both $c$ and $c'$ are fixed integers.\\
	In the following, $(\tau, \tau')$ denotes a pair of random trees following the law $ \Pcorrd$. Hence, they can be seen as independent augmentations of an intersection tree denoted as $\tau_*$ as explained in Definition \ref{def:corr_GW_trees}. Note that $(\tau, \tau') = (t, t')$ implies the event $D:= \{\mathrm{deg}(\rho_\tau) = c, \mathrm{deg}(\rho_\tau') = c'\}$ which lets us compute
	\begin{align*}
		\Pcorrd(t, t') &\overset{(a)}{=\joinrel=} \sum_{k = 0}^{c\wedge c'} \Pcorrd\Big((\tau, \tau') = (t, t'), \, D, \,  \mathrm{deg}(\rho_{\tau_*}) = k\Big) \\ 
		&= \sum_{k = 0}^{c\wedge c'} \Pcorrd\Big(D, \,\mathrm{deg}(\rho_{\tau_*}) = k\Big)  \; \underbrace{\Pcorrd \Big((\tau, \tau') = (t, t')\suchthat D, \, \mathrm{deg}(\rho_{\tau_*}) = k\Big)}_{=: \boldsymbol{\xi}} \\
		&\overset{(b)}{=\joinrel=} \sum_{k = 0}^{c\wedge c'} \pi_{\lambda s s'}(k) \pi_{\lambda s (1-s')}(c-k) \pi_{\lambda s' (1-s)}(c'-k) \, \boldsymbol{\xi}.
	\end{align*}
	The equality $(a)$ uses $ \mathrm{deg}(\rho_{\tau_*}) \leq \mathrm{deg}(\rho_\tau) \wedge \mathrm{deg}(\rho_{\tau'})$ while $(b)$ relies on the construction of $(\tau, \tau')$ as augmentations of $\tau_*$. From the construction of these augmentations, one has $\deg(\rho_{\tau_*}) \sim \mathrm{Poi}(\lambda ss'), \, \deg(\rho_{\tau}) -  \deg(\rho_{\tau_*})  \sim \mathrm{Poi}(\lambda s(1-s'))$, and $ \deg(\rho_{\tau'}) -  \deg(\rho_{\tau_*})  \sim \mathrm{Poi}(\lambda s'(1-s))$ with all three variables being independent.\\
	To compute $\boldsymbol{\xi}$, randomly assign the numbers $\{1, \dots, c\}$ to the child nodes of $\rho_t$ and denote by $t_{[i]}:= \mathcal{N}_{t, d}(\rho_t \to i)$ the (unlabeled) subtree rooted in $i \in [c]$, using the notation from section \ref{section:one-sided_hypo_testing}.
	Define $t'_{[1]}, \dots, t'_{[c']}$ analogously and enumerate the child nodes of $\tau, \tau'$ in a way that $(\tau_{[1]}, \tau'_{[1]}), \dots, (\tau_{[k]}, \tau'_{[k]})$ correspond to the root children in $\tau_*$ with $k = \mathrm{deg}(\rho_{\tau_*})$. With this notation, the event $ (\tau, \tau') = (t, t') $ implies that permutations $\sigma, \sigma'$ are correctly mapping the random numberings to the corresponding nodes. Denoting by $\mathfrak{S}_c$ the symmetric group of size $c$, this leads to
	\begin{align*}
		&\boldsymbol{\xi}  = \\
		&\sum_{\substack{\sigma \in \mathfrak{S}_c\\ \sigma' \in \mathfrak{S}_{c'}}} \frac{1}{c! c'!}\, \Pcorrd \Bigg( (\tau, \tau') = (t, t')\, \Big| \, D, \, \mathrm{deg}(\rho_{\tau_*}))= k, \begin{cases}
			\tau_i = t_{[\sigma(i)]} \text{ for } i \in [c],\\
			\tau'_j = t'_{[\sigma'(j)]} \text{ for } j \in [c']
		\end{cases} \hspace{-10pt}\Bigg) \\
		&= \sum_{\sigma, \sigma'} \frac{1}{c!\, c'!}\, \Pcorrdminus\Big((\tau_{[i]}, \tau'_{[i]}) = (t_{[\sigma(i)]}, t'_{[\sigma'(i)]})\text{ for } i \in [k]\Big) \\
		& \qquad \qquad \quad \quad\quad \times \Prob\Big( t_{[\sigma(i)]} = \mathrm{GW}_{d-1}^{(\lambda s)} \text{ for } i \in [k+1:c] \Big) \\
		& \qquad \qquad \quad \quad\quad\times  \Prob\Big( t'_{[\sigma'(i)]} = \mathrm{GW}_{d-1}^{(\lambda s')} \text{ for } i \in [k+1:c'] \Big)\\
		&= \sum_{\sigma, \sigma'} \frac{1}{c!\, c'!} \Big(\prod_{i = 1}^{k} \Pcorrdminus(t'_{[\sigma(i)]}, t'_{[\sigma'(i)]})\Big) \\
		& \qquad \qquad \quad \quad\quad \times \Big(\prod_{i=k+1}^{c}\mathrm{GW}_{d-1}^{(\lambda s)}(t_{[\sigma(i)]})\Big)\Big(\prod_{i=k+1}^{c'}\mathrm{GW}_{d-1}^{(\lambda s')}(t'_{[\sigma'(i)]})\Big)
	\end{align*}
	where we recall that $ \mathrm{GW}_{d-1}^{(\mu)}(t) $ denotes the likelihood of $t$ being a GW-tree of depth $d-1$ and offspring distribution $\mathrm{Poi}(\mu)$.\\
	Under the independence hypothesis, the computation is more immediate:
	\begin{align*}
		\Pindd(t, t') &= \mathrm{GW}_{d}^{(\lambda s)}(t) \, \mathrm{GW}_{d}^{(\lambda s')}(t')\\
		&= \pi_{\lambda s}(c) \Big(\prod_{i=1}^c \mathrm{GW}_{d-1}^{(\lambda s)}(t_{[i]})\Big) \pi_{\lambda s'}(c') \Big(\prod_{i=1}^{c'} \mathrm{GW}_{d-1}^{(\lambda s')}(t'_{[i]})\Big).
	\end{align*}
	Putting both results together, we obtain
	\begin{align}\label{lahaut}
		&L_d(t, t') = \nonumber\\
		& \sum_{k=0}^{c \wedge c'} \frac{\pi_{\lambda s s'}(k) \pi_{\lambda s (1-s')}(c-k) \pi_{\lambda s' (1-s)}(c'-k)}{\pi_{\lambda s}(c) \,\pi_{\lambda s'}(c') \, c! \,c'!} \sum_{\substack{\sigma \in \mathfrak{S}_c,\\\sigma' \in \mathfrak{S}_{c'}}} \Big(\prod_{i = 1}^{k} \Pcorrdminus(t'_{[\sigma(i)]}, t'_{[\sigma'(i)]})\Big) \nonumber\\
		&\qquad \qquad  \Big(\prod_{i=1}^{k}\mathrm{GW}_{d-1}^{(\lambda s)}(t_{[\sigma(i)]})\Big)^{-1}\Big(\prod_{i=1}^{k}\mathrm{GW}_{d-1}^{(\lambda s')}(t'_{[\sigma'(i)]})\Big)^{-1}\nonumber\\
		=& \sum_{k=0}^{c \wedge c'} \frac{\pi_{\lambda s s'}(k) \pi_{\lambda s (1-s')}(c-k) \pi_{\lambda s' (1-s)}(c'-k)}{\pi_{\lambda s}(c) \,\pi_{\lambda s'}(c') \, c! \,c'!} \sum_{\substack{\sigma \in \mathfrak{S}_c,\\\sigma' \in \mathfrak{S}_{c'}}} \prod_{i=1}^k L_{d-1}(t_{[\sigma(i)]}, t'_{[\sigma'(i)]})\nonumber\\
		&= \sum_{k=0}^{c \wedge c'} \frac{\psi(k, c, c')}{(c-k)!(c'-k)!} \sum_{\substack{\sigma \in \mathfrak{S}_c,\\\sigma' \in \mathfrak{S}_{c'}}} \prod_{i=1}^k L_{d-1}(t_{[\sigma(i)]}, t'_{[\sigma'(i)]})
	\end{align}
	where the term $\psi$, which is implicitly defined in the last equation, can be simplified to
	\begin{align*}
		&\psi(k, c, c') \overset{\text{def.}}{=\joinrel=} \frac{\pi_{\lambda s s'}(k) \pi_{\lambda s (1-s')}(c-k) \pi_{\lambda s' (1-s)}(c'-k)}{\pi_{\lambda s}(c) \,\pi_{\lambda s'}(c') } \, \frac{(c-k)!(c'-k)!}{c! \,c'!}\\
		&= \exp\Big(-\lambda s s' -\lambda s (1-s') -\lambda s' (1-s) + \lambda s + \lambda s'\Big) \frac{1}{k!} \lambda^{-k} (1-s')^{c - k} (1-s)^{c'-k}\\
		&= \exp(\lambda s s')\frac{1}{\lambda^k \, k!} \, (1-s')^{c - k} (1-s)^{c'-k}.
	\end{align*}
	One can obtain a more compact version the equation (\ref{lahaut}) by realizing that $|\mathfrak{S}_c| = (c-k)! \times |\{\sigma: [k] \to [c] \, : \, \sigma \text{ injective}\}|$. Denoting by $ \sum_{\sigma:[k] \hookrightarrow [c]}$ the sum over all injective mappings from $[k]$ to $[c]$, we therefore obtain the desired result:
	\begin{equation*}
		L_d(t, t') = \sum_{k=0}^{c \wedge c'} \psi(k, c, c') \sum_{\substack{\sigma:[k] \hookrightarrow [c],\\ \sigma':[k] \hookrightarrow [c']}} \prod_{i=1}^k L_{d-1}(t_{[\sigma(i)]}, t'_{[\sigma'(i)]}).
	\end{equation*}
\end{proof}

\subsection{Proof of Proposition \ref{prop_explicit_likhood}}\label{appendix:prop_explicit_likhood}
We recall the lower bound on $ L_{d+k} $ before moving to its proof.
\begin{proposition}[copy of Proposition \ref{prop_explicit_likhood}]
	Let \( (t, t') \) be a pair of rooted trees sharing a common rooted subtree \( t_* \). Denote by 
	\( \sigma_*: t_* \hookrightarrow t \) and \( \sigma'_*: t_* \hookrightarrow t' \) 
	the embeddings of \( t_* \) into \( t \) and \( t' \), respectively. 
	
	For each \( i \in t \), let \( c_t(i) \) denote its number of child nodes in \( t \), and define 
	\( t_{[i]} \) as the subtree rooted at \( i \), containing all its descendants. 
	This notation extends naturally to \( t' \) and \( t_* \), and we recall that \( (t_*)_d \) denotes the depth-\( d \) truncation of \( t_* \).
	
	Then, for all \( d, k \in \mathbb{N}_{>0} \), the following inequality holds:
	\[
	L_{d+k}(t, t') \geq 
	\prod_{i \in (t_*)_{d-1}} \hspace{-8pt}
	\psi\Big(c_{t_*}(i), c_t(\sigma_*(i)), c_{t'}(\sigma'_*(i)) \Big) 
	\prod_{j \in (t_*)_d \setminus (t_*)_{d-1}} \hspace{-8pt}
	L_k(t_{[\sigma_*(j)]}, t'_{[\sigma'_*(j)]})
	\]
\end{proposition}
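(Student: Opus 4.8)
## Proof proposal for Proposition \ref{prop_explicit_likhood}

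The plan is to prove the inequality by induction on the depth $d$ of the shared subtree $t_*$, peeling off one generation at a time using the recursive likelihood formula of Proposition \ref{prop_likhood_recursive}. The base case $d=1$ follows directly from that recursion: applying it at depth $1+k$ to the pair $(t,t')$ and keeping only the single term in the outer sum indexed by $k_0 := c_{t_*}(\rho)$ (the root degree of $t_*$) together with the single pair of injections $(\sigma,\sigma')$ that maps $[k_0]$ onto precisely the children of the roots that lie in the embedded copy of $t_*$, we get a lower bound
\[
L_{1+k}(t,t') \geq \psi\big(c_{t_*}(\rho), c_t(\sigma_*(\rho)), c_{t'}(\sigma'_*(\rho))\big) \prod_{j} L_k\big(t_{[\sigma_*(j)]}, t'_{[\sigma'_*(j)]}\big),
\]
where the product is over the $k_0$ children $j$ of the root of $t_*$; since $(t_*)_0 = \{\rho\}$ and $(t_*)_1 \setminus (t_*)_0$ is exactly that set of children, this is the claimed formula for $d=1$. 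The key point making this a valid lower bound is that every summand in the recursion of Proposition \ref{prop_likhood_recursive} is nonnegative (each $L_{d-1} \geq 0$ and $\psi \geq 0$), so discarding all but one term can only decrease the value.

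For the inductive step, suppose the statement holds for shared subtrees of depth $d-1$ and let $t_*$ have depth $d$. Again apply the recursion at depth $d+k$, retaining only the term with $k_0 = c_{t_*}(\rho)$ and the pair of injections picking out the $t_*$-children at the root; this yields
\[
L_{d+k}(t,t') \geq \psi\big(c_{t_*}(\rho), c_t(\sigma_*(\rho)), c_{t'}(\sigma'_*(\rho))\big) \prod_{j \in \mathrm{ch}(\rho)} L_{d+k-1}\big(t_{[\sigma_*(j)]}, t'_{[\sigma'_*(j)]}\big),
\]
where $\mathrm{ch}(\rho)$ are the children of the root of $t_*$. For each such child $j$, the pair of subtrees $\big(t_{[\sigma_*(j)]}, t'_{[\sigma'_*(j)]}\big)$ shares the rooted subtree $(t_*)_{[j]}$, which has depth $d-1$, and $d+k-1 = (d-1) + k$, so the induction hypothesis applies to each factor:
\[
L_{(d-1)+k}\big(t_{[\sigma_*(j)]}, t'_{[\sigma'_*(j)]}\big) \geq \prod_{i \in ((t_*)_{[j]})_{d-2}} \hspace{-6pt} \psi\big(\cdots\big) \prod_{j' \in ((t_*)_{[j]})_{d-1}\setminus ((t_*)_{[j]})_{d-2}} \hspace{-6pt} L_k\big(t_{[\sigma_*(j')]}, t'_{[\sigma'_*(j')]}\big).
\]
Substituting these bounds and reindexing — the root $\rho$ contributes the single $\psi$ factor for the ``$i \in (t_*)_0$'' part, and the nodes appearing in $((t_*)_{[j]})_{d-2}$ over all children $j$ are exactly the nodes of $(t_*)_{d-1} \setminus (t_*)_0$, while the leaves $((t_*)_{[j]})_{d-1}\setminus((t_*)_{[j]})_{d-2}$ over all $j$ are exactly $(t_*)_d \setminus (t_*)_{d-1}$ — gives precisely the asserted inequality.

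The main obstacle is purely bookkeeping: one must verify carefully that the depth-shifted embeddings are consistent (i.e.\ that $\sigma_*$ restricted to the subtree of $t_*$ rooted at $j$ really is the embedding of $(t_*)_{[j]}$ into $t_{[\sigma_*(j)]}$, which holds because $\mathcal{N}_{t}(\rho_t \to j)$ is exactly the subtree hanging off $j$), and that the disjoint union of the pruned subtrees $((t_*)_{[j]})_{d-2}$ over the root's children reassembles $(t_*)_{d-1}\setminus\{\rho\}$ with the correct child-degree labels. There is also a minor edge case when $d-1 = 0$, i.e.\ when some $(t_*)_{[j]}$ is a single node: then the inner product over $\psi$'s is empty and the product over ``leaves'' is the single factor $L_k(t_{[\sigma_*(j)]}, t'_{[\sigma'_*(j)]})$, which matches the convention and requires no separate argument. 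No new analytic idea is needed beyond the nonnegativity of all terms in Proposition \ref{prop_likhood_recursive} and the careful tracking of which nodes of $t_*$ fall into which pruning shell.
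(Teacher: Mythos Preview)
Your proof is correct. The approach differs from the paper's in organization rather than substance: the paper first establishes the fully expanded identity
\[
L_d(t,t') = \sum_{\tau \in \mathcal{X}_d} \sum_{\substack{\sigma:\tau\hookrightarrow t\\\sigma':\tau\hookrightarrow t'}} \prod_{i\in\tau_{d-1}} \psi\big(c_\tau(i),c_t(\sigma(i)),c_{t'}(\sigma'(i))\big)
\]
(deferring its proof to \cite[Lemma~2.1]{luca}), then obtains the lower bound by restricting the outer sum to those $\tau$ with $\tau_d=t_*$ and embeddings agreeing with $\sigma_*,\sigma'_*$ on the first $d$ levels, reindexing the remaining free choices as subtrees in $\mathcal{X}_k$, and recognizing the explicit formula again for the $L_k$ factors. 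You instead perform induction on $d$, peeling off one level of the recursion at a time and restricting at each step. Both routes rest on exactly the same two ingredients --- the recursion of Proposition~\ref{prop_likhood_recursive} and the nonnegativity of every summand --- and yield the same bound; your version has the modest advantage of being self-contained (no appeal to the auxiliary expanded formula), while the paper's makes the ``keep only the $t_*$-compatible terms'' step visible all at once. One cosmetic point: your phrase ``induction on the depth $d$ of the shared subtree $t_*$'' is slightly misleading, since $d$ in the statement is a free truncation parameter and $(t_*)_{[j]}$ need not have depth exactly $d-1$; the induction is really on the parameter $d$ for arbitrary $t_*$, which is how your argument actually proceeds.
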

\begin{proof}
	We start by iteratively applying the recursive likelihood formula from Proposition \ref{prop_likhood_recursive} to obtain
	\begin{equation}\label{explicit_likelihood}
		L_d(t, t') = \sum_{\tau \in \mathcal{X}_d} \sum_{\substack{\sigma: \tau \hookrightarrow t,\\\sigma' : \tau \hookrightarrow t'}} \prod_{i \in \tau_{d-1}} \psi\Big(c_\tau(i), \, c_t(\sigma(i)), \, c_{t'}(\sigma'(i))\Big).
	\end{equation}
	For a rigorous proof of this equation, we refer the reader to  \cite[Lemma 2.1]{luca} whose proof translates word by word to the asymmetric tree correlation setting. 
	
	Next, let $t, t' \in \mathcal{X}_{d+k}$ be arbitrary trees that share a common subtree $t_* \in \mathcal{X}_d$ up to depth $d$. Denote by $\sigma_*: t_* \hookrightarrow t$ and $\sigma'_*: t_* \hookrightarrow t'$ the respective subtree embeddings. One has the following sequence of computations; each step will be explained right after:
	\begin{align*}
		&L_{d+k}(t, t') = \sum_{\tau \in \mathcal{X}_{d+k}}  \sum_{\substack{\sigma: \tau \hookrightarrow t,\\\sigma' : \tau \hookrightarrow t'}} \prod_{i \in \tau_{d+k-1}} \psi\Big(c_\tau(i), \, c_t(\sigma(i)), \, c_{t'}(\sigma'(i))\Big)\\
		& \ \geq \sum_{\substack{\tau \in \mathcal{X}_{d + k}\\ \tau_d = t_*}} \quad \sum_{\substack{\sigma: \tau \hookrightarrow t, \, \sigma = \sigma_* \text{ on } \tau_d \\ \sigma' : \tau \hookrightarrow t', \, \sigma' = \sigma'_* \text{ on } \tau_d}} \quad \prod_{i \in (t_*)_{d-1}} \psi\Big(c_{t_*}(i), \, c_t(\sigma_*(i)), \, c_{t'}(\sigma'_*(i))\Big) \\
		& \ \quad \quad \quad \quad \times \prod_{j \in (t_*)_{d} \setminus (t_*)_{d-1}} \prod_{m \in (\tau_{[j]})_{k-1}} \psi \Big(c_\tau (m), \, c_{t}(\sigma(m)), \, c_{t'}(\sigma'(m))\Big)\\
		&\ = \bigg( \prod_{i \in (t_*)_{d-1}} \psi\Big(c_{t_*}(i), \, c_t(\sigma_*(i)), \, c_{t'}(\sigma'_*(i))\Big) \bigg) \sum_{ (\tilde{\tau}_\ell)_\ell \in (\mathcal{X}_k)^{(t_*)_{d} \setminus (t_*)_{d-1}} }\\
		&\  \qquad  \sum_{
			\substack{
				(\tilde{\sigma}_\ell)_\ell, \, (\tilde{\sigma}'_\ell)_\ell\, :\\
				\tilde{\sigma}_\ell: \tilde{\tau}_\ell \hookrightarrow t_{[{\sigma}_*(\ell)]} \\ \tilde{\sigma}'_\ell: \tilde{\tau}_\ell \hookrightarrow t'_{[{\sigma}'_*(\ell)]}
			}
		}  \ \prod_{j \in (t_*)_{d} \setminus (t_*)_{d-1} } 
		\quad  \prod_{m \in (\tilde{\tau}_j)_{k-1} } \psi\Big(c_{\tilde{\tau}_j}(m), \, c_{t}(\tilde{\sigma}_j(m)), \, c_{t'}(\tilde{\sigma}'_j(m))\Big)\\
		&\  = \bigg( \prod_{i \in (t_*)_{d-1}} \psi\Big(c_{t_*}(i), \, c_t(\sigma_*(i)), \, c_{t'}(\sigma'_*(i))\Big) \bigg) \\
		&\ \quad \quad \times \prod_{j \in (t_*)_{d} \setminus (t_*)_{d-1}} \sum_{\tilde{\tau} \in \mathcal{X}_k} \sum_{\substack{
				\tilde{\sigma}: \tilde{\tau} \hookrightarrow t_{[\sigma_*(j)]} \\ \tilde{\sigma}': \tilde{\tau} \hookrightarrow t'_{[\sigma'_*(j)]}
		}} \prod_{m \in \tilde\tau_{k-1}} \psi\Big(c_{\tilde\tau}(m), \, c_{t}(\tilde{\sigma}(m)), \, c_{t'}(\tilde{\sigma}'(m))\Big)\\
		&\ = \prod_{i \in (t_*)_{d-1}} \psi\Big(c_{t_*}(i), \, c_t(\sigma_*(i)), \, c_{t'}(\sigma'_*(i))\Big)  \, 
		\prod_{j \in (t_*)_{d} \setminus (t_*)_{d-1}} L_k(t_{[\sigma_*(j)]}, t'_{[\sigma'_*(j)]}).
	\end{align*}
	Here are the explanations for every step of this computation:
	\begin{itemize}
		\item The first equality is an application of (\ref{explicit_likelihood}) at depth $d+k$.
		\item The inequality follows from $\psi \geq 0$ and the omission of some terms in the two sums.
		\item For the next equality, note that the sum over $ \{\tau \in \mathcal{X}_{d+k} \suchthat \tau_d = t_*\} $ is equivalent to summing over all possible combinations of subtrees $(\tilde{\tau}_\ell)_\ell$ which are elements of $\mathcal{X}_k$ and rooted in their index nodes $\ell \in (t_*)_d \setminus (t_*)_{d-1}$. The injective mappings decompose across these subtrees.
		\item Next up is a swap of product and sum that simplifies the multinomial expression from before.
		\item Finally, the last equality is another application of (\ref{explicit_likelihood}) to the second product.
	\end{itemize}
	The final expression is what we wanted to show.
\end{proof}

\subsection{Proof of theorem \ref{thm_diagonalization}}\label{appendix:thm_diagonalization}
We recall the likelihood ratio diagonalization formula before moving to its proof.
\begin{theorem}[copy of Theorem \ref{thm_diagonalization}]
	There exists a set of functions $f^{(\mu)}_{d, \beta}:\mathcal{X}_d \to \R$ with parameters $\mu >0, d\in \N$ and indexed by trees $\beta \in \mathcal{X}_d$ such that for all choices of $s, s' \in [0,1]$, $\lambda >0$, the following \textbf{likelihood ratio diagonalization} formula holds:
	\begin{equation}\label{appendix:likhoodratio_bigformula}
		\forall t, t' \in \mathcal{X}_d\, : \, L_d(t, t') = \sum_{\beta\in\mathcal{X}_d} \sqrt{ss'}^{|\beta|-1} f^{(\lambda s)}_{d, \beta}(t) f^{(\lambda s')}_{d, \beta}(t').
	\end{equation}
	Furthermore, the $f^{(\mu)}_{d, \beta}$ have the following properties:
	\begin{itemize}
		\item Constant value for $\beta = \bullet$, the trivial tree:
		\begin{equation}\label{appendix:trivialtree}
			\forall \mu >0, \,  \forall t \in \mathcal{X}_d: \quad f^{(\mu)}_{d, \bullet}(t)  = 1.
		\end{equation}
		\item \textbf{First orthogonality property} (w.r.t the Galton--Watson-measure):
		\begin{equation}\label{appendix:orthogonaliy_1_GW}
			\forall \mu >0, \, \forall \beta, \beta' \in \mathcal{X}_d: \quad \sum_{t \in \mathcal{X}_d} \mathrm{GW}_d^{(\mu)}(t) \, f^{(\mu)}_{d, \beta}(t) \, f^{(\mu)}_{d, \beta'}(t) \, =\, \indicator{\beta = \beta'}.
		\end{equation}
		\item \textbf{Second orthogonality property} (summing over $\beta$):
		\begin{equation}\label{appendix:orthogonal_2_betasum}
			\forall \mu >0, \, \forall t, t' \in \mathcal{X}_d: \quad \sum_{\beta \in \mathcal{X}_d} f^{(\mu)}_{d, \beta}(t) \, f^{(\mu)}_{d, \beta}(t') \,=\, \frac{\indicator{t = t'}}{\mathrm{GW}_d^{(\mu)}(t)}.
		\end{equation}
	\end{itemize}
\end{theorem}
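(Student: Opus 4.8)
\textbf{Proof plan for Theorem~\ref{thm_diagonalization}.}
The plan is to construct the functions $f^{(\mu)}_{d,\beta}$ recursively in $d$ and prove the three stated properties together with the diagonalization formula~\eqref{appendix:likhoodratio_bigformula} by a simultaneous induction on the depth $d$. For $d=0$, the only tree is $\bullet$, so I set $f^{(\mu)}_{0,\bullet}(\bullet)=1$; then $L_0(\bullet,\bullet)=1$ and all four claims are trivial. For the inductive step, I will take the recursive likelihood formula of Proposition~\ref{prop_likhood_recursive} as the starting point: writing $t=(N_\tau)_{\tau\in\mathcal{X}_{d-1}}$ and $t'=(N'_{\tau'})_{\tau'\in\mathcal{X}_{d-1}}$ in the subtree-tuple representation, substitute the inductive hypothesis $L_{d-1}(\tau,\tau')=\sum_{\gamma\in\mathcal{X}_{d-1}}\sqrt{ss'}^{|\gamma|-1}f^{(\lambda s)}_{d-1,\gamma}(\tau)f^{(\lambda s')}_{d-1,\gamma}(\tau')$ into the product over children, and expand. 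The key combinatorial observation is that the sum over injective maps $\sigma:[k]\hookrightarrow[c]$, $\sigma':[k]\hookrightarrow[c']$ weighted by $\psi(k,c,c')$ and by the product of child-factors factorizes, after Poissonization, into a product of independent contributions — one per tree $\gamma\in\mathcal{X}_{d-1}$. This is exactly the mechanism already used in the ``martingale property'' computation in Section~\ref{section_likhood_ratio_properties}, where the injective-map sum collapses into $\prod$ over a Poisson-thinning.

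Concretely, I expect the computation to show that $L_d(t,t')$ can be written as $\sum_{(m_\gamma)_\gamma}\prod_\gamma(\text{something depending only on }N_\gamma, N'_\gamma, m_\gamma)\cdot\sqrt{ss'}^{\sum_\gamma m_\gamma|\gamma|}$ times a global factor; re-indexing the outer sum over the multi-index $(m_\gamma)_{\gamma\in\mathcal{X}_{d-1}}\in\mathcal{X}_d$ (using the identification $\mathcal{X}_d=\N^{\mathcal{X}_{d-1}}$ from Definition~\ref{def:unlabeled_rooted_trees}) and noting that $|\beta|=1+\sum_\gamma m_\gamma|\gamma|$ for $\beta=(m_\gamma)_\gamma$, this reorganizes into $\sum_{\beta\in\mathcal{X}_d}\sqrt{ss'}^{|\beta|-1}f^{(\lambda s)}_{d,\beta}(t)f^{(\lambda s')}_{d,\beta}(t')$ where $f^{(\mu)}_{d,\beta}$ is \emph{defined} by this identification: a product over $\gamma$ of a ``Charlier-type'' polynomial in $N_\gamma$ of degree $m_\gamma$, built from $\mu$, $\mathrm{GW}^{(\mu)}_{d-1}(\gamma)$, and the inductively given $f^{(\mu)}_{d-1,\gamma}$. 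Crucially the construction must be symmetric in $s\leftrightarrow s'$ and must not depend on $ss'$ itself (only the factor $\sqrt{ss'}^{|\beta|-1}$ carries the correlation) — this is what forces the precise normalization of the polynomials, and checking that the $\lambda s$- and $\lambda s'$-halves genuinely separate is the delicate bookkeeping point. The value $f^{(\mu)}_{d,\bullet}\equiv 1$ in~\eqref{appendix:trivialtree} then comes from the $\beta=\bullet$ (i.e.\ all $m_\gamma=0$) term and an easy induction.

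For the first orthogonality property~\eqref{appendix:orthogonaliy_1_GW}, I will compute $\mathbb{E}_{t\sim\mathrm{GW}^{(\mu)}_d}[f^{(\mu)}_{d,\beta}(t)f^{(\mu)}_{d,\beta'}(t)]$ by conditioning on the root degree $c$ and then on the multiset of child subtrees, which under $\mathrm{GW}^{(\mu)}_d$ is a Poisson process on $\mathcal{X}_{d-1}$ with intensity $\mu\,\mathrm{GW}^{(\mu)}_{d-1}(\cdot)$; the numbers $N_\gamma$ are then independent Poisson variables, and the product structure of $f$ reduces the expectation to $\prod_\gamma\mathbb{E}[\text{poly}_{m_\gamma}(N_\gamma)\,\text{poly}_{m'_\gamma}(N_\gamma)]$. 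Here I invoke the classical orthogonality of Charlier polynomials with respect to the Poisson weight, plus the inductive hypothesis~\eqref{appendix:orthogonaliy_1_GW} at depth $d-1$ to handle the ``inside'' $f^{(\mu)}_{d-1,\gamma}$-factors, giving $\indicator{m_\gamma=m'_\gamma}$ for each $\gamma$, hence $\indicator{\beta=\beta'}$. The second orthogonality property~\eqref{appendix:orthogonaliy_2_betasum} is then essentially linear algebra: \eqref{appendix:orthogonaliy_1_GW} says the matrix $\big(\sqrt{\mathrm{GW}^{(\mu)}_d(t)}\,f^{(\mu)}_{d,\beta}(t)\big)_{t,\beta}$ has orthonormal columns over the countable index set; I will argue it is in fact (after the natural completeness check) an orthogonal operator so that the rows are orthonormal too, which is precisely~\eqref{appendix:orthogonaliy_2_betasum}. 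The technical care needed for a countably-infinite matrix — showing the relevant sums converge absolutely and that the family is complete in $\ell^2(\mathrm{GW}^{(\mu)}_d)$ — will be where I am most careful; but since each $\mathcal{X}^{(n)}_d$ is finite and the polynomials have controlled degree, a truncation/grading argument by node number $n$ makes this rigorous.

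\textbf{Main obstacle.} The hard part will be the inductive step establishing~\eqref{appendix:likhoodratio_bigformula}: carrying out the expansion of the product of $L_{d-1}$-factors under the $\psi$-weighted injective-map sum and verifying that, after Poissonization, it cleanly factorizes over $\gamma\in\mathcal{X}_{d-1}$ into an $s$-part and an $s'$-part with exactly the right powers of $\sqrt{ss'}$ and exactly the right polynomial normalization — i.e.\ identifying $f^{(\mu)}_{d,\beta}$ explicitly and checking it is independent of $ss'$ and symmetric. Everything else (the orthogonality relations, the value at $\bullet$, the $\ell^2$-completeness) follows by relatively standard Charlier-polynomial and Hilbert-space arguments once the correct definition of $f^{(\mu)}_{d,\beta}$ is in hand.
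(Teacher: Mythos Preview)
Your high-level plan (induction on $d$, base case trivial, Charlier-type structure, orthogonality via the Poisson law of the $N_\tau$'s) is sound and is indeed the skeleton of the paper's argument. But the specific mechanism you propose for the inductive step differs from the paper's and, as you have sketched it, contains a real gap.

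You want to substitute the induction hypothesis for $L_{d-1}$ into the recursion of Proposition~\ref{prop_likhood_recursive} and then collect by $\beta=(m_\gamma)_\gamma\in\mathcal{X}_d$. The obstruction is the prefactor $\psi(k,c,c')=e^{\lambda ss'}\lambda^{-k}(k!)^{-1}(1-s')^{c-k}(1-s)^{c'-k}$: it does \emph{not} split into a function of $(\lambda s,t)$ times a function of $(\lambda s',t')$. Concretely, $(1-s')^{c-k}$ mixes $s'$ with the root degree $c$ of $t$, and $e^{\lambda ss'}$ depends on $\lambda,s,s'$ jointly. If you carry out your grouping, the $\beta=\bullet$ contribution (i.e.\ $k=0$) equals $e^{\lambda ss'}(1-s')^{c}(1-s)^{c'}$, which would have to coincide with $f^{(\lambda s)}_{d,\bullet}(t)f^{(\lambda s')}_{d,\bullet}(t')=1$; it does not. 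So the factorization you expect does not emerge from a naive term-by-term collection --- it only appears after a further non-obvious resummation that your outline does not supply. Your ``Poissonization'' remark is also off: in Proposition~\ref{prop_likhood_recursive} the numbers $c,c'$ are deterministic, so there is no Poisson average to invoke at that point. Relatedly, your expected shape for $f^{(\mu)}_{d,\beta}$ as ``a product over $\gamma$ of Charlier-type polynomials in $N_\gamma$'' is too optimistic; the actual functions (see the paper's formula~\eqref{def_f}) are genuinely multivariate --- a coefficient extraction $[x^\beta]$ from a generating function of the form $e^{-\sqrt{\mu}\,x_\bullet}\prod_{\tau}\bigl(1+\sum_{\gamma}x_\gamma f^{(\mu)}_{d-1,\gamma}(\tau)/\sqrt{\mu}\bigr)^{N_\tau}$ --- and the different $\gamma$'s are entangled through the children $t_{[1]},\dots,t_{[c]}$.

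The paper circumvents all of this by \emph{not} starting from Proposition~\ref{prop_likhood_recursive}. Instead it computes $\Pcorrdplus(t,t')$ directly via Fourier inversion of the joint characteristic function of $\bigl((N_\tau)_\tau,(N'_{\tau'})_{\tau'}\bigr)$, using the explicit Poisson decomposition $N_\tau=\Delta_\tau+\sum_{\tau'}\Gamma_{\tau,\tau'}$ from Definition~\ref{def:new_galton_watson_trees}. Independence of the $\Delta,\Delta',\Gamma$ makes the log-characteristic function additive; after a short rearrangement it splits exactly into an $A^{(\lambda s)}_d(u)$-piece, an $A^{(\lambda s')}_d(u')$-piece, and one cross term carrying $\Pcorrd(\tau,\tau')$. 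The induction hypothesis is then plugged into that single cross term, the exponential is expanded, and the multinomial (Cauchy-product) formula yields the clean separation you were hoping for --- with $f^{(\mu)}_{d+1,\gamma}$ read off from the resulting generating function. For the two orthogonality relations the paper defers to \cite{ganassali2022statistical}; your plan for those (Charlier orthogonality under the Poisson law of the $N_\tau$, then the ``orthonormal columns $\Rightarrow$ orthonormal rows'' argument in $\ell^2(\mathrm{GW}^{(\mu)}_d)$) is close in spirit to what is done there and would go through once the correct definition of $f^{(\mu)}_{d,\beta}$ is in hand.
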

\begin{proof}
	We prove the theorem by induction on the depth $d$.
	
	\underline{$d = 0$:}\\	
	Since $\mathcal{X}_0 = \{\bullet\}$ only contains the trivial tree, we have $\Pcorr_0(\bullet, \bullet) = 1$, $\Pind_0(\bullet, \bullet) = 1$ and therefore $L_0(\bullet, \bullet) = 1$. Setting $f_{0, \bullet}^{(\mu)}(\bullet) := 1$, which naturally yields the property of constant value for the trivial tree, we have
	\[
	\sum_{\beta\in\mathcal{X}_0} \sqrt{ss'}^{|\beta|-1} f^{(\lambda s)}_{0, \beta}(\bullet) f^{(\lambda s')}_{0, \beta}(\bullet) = \sqrt{ss'}^{\, 0} f^{(\lambda s)}_{0, \bullet}(\bullet) f^{(\lambda s')}_{d, \bullet}(\bullet) = 1 = L_0(\bullet, \bullet).
	\]
	This proves the likelihood ratio diagonalization formula (\ref{appendix:likhoodratio_bigformula}) for $d = 0$. Further note that the first orthogonality property(\ref{appendix:orthogonaliy_1_GW}) is trivial since $\mathcal{X}_0$ only contains one element.
	
	\underline{$d \mapsto d+1$:}\\
	Suppose that the likelihood ratio diagonalization formula (\ref{appendix:likhoodratio_bigformula}) holds true for $d \in \N_{\geq 0}$. Fixing a pair of trees $(t, t') \in \mathcal{X}_{d+1}$, our goal is to prove that same equation at $d+1$, which can be rewritten as
	\begin{equation}\label{goal_induction}
		\Pcorrdplus(t, t') = \underbrace{\mathrm{GW}^{(\lambda s)}_{d+1}(t) \; \mathrm{GW}^{(\lambda s')}_{d+1}(t') }_{=\Pinddplus(t, t')}\, \times \, \underbrace{\sum_{\beta\in\mathcal{X}_{d+1}} \sqrt{ss'}^{|\beta|-1} f^{(\lambda s)}_{d+1, \beta}(t) f^{(\lambda s')}_{d+1, \beta}(t')}_{ =L_{d+1}(t, t')}
	\end{equation}
	where $f_{d+1, \beta}^\mu$ will be defined in the process. \\
	In the following, we identify the fixed tree pair $(t, t')$ with their subtree tuples as in Definition \ref{def:unlabeled_rooted_trees}. This encoding shall be denoted as $t = (N_\tau)_{\tau \in \mathcal{X}_d}$ and $t' = (N'_\tau)_{\tau \in \mathcal{X}_d}$. Since $t$ and $t'$ are finite trees, the maps $\tau \mapsto N_\tau$ and $\tau \mapsto N'_\tau$ are non-zero only for finitely many different subtrees $\tau$. Let $p, p' \in \N$ denote the respective numbers of non-zero coordinates.
	
	Our strategy for showing (\ref{goal_induction}) is to Fourier-invert the characteristic function of $\Pcorrdplus$. In the following, we denote by $(r, r') \sim \Pcorrdplus$ a random pair of correlated trees which we also identify with their subtrees tuples $ (r, r') = (M_\beta, M'_\beta)_{\beta \in \mathcal{X}_d} $. Note that $t, t', N_\tau, N'_\tau$ are fixed while $r, r', M_\tau, M'_\tau$ denote random variables. We define the notation $u\cdot r := \sum_{\tau \in \mathcal{X}_d} u_\tau M_\beta$ for any sequence $u \in \R^{\mathcal{X}_d}$. The characteristic function of $\Pcorrdplus$ can then be written as
	\begin{equation*}
		\widehat\Pcorrdplus :  \R^{\mathcal{X}_d} \times \R^{\mathcal{X}_d} \to \C, \, (u, u') \mapsto \Ecorrdplus\Big[e^{i u \cdot r + i u' \cdot r'}\Big].
	\end{equation*}
	This function is well-defined since for almost all $\tau \in \mathcal{X}_d$, we have $(M_\tau, M'_\tau) = (0,0)$, which avoids the problem of infinite sums in the exponent of $e$. One can easily recover $ \Pcorrdplus $ from $ \widehat\Pcorrdplus $ using the following Fourier-inversion type formula:
	\begin{align}\label{fourier_inversion}
		\Pcorrdplus(t, t') &= \Ecorrdplus\Big[\indicator{r = t} \times \indicator{ r' = t'}\Big] \nonumber \\
		&= \Ecorrdplus\Big[\indicator{\{\forall \tau: \, N_\tau - M_\tau = 0\}} \times \indicator{ \{\forall \tau:\, N'_\tau - M'_\tau = 0\}}\Big] \nonumber \\
		&=  \Ecorrdplus\Big[ \frac{1}{(2\pi)^{p}}\int_{[0, 2\pi]^p} e^{i u \cdot (r-t)} du \, \times\,  \frac{1}{(2\pi)^{p'}} \int_{[0, 2\pi]^{p'}} e^{ i u' \cdot (r' - t')}du' \Big]\nonumber \\
		&= \frac{1}{(2\pi)^{p + p'}}\int_{[0, 2\pi]^p} \int_{[0, 2\pi]^{p'}} e^{-iu\cdot t- iu'\cdot t'} \Pcorrdplus(u, u') du \, du'  
	\end{align}
	Here we have used that the number of non-zero coordinates in $t = (N_\tau)_{\tau \in \mathcal{X}_d}$ and $t'$ equals $p$ and $p'$ respectively. The last equality uses Fubini's Theorem together with the fact $|e^{i u \cdot (r-t)} \Pcorrdplus(u, u')| \leq 1$.
	
	To manipulate the term $\widehat\Pcorrdplus(u, u')$ first recall  that $N_\tau = \Delta_\tau + \sum_{\tau' \in \mathcal{X}_d} \Gamma_{\tau, \tau'}$ and $N'_{\tau'} = \Delta'_{\tau'} + \sum_{\tau \in \mathcal{X}_d} \Gamma_{\tau, \tau'}$ with
	\begin{align*}
		\Delta_\tau \sim \mathrm{Poi}\Big(\lambda s (1-s') \,\mathrm{GW}_d^{(\lambda s)}(\tau)\Big)&, \Delta'_{\tau'} \sim \mathrm{Poi}\Big(\lambda s' (1-s) \, \mathrm{GW}_d^{(\lambda s')}(\tau')\Big), \\ 
		\text{ and } \Gamma_{\tau, \tau'} &\sim \mathrm{Poi}\Big(\lambda s s' \, \Pcorrd(\tau, \tau')\Big)
	\end{align*}
	which are all independent poisson variables. This lets us compute
	\begin{align*}
		\Expe\Big[&e^{i u \cdot r + i u' \cdot r'}\Big] = \Expe\Big[e^{i u \cdot \Delta + \sum_{\tau'} u \cdot \Gamma_{:, \tau'} } e^{i u' \cdot \Delta' + \sum_{\tau} u' \cdot \Gamma_{\tau, :} }\Big]\\
		&= \prod_{\tau \in \mathcal{X}_d} \Expe\Big[e^{i u_\tau \Delta_\tau }\Big]  \prod_{\tau \in \mathcal{X}_d} \Expe\Big[e^{i u'_\tau \Delta'_\tau }\Big] \prod_{\tau, \tau'  \in \mathcal{X}_d} \Expe\Big[e^{i (u_\tau + u'_{\tau'}) \,\Gamma_{\tau, \tau'} }\Big]\\
		&= \exp\Big[\sum_{\tau} \lambda s (1-s') \mathrm{GW}_d^{(\lambda s)}(\tau) (e^{iu_\tau} - 1)\\ 
		& \quad \quad \quad \quad+ \sum_{\tau'} \lambda s' (1-s) \mathrm{GW}_d^{(\lambda s')}(\tau') 	\Big(e^{iu'_{\tau'}} - 1\Big) \\
		& \quad \quad \quad \quad + \sum_{\tau, \tau' } \lambda s s' \,  \Pcorrd(\tau, \tau') \Big(e^{i(u_\tau+u'_{\tau'})} - 1\Big) \Big]\\
		&\overset{(\star)}{=} \exp\Big[\lambda s \sum_\tau \mathrm{GW}_d^{(\lambda s)}(\tau) (e^{iu_\tau} - 1)\Big]  \exp\Big[\lambda s' \sum_\tau \mathrm{GW}_d^{(\lambda s')}(\tau)(e^{iu'_\tau} - 1)\Big] \\
		& \quad \quad \quad \quad \times \exp\Big[\lambda s s' \sum_{\tau, \tau' }  \Pcorrd(\tau, \tau')(e^{iu_\tau} - 1) \Big( e^{iu'_{\tau'}} - 1\Big) \Big]\\
		&= \exp\Big[\lambda s \sum_\tau \mathrm{GW}_d^{(\lambda s)}(\tau)(e^{iu_\tau} - 1)\Big]  \exp\Big[\lambda s' (\tau) \sum_\tau \mathrm{GW}_d^{(\lambda s')} (e^{iu'_\tau} - 1)\Big] \\
		& \quad \quad \quad \quad \times \sum_{m=0}^{\infty} \frac{(\lambda s s')^m}{m!} \Big(\sum_{\tau, \tau' }  \Pcorrd(\tau, \tau')(e^{iu_\tau} - 1) \Big( e^{iu'_{\tau'}} - 1\Big) \Big)^m
	\end{align*}
	To obtain the equality $ (\star) $, one needs to combine all terms with factor $ \lambda s s' $ by using that the marginals of $ \Pcorrd $ fulfill
	\[
	\sum_{\tau, \tau'} \Pcorrd(\tau, \tau') = \sum_\tau \mathrm{GW}_d^{(\lambda s')}(\tau) = \sum_{\tau'} \mathrm{GW}_d^{(\lambda s')}(\tau').
	\]
	To shorten the following explanations, we introduce some notation: Define \[
	A^{(\mu)}_d(u) := \exp\Big[\mu \sum_\tau \mathrm{GW}_d^{(\mu)}(\tau)(e^{iu_\tau} - 1)\Big] \]
	and \[
	B_d(u, u') := \sum_{\tau, \tau' }  \Pcorrd(\tau, \tau')(e^{iu_\tau} - 1) \Big( e^{iu'_{\tau'}} - 1\Big),
	\]
    which lets us write 
	\begin{align*}
		\widehat\Pcorrdplus(u, u') = \Expe\Big[e^{i u \cdot r + i u' \cdot r'}\Big] &= A^{(\lambda s)}_d(u) \, A^{(\lambda s')}_d(u') \sum_{m=0}^{\infty} \frac{(\lambda s s')^m}{m!} \Big(B_d(u, u')\Big)^m\\
		&= \sum_{m=0}^{\infty} \frac{(\lambda s s')^m}{m!} \Big(\sqrt[m]{A^{(\lambda s)}_d(u) \, A^{(\lambda s')}_d(u') }B_d(u, u')\Big)^m.
	\end{align*}
	Note that since $|e^{iu_\tau} - 1| \leq 2$, one can bound $|B_d(u, u')| \leq 4$ as well as 
	\begin{equation}\label{bound_on_Aus}
		\Big| A^{(\lambda s)}_d(u) \, A^{(\lambda s')}_d(u') \Big| \leq e^{2\lambda s + 2\lambda s'} \implies \Big| \sqrt[m]{A^{(\lambda s)}_d(u) \, A^{(\lambda s')}_d(u')} \Big| \leq e^{2\lambda s + 2\lambda s'}.
	\end{equation}
	Consequently we have the following bound which is uniform in $(u, u')$ and summable in $m$:
	\[
	\Big| \frac{(\lambda s s')^m}{m!} \Big(\sqrt[m]{A^{(\lambda s)}_d(u) \, A^{(\lambda s')}_d(u') }B_d(u, u')\Big)^m \Big| \leq \frac{1}{m!}\Big(4 \, ss' \, \lambda e^{2\lambda s + 2\lambda s'}\Big)^m 
	\]
	which lets us apply Fubini's Theorem to obtain
	\begin{equation}
	\begin{aligned}
		&\Pcorrdplus(t, t') \overset{\text{by }(\ref{fourier_inversion})}{=\joinrel=} \frac{1}{(2\pi)^{p + p'}} \int_{[0, 2\pi]^{p+p'}}e^{-iu\cdot t- iu'\cdot t'} \\
		&\qquad \qquad \qquad \qquad \qquad \times \sum_{m=0}^{\infty} \frac{(\lambda s s')^m}{m!} \Big(\sqrt[m]{A^{(\lambda s)}_d(u) \, A^{(\lambda s')}_d(u') }B_d(u, u')\Big)^m du \, du' \\
		&= \sum_{m=0}^{\infty} \frac{(\lambda s s')^m}{m!} \frac{1}{(2\pi)^{p + p'}}\\
		&\qquad \qquad \times \int_{[0, 2\pi]^p} \int_{[0, 2\pi]^{p'}} e^{-iu\cdot t- iu'\cdot t'} A^{(\lambda s)}_d(u) \, A^{(\lambda s')}_d(u') \, \Big( B_d(u, u')\Big)^m du \, du'.
	\end{aligned}\label{Pdcorr_as_series}
	\end{equation}
	As a next step, we develop the $m^\text{th}$ power of the term $B_d(u, u')$. For this, we use the induction hypothesis at depth $d$ under the same form as in $(\ref{goal_induction})$, namely
	\begin{align}\label{ind_hyp_d}
		\Pcorrd(\tau, \tau') &= \mathrm{GW}^{(\lambda s)}_d(\tau) \mathrm{GW}^{(\lambda s')}_d(\tau') \sum_{\beta\in\mathcal{X}_d} \sqrt{ss'}^{|\beta|-1} f^{(\lambda s)}_{d, \beta}(\tau) f^{(\lambda s')}_{d, \beta}(\tau')\nonumber\\
		&=   \sum_{\beta\in\mathcal{X}_d} \sqrt{ss'}^{|\beta|-1} \underbrace{f^{(\lambda s)}_{d, \beta}(\tau)\, \mathrm{GW}^{(\lambda s)}_d(\tau)}_{=: g^{(\lambda s)}_{d, \beta}(\tau)} \underbrace{f^{(\lambda s')}_{d, \beta}(\tau')\, \mathrm{GW}^{(\lambda s')}_d(\tau')}_{= g^{(\lambda s')}_{d, \beta}(\tau')}.
	\end{align}
	In the Fourier-inversion formula (\ref{fourier_inversion}), only finitely many $u_\tau$ and $u'_\tau$ are non-zero. Consequently, the sum $\sum _{\tau, \tau'} $ appearing in $B_d(u, u') $  is finite and we may exchange this summation with the series $\sum_{\beta \in \mathcal{X}_d}$ from equation (\ref{ind_hyp_d}). This lets us compute 
	\begin{align*}
		B_d(u, u') & = \sum_{\beta \in \mathcal{X}_d} \sum_{\tau, \tau' \in \mathcal{X}_d}  \sqrt{ss'}^{|\beta| - 1} g^{(\lambda s)}_{d, \beta}(\tau) \,(e^{iu_\tau} - 1) \, g^{(\lambda s')}_{d, \beta}(\tau') \Big( e^{iu'_{\tau'}} - 1\Big)\\
		&= \sum_{\beta \in \mathcal{X}_d}  \sqrt{ss'}^{|\beta| - 1} \underbrace{\sum_{\tau \in \mathcal{X}_d} g^{(\lambda s)}_{d, \beta}(\tau) \, (e^{iu_\tau} - 1)}_{=: h^{(\lambda s)}_{d, \beta}(u) } \, \underbrace{\sum_{\tau' \in \mathcal{X}_d} g^{(\lambda s')}_{d, \beta}(\tau') \Big( e^{iu'_{\tau'}} - 1\Big)}_{=h^{(\lambda s')}_{d, \beta}(u')}.
	\end{align*} 
	For the $h^{(\mu)}_{d, \beta}(u)$ defined through the last equation, we have the following bound:
	\begin{align}\label{bound_on_ghat}
		\Big| h^{(\mu)}_{d, \beta}(u) \Big| &\leq  \sum_{\tau \in \mathcal{X}_d} \Big| f^{(\mu)}_{d, \beta}(\tau)\Big| \,\mathrm{GW}_d^{(\mu)}(\tau) \, \Big|e^{iu_\tau} - 1\Big|  \nonumber\\
		&\leq 2 \Expe_{\tau \sim \mathrm{GW}^{(\mu)}_d}\Big[\Big| f^{(\mu)}_{d, \beta}(\tau)\Big|\Big] \overset{\text{Jensen}}{\leq} 2 \sqrt{\Expe_{\tau \sim \mathrm{GW}^{(\mu)}_d}\Big[\Big(f^{(\mu)}_{d, \beta}(\tau)\Big)^2 \Big]} \overset{\text{by } (\ref{appendix:orthogonaliy_1_GW})}{=}2.
	\end{align}
	Taking the expression we have obtained for $B_d(u, u')$ to the power $m$ and using the multinomial cauchy product formula, we obtain
	\begin{align*}
		&\Big(B_d(u, u')\Big)^m\\
		& = \sum_{\substack{\gamma = (\gamma_\beta)_{\beta \in \mathcal{X}_d}, \,\\ \sum_\beta \gamma_\beta = m}} \frac{m!}{\prod_\beta \gamma_\beta!}  \prod_{\beta \in \mathcal{X}_d } \Big(\sqrt{ss'}^{\, -1 + |\beta| }\Big)^{\gamma_\beta} h^{(\lambda s)}_{d, \beta}(u) ^{\gamma_\beta}  h^{(\lambda s')}_{d, \beta}(u')^{\gamma_\beta}\\
		& = m! \sum_{\substack{\gamma = (\gamma_\beta)_{\beta \in \mathcal{X}_d},\, \\ \sum_\beta \gamma_\beta = m}} \sqrt{ss'}^{\, -\sum_\beta \gamma_\beta + \sum_{\beta} \gamma_\beta |\beta| } \prod_{\beta \in \mathcal{X}_d } \frac{1}{\sqrt{\gamma_\beta !}}  h^{(\lambda s)}_{d, \beta}(u){\gamma_\beta} \frac{1}{\sqrt{ \gamma_\beta !}} h^{(\lambda s')}_{d, \beta}(u')^{\gamma_\beta}\\
		& = \frac{m!}{(\lambda ss')^m} \sum_{\substack{\gamma = (\gamma_\beta)_{\beta \in \mathcal{X}_d}:\\ \sum_\beta \gamma_\beta = m}} \sqrt{ss'}^{\sum_{\beta} \gamma_\beta |\beta| } \prod_{\beta \in \mathcal{X}_d } \frac{\sqrt{\lambda s}^{\, \gamma_\beta} }{\sqrt{ \gamma_\beta !}} h^{(\lambda s)}_{d, \beta}(u)^{\gamma_\beta}\,  \frac{\sqrt{\lambda s'}^{\, \gamma_\beta} }{\sqrt{ \gamma_\beta !}} h^{(\lambda s')}_{d, \beta}(u')^{\gamma_\beta}.
	\end{align*}
	Inserting this into equation (\ref{Pdcorr_as_series}) yields
	\begin{align}\label{development_sumgamma}
		&\frac{(\lambda ss')^m}{m! (2\pi)^{p + p'}} \int_{[0, 2\pi]^p} \int_{[0, 2\pi]^{p'}}e^{-iu\cdot t- iu'\cdot t'} A^{(\lambda s)}_d(u) A^{(\lambda s')}_d(u') \, \Big( B_d(u, u')\Big)^m du du' \nonumber\\
		&= \sum_{\substack{\gamma = (\gamma_\beta)_{\beta \in \mathcal{X}_d},\, \\ \sum_\beta \gamma_\beta = m}} \sqrt{ss'}^{\sum_{\beta} \gamma_\beta |\beta| } \frac{1}{(2\pi)^{p}}  \int_{[0, 2\pi]^{p}}  e^{- iu\cdot t} A^{(\lambda s)}_d(u)  \prod_{\beta \in \mathcal{X}_d} \frac{\sqrt{\lambda s}^{\, \gamma_\beta} }{\sqrt{ \gamma_\beta !}} h^{(\lambda s)}_{d, \beta}(u)^{\gamma_\beta}du\nonumber\\
		&\quad \quad \quad \times \frac{1}{(2\pi)^{p'}}  \int_{[0, 2\pi]^{p'}} e^{- iu'\cdot t'} A^{(\lambda s')}_d(u')  \prod_{\beta \in \mathcal{X}_d} \frac{\sqrt{\lambda s'}^{\, \gamma_\beta} }{\sqrt{ \gamma_\beta !}} \Big(h^{(\lambda s')}_{d, \beta}(u')\Big)^{\gamma_\beta} du'
	\end{align}
	where swapping the summation in $(\gamma_\beta)_{\beta \in \mathcal{X}_d}$ with the integrals in $(u, u')$ is enabled by Fubini's Theorem and the fact that
	\begin{align*}
		&\sum_{\substack{\gamma = (\gamma_\beta)_{\beta \in \mathcal{X}_d}:\\ \sum_\beta \gamma_\beta = m}} \Big| A^{(\lambda s)}_d(u) \, A^{(\lambda s')}_d(u') \sqrt{ss'}^{\sum_{\beta} \gamma_\beta |\beta| } \prod_{\beta \in \mathcal{X}_d } \frac{\sqrt{\lambda ^2 ss'}^{\, \gamma_\beta} }{ \gamma_\beta !} h^{(\lambda s)}_{d, \beta}(u)^{\gamma_\beta} \, h^{(\lambda s')}_{d, \beta}(u')^{\gamma_\beta}   \Big| \\
		&  \overset{(\ref{bound_on_Aus})}{\leq} \frac{e^{2\lambda (s+s')}}{m!}  \sum_{\substack{\gamma = (\gamma_\beta)_{\beta \in \mathcal{X}_d}:\\ \sum_\beta \gamma_\beta = m}} \frac{m!}{\prod_{\beta \in \mathcal{X}_d} \gamma_\beta!} \prod_{\beta \in \mathcal{X}_d } \sqrt{ss'}^{ \, |\beta|}\Big| \sqrt{\lambda s}\,h^{(\lambda s)}_{d, \beta}(u) \, \sqrt{\lambda s'} h^{(\lambda s')}_{d, \beta}(u') \Big|^{\gamma_\beta} \\
		&  \quad = e^{2\lambda (s+s')}  \frac{\Big(\lambda \sqrt{ss'}\Big)^m}{m!} \Big(\sum_{\beta \in \mathcal{X}_d}  \sqrt{s s'}^{\, |\beta|}\,\Big|h^{(\lambda s)}_{d, \beta}(u)\Big| \,\Big| h^{(\lambda s')}_{d, \beta}(u') \Big|  \Big)^m\\
		&   \overset{(\ref{bound_on_ghat})}{\leq} 4 e^{2\lambda (s+s')}  \frac{\Big(\lambda \sqrt{ss'}\Big)^m}{m!} \sqrt{s s'}^{\, m} \Big(\sum_{n = 1}^\infty \Big|\{\beta \in \mathcal{X}_d \suchthat |\beta| = n\}\Big|  \sqrt{s s'}^{\, n - 1} \Big)^m.
	\end{align*}
	The final expression is finite due to Otter's Proposition \ref{Otter} and $s s' < 1$. It is furthermore uniform in $u, u'$ and integrating over $(u, u') \in [0, 2\pi]^p\times [0, 2\pi]^{p'}$ yields a finite term, which validates the application of Fubini's Theorem in (\ref{development_sumgamma}). 
	
	Now define
	\begin{align*}
		g^{(\mu)}_{d+1, \gamma}(u) &:= \frac{1}{(2\pi)^{p}}  \int_{[0, 2\pi]^{p}} e^{- iu\cdot t} A^{(\mu)}_d(u)  \prod_{\beta \in \mathcal{X}_d} \frac{\sqrt{\mu}^{\, \gamma_\beta} }{\sqrt{ \gamma_\beta !}} \Big(h^{(\mu)}_{d, \beta}(u)\Big)^{\gamma_\beta} du
	\end{align*}
	and combine equation (\ref{development_sumgamma}) with the expression for $\Pcorrdplus(t, t')$ from (\ref{Pdcorr_as_series}) to obtain
	\begin{align*}
		\Pcorrdplus(t, t') &= \sum_{m=1}^{\infty} \sum_{\substack{\gamma = (\gamma_\beta)_{\beta \in \mathcal{X}_d},\, \\ \sum_\beta \gamma_\beta = m}} \sqrt{ss'}^{\sum_{\beta} \gamma_\beta |\beta| } g^{(\lambda s)}_{d+1, \gamma}(u) g^{(\lambda s')}_{d+1, \gamma}(u)
	\end{align*}
	Note that every tuple $(\gamma_\beta)_{\beta \in \mathcal{X}_d}$ with $\sum_\beta \gamma_\beta =m$ corresponds to a unique tree $\gamma \in \mathcal{X}_{d+1} $ with root degree $m$ by the subtree tuple identification. Every such tree has size $|\gamma| = 1 + \sum_{\beta \in \mathcal{X}_d} \gamma_\beta |\beta|$. This observation leads to
	\begin{equation}\label{almost_done}
		\Pcorrdplus(t, t') = \sum_{\gamma \in \mathcal{X}_{d+1}} \sqrt{ss'}^{|\gamma| - 1} g^{(\lambda s)}_{d+1, \gamma}(u) g^{(\lambda s')}_{d+1, \gamma}(u)
	\end{equation}
	This decomposition of $\Pcorrdplus(t, t')$ is already similar to our objective in $ (\ref{goal_induction}) $. Hence, all that remains to do is convert $g^{(\mu)}_{d+1, \gamma}(u)$ into a product between $ \mathrm{GW}_{d+1}^{(\mu)}(\tau) $ and $ f^{(\mu)}_{d+1, \gamma}(\tau) $ where the function $f$ has the desired orthogonality properties. To do so, we start by developing the definition of $ g^{(\mu)}_{d+1, \gamma}(u) $:
	\begin{align*}
		&g^{(\mu)}_{d+1, \gamma}(u) = \frac{1}{(2\pi)^{p}}  \int_{[0, 2\pi]^{p}} e^{- iu\cdot t} A^{(\mu)}_d(u)  \prod_{\beta} \frac{\sqrt{\mu}^{\, \gamma_\beta} }{\sqrt{ \gamma_\beta !}} \Big(h^{(\mu}_{d, \beta}(u)\Big)^{\gamma_\beta} du\\
		&= \frac{1}{\prod_\beta \sqrt{\gamma_\beta!}} \frac{1}{(2\pi)^{p}} \\
        &\quad \times \int_{[0, 2\pi]^{p}} e^{- iu\cdot t} e^{\mu \sum_\tau \mathrm{GW}_d^{\mu}(\tau) \big(e^{iu_\tau} - 1\big)} \prod_{\beta} \Big( \sqrt{\mu} \sum_{\tau \in \mathcal{X}_d} g^{(\lambda s)}_{d, \beta}(\tau) \, (e^{iu_\tau} - 1) \Big) ^{\gamma_\beta} du\\
		&= \frac{1}{\prod_\beta \sqrt{\gamma_\beta!}} \frac{1}{(2\pi)^{p}}  \\
		& \quad \times \int_{[0, 2\pi]^{p}} e^{- iu\cdot t} e^{\mu \sum_\tau \mathrm{GW}_d^{\mu}(\tau) \big(e^{iu_\tau} - 1\big)} \Big(\prod_{\beta } \gamma_\beta! \Big) [x^\gamma] e^{\sum_{\beta } x_\beta \sqrt{\mu} \sum_{\tau } g^{(\mu)}_{d, \beta}(\tau) \, (e^{iu_\tau} - 1) } du.
	\end{align*}
	In the last step, we have introduced some new notation: $x = (x_\beta)_{\beta \in \mathcal{X}_d}$ denotes a tuple of formal variables, $\gamma = (\gamma_\beta)_{\beta \in \mathcal{X}_d}$ is the tuple integers counting the subtrees of $\gamma$. We set $x^\gamma := \prod_\beta x_\beta^{\gamma_\beta}$ and let $[x^\gamma] S$ denote the coefficient of $x^\gamma$ in a formal series $S$ which contains $x$ as a formal variable. Keeping in mind that we are implicitly dealing with a series, we continue computing $g^{(\mu)}_{d+1, \gamma}(u)$ which is equal to
	\begin{align*}
		&\sqrt{\prod_{\beta } \gamma_\beta !} \; \frac{1}{(2\pi)^{p}} \\
        &\qquad \times \int_{[0, 2\pi]^{p}} e^{- iu\cdot t} [x^\gamma] e^{\sum_\tau \big(e^{iu_\tau} - 1\big) \mu \mathrm{GW}_d^{\mu}(\tau) } \, e^{ \sum_{\tau } \big(e^{iu_\tau} - 1\big)  \sum_{\beta } x_\beta \sqrt{\mu} g^{(\mu)}_{d, \beta}(\tau) } du\\
		&= \sqrt{\prod_{\beta } \gamma_\beta !} \; \frac{1}{(2\pi)^{p}}   \int_{[0, 2\pi]^{p}} e^{- iu\cdot t} [x^\gamma]  e^{ \sum_{\tau } \big(e^{iu_\tau} - 1\big) \Big[\mu \mathrm{GW}_d^{(\mu)}(\tau) + \sum_{\beta } x_\beta \sqrt{\mu} g^{(\mu)}_{d, \beta}(\tau) \Big]  } du\\
		&= \sqrt{\prod_{\beta } \gamma_\beta !} \;   [x^\gamma] \frac{1}{(2\pi)^{p}}   \int_{[0, 2\pi]^{p}} e^{- iu\cdot t}e^{ \sum_{\tau } \big(e^{iu_\tau} - 1\big) \Big[\mu \mathrm{GW}_d^{(\mu)}(\tau) + \sum_{\beta } x_\beta \sqrt{\mu} g^{(\mu)}_{d, \beta}(\tau) \Big]  } du.
	\end{align*}
	Note that in the last step, we have swapped the integral with $[x^\gamma]$. This requires the use of Fubini's Theorem to interchange the exponential series with the integral, which is possible due to
	\begin{align*}
		&\sum_{k=0}^{\infty}\frac{1}{k!} \Big(  \int_{[0, 2\pi]^p} \Big| -i u\cdot t + \mu \sum_{\tau } \Big(e^{iu_\tau} - 1\Big) \Big[\mu \mathrm{GW}_d^{(\mu)}(\tau) + \sum_{\beta } x_\beta \sqrt{\mu} g^{(\mu)}_{d, \beta}(\tau) \Big]   \Big| du \Big)^k  \\
		& \leq\sum_{k=0}^{\infty} \frac{1}{k!} \Bigg(  \prod_{\tau \in \mathcal{X}_d}\int_{0}^{2\pi} \vert u_\tau N_\tau \vert \,  du_\tau + 2 \mu \,  \Big\vert  \mu \mathrm{GW}_d^{(\mu)}(\tau) + \sum_{\beta } x_\beta \sqrt{\mu} g^{(\mu)}_{d, \beta}(\tau) \Big\vert  \Bigg)^k \\
		&= \sum_{k=0}^{\infty} \frac{1}{k!} (\text{constant independent of } k)^k < \infty.
	\end{align*}
	Picking up our computations from above once more and introducing the random variable $Z \sim \mathrm{Poi}\Big(\mu \mathrm{GW}_d^{(\mu)}(\tau) + \sum_{\beta } x_\beta \sqrt{\mu} g^{(\mu)}_{d, \beta}(\tau) \Big)$, we obtain
	\begin{align*}
		&g^{(\mu)}_{d+1, \gamma}(u) \\
		&= \sqrt{\prod_{\beta } \gamma_\beta !} \;  [x^\gamma] \prod_{\tau}  \frac{1}{(2\pi)^{p}}   \int_0^{2\pi} e^{- iu_\tau N_\tau} \Expe \Big[e^{i u_\tau Z} \Big] du_\tau\\
		&\overset{\text{(\ref{fourier_inversion})}}{=\joinrel=} \sqrt{\prod_{\beta } \gamma_\beta !} \;  [x^\gamma] \prod_{\tau} \Prob_{ Z \sim \mathrm{Poi}\Big(\mu \mathrm{GW}_d^{(\mu)}(\tau) + \sum_{\beta } x_\beta \sqrt{\mu} f^{(\mu)}_{d, \beta}(\tau)\, \mathrm{GW}^{(\mu)}_d(\tau) \Big) }\Big(Z = N_\tau \Big)\\
		&= \sqrt{\prod_{\beta } \gamma_\beta !} \; [x^\gamma] \prod_{\tau} e^{- \mu \mathrm{GW}_d^{(\mu)}(\tau) -  \sum_{\beta } x_\beta \sqrt{\mu} g^{(\mu)}_{d, \beta}(\tau)}\\
        &\qquad \qquad \times\frac{\Big(\mu \mathrm{GW}_d^{(\mu)}(\tau) + \sum_{\beta } x_\beta \sqrt{\mu} f^{(\mu)}_{d, \beta}(\tau)\, \mathrm{GW}^{(\mu)}_d(\tau) \Big)^{N_\tau}}{N_\tau!}\\
		& = e^{-\mu} \sqrt{\prod_{\beta } \gamma_\beta !} [x^\gamma] e^{-  \sum_{\beta, \tau} x_\beta \sqrt{\mu} g^{(\mu)}_{d, \beta}(\tau)} \prod_{\tau}  \frac{\Big( \mu \mathrm{GW}_d^{(\mu)}(\tau)\Big)^{N_\tau}}{N_\tau!} \\
        & \qquad \qquad \qquad \qquad \qquad \qquad \qquad \qquad \qquad \times\Big(1 + \sum_{\beta } x_\beta \frac{1}{\sqrt{\mu}} f^{(\mu)}_{d, \beta}(\tau)\Big)^{N_\tau}\\
		& = \underbrace{e^{-\sum_\tau \mu \mathrm{GW}_d^{(\mu)}(\tau)} \prod_\tau \frac{\Big(\mu \mathrm{GW}_d^{(\mu)}(\tau)\Big)^{N_\tau} }{N_\tau!}}_{= \mathrm{GW}_{d+1}^{(\mu)}(\tau)} \sqrt{\prod_{\beta } \gamma_\beta !} [x^\gamma] e^{-  \sqrt{\mu} \sum_{\beta, \tau} x_\beta g^{(\mu)}_{d, \beta}(\tau)} \\
        & \qquad \qquad \qquad \qquad \qquad \qquad \qquad \qquad \qquad  \times  \prod_{\tau}    \Big(1 + \sum_{\beta } \frac{x_\beta}{ \sqrt{\mu}} f^{(\mu)}_{d, \beta}(\tau)\Big)^{N_\tau}.
	\end{align*}
	Finally, we define our candidate for the depth-$(d+1)$ family of functions $f^{(\mu)}_{d+1, \gamma}$ as
	\begin{equation}\label{def_f}
		f^{(\mu)}_{d+1, \gamma}(\tau) := \sqrt{\prod_{\beta } \gamma_\beta !} [x^\gamma] e^{-  \sqrt{\mu} \sum_{\beta, \tau} x_\beta g^{(\mu)}_{d, \beta}(\tau)} \prod_{\tau}    \Big(1 + \sum_{\beta } \frac{x_\beta}{ \sqrt{\mu}} f^{(\mu)}_{d, \beta}(\tau)\Big)^{N_\tau}.
	\end{equation}
	Inserting $g^{(\mu)}_{d+1, \gamma}(u) =  \mathrm{GW}_{d+1}^{(\mu)}(\tau) f^{(\mu)}_{d+1, \gamma}(\tau)$ into (\ref{almost_done}) yields the objective defined in (\ref{goal_induction}). What remains to show are the properties (\ref{appendix:trivialtree}), (\ref{appendix:orthogonaliy_1_GW}) and $(\ref{appendix:orthogonal_2_betasum})$.
	
	Since $f^{(\mu)}_{d+1, \gamma}$ only depends on the parameter $\mu$ instead of the asymmetry parameters $s, s'$, its definition in (\ref{def_f}) coincides with the one presented in the original paper \cite{ganassali2022statistical}. Hence, we refer the reader to steps 2.1 and 2.2 in the proof of this refererence's Theorem 4 to conclude our proof of Theorem \ref{thm_diagonalization}.
\end{proof}

\subsection{Proofs of three implications from Theorem \ref{thm_equivalences}}\label{appendix:thm_equivalences}
We recall the theorem about one-sided testability equivalences before proving the three implications from that theorem, which were deferred from Section \ref{section:tree_correl_testing}.
\begin{theorem}
	In the tree correlation testing problem (\ref{testing_problem_2}), the following are equivalent:
	\begin{enumerate}[label=(\alph*)]
		\item There exist one-sided tests $\mathcal{T}_d$ to decide $\Pindd$ vs. $ \Pcorrd $,
		\item There is a sequence of thresholds $\theta_d \to \infty $ such that $\Pindd(L_d > \theta_d) \to 0$ and $\liminf_d \Pcorrd(L_d > \theta_d) > 0$,
		\item The $ \Pind $--martingale $(L_d)_d$ is not uniformly integrable,
		\item $\lambda s s' > 1$ and $KL_\infty = \infty$,
		\item $\lambda s s' > 1$ and $\,  \Pcorr \Big(\liminf_d (\lambda s s')^{-d} \log(L_d) \geq C\Big) \geq 1 - \Prob(\mathrm{Ext}(\mathrm{GW}^{(\lambda s s')})) $ for a constant $C>0$ only depending on $(\lambda, s, s')$ and $\mathrm{Ext}(\mathrm{GW}^{(\lambda s s')})$ denoting the extinction event of the Galton--Watson process with offspring $\mathrm{Poi}(\lambda ss')$.
	\end{enumerate}
\end{theorem}

We start with (c) $\Rightarrow$ (b) before showing (\ref{implication_to_be_shown}) which is part of (b) $\Rightarrow$ (a).   Finally, we move to the most involved implication, (d) $\Rightarrow$ (e).
\begin{proof}[Proof of (c) $\Rightarrow$ (b) in Theorem \ref{thm_equivalences}]
	Recall that we assume $\Eind[L_\infty] < 1$ and want to construct a sequence $\theta_d \to \infty$ fulfilling  
	\begin{equation*}
		\liminf_{d\to\infty} \Pcorr(L_d > \theta_d)  > 0 \quad \text{and} \quad \Pind(L_d > \theta_d) \xrightarrow[d\to\infty]{} 0.
	\end{equation*}
	Let $\theta$ be any continuity point of $ L_\infty \mathrel{\#} \Pind$, the pushforward measure of $\Pind$ under $L_\infty$. We write
	\begin{align*}
		\Eind[L_d] = \Eind[L_d \indicator{L_d > \theta}] + \Eind[L_d \indicator{L_d \leq \theta}] = \Pcorr(L_d > \theta) + \Eind[L_d \indicator{L_d \leq \theta}].
	\end{align*}
	Since $\Eind[L_d] = 1$, this implies
	\begin{equation}\label{eq_liminfEps}
		\liminf_{d\to \infty}  \Pcorr(L_d > \theta) = 1 - \limsup_{d\to \infty} \Eind[L_d \indicator{L_d \leq \theta}] \geq 1 - \Eind[L_\infty]
	\end{equation}
	where in the last step we have used dominated convergence on $L_d \indicator{L_d \leq \theta} \leq \theta \in L^1$ as well as $L_d \indicator{L_d \leq \theta} \to L_\infty \indicator{L_\infty \leq \theta} \leq L_\infty$ since $\theta$ is a continuity point. Setting $\varepsilon := 1 - \Eind[L_\infty] >0$, we can now construct a sequence $\theta_d$ with the desired properties. 
	
	For continuity points $\theta \geq 0$ of $L_\infty \mathrel{\#} \Pind$ define 
	\begin{equation*}
		d(\theta) := \inf\Big\{ k \in \N \suchthat \forall d \geq k: \Pcorr(L_d > \theta) > \varepsilon/2\Big\}.
	\end{equation*}
	Thanks to equation \ref{eq_liminfEps}, this is always a finite quantity which increases with $\theta$ since $\Pcorr(L_d > \bullet)$ is decreasing in $d$. We set 
	\begin{equation*}
		\hat{\theta}_d := \sup\Big\{\theta \in \R_{\geq 0} \suchthat \theta \text{ is a continuity point of }  L_\infty \mathrel{\#} \Pind \text{ and } d(\theta) \leq d \Big\}.
	\end{equation*}
	This yields a sequence of finite real numbers since $\{\theta: d(\theta) \leq d\}$ is bounded: Otherwise there would be $\theta_\ell \to \infty$ with $d(\theta_\ell) \leq d$ meaning that $\Pcorr(L_d > \theta_\ell) > \varepsilon/2$. This is a contradiction to $L_d \in L^1(\Pind)$ because 
	\begin{equation}\label{eq_conv0ForAToInf}
		\Pcorr(L_d > \theta_\ell) = \Eind[L_d \indicator{L_d > \theta_\ell}] \xrightarrow{\ell \to \infty} 0.
	\end{equation}
	The same equation (\ref{eq_conv0ForAToInf}) also implies $\hat{\theta}_d \xrightarrow{d \rightarrow \infty} \infty$ because arbitrarily high values of $d(\theta)$ allow for arbitrarily high values of $\theta$. Finally, choose the sequence $(\theta_d)_d$ in a way that all $\theta_d$ are continuity points of $L_\infty \mathrel{\#} \Pind$ satisfying $\theta_d \in (\hat{\theta}_d-1, \hat{\theta}_d)$. By this construction, we have
	\begin{equation*}
		\liminf_{d\to\infty} \Pcorr(L_d > \theta_d) \geq \frac{\varepsilon}{2} > 0 \quad \text{and} \quad \Pind(L_d > \theta_d) \leq \frac{\Eind[L_d]}{\theta_d} = \frac{1}{\theta_d} \xrightarrow{d\to\infty} 0.
	\end{equation*}
	This is exactly the desired statement.
\end{proof}

\begin{proof}[Proof of equation (\ref{implication_to_be_shown}) from (b) $\Rightarrow$ (a) in Theorem \ref{thm_diagonalization}]
	Recall that our objective is to show
	\begin{equation*}
		(\hat\theta_d)_d \text{ is bounded}\implies \Eind[L_\infty] < 1
	\end{equation*}
	under the assumption $\Pind(\mathcal{T}_d = 1) =: \alpha_d \to 0$.
	
	If $(\hat\theta_d)_d$ is bounded, there exists $M>0$ and a converging subsequence $(\hat{\theta}_{d_k})_k$ such that $\hat{\theta}_{d_k} <M$ for all $k$. Since $L_k$ converges to $L_\infty$ in distribution, the Portmanteau theorem applied to the open set $(M, \infty)$ yields
	\[
	\Pind(L_\infty > M) \leq \liminf_{k\to\infty} \Pind(L_k > M)  \leq \Pind(L_k > \hat{\theta}_{d_k}) \xrightarrow{k\to\infty} 0
	\]
	where the second inequality follows from $ \alpha_d \to 0 $. Consequently, $\Pind(L_\infty \leq M) = 1$.\\
	Define the convex function $g(x) = (M - x)_+$ and note that $g(L_d)$ is a submartingale with values in $[0,M]$. For this reason, $d \mapsto \Eind[g(L_d)]$ is non-decreasing and we can apply the dominated convergence theorem to compute its limit:
	\begin{equation}\label{first}
		\lim_{d\to\infty} \Eind[g(L_d)] = \Eind[g(L_\infty)] = \Eind[(M- L_\infty)_+] = M - \Eind[L_\infty].
	\end{equation}
	Despite the limit random variable being bounded by $M$, we claim that there exists an index $d'$ for which $\Pind(L_{d'} > M) > 0$. This is true since assuming the opposite, i.e., $\Pind(L_d \leq M) = 1$ for all $d$, implies a contradiction to $ \liminf_d \Pcorr(\mathcal{T}_d = 1) > 0$:
	\begin{align*}
		\Pcorr(\mathcal{T}_d = 1) = \Eind[L_d \indicator{\mathcal{T}_d = 1}] \leq M \,  \Pind(\mathcal{T}_d = 1) \leq M\, \alpha_d \xrightarrow{d\to \infty} 0.
	\end{align*}
	At index $d'$, we therefore have
	\begin{equation}\label{second}
		\Eind[g(L_{d'})] = \Eind[(M - L_{d'})_+] > \Eind[M - L_{d'}] = M-1.
	\end{equation}
	Combining (\ref{first}) and (\ref{second}) then yields
	\begin{equation*}
		M-1 < \Eind[g(L_{d'})] \leq M - \Eind[L_\infty] \quad \implies \quad \Eind[L_\infty] < 1
	\end{equation*} 
	which we wanted to show.
\end{proof}

\begin{proof}[Proof of (d) $\Rightarrow$ (e) in Theorem \ref{thm_equivalences}.]
	This implication is the most intricate and necessitates introducing additional notation. For a rooted tree $\tau$, we recall that its number of nodes up to depth $d$ is denoted $|\tau_d|$.  We further write $\mathcal{G}_d(\tau) := \tau_d \setminus \tau_{d-1}$ for the set of nodes in $\tau$ which have exactly depth $d$. Hence, $|\mathcal{G}_d(\tau_*)|$ is the number of $\tau_*$'s nodes at generation $d$. Furthermore, we will use Proposition \ref{prop_explicit_likhood} and the notation introduced for that statement.
	
	Let $(t, t') \sim \Pcorr$ and let $t_*$ be their intersection tree from the sampling process. Call $\sigma_*$ and $\sigma'_*$ the injections of $t_*$ into $t$ and $t'$ respectively.
	The intersection tree's marginal distribution is $t_* \sim \mathrm{GW}_{\lambda s s'}$, and we let $\mu := \lambda s s'$ denote the mean of its offspring distribution. The following lemma describes the asymptotic behaviour of $|\mathcal{G}_d(t_*)|$ compared to $\mu^d$:
	\begin{lemma}\label{lemma_GW_generationmartingale}
		For $t_* \sim \mathrm{GW}_{\lambda s s'}$ set $w_d := |\mathcal{G}_d(t_*)|/\mu^d$. Then, $(w_d)_d$ is a positive martingale whose almost sure limit $w$ satisfies
		\[ 
		\Prob(w > 0) = \Prob(\mathrm{Ext}^c(t_*)).
		\]
	\end{lemma}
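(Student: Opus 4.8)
The statement is the classical result on the martingale of generation sizes for a supercritical Galton--Watson process, and the plan is to reduce it to standard branching process theory rather than reprove that theory from scratch. First I would verify that $(w_d)_d$ is a martingale with respect to the natural filtration $\mathcal{F}_d = \sigma(\mathcal{G}_0(t_*), \dots, \mathcal{G}_d(t_*))$. Conditionally on $\mathcal{F}_d$, the next generation $|\mathcal{G}_{d+1}(t_*)|$ is a sum of $|\mathcal{G}_d(t_*)|$ i.i.d.\ $\mathrm{Poi}(\mu)$ offspring counts, so $\Expe[|\mathcal{G}_{d+1}(t_*)| \mid \mathcal{F}_d] = \mu \, |\mathcal{G}_d(t_*)|$, which upon dividing by $\mu^{d+1}$ gives $\Expe[w_{d+1}\mid \mathcal{F}_d] = w_d$. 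Positivity is immediate since generation sizes are nonnegative, and $\Expe[w_d] = \Expe[w_0] = 1$ so the martingale is $L^1$-bounded; the Martingale Convergence Theorem then yields an almost sure limit $w \geq 0$ with $\Expe[w] \leq 1$.

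Next I would identify $\{w > 0\}$ with the survival event. One direction is trivial: on the extinction event $\mathrm{Ext}(t_*)$ there is some finite generation beyond which $|\mathcal{G}_d(t_*)| = 0$, hence $w_d = 0$ eventually and $w = 0$; therefore $\{w > 0\} \subseteq \mathrm{Ext}^c(t_*)$ and $\Prob(w>0) \leq \Prob(\mathrm{Ext}^c(t_*))$. For the reverse inequality, the standard route is the Kesten--Stigum theorem: since the offspring law $\mathrm{Poi}(\mu)$ with $\mu = \lambda ss' > 1$ is supercritical and has finite $x\log x$ moment (indeed all moments finite), the limit $w$ is non-degenerate with $\Expe[w] = 1$, and $\{w = 0\}$ coincides almost surely with $\mathrm{Ext}(t_*)$. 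I would cite this — for instance via \cite{abraham2015introduction} or a standard branching process reference already invoked in the paper — rather than reproving it. An alternative self-contained argument, if one wants to avoid quoting Kesten--Stigum, is to let $p_0 := \Prob(w = 0)$, condition on the first generation having $k$ children (each launching an independent copy of the tree, whose generation-size martingale limit is $\mu^{-1} w^{(i)}$ relative to the shifted indexing), note $w = \mu^{-1}\sum_{i=1}^k w^{(i)}$, so $\{w=0\}$ on $\{$first generation $= k\}$ forces all $w^{(i)} = 0$; this gives $p_0 = \sum_k \pi_\mu(k) p_0^k = e^{-\mu(1-p_0)}$, i.e.\ $p_0$ is a fixed point in $[0,1]$ of the generating function of $\mathrm{Poi}(\mu)$. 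Since $\mu > 1$ there are exactly two such fixed points, the extinction probability $q := \Prob(\mathrm{Ext}(t_*))$ and $1$; the already-established inclusion $\mathrm{Ext}(t_*) \subseteq \{w=0\}$ rules out $p_0 = 1$ (as $q < 1$ and $\{w = 0\} \supseteq \mathrm{Ext}(t_*)$ would otherwise be forced, but one still needs to exclude $p_0=1$ — this is precisely where a non-degeneracy input such as finite variance of the offspring, giving $\Expe[w]=1>0$ hence $p_0 < 1$, is needed), so $p_0 = q$ and $\Prob(w > 0) = 1 - q = \Prob(\mathrm{Ext}^c(t_*))$.

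The main obstacle is the non-degeneracy of $w$, i.e.\ ruling out the degenerate case $w \equiv 0$ almost surely (equivalently $p_0 = 1$): the martingale property and convergence alone do not prevent the limit from vanishing identically, and in general $L^1$-boundedness does not give $\Expe[w] = 1$. Here this is resolved cheaply because $\mathrm{Poi}(\mu)$ has finite variance $\mu$, so the $w_d$ are bounded in $L^2$ — indeed $\Expe[w_d^2]$ satisfies a linear recursion that stays bounded since $\mu > 1$ — hence uniformly integrable, hence $w_d \to w$ in $L^1$ and $\Expe[w] = \lim_d \Expe[w_d] = 1 > 0$, forcing $p_0 < 1$. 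I would fold this $L^2$ computation into the proof as the one genuinely non-formal step, and phrase everything else as bookkeeping around the two-fixed-point structure of the Poisson generating function, which the paper has already exploited in the ``martingale property'' discussion of Section~\ref{section_likhood_ratio_properties}.
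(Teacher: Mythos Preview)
Your proposal is correct and follows essentially the same route as the paper: verify the martingale property, get $w$ as the almost sure limit, establish the easy inclusion $\mathrm{Ext}(t_*) \subset \{w=0\}$, then use the branching recursion $w \overset{(d)}{=} \mu^{-1}\sum_{i=1}^{W_\emptyset} w^{(i)}$ to show that $p_0 = \Prob(w=0)$ is a fixed point of the $\mathrm{Poi}(\mu)$ generating function, hence $p_0 \in \{q,1\}$ where $q = \Prob(\mathrm{Ext}(t_*))$.

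The one substantive difference is that you explicitly confront the non-degeneracy step — ruling out $p_0 = 1$ — and propose a clean fix via $L^2$-boundedness of $(w_d)_d$ (finite Poisson variance gives a bounded recursion for $\Expe[w_d^2]$, hence uniform integrability, hence $\Expe[w]=1$ and $p_0<1$). The paper's proof simply asserts that $p_0 \in \{1,q\}$ ``leaves us with'' the desired inequality, without saying why the value $1$ is excluded; your $L^2$ argument (or the Kesten--Stigum citation you also mention) is exactly what is needed to close that gap, and it is good that you flagged it as the only non-formal step.
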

	For a proof of this standard result on branching processes, we refer the reader to Lemma 2.13 in \cite{abraham2015introduction}.
	
	As a consequence of Lemma \ref{lemma_GW_generationmartingale}, the martingale $ \mu^{-d}\, |\mathcal{G}_d(t_*)| $ converges almost surely towards a random variable $w$.
	
	From now on, we condition on the event $\mathrm{Ext}^c(t_*) = \{t_* \textit{ survives}\}$. This is possible since $\mu > 1$ implies $\Prob(\mathrm{Ext}^c(t_*)) > 0$. By Lemma \ref{lemma_GW_generationmartingale}, we can therefore assume $w > 0$.
	
	Using Proposition \ref{prop_explicit_likhood}, we have that for all $d, k \in \N$,
	\begin{align*}
		L_{d+k}(t, t') &\geq \underbrace{\prod_{i \in (t_*)_{d-1}} \psi\Big(c_{t_*}(i), \, c_t(\sigma_*(i)), \, c_{t'}(\sigma'_*(i))\Big)}_{=: A_d} \, \underbrace{ \prod_{j \in \mathcal{G}_d(t_*)} L_k(t_{[\sigma_*(j)]}, t'_{[\sigma'_*(j)]})}_{=:B_{d,k}}.
	\end{align*}
	We individually examine $A_d$ and  $B_{d,k}$, starting with the latter. Taking the logarithm, we get
	\begin{align*}
		\log(B_{d,k}) = |\mathcal{G}_d(t_*)| \, \underbrace{\frac{1}{|\mathcal{G}_d(t_*)|} \sum_{i \in \mathcal{G}_d(t_*)} \log\Big(L_k(t_{[\sigma_*(i)]}, t'_{[\sigma'_*(i)]})\Big)}_{\quad =: a_d}.
	\end{align*}
	For $i \neq j$, the random variables $(t_{[\sigma_*(i)]}, t'_{[\sigma'_*(i)]})$ and $(t_{[\sigma_*(j)]}, t'_{[\sigma'_*(j)]})$ are independent since they represent disjoint branches of a GW-tree. They both follow the same law $\Pcorr$, and the logarithms of their likelihoods are in $L^1(\Pcorr)$. Therefore, the law of large numbers is applicable, yielding $a_d \to \Ecorr[\log L_k]$ almost surely. In other words, on an event of probability 1, there exists a sequence $0 < \varepsilon_d \to 0$ such that for all $d$,
	\begin{equation}\label{eq_epsm_majoration}
		\varepsilon_d \geq |\Ecorr[\log(L_k)] - a_d |, \quad \text{that is} \quad a_d \geq \Ecorr[\log(L_k)]| - \varepsilon_d.
	\end{equation}
	Similarly, since $|\mathcal{G}_d(t_*)| \mu^{-d} \to w$ almost surely, there is a sequence $0 < \delta_d \to 0$ with
	\begin{equation}\label{eq_deltm_majoration}
		\delta_d \geq \Big \vert |\mathcal{G}_d(t_*)| \mu^{-d} - w \Big \vert, \quad \text{implying} \quad |\mathcal{G}_d(t_*)| \geq w \mu^d - \delta_d \mu^d \quad \text{for all }d.
	\end{equation} 
	Noting that $ \Ecorr[\log(L_k)] = \mathrm{KL}_k $, we can combine (\ref{eq_epsm_majoration}) with (\ref{eq_deltm_majoration}) to obtain
	\begin{align*}
		B_{d,k} = \exp\left\{|\mathcal{G}_d(t_*)| a_d \right\} &\geq \exp\left\{|\mathcal{G}_d(t_*)| \left(\mathrm{KL}_k - \varepsilon_d\right)\right\} \\ &\geq \exp\left\{\left( w \, \mu^d - \delta_d \mu^d \right) \left(\mathrm{KL}_k - \varepsilon_d\right)\right\}\\
		&\geq \exp\left\{\mathrm{KL}_k \, w \, \mu^d - \mu^d \left( \delta_d \mathrm{KL}_k + \varepsilon_d w \right) \right\}.
	\end{align*}
	Let $n$ be large enough so that $\mathrm{KL}_k > 1$ and fix an element $\omega$ of the probability space $\Omega$ for which all sequences converge. Choosing $d$ sufficiently large, one has $\delta_d(\omega) \leq w(\omega)/4$ and $\varepsilon_d(\omega) \leq \mathrm{KL}_k/4$, yielding
	\[ 
	\mu^d\delta_d(\omega)  \mathrm{KL}_k + \mu^d \varepsilon_d(\omega) w(\omega)  \leq \text{\sfrac{1}{2} } \mathrm{KL}_k w(\omega) \mu^d.
	\]
	This implies that on an event $\mathcal{B}$ of probability 1,
	\begin{equation}\label{B_dk_lower_bound}
		B_{d,k} \geq \exp\left\{\text{\sfrac{1}{2} } \mathrm{KL}_k w \mu^d\right\}\quad \text{ for } d \text{ large enough. }
	\end{equation}
	
Turning to $A_d$ now, we start by introducing the notation
\[
F_i := \log\left( \psi\left(c_{t_*}(i), \, c_t(\sigma_*(i)), \, c_{t'}(\sigma'_*(i))\right)  \right).
\]
As before, we take the logarithm to rewrite
\begin{align*}
	\log(A_d) &= \sum_{i \in (t_*)_{d-1}} F_i = \sum_{g = 0}^{d-1} \, \sum_{i \in \mathcal{G}_g(t_*)} F_i.
\end{align*}
One has the following lemma:
\begin{lemma}\label{lem_Fi_distr}
	The random variables $(F_i)_{i \in (t_*)_d}$ are identically distributed with finite mean and finite variance.
\end{lemma}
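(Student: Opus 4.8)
## Proof Proposal for Lemma \ref{lem_Fi_distr}

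The plan is to analyze the random variables $F_i = \log\psi\big(c_{t_*}(i), c_t(\sigma_*(i)), c_{t'}(\sigma'_*(i))\big)$ by unpacking the explicit form of $\psi$ and exploiting the branching structure of the correlated Galton--Watson trees. Recall from Proposition \ref{prop_likhood_recursive} that
\[
\psi(k,c,c') = \exp(\lambda s s')\,\frac{1}{\lambda^k\,k!}\,(1-s')^{c-k}(1-s)^{c'-k},
\]
so that
\[
F_i = \lambda s s' - k\log\lambda - \log(k!) + (c-k)\log(1-s') + (c'-k)\log(1-s),
\]
where $k = c_{t_*}(i)$ is the number of children of $i$ in the intersection tree, $c = c_t(\sigma_*(i))$ and $c' = c_{t'}(\sigma'_*(i))$ are the corresponding degrees in $t$ and $t'$. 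First I would establish the \textbf{identical distribution}: by the recursive definition of $\Pcorr$ (Definition \ref{def:corr_GW_trees}), for every node $i$ of $t_*$, the triple $(k, c-k, c'-k)$ consists of three independent Poisson variables with parameters $\lambda s s'$, $\lambda s(1-s')$, $\lambda s'(1-s)$ respectively, independent of everything else in the tree and independent of the depth of $i$. Hence the law of $F_i$ is the same for all $i \in (t_*)_d$, being a fixed measurable function of this universal triple.

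Next I would verify the \textbf{finite variance}. Writing $K \sim \mathrm{Poi}(\lambda s s')$, $C^+ \sim \mathrm{Poi}(\lambda s(1-s'))$, $C'^+ \sim \mathrm{Poi}(\lambda s'(1-s))$ independent, we have $c - k = C^+$, $c' - k = C'^+$, so
\[
F_i = \lambda s s' - K\log\lambda - \log(K!) + C^+\log(1-s') + C'^+\log(1-s).
\]
The linear terms $C^+\log(1-s')$ and $C'^+\log(1-s)$ have finite variance since Poisson variables have moments of all orders. The term $K\log\lambda$ likewise. The only nontrivial term is $\log(K!)$; I would bound $\log(K!) \le K\log K \le K^2$ for $K \ge 1$ (and $\log(0!) = 0$), and since $\Expe[K^4] < \infty$ for a Poisson variable, $\Expe[(\log K!)^2] < \infty$. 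Thus $\Var(F_i) < \infty$, and it is strictly positive unless the summands are almost surely constant, which they are not (e.g. $K$ is non-degenerate). So $0 < \Var(F_i) < \infty$.

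The \textbf{main obstacle} is showing $\Expe[F_i] < 0$ strictly. The clean way is to relate $\Expe[F_i]$ to a Kullback--Leibler divergence. Observe that $\psi(k,c,c')$ is exactly the ratio
\[
\psi(k,c,c') = \frac{\pi_{\lambda s s'}(k)\,\pi_{\lambda s(1-s')}(c-k)\,\pi_{\lambda s'(1-s)}(c'-k)}{\pi_{\lambda s}(c)\,\pi_{\lambda s'}(c')}\cdot\frac{c!\,c'!}{(c-k)!\,(c'-k)!},
\]
as computed in the proof of Proposition \ref{prop_likhood_recursive}. Then $F_i = \log\psi$ under $\Pcorr$ is the log-likelihood-ratio increment at a single vertex, and by the martingale property of $(L_d)_d$ together with Jensen's inequality (strictly, since $x\mapsto x\log x$ is strictly convex and the relevant ratio is genuinely non-degenerate), one gets $\Ecorr[F_i] = -\mathrm{KL}$ of the offspring law of $\Pind$ from that of $\Pcorr$ restricted to one generation, which is strictly negative because the one-step correlated and independent offspring laws differ (since $s s' > 0$, indeed $s, s' \in (0,1]$). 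Concretely, $\Ecorr[F_i] = \Einddminus[\,\cdot\,]$-type computation giving $-\mathrm{KL}_1 < 0$; I would write $\Expe[F_i] = -D$ for $D := \mathrm{KL}(\text{one-generation } \Pcorr \,\Vert\, \text{one-generation } \Pind) > 0$, using that the two measures are not equal and the KL divergence of distinct measures is strictly positive. This negativity is precisely what is needed downstream: in the analysis of $\log(A_d) = \sum_{g=0}^{d-1}\sum_{i\in\mathcal{G}_g(t_*)} F_i$ it ensures that $A_d$ decays at a controlled exponential rate $\sim \mu^d$ that is dominated by the growth $\tfrac12\mathrm{KL}_k\, w\,\mu^d$ of $B_{d,k}$ once $k$ is large (using $\mathrm{KL}_\infty = \infty$), yielding the lower bound on $\liminf_d(\lambda s s')^{-d}\log L_d$ claimed in (e).
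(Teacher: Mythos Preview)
Your treatment of identical distribution and finite variance is correct and essentially the same as the paper's: both express $F$ as a fixed function of the independent triple $K\sim\mathrm{Poi}(\lambda ss')$, $C^+\sim\mathrm{Poi}(\lambda s(1-s'))$, $C'^+\sim\mathrm{Poi}(\lambda s'(1-s))$ and bound the moments of $\log(K!)$ using that Poisson variables have all moments.

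Your argument for $\Expe[F_i]<0$, however, has a genuine gap. First a small slip: the factorial factor in your expression for $\psi$ is inverted; from the definition one actually gets
\[
\psi(k,c,c')=\frac{\pi_{\lambda ss'}(k)\,\pi_{\lambda s(1-s')}(c-k)\,\pi_{\lambda s'(1-s)}(c'-k)}{\pi_{\lambda s}(c)\,\pi_{\lambda s'}(c')}\cdot\frac{(c-k)!\,(c'-k)!}{c!\,c'!}.
\]
More importantly, $\log\psi$ is \emph{not} the one-generation log-likelihood ratio. The depth-$1$ recursion gives
\[
L_1(t,t')=\sum_{k=0}^{c\wedge c'}\psi(k,c,c')\,\frac{c!}{(c-k)!}\,\frac{c'!}{(c'-k)!},
\]
a sum over $k$ weighted by the number of injections, whereas $F_i=\log\psi(k,c,c')$ fixes the single true $k$ and carries the extra factor $\frac{(c-k)!(c'-k)!}{c!c'!}\le 1$. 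Hence $\Ecorr[F_i]$ is neither $-\mathrm{KL}_1$ nor minus any KL divergence between the one-step offspring laws, and your strict-negativity claim does not follow. A quick sanity check shows the identification fails with the wrong sign: for $s=s'=1$ one has $C^+=C'^+=0$ a.s.\ and $F=\lambda-K\log\lambda-\log(K!)=-\log\pi_\lambda(K)$, so $\Ecorr[F]=H(\mathrm{Poi}(\lambda))>0$. The paper does not attempt any KL interpretation here; it bounds $\Expe[F]$ by direct elementary estimates, using $\log(1-x)\le -x$ and $\Expe[\log K!]\ge 0$ to obtain $\Expe[F]\le\lambda ss'\bigl(s+s'-1-\log\lambda\bigr)$, and then invokes $\lambda ss'>1$.
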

\begin{proof}[Proof of Lemma \ref{lem_Fi_distr}]
	We start by noting that the law of $F_i$ does not depend on $i$ since
	\begin{align*}
		F_i =&\log \Big( \psi\big(c_{t_*}(i), \, c_t(\sigma_*(i)), \, c_{t'}(\sigma'_*(i))\big)  \Big)\\
		=&\, \log\left(e^{\lambda s s'} \, \lambda^{-c_{t_*}(i)} \frac{1}{c_{t_*}(i)!} \left(1 - s'\right)^{c_t(\sigma_*(i)) - c_{t_*}(i)}\, \left(1 - s\right)^{c_{t'}(\sigma'_*(i)) - c_{t_*}(i)}\right)\\
		=&\, \lambda s s' -c_{t_*}(i) \log(\lambda) - \log\left(c_{t_*}(i)!\right) + \left(c_t(\sigma_*(i)) - c_{t_*}(i)\right)\log(1 - s') \\
		& \quad \quad + \left(c_{t'}(\sigma'_*(i))-c_{t_*}(i)\right)\log(1 - s).\\
		\overset{\mathrm{law}}{=}& \,  \lambda s s' -\left[X \log(\lambda) + \log\left(X!\right) \right]+ Y \log(1 - s') + Y'\log(1 - s) =: F,
	\end{align*}
	where $X \sim \mathrm{Poi}(\lambda s s')$, $Y \sim \mathrm{Poi}(\lambda s (1-s'))$ and $Y' \sim \mathrm{Poi}(\lambda s' (1-s))$ are independent Poisson variables. This follows from the definition of $(\lambda, s, s')$-augmentations.
	
	To show that $F$ has finite second moment, define the independent random variables
	\[
	W:= \lambda s s' + Y \log(1 - s') + Y'\log(1 - s),
	\qquad
	Z:= X \log(\lambda) + \log(X!),
	\]
	so that $F=W - Z$. Since $Y$ and $Y'$ are Poisson variables, one has $\Expe[W^2]<\infty$ and it remains to check that $\Expe[Z^2]<\infty$. For $x \in \N_{\geq 1}$, one has
	\(
		\log(x!) \leq \log(x^x) = x \log x \leq x^2
	\)
	which leads to 
	\begin{align*}
		\Expe[Z^2] &= \log(\lambda)^2 \Expe[X^2] + 2 \log(\lambda) \Expe[X \log(X!)] + \Expe[ \log(X!)^2] \\
		&\in \mathcal{O}\Big(\Expe[X^2] + \Expe[X^3] + \Expe[X^4]\Big).
	\end{align*}
		Since the Poisson distribution has moments of all orders, this expression is finite. Thus $\Expe[Z^2]<\infty$, leading to $\Expe[F^2]<\infty$. In particular, $F$ has finite variance and mean.
\end{proof}
According to Lemma \ref{lem_Fi_distr}, all $F_i$ are identically distributed and we write $f := \Expe[F]$ for their common mean. Let $\mathcal{G}_g$ be a shorthand for $\mathcal{G}_g(t_*)$ and set
\[
\hat{S}_{d} := \sum_{g = 0}^{d-1} \, \sum_{i \in \mathcal{G}_g} \bigl(F_i - f\bigr)
= \log(A_d) - f\sum_{g = 0}^{d-1} |\mathcal{G}_g|.
\]
We define the sigma-fields $\mathcal{F}_{d} := \sigma\big(F_i \, : \, i \in \mathcal{G}_g,\, g\leq d-1\big)$ and make the following claim:
\begin{equation}\label{eq_Shat_recursion}
	\Expe\big[\hat{S}_{d}^2\big] = \Expe\big[\hat{S}_{d-1}^2\big] + \Var(F)\, \mu^{d-1}.
\end{equation}
To show this equation, use the tower rule of conditional expectation to compute
\begin{align*}
	&\Expe\left[\hat{S}_{d}^2\right]
	= \Expe\big[\Expe[\hat{S}_d^2 \suchthat \mathcal{F}_{d-1}]\big]
	= \Expe\bigg[\Expe\Big[\Big(\hat{S}_{d-1} + \sum_{i \in \mathcal{G}_{d-1}} (F_i - f)\Big)^2 \, \Big| \, \mathcal{F}_{d-1}\Big]\bigg]\\
	&= \Expe\big[\hat{S}_{d-1}^2\big]
	+ \Expe\bigg[\Expe\Big[\Big(\sum_{i \in \mathcal{G}_{d-1}} (F_i - f)\Big)^2 \Big| \mathcal{F}_{d-1}\Big]\bigg] + 2\Expe\bigg[\hat{S}_{d-1}\Expe\Big[\sum_{i \in \mathcal{G}_{d-1}} (F_i - f)\,\Big|\, \mathcal{F}_{d-1}\Big]\bigg].
\end{align*}
Conditionally on $\mathcal{F}_{d-1}$, the random variables $(F_i)_{i\in \mathcal{G}_{d-1}}$ are independent of common law $F$, because the descendants of distinct nodes in generation $d-1$ are generated independently in the Galton--Watson process. Therefore,
\[
\Expe[F_i-f \suchthat \mathcal{F}_{d-1}] = 0
\quad\text{and}\quad
\Expe[(F_i-f)^2 \suchthat \mathcal{F}_{d-1}] = \Var(F) \quad  \text{for all } i\in \mathcal{G}_{d-1}.
\]
The zero-mean and independence of $(F_i - f), (F_j - f)$ for $i \neq j$ implies that mixed terms vanish when we develop the following square:
\[
\Expe\bigg[\Big(\sum_{i \in \mathcal{G}_{d-1}} (F_i - f)\Big)^2 \Big| \mathcal{F}_{d-1}\bigg]
=
\sum_{i \in \mathcal{G}_{d-1}} \Var(F)
=
\Var(F)\, |\mathcal{G}_{d-1}|.
\]
This proves (\ref{eq_Shat_recursion}) which we apply recursively to obtain
\begin{equation}\label{formula_Shatm2}
	\Expe\left[\hat{S}_{d}^2\right]
	= \sum_{g = 1}^d \Var(F)\, \mu^{g-1}
	= \Var(F)\, \frac{\mu^d - 1}{\mu -1}.
\end{equation}
This allows us to establish
\begin{align*}
	\Prob\left(|\hat{S}_d| \geq d \mu^{\frac{d}{2}}\right)
	\leq \frac{\Expe\left[\hat{S}_{d}^2\right]}{d^2\mu^d}
	\leq \frac{1}{d^2}\,\Var(F)\, \dfrac{1 - \mu^{-d}}{\mu - 1}
	\leq \frac{C}{d^2},
\end{align*}
where we have used Markov's inequality, equation (\ref{formula_Shatm2}), Lemma \ref{lem_Fi_distr}, and the fact that $1-\mu^{-d}\leq 1$. The constant $C$ may depend on $\mu$ and $\Var(F)$ but is independent of $d$. Consequently, the probabilities of the events $\{|\hat{S}_d| \geq d \mu^{\frac{d}{2}}\}$ are summable, and by Borel--Cantelli there almost surely exists an $D \in \mathbb{N}$ such that
\begin{equation}\label{eq_A_fluctuation_control}
	\Big| \, \log(A_d) - f\sum_{g = 0}^{d-1} |\mathcal{G}_g| \,\Big| < d \mu^{\frac{d}{2}}
	\qquad \text{for all } d\geq D.
\end{equation}
Recall that Lemma \ref{lemma_GW_generationmartingale} states that almost surely,
\[
\mu^{-g} |\mathcal{G}_g| \xrightarrow[g \to \infty]{} w,
\quad\text{with}\quad
w(\omega)>0 \quad \text{for almost all}\quad \omega \in \mathrm{Ext}^c(t_*).
\]
Fix $\delta \in (0,1)$. Then for almost every $\omega \in \mathrm{Ext}^c(t_*)$ there exists $D'(\omega)\in \mathbb{N}$ such that
\begin{equation}\label{eq_generation_two_sided}
	(1-\delta)w(\omega)\mu^g
	\leq |\mathcal{G}_g|(\omega)
	\leq (1+\delta)w(\omega)\mu^g
	\qquad \text{for all } g\geq D'(\omega).
\end{equation}
Next, derive a lower bound on the drift term $
f\sum_{g=0}^{d-1} |\mathcal{G}_g|(\omega)$:\\
If $f\geq 0$, then trivially
\[
f\sum_{g=0}^{d-1} |\mathcal{G}_g|(\omega)\geq 0.
\]
If $f<0$, then using the upper bound in \eqref{eq_generation_two_sided} for all $d\geq D'(\omega)$,
\begin{align*}
	f\sum_{g=0}^{d-1} |\mathcal{G}_g|(\omega)
	& \geq f\sum_{g=0}^{D'(\omega)-1} |\mathcal{G}_g|(\omega)
	+ f(1+\delta)w(\omega)\sum_{g=D'(\omega)}^{d-1}\mu^g\\
	&\geq - C_1(\omega) -\frac{|f|(1+\delta)}{\mu-1}\,w(\omega)\mu^d
\end{align*}
for a finite constant $C_1(\omega) \geq 0$. Hence,  for all sufficiently large $d$,
\begin{equation*}
	f\sum_{g=0}^{d-1} |\mathcal{G}_g|(\omega)
	\geq -\frac{|f|(1+\delta)}{\mu-1}\,w(\omega)\mu^d - C_1(\omega).
\end{equation*}
Combining this with (\ref{eq_A_fluctuation_control}), we obtain that for almost all $ \omega \in \mathrm{Ext}^c(t_*)$ there exists a finite $D''(\omega)$ such that for all $d\geq D''(\omega)$,
\begin{equation}\label{low}
	\log(A_d)(\omega) \geq -\frac{|f|(1+\delta)}{\mu-1}\,w(\omega)\mu^d - C_1(\omega) - d\mu^{d/2}.
\end{equation}
Since $w(\omega)>0$, by increasing $D''(\omega)$ if needed, we may ensure that
\[
C_1(\omega) + d\mu^{d/2} \leq w(\omega)\mu^d
\qquad\text{for } d\geq D''(\omega).
\]
which, if plugged into (\ref{low}), leads to
\begin{equation}\label{eq_Ad_final_lower_bound}
	\log(A_d)(\omega)
	\geq -\left(\frac{|f|(1+\delta)}{\mu-1}+1\right) w(\omega)\mu^d \qquad\text{for } d\geq D''(\omega).
\end{equation}
Let $\mathcal{A}$ denote the event where (\ref{eq_Ad_final_lower_bound}) holds, which has probability $\Prob(\mathcal{A}) \geq \Prob(\mathrm{Ext}^c(t_*))$. For every $\omega\in\mathcal{A}$ and sufficiently large $d$,
\[
\log(A_d)(\omega)
\geq -C_A\, w(\omega)\mu^d,
\qquad\text{where}\qquad
C_A:=\frac{|f|(1+\delta)}{\mu-1}+1.
\]
Since $C_A$ is deterministic, we can choose $k$ large enough so that$
\frac12 \mathrm{KL}_k > C_A$. 
Combining \eqref{eq_Ad_final_lower_bound} with \eqref{B_dk_lower_bound}, we obtain that on the event $\mathcal A\cap \mathcal B$,
\begin{align*}
	\log\big(L_{d+k}(t,t')\big) \geq \log(A_d)+\log(B_{d,k})\geq \left(-C_A+\frac12 \mathrm{KL}_k\right) w\mu^d
\end{align*}
for all sufficiently large $d$. 
Since $\Prob(\mathcal{B}) = 1$, we conclude 
\begin{align*}
\Pcorr \left(\liminf_{d\to \infty} \mu^{-(d+k)} \log(L_{d+k}) \geq \mu^{-k}(- C_A + \text{\sfrac{1}{2} }\mathrm{KL}_k)\right) &\geq \Prob\left(\mathcal{A}\cap\mathcal{B}\right)
\\&\geq \Prob( \mathrm{Ext}^c(t_*)).
\end{align*}
The result (e) from Theorem \ref{appendix:thm_equivalences} then follows by setting
\[ 
C:= \mu^{-k}(C' + \text{\sfrac{1}{2} }\mathrm{KL}_k) > 0\quad \text{and observing} \quad \Prob\left(\mathrm{Ext}^c(t_*) \right)  = 1 - \Prob(\mathrm{Ext}(\mathrm{GW}_{\lambda s s'})).
\] 
As a final remark, the constant $C$ can be computed from $\mu $, $\Expe[F]$, $k$ and $\mathrm{KL}_k $ which all solely depend on $\lambda, s$ and $ s' $.
\end{proof}

\subsection{Proof of Lemma \ref{gaussian_approx}}\label{appendix:gaussian_approx}
We start by recalling Lemma \ref{gaussian_approx} on Gaussian approximation, which is instrumental for proving (ii) in Theorem \ref{thm:phase_transition}:
\begin{lemma}[copy of Lemma \ref{gaussian_approx}]
	Let $d \in \N$ and recall that $\mathcal{X}_{d+1} = \N^{\mathcal{X}_d}$ by the subtree tuple identification. There exists a pair of functions
	\[
	(y, y'): \mathcal{X}_{d+1}\times \mathcal{X}_{d+1} \to \R^{\mathcal{X}_d} \times  \R^{\mathcal{X}_d}, \, (t, t')  \mapsto \Big(y_\beta (t), y'_\beta(t') \Big)_{\beta \in \mathcal{X}_d}
	\]
	with the property that the above map is affine and bijective. The law of the random vector $(y(t), y'(t')))$ shall be denoted as $\mathcal{L}(y, y')$ if $(t, t') \sim \Pcorrdplus$ and $\mathcal{L}(y) \otimes \mathcal{L}(y')$ in the case where $(t, t') \sim \Pinddplus$.
	
	One has
	\[
	\mathrm{KL}(\Pcorrdplus \, \Vert \, \Pinddplus ) = \mathrm{KL}\Big(\mathcal{L}(y, y') \, \Vert \, \mathcal{L}(y) \otimes \mathcal{L}(y')\Big)
	\]
	and that $\mathcal{L}(y, y')$ converges weakly towards the joint law of a pair of random variables
	\[
	\mathcal{L}(y, y') \xrightarrow[\lambda \to \infty]{w} \mathcal{L}(z, z').
	\]
	The pair $(z, z') = \Big( (z_\beta)_{\beta \in \mathcal{X}_d}, (z'_\beta)_{\beta \in \mathcal{X}_d}\Big)$ is a gaussian vector of infinite dimension with zero mean and covariances defined by
	\[
	\Expe[z_\beta z_\gamma] = \Expe[z'_\beta z'_\gamma] = \indicator{\beta = \gamma}, \quad \Expe[z_\beta z'_\gamma] = \sqrt{ss'}^{|\beta|}\indicator{\beta = \gamma} \quad \text{for all } \beta, \gamma \in \mathcal{X}_d.
	\]
\end{lemma}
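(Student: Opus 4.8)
The plan is to construct the map $(y,y')$ explicitly from the diagonalization functions $f^{(\mu)}_{d,\beta}$, exploit invariance of the Kullback--Leibler divergence under bijective affine transformations, and then apply L\'evy's continuity theorem. Concretely, for $t=(N_\tau)_{\tau\in\mathcal{X}_d}\in\mathcal{X}_{d+1}$ I would set
\[
y_\beta(t) := \frac{1}{\sqrt{\lambda s}}\sum_{\tau\in\mathcal{X}_d} \bigl(N_\tau - \lambda s\,\mathrm{GW}^{(\lambda s)}_d(\tau)\bigr)\, f^{(\lambda s)}_{d,\beta}(\tau),
\]
and analogously $y'_\beta(t')$ with $s'$ in place of $s$. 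Since $t\mapsto (N_\tau)_\tau$ is a bijection onto a subset of $\N^{\mathcal{X}_d}$ and the $f^{(\lambda s)}_{d,\beta}$ form (by the second orthogonality property \eqref{orthogonaliy_2_betasum}) an invertible transform, the map $(y,y')$ is affine and bijective (onto its range, after reincorporating the $1+\sum N_\tau|\tau|$ bookkeeping exactly as in the proof of Theorem \ref{thm_diagonalization}). Invariance of $\mathrm{KL}$ under this bijection is then immediate, giving the first displayed identity.

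Next I would identify the weak limit. Under $\Pinddplus$ the coordinates $N_\tau$ are independent Poisson with means $\lambda s\,\mathrm{GW}^{(\lambda s)}_d(\tau)$, so each $y_\beta$ is a centered, variance-normalized linear combination of independent Poissons whose total mass grows like $\lambda$; by the Lindeberg--Feller CLT, $y_\beta$ converges to a standard Gaussian as $\lambda\to\infty$, and the first orthogonality property \eqref{orthogonaliy_1_GW} forces $\Expe[z_\beta z_\gamma]=\indicator{\beta=\gamma}$ in the limit. Under $\Pcorrdplus$ I would use the representation $N_\tau=\Delta_\tau+\sum_{\tau'}\Gamma_{\tau,\tau'}$, $N'_{\tau'}=\Delta'_{\tau'}+\sum_\tau\Gamma_{\tau,\tau'}$ from Definition \ref{def:new_galton_watson_trees}: the cross-covariance between $y_\beta$ and $y'_\gamma$ comes entirely from the shared $\Gamma$ terms, and a direct computation using $\Gamma_{\tau,\tau'}\sim\mathrm{Poi}(\lambda ss'\,\Pcorrd(\tau,\tau'))$ together with the diagonalization formula \eqref{likhoodratio_bigformula} (in the form $\Pcorrd(\tau,\tau')=\mathrm{GW}^{(\lambda s)}_d(\tau)\mathrm{GW}^{(\lambda s')}_d(\tau')\sum_\beta\sqrt{ss'}^{|\beta|-1}f^{(\lambda s)}_{d,\beta}(\tau)f^{(\lambda s')}_{d,\beta}(\tau')$) and the orthogonality property \eqref{orthogonaliy_1_GW} collapses the sum to $\sqrt{ss'}^{\,|\beta|}\indicator{\beta=\gamma}$. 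The joint CLT follows from the Cram\'er--Wold device applied to finitely many coordinates, and L\'evy's continuity theorem upgrades convergence of characteristic functions on each finite-dimensional projection to weak convergence of $\mathcal{L}(y,y')$ on $\R^{\mathcal{X}_d}\times\R^{\mathcal{X}_d}$ (equipped with the product topology, so weak convergence is determined by finite-dimensional marginals).

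The main obstacle I expect is the infinite-dimensionality: $\mathcal{X}_d$ is only countable, so I must be careful that (a) the vector $(y,y')$ genuinely has an affine \emph{bijective} structure despite living in an infinite-dimensional space --- this is handled by working coordinatewise and noting that for any fixed finite tree only finitely many $N_\tau$ are nonzero, exactly as in the Fourier-inversion argument of Theorem \ref{thm_diagonalization}; and (b) weak convergence in the product space is legitimately reduced to finite-dimensional convergence, which requires either tightness or a direct appeal to the fact that cylindrical sets generate the topology. A secondary technical point is controlling the Lindeberg condition uniformly across the infinitely many summands defining $y_\beta$; here the bounds $|f^{(\mu)}_{d,\beta}|$ in $L^2(\mathrm{GW}^{(\mu)}_d)$ from \eqref{orthogonaliy_1_GW} plus the summability coming from Otter's Proposition \ref{Otter} (already used in the proof of Corollary \ref{second_moment} and Theorem \ref{thm_diagonalization}) give the needed control. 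Once these points are secured, the covariance computation and the KL-invariance are routine, so I would relegate them to short displays and focus the write-up on the construction of $(y,y')$ and the CLT/L\'evy step.
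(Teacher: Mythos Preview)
Your proposal is correct and follows essentially the same architecture as the paper: identical definition of $y_\beta$ via the $f^{(\mu)}_{d,\beta}$, bijectivity from the second orthogonality property, KL invariance under the affine bijection, and weak convergence via finite-dimensional marginals in the product space $\R^{\mathcal{X}_d}\times\R^{\mathcal{X}_d}$. The covariance computation you sketch (shared $\Gamma$ terms plus diagonalization formula plus first orthogonality) is exactly what the paper does.

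The one methodological difference worth noting is how the Gaussian limit is obtained. You invoke Lindeberg--Feller and flag the Lindeberg condition as a technical point to verify; the paper instead computes the log-characteristic function of $(y,y')$ explicitly, using that the joint law is built from independent Poissons $\Delta_\tau,\Delta'_{\tau'},\Gamma_{\tau,\tau'}$ with known characteristic functions. Taylor-expanding $e^{iu}-1$ to second order in $\lambda^{-1/2}$, the paper shows that the $\sqrt{\lambda}$ terms cancel exactly, the order-$1$ terms assemble into the target Gaussian quadratic form (via the same orthogonality/diagonalization identities you list), and the remainder is $O(\lambda^{-1/2})$. This sidesteps the Lindeberg verification entirely, which is a small economy given the explicit Poisson structure. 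Your route via Cram\'er--Wold plus Lindeberg--Feller would also work, but you would indeed need to check Lindeberg over the countably infinite family $\{\Delta_\tau,\Delta'_{\tau'},\Gamma_{\tau,\tau'}\}$; the $L^2$ control from \eqref{orthogonaliy_1_GW} that you cite is the right ingredient for this.
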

\begin{proof}
	The idea for proving the Gaussian approximation lemma is to transform the (integer-valued) subtree tuple representation of a tree $t$ into a closely related tuple of real random variables $y(t)$ with zero mean and normalized variance. Showing weak convergence will then use Levy's Continuity Theorem, applied to infinite-dimensional random vectors.
	
	We begin by letting $t = (N_\tau)_{\tau \in \mathcal{X}_d}$, $t' = (N'_\tau)_{\tau \in \mathcal{X}_d} $ be two trees. For all $\beta \in \mathcal{X}_d$, set
	\begin{align*}
		y_\beta(t)&:= \frac{1}{\sqrt{\lambda s}} \sum_{\tau \in \mathcal{X}_d} f^{(\lambda s)}_{d, \beta}(\tau) \Big(N_\tau - \lambda s \, \mathrm{GW}^{(\lambda s)}_d (\tau) \Big) \quad \text{and}\\
		y'_\beta(t') &:= \frac{1}{\sqrt{\lambda s'}} \sum_{\tau \in \mathcal{X}_d} f^{(\lambda s')}_{d, \beta}(\tau) \Big(N'_\tau - \lambda s' \, \mathrm{GW}^{(\lambda s')}_d (\tau) \Big).
	\end{align*}
	The function $y := (y_\beta)_{\beta \in \mathcal{X}_d} : \mathcal{X}_{d+1} \to \R^{\mathcal{X}_d}$ is clearly affine from the definition. It is furthermore bijective with its inverse map given by $t(y) = (N_\tau(y))_{\tau \in \mathcal{X}_d}$ where
	\[
	N_\tau(y) = \lambda s \, \mathrm{GW}_d^{(\lambda s)}(\tau) + \sqrt{\lambda s} \, \mathrm{GW}_d^{(\lambda s)}(\tau) \sum_{\beta \in \mathcal{X}_d} y_\beta f_{d, \beta}^{(\lambda s)}(\tau).
	\]
	This is indeed the inverse because
	\begin{align*}
		\sum_{\beta \in \mathcal{X}_d} y_\beta f_{d, \beta}^{(\lambda s)}(\tau) &= \frac{1}{\sqrt{\lambda s}}\sum_{\beta \in \mathcal{X}_d} \sum_{\hat \tau \in \mathcal{X}_d} f^{(\lambda s)}_{d, \beta}(\hat \tau) \Big(N_{\hat \tau} - \lambda s \, \mathrm{GW}^{(\lambda s)}_d (\hat \tau) \Big)  f_{d, \beta}^{(\lambda s)}(\tau)\\
		&= \frac{1}{\sqrt{\lambda s}} \sum_{\hat \tau \in \mathcal{X}_d} \Big(N_{\hat \tau} - \lambda s \, \mathrm{GW}^{(\lambda s)}_d (\hat \tau) \Big)  \sum_{\beta \in \mathcal{X}_d} f^{(\lambda s)}_{d, \beta}(\hat \tau) f^{(\lambda s)}_{d, \beta}(\tau)\\
		&\overset{\text{by (\ref{orthogonaliy_2_betasum})}}{=\joinrel=} \frac{1}{\sqrt{\lambda s}} \sum_{\hat \tau \in \mathcal{X}_d} \Big(N_{\hat \tau} - \lambda s \, \mathrm{GW}^{(\lambda s)}_d (\hat \tau) \Big)  \frac{\indicator{\tau = \hat \tau}}{\mathrm{GW}_d^{(\lambda s)}(\tau)}\\
		&= \frac{1}{\sqrt{\lambda s}} \Big(\frac{N_\tau}{\mathrm{GW}_d^{(\lambda s)}} - \lambda s\Big).
	\end{align*}
	The Kullback--Leibler divergence is invariant to affine, bijective transformations (see, for instance, \cite{joram2020statproofbook}) and therefore
	\[
	\mathrm{KL}(\Pcorrdplus \, \Vert \, \Pinddplus ) = \mathrm{KL}\Big(\mathcal{L}(y, y') \, \Vert \, \mathcal{L}(y) \otimes \mathcal{L}(y')\Big)
	\]
	where we recall that $\mathcal{L}(y, y')$ is the law of $(y(t), y'(t')))$  if $(t, t') \sim \Pcorrdplus$, and $\mathcal{L}(y) \otimes \mathcal{L}(y')$ is the law of $(y(t), y'(t')))$ if $(t, t') \sim \Pinddplus$. These notations do not mention $\lambda$ and $d$, but we emphasize that $y$ and $y'$ depend on these parameters.
	
	Next, we will show that a random vector $(y, y') \sim \mathcal{L}(y, y')$ converges in distribution for $\lambda \to \infty$. For this, let  $(z, z') = \Big( (z_\beta)_{\beta \in \mathcal{X}_d}, (z'_\beta)_{\beta \in \mathcal{X}_d}\Big)$ be an infinite-dimensional Gaussian vector with zero expectation and the following covariance structure:
	\[
	\Expe[z_\beta z_\gamma] = \Expe[z'_\beta z'_\gamma] = \indicator{\beta = \gamma}, \quad \Expe[z_\beta z'_\gamma] = \sqrt{ss'}^{|\beta|}\indicator{\beta = \gamma} \quad \text{for all } \beta, \gamma \in \mathcal{X}_d.
	\]
	To show $ (y, y') \xrightarrow[\lambda \to \infty]{(d)} (z, z')$, we start with some clarifications on weak convergence in metric spaces. Note that both $(y, y')$ and $(z, z')$ are elements of $\R^{\mathcal{X}_d} \times \R^{\mathcal{X}_d}$, the space of real-valued functions indexed by trees. According to \cite[Example 1.2]{billingsley2013convergence}, this space can be made into a complete separable metric space by endowing it with the distance
	\[
	\delta\Big((y, y'), (z, z')\Big) := \frac{1}{2} \sum_{\beta \in \mathcal{X}_d} 3^{-|\beta|} \min\Big( \vert y_\beta - z_\beta \vert, \, 1\Big) + \frac{1}{2} \sum_{\beta \in \mathcal{X}_d} 3^{-|\beta|} \min\Big( \vert y'_\beta - z'_\beta \vert, \, 1\Big).
	\]
	This metric is well-defined since for any $(y, y'), (z, z')$, one has
	\[
	\delta\Big((y, y'), (z, z')\Big) \leq \sum_\beta 3^{-|\beta|} = \sum_{n = 1}^\infty \Big| \mathcal{X}_{d}^{(n)}\Big| 3^{-n} \leq \Phi(1/3) \leq \Phi(\alpha)< \infty
	\]
	where finiteness follows from Proposition \ref{Otter}. Since both $(y, y')$ and $(z, z')$ are random variables with values in the metric space of real-valued sequences $(\R^{\mathcal{X}_d} \times \R^{\mathcal{X}_d}, \delta)$, we know from \cite[Example 2.4]{billingsley2013convergence} that the sequences with finitely many non-zero elements form a \textit{convergence-determining class}. This means that to establish the desired convergence in distribution, it suffices to check $(y, y') \to (z, z')$ for all finite-dimensional subsequences. 
	
	To do so, Levy's Continuity Theorem allows us to consider (logarithms of) characteristic functions: For any two sequences $v = (v_\beta)_{\beta \in \mathcal{X}_d}, \, v'= (v'_\beta)_{\beta \in \mathcal{X}_d} \, \in \R^{\mathcal{X}_d}$ with only finitely many non-zero coordinates, our goal is to show
	\begin{equation}\label{goal_after_levy}
		\log\Big( \Expe\Big[e^{i v\cdot y + i v' \cdot y'}\Big] \Big) \xrightarrow[\lambda \to \infty]{} \log\Big( \Expe\Big[e^{i v\cdot z + i v' \cdot z'}\Big] \Big)
	\end{equation}
	where we use the notation $v \cdot y := \sum_\beta v_\beta y_\beta$. \\
	The vector $(z, z')$ being Gaussian, we can immediately compute the right-hand side:
	\begin{align*}
		\log\Big( \Expe\Big[e^{i v\cdot z + i v' \cdot z'}\Big] \Big) &=  - \frac{1}{2} \, (v, v')^\top \, \mathrm{Cov}(z, z') \, (v, v')^\top \\
		&=   - \frac{1}{2} \, (v_{\beta_1}, v'_{\beta_1}, v_{\beta_2}, \dots) \, \mathrm{Cov}(Z)\, (v_{\beta_1}, v'_{\beta_1}, v_{\beta_2}, \dots)^\top \\
		&=   - \frac{1}{2} \, \sum_{\beta \in \mathcal{X}_d} (v_\beta)^2 + (v'_\beta)^2 + 2 \sqrt{ss'}^{|\beta|} \, v_\beta v'_\beta.
	\end{align*}
	The left-hand side is more involved. Recall that $N_\tau = \Delta_\tau + \sum_{\tau' \in \mathcal{X}_d} \Gamma_{\tau, \tau'}$ and $N'_{\tau'} = \Delta'_{\tau'} + \sum_{\tau \in \mathcal{X}_d} \Gamma_{\tau, \tau'}$ with
	\begin{align*}
		\Delta_\tau \sim \mathrm{Poi}\Big(\lambda s (1-s') \,\mathrm{GW}_d^{(\lambda s)}(\tau)\Big)&, \Delta'_{\tau'} \sim \mathrm{Poi}\Big(\lambda s' (1-s) \, \mathrm{GW}_d^{(\lambda s')}(\tau')\Big), \\ 
		\text{ and } \Gamma_{\tau, \tau'} &\sim \mathrm{Poi}\Big(\lambda s s' \, \Pcorrd(\tau, \tau')\Big).
	\end{align*}
	Using the notations $v \cdot f_{d, \bullet}^{(\mu) }(\tau) := \sum_{\beta \in \mathcal{X}_d} v_\beta \, f_{d, \beta}^{(\mu) }(\tau)$ and $\Gamma_{\tau, \bullet} = \sum_{\tau'} \Gamma_{\tau, \tau'}$, compute
	\begin{align*}
		&e^{i v\cdot y + i v' \cdot y'} = \exp\Big[\frac{i}{\sqrt{\lambda s}} \sum_{\tau \in \mathcal{X}_d}  v \cdot f_{d, \bullet}^{(\lambda s)}(\tau)\Big(\Delta_\tau + \Gamma_{\tau, \bullet} - \lambda s \mathrm{GW}^{(\lambda s)}_d(\tau) \Big) \Big]\\
		&\qquad \qquad \qquad  \times \exp \Big[ \frac{i}{\sqrt{\lambda s'}} \sum_{\tau' \in \mathcal{X}_d}  v' \cdot f_{d, \bullet}^{(\lambda s')}(\tau')\Big(\Delta'_{\tau'} + \Gamma_{\bullet, \tau'} - \lambda s' \mathrm{GW}^{(\lambda s')}_d(\tau') \Big)  \Big]\\
		& \quad = \exp\Big[ - i \sqrt{\lambda s} \sum_{\tau }  v \cdot f_{d, \bullet}^{(\lambda s)}(\tau)\mathrm{GW}^{(\lambda s)}_d(\tau) - i \sqrt{\lambda s'} \sum_{\tau' }  v' \cdot f_{d, \bullet}^{(\lambda s')}(\tau)\mathrm{GW}^{(\lambda s')}_d(\tau') \Big]\\
		& \quad \quad \quad \times \prod_{\tau, \tau' } \exp \Big(\frac{i}{\sqrt{\lambda s}}  v\cdot f_{d, \bullet}^{(\lambda s)}(\tau) + \frac{i}{\sqrt{\lambda s'}}  v'\cdot f_{d, \bullet}^{(\lambda s')}(\tau') \Big)^{\Gamma_{\tau, \tau'}} \\
		& \quad \quad \quad \times \prod_{\tau} \exp \Big(\frac{i}{\sqrt{\lambda s}}  v\cdot f_{d, \bullet}^{(\lambda s)}(\tau) \Big)^{\Delta_\tau} \times \prod_{\tau'} \exp \Big(\frac{i}{\sqrt{\lambda s'}}  v'\cdot f_{d, \bullet}^{(\lambda s')}(\tau') \Big)^{\Delta'_{\tau'}}.
	\end{align*}
	Since $\Delta_\tau, \Delta'_{\tau'},$ and $\Gamma_{\tau, \tau'}$ are independent Poisson variables, taking the expectation and then the logarithm on both sides yields
	\begin{align*}
		&\log \Expe\Big[e^{i v\cdot y + i v' \cdot y'}\Big] \\
        &= - i \sqrt{\lambda s} \sum_{\tau }  v \cdot f_{d, \bullet}^{(\lambda s)}(\tau)\mathrm{GW}^{(\lambda s)}_d(\tau) - i \sqrt{\lambda s'} \sum_{\tau' }  v' \cdot f_{d, \bullet}^{(\lambda s')}(\tau)\mathrm{GW}^{(\lambda s')}_d(\tau') \\
		&\quad \quad + \sum_{\tau, \tau'} \lambda s s' \, \Pcorrd(\tau, \tau') \Big( \exp\Big[ \frac{i}{\sqrt{\lambda s}}v\cdot  f_{d, \bullet}^{(\lambda s)}(\tau) + \frac{i}{\sqrt{\lambda s'}} v'\cdot  f_{d, \bullet}^{(\lambda s')}(\tau) \Big] - 1\Big)\\
		&\quad \quad + \sum_{\tau} \lambda s (1-s') \mathrm{GW}^{(\lambda s)}_d (\tau) \Big(\exp\Big[\frac{i }{ \sqrt{\lambda s}}v\cdot  f_{d, \bullet}^{(\lambda s)}(\tau)\Big]  - 1\Big)  \\
		&\quad \quad + \sum_{\tau'} \lambda s' (1-s) \mathrm{GW}^{(\lambda s')}_d (\tau') \Big(\exp\Big[\frac{i}{\sqrt{\lambda s'}} v'\cdot  f_{d, \bullet}^{(\lambda s')}(\tau) \Big]  - 1\Big).
	\end{align*}
	The next step consists in developing all the exponentials as power series. Writing $w(\tau) := v \cdot f_{d, \bullet}^{(\lambda s)}(\tau)$ and $w'(\tau') := v' \cdot f_{d, \bullet}^{(\lambda s')}(\tau')$, one has
	\begin{align*}
		&\log   \Expe\Big[e^{i v\cdot y + i v' \cdot y'}\Big] = \\
        &- i \sqrt{\lambda s} \sum_{\tau }  w(\tau) \, \mathrm{GW}^{(\lambda s)}_d(\tau) - i \sqrt{\lambda s'} \sum_{\tau' }  w'(\tau') \mathrm{GW}^{(\lambda s')}_d(\tau') \\
		&  + \sum_{\tau, \tau'} \lambda s s'  \Pcorrd(\tau, \tau') \Big( \frac{i w(\tau)}{\sqrt{\lambda s}}+ \frac{i  w'(\tau)}{\sqrt{\lambda s'}}  - \frac{w(\tau)^2}{2\lambda s} - \frac{w'(\tau)^2}{2\lambda s'} - \frac{w(\tau)w'(\tau')}{\lambda \sqrt{ss'}} + \mathcal{O}(\lambda^{-\frac{3}{2}})\Big)\\
		& + \sum_{\tau} \lambda s (1-s') \mathrm{GW}^{(\lambda s)}_d (\tau) \Big(\frac{i w(\tau)}{ \sqrt{\lambda s}} - \frac{w(\tau)^2}{2\lambda s} + \mathcal{O}(\lambda^{-\frac{3}{2}})\Big)  \\
		& + \sum_{\tau'} \lambda s' (1-s) \mathrm{GW}^{(\lambda s')}_d (\tau') \Big(\frac{i w'(\tau')}{ \sqrt{\lambda s'}} - \frac{w'(\tau')^2}{2\lambda s'} + \mathcal{O}(\lambda^{-\frac{3}{2}})\Big).
	\end{align*}
	To understand how the right hand side behaves for $\lambda \to \infty$, we start by the terms in $\mathcal{O}(\lambda^{-\sfrac{3}{2}})$. Each of these terms is multiplied by $\lambda$, leaving them still of order $\lambda^{-1/2}$ which converges to $0$. In other words, these terms vanish in the limit.
	
	We move on to extracting all the terms or order $\sqrt\lambda$ which cancel each other out:
	\begin{align*}
		& - i \sqrt{\lambda s} \sum_{\tau }  w(\tau) \, \mathrm{GW}^{(\lambda s)}_d(\tau) - i \sqrt{\lambda s'} \sum_{\tau' }  w'(\tau') \mathrm{GW}^{(\lambda s')}_d(\tau') \\
		&  + i s' \sqrt{\lambda s}\sum_{\tau, \tau'} \Pcorrd(\tau, \tau') w(\tau) + i s \sqrt{\lambda s'}\sum_{\tau, \tau'} \Pcorrd(\tau, \tau') w'(\tau')\\
		& + i (1-s') \sqrt{\lambda s}\sum_{\tau}\mathrm{GW}^{(\lambda s)}_d (\tau) w(\tau) + i (1-s) \sqrt{\lambda s'}\sum_{\tau'}  \mathrm{GW}^{(\lambda s')}_d (\tau') w'(\tau') \,  = \, 0.
	\end{align*}
	For this computation, we have used the identities $\sum_\tau \Pcorrd(\tau, \tau') = \mathrm{GW}^{(\lambda s')}_d (\tau')$ as well as $\sum_{\tau'}\Pcorrd(\tau, \tau') = \mathrm{GW}^{(\lambda s)}_d (\tau)$. Next, the terms of order $\lambda^0 = 1$ are
	\begin{align*}
		&-\frac{s'}{2}\sum_{\tau, \tau'}  \, \Pcorrd(\tau, \tau') w(\tau)^2  -\frac{s}{2}\sum_{\tau, \tau'}  \, \Pcorrd(\tau, \tau') w'(\tau)^2\\
        & \quad \quad \quad -\sqrt{ss'}\sum_{\tau, \tau'}  \, \Pcorrd(\tau, \tau') w(\tau)w'(\tau')\\
		& \quad \quad \quad - \frac{1-s'}{2}\sum_{\tau} \mathrm{GW}^{(\lambda s)}_d (\tau)w(\tau)^2 - \frac{1-s}{2}\sum_{\tau'} \mathrm{GW}^{(\lambda s')}_d (\tau') w'(\tau')^2\\
		&= -\frac{1}{2} \sum_{\tau} \mathrm{GW}^{(\lambda s)}_d (\tau)w(\tau)^2  -  \frac{1}{2}\sum_{\tau'} \mathrm{GW}^{(\lambda s')}_d (\tau') w'(\tau')^2\\
        & \quad \quad \quad -\sqrt{ss'}\sum_{\tau, \tau'}  \Pcorrd(\tau, \tau') w(\tau)w'(\tau') \\
		&\overset{(\star)}{=\joinrel=} - \frac{1}{2}\sum_\beta (v_\beta)^2 - \frac{1}{2}\sum_\beta (v'_\beta)^2 - \sum_\beta v_\beta v'_\beta \sqrt{ss'}^{|\beta| - 1}
	\end{align*}
	The last expression equals $\log \Expe\Big[e^{i v\cdot z + i v' \cdot z'}\Big]  $ which will conclude the proof. All that remains to show is therefore the equality $(\star)$ which requires some additional computations. We start by using the first orthogonality property of $f^{\lambda s}_{d,\beta}$ (\ref{orthogonaliy_1_GW}) to compute
	\begin{align*}
		\sum_\tau &\mathrm{GW}^{(\lambda s)}_d (\tau) w(\tau)^2 = \sum_\tau \mathrm{GW}^{(\lambda s)}_d (\tau) \Big( \sum_\beta v_\beta f_{d, \beta}^{(\lambda s)}(\tau) \Big)^2\\
		& = \sum_{\beta, \gamma} v_\beta v_{\gamma} \sum_\tau \mathrm{GW}^{(\lambda s)}_d (\tau) f_{d, \beta}^{(\lambda s)}(\tau) f_{d, \gamma}^{(\lambda s)}(\tau) = \sum_{\beta, \gamma} v_\beta v_\gamma \indicator{\beta = \gamma} = \sum_\beta (v_\beta)^2.
	\end{align*}
	Similarly, $\sum_{\tau'} \mathrm{GW}^{(\lambda s')}_d (\tau') w'(\tau')^2  = \sum_\beta (v'_\beta)^2$. 
	Next, the equation (\ref{appendix:likhoodratio_bigformula}) under the form 
	\[
	\Pcorrd(\tau, \tau') =  \mathrm{GW}^{(\lambda s)}_{d}(\tau)\,  \mathrm{GW}^{(\lambda s')}_{d}(\tau') \sum_{\delta \in \mathcal{X}_d}  \sqrt{ss'}^{|\delta|-1} f^{(\lambda s)}_{d, \delta}(\tau) f^{(\lambda s')}_{d, \delta}(\tau')
	\]
	implies the following:
	\begin{align*}
		\sum_{\tau, \tau'}  & \, \Pcorrd(\tau, \tau') w(\tau)w'(\tau') \\
		&=\sum_{\tau, \tau'}  \mathrm{GW}^{(\lambda s)}_{d}(\tau) \mathrm{GW}^{(\lambda s')}_{d}(\tau')  \sum_{\delta}\ \sqrt{ss'}^{|\delta| - 1}f^{(\lambda s)}_{d, \delta}(\tau) f^{(\lambda s')}_{d, \delta}(\tau') \\
        &\qquad \qquad \qquad \qquad \qquad \times \sum_{\beta, \gamma} v_\beta f_{d, \beta}^{(\lambda s)}(\tau)\,  v'_{\gamma} f_{d, \gamma}^{(\lambda s')}(\tau')\\
		&= \sum_{\beta, \gamma, \delta} v_\beta v'_\gamma \sqrt{ss'}^{|\delta| - 1} \Big(\sum_\tau \mathrm{GW}^{(\lambda s)}_{d}(\tau) f^{(\lambda s)}_{d, \delta}(\tau) f_{d, \beta}^{(\lambda s)}(\tau)\Big)\\
        &\qquad \qquad \qquad \qquad \qquad \times \Big(\sum_{\tau'} \mathrm{GW}^{(\lambda s')}_{d}(\tau')f^{(\lambda s')}_{d, \delta}(\tau') f_{d, \gamma}^{(\lambda s')}(\tau') \Big)\\
		&= \sum_{\beta, \gamma, \delta} v_\beta v'_\gamma \sqrt{ss'}^{|\delta| - 1}\indicator{\delta = \beta} \indicator{\delta = \gamma} = \frac{1}{\sqrt{ss'}}\sum_\beta v_\beta v'_\beta \sqrt{ss'}^{|\beta|}.
	\end{align*}
	Putting everything together, the equality $(\star)$ in our earlier computation is now immediate. This finishes the proof of  
	\[
	\log \Expe\Big[e^{iv\cdot y + iv'\cdot y'}\Big] \xrightarrow[\lambda \to \infty]{} \log\Expe\Big[e^{iv\cdot z + iv'\cdot z'}\Big] \quad \text{which implies} \quad \mathcal{L}(y, y') \xrightarrow[\lambda \to \infty]{w} \mathcal{L}(z,z').
	\]
\end{proof}

\subsection{Proof of Lemma \ref{Z_moments}}\label{appendix:Z_moments}

We recall the technical lemma about the moments of $Z$ before providing proof of it.

\begin{lemma}[copy of Lemma \ref{Z_moments}]
	The random variable $Z := Z(t, t')$ has the following moments under $\Pind$ and $\Pind$:
	\begin{enumerate}[label=(\alph*)]
		\item $\Einddplus\Big[Z\Big] = 0$,
		\item $\Ecorrdplus\Big[Z\Big] = \lambda s s' \,\Pcorr_d\Big(\mathcal{T}_d = 1\Big)$,
		\item $\Einddplus\Big[Z^4\Big] \leq 36 \, (\lambda^2 s s')^2 \,\Pind_d\Big(\mathcal{T}_d = 1\Big)^2 + 13 \lambda^3 s s' \, \Pind_d\Big(\mathcal{T}_d = 1\Big)$,
		\item $\mathrm{Var}_{d+1}^\text{corr}\Big[Z\Big] \leq \lambda^2 s s'\Big(1 + ss' \Big) \,\Pind_d\Big(\mathcal{T}_d= 1\Big)  +  \Ecorrdplus[Z] $ 
	\end{enumerate}
\end{lemma}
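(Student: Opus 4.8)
The plan is to work throughout with the \emph{thinning representations} of the two laws. Set $a_\tau:=\lambda s\,\mathrm{GW}_d^{(\lambda s)}(\tau)$, $b_{\tau'}:=\lambda s'\,\mathrm{GW}_d^{(\lambda s')}(\tau')$, $S:=\{(\tau,\tau'):\mathcal T_d(\tau,\tau')=1\}$, and $X_\tau:=N_\tau-a_\tau$, $X'_{\tau'}:=N'_{\tau'}-b_{\tau'}$, so that $Z=\sum_{(\tau,\tau')\in S}X_\tau X'_{\tau'}$. Under $\Pinddplus$ the $(N_\tau)_\tau$ are independent $\mathrm{Poi}(a_\tau)$, the $(N'_{\tau'})_{\tau'}$ are independent $\mathrm{Poi}(b_{\tau'})$, and the two families are independent; under $\Pcorrdplus$ I use Definition \ref{def:new_galton_watson_trees}, i.e. $N_\tau=\Delta_\tau+\sum_{\tau'}\Gamma_{\tau,\tau'}$ and $N'_{\tau'}=\Delta'_{\tau'}+\sum_\tau\Gamma_{\tau,\tau'}$ with $\Delta,\Delta',\Gamma$ independent Poisson of the prescribed intensities (note $N_\tau$ is still marginally $\mathrm{Poi}(a_\tau)$, being a sum of independent Poissons). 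Every bound below is driven by the single finite quantity $\sum_{(\tau,\tau')\in S}\mathrm{GW}_d^{(\lambda s)}(\tau)\mathrm{GW}_d^{(\lambda s')}(\tau')=\Pindd(\mathcal T_d=1)\le 1$; to sidestep convergence issues with a possibly infinite $S$ I would first prove all four statements for $S$ finite (restricting $\mathcal T_d$) and then pass to the limit, all target quantities being monotone in $S$.

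\textbf{Parts (a) and (b).} For (a): under $\Pinddplus$ the variables $X_\tau$, $X'_{\tau'}$ are independent and centered, hence each summand of $Z$ has mean $0$. For (b): expand $X_\tau$, $X'_{\tau'}$ into their independent centered Poisson pieces $\tilde\Delta_\tau,\tilde\Delta'_{\tau'},(\tilde\Gamma_{\mu,\nu})$; the \emph{only} piece common to $X_\tau$ and $X'_{\tau'}$ is $\tilde\Gamma_{\tau,\tau'}$, so $\Ecorrdplus[X_\tau X'_{\tau'}]=\Var(\Gamma_{\tau,\tau'})=\lambda s s'\,\Pcorrd(\tau,\tau')$, and summing over $S$ gives $\Ecorrdplus[Z]=\lambda s s'\,\Pcorrd(\mathcal T_d=1)$.

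\textbf{Part (c).} The approach is to expand $Z^4=\sum_{p_1,\dots,p_4\in S}\prod_i X_{\tau_i}X'_{\tau'_i}$ and take $\Einddplus$: a tuple contributes only when both the multiset of first coordinates and that of second coordinates have all multiplicities $\ge2$. It is cleanest to group by the first coordinate and set $W_\tau:=\sum_{\tau':(\tau,\tau')\in S}X'_{\tau'}$, which under $\Pinddplus$ is a \emph{centered} $\mathrm{Poi}(B_\tau)$ variable ($B_\tau:=\sum_{\tau':(\tau,\tau')\in S}b_{\tau'}$, sums of independent Poissons being Poisson); then
\[
\Einddplus[Z^4]=\sum_\tau\Einddplus[X_\tau^4]\,\Einddplus[W_\tau^4]+3\sum_{\tau\ne\sigma}a_\tau a_\sigma\,\Einddplus[W_\tau^2W_\sigma^2].
\]
Inserting the Poisson central moments $\Einddplus[X_\tau^4]=a_\tau+3a_\tau^2$, $\Einddplus[W_\tau^4]=B_\tau+3B_\tau^2$, $\Einddplus[W_\tau^2W_\sigma^2]=B_\tau B_\sigma+C_{\tau\sigma}+2C_{\tau\sigma}^2$ (with $C_{\tau\sigma}=\mathrm{Cov}(W_\tau,W_\sigma)$) and expanding, the terms split into two kinds. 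A genuinely quadratic term always has the shape $(\lambda^2ss')^2\sum_i x_i^2$ with $x_i\ge0$ and $\sum_i x_i=\Pindd(\mathcal T_d=1)$ (e.g. $x_\tau=\mathrm{GW}_d^{(\lambda s)}(\tau)\,B_\tau/(\lambda s')$ for the term $9a_\tau^2B_\tau^2$, and $x_{(\tau,\tau')}=\mathrm{GW}_d^{(\lambda s)}(\tau)\mathrm{GW}_d^{(\lambda s')}(\tau')$ elsewhere), so by $\sum_i x_i^2\le(\sum_i x_i)^2$ it is $\le(\lambda^2ss')^2\Pindd(\mathcal T_d=1)^2$; every remaining term carries at least one surplus factor that is $\le1$ and, after using $a_\tau\le\lambda s$, $b_{\tau'}\le\lambda s'$, $B_\tau\le\lambda s'$, reduces to $\lesssim\lambda^3ss'\,\Pindd(\mathcal T_d=1)$. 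Summing the explicit constants (carefully, but with room to spare) gives $\Einddplus[Z^4]\le 36(\lambda^2ss')^2\Pindd(\mathcal T_d=1)^2+13\lambda^3ss'\,\Pindd(\mathcal T_d=1)$.

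\textbf{Part (d) --- the crux.} I would compute $\Ecorrdplus[Z^2]=\sum_{p_1,p_2\in S}\Ecorrdplus[X_{\tau_1}X'_{\tau'_1}X_{\tau_2}X'_{\tau'_2}]$ by the same Wick-type pairing of Poisson pieces: a diagonal pair $p_1=p_2=(\tau,\tau')$ gives $\Ecorrdplus[X_\tau^2(X'_{\tau'})^2]=a_\tau b_{\tau'}+\lambda ss'\,\Pcorrd(\tau,\tau')+2(\lambda ss')^2\Pcorrd(\tau,\tau')^2$ (split $X_\tau=U+G$, $X'_{\tau'}=U'+G$ with $G=\tilde\Gamma_{\tau,\tau'}$ and $U,U',G$ independent centered, and use $(\Var(U)+\Var(G))(\Var(U')+\Var(G))=a_\tau b_{\tau'}$), while an off-diagonal pair gives $(\lambda ss')^2[\Pcorrd(\tau_1,\tau'_1)\Pcorrd(\tau_2,\tau'_2)+\Pcorrd(\tau_1,\tau'_2)\Pcorrd(\tau_2,\tau'_1)]$. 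Recombining with $\Ecorrdplus[Z]=\lambda ss'\sum_S\Pcorrd(\tau,\tau')$, the square of $\Ecorrdplus[Z]$ and all the $\Pcorrd(\cdot,\cdot)^2$ bookkeeping cancel, leaving the clean identity
\[
\mathrm{Var}_{d+1}^{\mathrm{corr}}[Z]=\Ecorrdplus[Z]+\sum_{(\tau,\tau')\in S}a_\tau b_{\tau'}+(\lambda ss')^2\sum_{(\tau_1,\tau'_1)\in S}\sum_{(\tau_2,\tau'_2)\in S}\Pcorrd(\tau_1,\tau'_2)\,\Pcorrd(\tau_2,\tau'_1).
\]
The first two terms equal $\Ecorrdplus[Z]+\lambda^2ss'\,\Pindd(\mathcal T_d=1)$. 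For the third --- the one place the correlated structure is genuinely used --- I would substitute $\Pcorrd(u,v)=\mathrm{GW}_d^{(\lambda s)}(u)\mathrm{GW}_d^{(\lambda s')}(v)\,L_d(u,v)$ together with the diagonalization $(\ref{likhoodratio_bigformula})$, $L_d=\sum_\beta\sqrt{ss'}^{|\beta|-1}f^{(\lambda s)}_{d,\beta}\otimes f^{(\lambda s')}_{d,\beta}$. With $M_{\beta\gamma}:=\sum_{(u,v)\in S}\mathrm{GW}_d^{(\lambda s)}(u)\mathrm{GW}_d^{(\lambda s')}(v)f^{(\lambda s)}_{d,\beta}(u)f^{(\lambda s')}_{d,\gamma}(v)$ and $\Lambda:=\mathrm{diag}(\sqrt{ss'}^{|\beta|-1})_\beta$, the double sum equals $\mathrm{tr}((\Lambda M)^2)=\sum_{\beta,\gamma}(\Lambda M)_{\beta\gamma}(\Lambda M)_{\gamma\beta}\le\sum_{\beta,\gamma}(\Lambda M)_{\beta\gamma}^2=\sum_{\beta,\gamma}(ss')^{|\beta|-1}M_{\beta\gamma}^2\le\sum_{\beta,\gamma}M_{\beta\gamma}^2$, the last two steps using $2|xy|\le x^2+y^2$ (then relabel $\beta\leftrightarrow\gamma$) and $\sqrt{ss'}^{|\beta|-1}\le1$. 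Finally, by $(\ref{orthogonaliy_1_GW})$--$(\ref{orthogonaliy_2_betasum})$ the families $\{f^{(\lambda s)}_{d,\beta}\}$ and $\{f^{(\lambda s')}_{d,\gamma}\}$ are complete orthonormal systems of $L^2(\mathrm{GW}_d^{(\lambda s)})$ and $L^2(\mathrm{GW}_d^{(\lambda s')})$, so Parseval for the tensor basis gives $\sum_{\beta,\gamma}M_{\beta\gamma}^2=\|\indicator{S}\|^2_{L^2(\mathrm{GW}_d^{(\lambda s)}\otimes\mathrm{GW}_d^{(\lambda s')})}=\Pindd(\mathcal T_d=1)$. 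Hence the third term is $\le\lambda^2s^2s'^2\,\Pindd(\mathcal T_d=1)$, and altogether $\mathrm{Var}_{d+1}^{\mathrm{corr}}[Z]\le\lambda^2ss'(1+ss')\,\Pindd(\mathcal T_d=1)+\Ecorrdplus[Z]$. The routine-but-bulky work sits in the pairing bookkeeping of (c) and in the second-moment computation of (d); the genuine obstacle is the spectral estimate $\mathrm{tr}((\Lambda M)^2)\le\sum_{\beta,\gamma}M_{\beta\gamma}^2=\Pindd(\mathcal T_d=1)$, which is exactly the step that exploits the likelihood-ratio diagonalization of Theorem \ref{thm_diagonalization} and lets the cross-term be absorbed into the \emph{small} quantity $\Pindd(\mathcal T_d=1)$ rather than merely into an $O(1)$ constant.
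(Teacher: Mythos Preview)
Your proposal is correct throughout. Parts (a)--(c) match the paper's argument in substance: the paper organizes (c) by the pair of cardinalities $\bigl(|\{\tau_1,\dots,\tau_4\}|,|\{\tau'_1,\dots,\tau'_4\}|\bigr)\in\{1,2\}^2$ rather than via your $W_\tau$ variables, but the bounds and constants come out the same way.

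Part (d) is where the two routes genuinely diverge. Your exact variance identity
\[
\mathrm{Var}_{d+1}^{\mathrm{corr}}[Z]=\Ecorrdplus[Z]+\lambda^2ss'\,\Pindd(\mathcal T_d=1)+(\lambda ss')^2\!\!\sum_{(\tau_1,\tau'_1)\in S}\sum_{(\tau_2,\tau'_2)\in S}\!\!\Pcorrd(\tau_1,\tau'_2)\,\Pcorrd(\tau_2,\tau'_1)
\]
is correct and cleaner than the paper's presentation (which splits $\Ecorrdplus[Z^2]$ into blocks $\Sigma(2,2,0),\Sigma(2,0,2),\Sigma(0,2,2),\Sigma(0,0,4)$ according to which of $\widetilde\Delta,\widetilde\Delta',\widetilde\Gamma$ contribute). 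However, to bound the cross term the paper uses a one-line elementary argument rather than your spectral one: relax the constraint $(\tau_2,\tau'_2)\in S$ to $(\tau_2,\tau'_2)\in\mathcal X_d^2$ and use the marginal identities $\sum_{\tau_2}\Pcorrd(\tau_2,\tau'_1)=\mathrm{GW}_d^{(\lambda s')}(\tau'_1)$ and $\sum_{\tau'_2}\Pcorrd(\tau_1,\tau'_2)=\mathrm{GW}_d^{(\lambda s)}(\tau_1)$, which immediately gives the bound $\Pindd(\mathcal T_d=1)$. Your Parseval route via Theorem~\ref{thm_diagonalization} reaches the same bound and is conceptually nice, but it imports heavy machinery for a step that needs none---in particular, Lemma~\ref{Z_moments} can (and in the paper does) stand independently of the diagonalization theorem.
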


\begin{proof}
	We start by introducing centered versions of the poisson variables $N_{\tau}$ and $N'_{\tau'}$, namely
	\[
	\widetilde{N}_\tau := N_\tau - \lambda s \, \mathrm{GW}_d^{(\lambda s)} (\tau) \quad \text{and} \quad \widetilde{N}'_{\tau'} := N'_{\tau'} - \lambda s' \,\mathrm{GW}_d^{(\lambda s')} (\tau'),
	\]
	which lets us write $Z = Z(t, t') = \sum_{(\tau, \tau'):\, \mathcal{T}_d(\tau, \tau') = 1}  \widetilde{N}_\tau \widetilde{N}'_{\tau'}$. 
	
	What follows are the proofs of points (a), (b), (c), and (d) in their alphabetic order.
	
	\paragraph{Proof of (a). } This is immediate from the independence of  $(N_\tau)_\tau$ and $(N'_{\tau'})_{\tau '}$ under $\Pinddplus$ and the centeredness of $\widetilde{N}_\tau$ and $\widetilde{N}'_{\tau'}$:
	\[
	\Einddplus\Big[Z\Big] = \sum_{\substack{(\tau, \tau') \in \mathcal{X}_d^2,\\ \mathcal{T}_d(\tau, \tau') = 1}} \Einddplus\Big[ \widetilde{N}_\tau \Big] \Einddplus\Big[\widetilde{N}'_{\tau'} \Big] = 0.
	\]
	
	\paragraph{Proof of (b)} Recall that if $\Big((N_\tau)_\tau, (N'_{\tau'})_{\tau'}\Big) \sim\Pcorrdplus$, then we can write $N_\tau = \Delta_\tau + \sum_{\tau' \in \mathcal{X}_d} \Gamma_{\tau, \tau'}$ and $N'_{\tau'} = \Delta'_{\tau'} + \sum_{\tau \in \mathcal{X}_d} \Gamma_{\tau, \tau'}$ where
	\begin{align*}
		\Delta_\tau \sim \mathrm{Poi}\Big(\lambda s (1-s') \,\mathrm{GW}_d^{(\lambda s)}(\tau)\Big)&, \quad \Delta'_{\tau'} \sim \mathrm{Poi}\Big(\lambda s' (1-s) \, \mathrm{GW}_d^{(\lambda s')}(\tau')\Big), \\ 
		\text{ and } \Gamma_{\tau, \tau'} &\sim \mathrm{Poi}\Big(\lambda s s' \, \Pcorrd(\tau, \tau')\Big)
	\end{align*}
	are independent poisson variables. For each one of these variables, we define their centered versions
	\begin{align*}
		\widetilde \Delta_\tau := \Delta_\tau - \lambda s (1-s') \,\mathrm{GW}_d^{(\lambda s)}(\tau)&,\quad \widetilde \Delta'_{\tau'} := \Delta'_{\tau'}  - 	\lambda s' (1-s) \, \mathrm{GW}_d^{(\lambda s)}(\tau), \\ 
		\text{ and } \widetilde\Gamma_{\tau, \tau'} := &\,  \Gamma_{\tau, \tau'} - \lambda s s' \, \Pcorrd(\tau, \tau').
	\end{align*}
	which fulfill $\widetilde N_\tau = \widetilde\Delta_\tau + \sum_{\tau' \in \mathcal{X}_d} \widetilde\Gamma_{\tau, \tau'}$ and $\widetilde N'_{\tau'} = \widetilde \Delta'_{\tau'} + \sum_{\tau \in \mathcal{X}_d} \widetilde \Gamma_{\tau, \tau'}$ because of the marginals $\sum_{\tau'}  \Pcorrd(\tau, \tau') = \mathrm{GW}_d^{(\lambda s)}(\tau)$ and $\sum_{\tau'}  \Pcorrd(\tau, \tau') = \mathrm{GW}_d^{(\lambda s')}(\tau')$.\\
	Exploiting the independence of $\Delta_\tau, \Delta'_{\tau'}$ and $\Gamma_{\tau, \tau'}$, we conclude the proof of (b) with the following computation:
	\begin{align*}
		\Ecorrdplus\Big[Z\Big] &= \sum_{\substack{(\tau, \tau') \in \mathcal{X}_d^2,\\ \mathcal{T}_d(\tau, \tau') = 1}} \Ecorrdplus\Big[ \widetilde{N}_\tau \widetilde{N}'_{\tau'} \Big] = \sum_{\substack{(\tau, \tau') \in \mathcal{X}_d^2,\\ \mathcal{T}_d(\tau, \tau') = 1}} \Ecorrdplus\Big[ \Big( \widetilde{\Gamma}_{\tau, \tau'}\Big)^2 \Big] \\
		&= \lambda s s'  \,\Pcorrd\Big(\mathcal{T}_d(\tau, \tau') = 1\Big).
	\end{align*}
	\paragraph{Proof of (c).} We introduce the shortcut notation $\mathcal{S} := \{(\tau, \tau') \in \mathcal{X}_d^2 \suchthat \mathcal{T}_d(\tau, \tau') = 1\}$ and develop the forth moment of $Z$ as
	\begin{equation}\label{expandZ4}
		\Einddplus\Big[Z^4 \Big] = \Einddplus\Big[ \Big( \sum_{(\tau, \tau') \in \mathcal{S}}  \widetilde{N}_\tau \widetilde{N}'_{\tau'} \Big)^ 4\Big] = \sum_{\substack{(\tau_i, \tau'_i) \in \mathcal{S}, \\ i \in \{1,..., 4\}}} \Einddplus\Big[ \prod_{i = 1}^4 \widetilde{N}_{\tau_i}\Big] \Einddplus\Big[ \prod_{i = 1}^4 \widetilde{N}'_{\tau'_i}\Big]. 
	\end{equation}
	Since $\widetilde{N}_{\tau}$ is independent from $\widetilde{N}_{\theta}$ for $ \tau \neq \theta$ and both random variables are centered, we have that  $ \Einddplus\Big[ \prod_{i = 1}^4 \widetilde{N}_{\tau_i}\Big] = 0 $ in all but the two following cases:
	\begin{itemize}
		\item All four $\tau_i$ are equal, i.e. $|\{\tau_1, \tau_2, \tau_3, \tau_4\}| = 1$, 
		\item The $\tau_i$ are pairwise equal but not all the same, i.e. $|\{\tau_1, \tau_2, \tau_3, \tau_4\}| = 2$.
	\end{itemize}
	The identical observation is true for $\Einddplus\Big[ \prod_{i = 1}^4 \widetilde{N}'_{\tau'_i}\Big]$. Let
	\[
	\mathcal{S}(m, r) := \Big\{ (\tau_i, \tau'_i)_{i=1}^4 \in \mathcal{S}^4 \suchthat |\{\tau_1, \tau_2, \tau_3, \tau_4\}| = m \text{ and } |\{\tau'_1, \tau'_2, \tau'_3, \tau'_4\}|  = r\Big\},
	\]
	and denote $\Sigma(r,m)$ the sum from (\ref{expandZ4}) restricted to summands in $\mathcal{S}(r, m)$ instead of $\mathcal{S}$. This  lets us write $\Einddplus\Big[Z^4 \Big]  = \Sigma(1,1) + \Sigma (1, 2) + \Sigma(2, 1) + \Sigma(2, 2)$ and we are left with examining the terms $\Sigma(r, m)$ one by one.
	
	Recall that for a Poisson variable $X\sim \mathrm{Poi}(\mu)$, the fourth moment of its centered version $\widetilde{X} = X - \mu$ is equal to $\Expe[(\widetilde{X})^4] = 3\mu^2 + \mu$. This lets us compute
	\begin{align*}
		\Sigma(1, 1) & =\sum_{(\tau, \tau') \in \mathcal{S}} \Einddplus\Big[  (\widetilde{N}_{\tau})^4\Big] \Einddplus\Big[ (\widetilde{N}'_{\tau'})^4\Big] \\
		&= \sum_{(\tau, \tau') \in \mathcal{S}} \Big(3(\lambda s)^2 \mathrm{GW}_d^{(\lambda s)}(\tau) ^2 + \lambda s \mathrm{GW}_d^{(\lambda s)}(\tau)\Big) \\
		&\qquad \qquad \quad \times \Big(3(\lambda s')^2 \mathrm{GW}_d^{(\lambda s')}(\tau') ^2 + \lambda s' \mathrm{GW}_d^{(\lambda s')}(\tau')\Big)\\
		&\leq  \lambda^2 ss' \sum_{(\tau, \tau') \in \mathcal{S}} 9 \lambda^2 ss'\,  \mathrm{GW}_d^{(\lambda s)}(\tau) ^2 \mathrm{GW}_d^{(\lambda s')}(\tau') ^2\\
		&\qquad \qquad \qquad \quad+  \Big(3 \lambda s + 3 \lambda s' + 1\Big)\mathrm{GW}_d^{(\lambda s)}(\tau) \mathrm{GW}_d^{(\lambda s')}(\tau')\\
		&\leq 9 (\lambda^2 s s')^2 \Pindd(\mathcal{S})^2 + 7 \lambda^3 ss'\, \Pindd(\mathcal{S})
	\end{align*}
	where we have used $\mathrm{GW}_d^{(\mu)}(\tau) \leq 1$ in the first inequality. The second inequality relies on $ \sum_{i = 0}^\infty p_i^2 \leq (\sum_i p_i)^2$ for $p_i \geq 0$, as well as $\lambda \geq 1$ and
	\[
	\Pindd(\mathcal{S}) = \sum_{(\tau, \tau') \in \mathcal{S}}  \Pindd(\tau, \tau') = \sum_{(\tau, \tau') \in \mathcal{S}}  \mathrm{GW}_d^{(\lambda s)}(\tau) \mathrm{GW}_d^{(\lambda s')}(\tau').
	\]
	Next, we compute
	\begin{align*}
		\Sigma(1,2) &= \sum_{\tau \in \mathcal{X}_d} \Einddplus\Big[  (\widetilde{N}_{\tau})^4\Big] \binom{4}{2} \sum_{\substack{\tau' : (\tau, \tau') \in \mathcal{S}\\ \theta': (\tau, \theta') \in \mathcal{S}}} \Einddplus\Big[ (\widetilde{N}'_{\tau'})^2\Big] \Einddplus\Big[ (\widetilde{N}'_{\theta'})^2\Big] \underbrace{\indicator{ \tau' \neq \theta'}}_{\leq 1}\\
		&\leq 3\sum_{\tau \in \mathcal{X}_d} \Big(3(\lambda s)^2 \mathrm{GW}_d^{(\lambda s)}(\tau) ^2 + \lambda s \mathrm{GW}_d^{(\lambda s)}(\tau)\Big)\\
		&\qquad \qquad  \times\sum_{\substack{\tau' : (\tau, \tau') \in \mathcal{S}\\ \theta': (\tau, \theta') \in \mathcal{S}}} (\lambda s')^2\mathrm{GW}_d^{(\lambda s')}(\tau')\mathrm{GW}_d^{(\lambda s')}(\theta')\\
		&= 9 (\lambda^2 s s')^2\sum_\tau  \mathrm{GW}_d^{(\lambda s)}(\tau) ^2 \sum_{\substack{\tau' : (\tau, \tau') \in \mathcal{S}\\ \theta': (\tau, \theta') \in \mathcal{S}}} \mathrm{GW}_d^{(\lambda s')}(\tau')\mathrm{GW}_d^{(\lambda s')}(\theta')\\
		&\quad \quad  +  3 \lambda^3 s (s')^2 \sum_{(\tau, \tau') \in \mathcal{S}} \mathrm{GW}_d^{(\lambda s)}(\tau)  \mathrm{GW}_d^{(\lambda s')}(\tau') \sum_{\theta': (\tau, \theta') \in \mathcal{S}} \mathrm{GW}_d^{(\lambda s')}(\theta').
	\end{align*}
	Using $\sum_i p_i^2 \leq (\sum_i p_i)$ again, we obtain
	\begin{align*}
		&\sum_\tau  \mathrm{GW}_d^{(\lambda s)}(\tau) ^2 \sum_{\substack{\tau' : (\tau, \tau') \in \mathcal{S}\\ \theta': (\tau, \theta') \in \mathcal{S}}} \mathrm{GW}_d^{(\lambda s')}(\tau')\mathrm{GW}_d^{(\lambda s')}(\theta') \\
		& = \sum_\tau  \sum_{\tau' : (\tau, \tau') \in \mathcal{S}} \mathrm{GW}_d^{(\lambda s)}(\tau) \mathrm{GW}_d^{(\lambda s')}(\tau')  \sum_{\theta' : (\tau, \theta') \in \mathcal{S}} \mathrm{GW}_d^{(\lambda s)}(\tau) \mathrm{GW}_d^{(\lambda s')}(\theta')\\
		&= \sum_\tau  \Big(\sum_{\tau' : (\tau, \tau') \in \mathcal{S}} \mathrm{GW}_d^{(\lambda s)}(\tau) \mathrm{GW}_d^{(\lambda s')}(\tau') \Big)^2 \\
		&\leq  \Big(\sum_{(\tau, \tau') \in \mathcal{S}}  \mathrm{GW}_d^{(\lambda s)}(\tau) \mathrm{GW}_d^{(\lambda s')}(\tau') \Big)^2 =\Pind_d(\mathcal{S})^2.
	\end{align*}
	Plugging this into our earlier expression for $\Sigma(1, 2)$ and using $s' \leq 1$ as well as the inequality $\sum_{\theta': (\tau, \theta') \in \mathcal{S}} \mathrm{GW}_d^{(\lambda s')}(\theta') \leq 1$, one obtains
	\[
	\Sigma(1, 2) \leq 9 (\lambda^2 s s')^2\Pind_d(\mathcal{S})^2 + 3 \lambda^3 s s' \, \Pind_d(\mathcal{S})
	\]
	Since this last term is invariant to swapping $s$ and $s'$, an analog computation yields the same upper bound for $\Sigma(2, 1)$. 
	
	The final term to examine is $\Sigma(2,2)$. Since every combination of $\tau_i$'s we sum over fulfills $\{\tau_i \suchthat i \in [4]\} = \{\tau, \theta\}$ for $\tau \neq \theta$, there are $ \binom{4}{2} = 3$ choices for $(\tau, \theta)$. The same thing holds for the summation over $\tau'_i$. Consequently,
	\begin{align*}
		&\Sigma(2, 2) \\
		&= \binom{4}{2} \sum_{\substack{\tau, \theta \in \mathcal{X}_d,\\ \tau \neq \theta}} \Einddplus\Big[  (\widetilde{N}_{\tau})^2\Big] \Einddplus\Big[  (\widetilde{N}_{\theta})^2\Big] \binom{4}{2}\sum_{\substack{\tau' : (\tau, \tau') \in \mathcal{S}\\ \theta': (\tau, \theta') \in \mathcal{S},\\ \tau' \neq \theta'}} \Einddplus\Big[ (\widetilde{N}'_{\tau'})^2\Big] \Einddplus\Big[ (\widetilde{N}'_{\theta'})^2\Big] \\
		&\leq 9 \sum_{\substack{\tau, \theta \in \mathcal{X}_d}} 	(\lambda s)^2\mathrm{GW}_d^{(\lambda s)}(\tau)\mathrm{GW}_d^{(\lambda s)}(\theta) \sum_{\substack{\tau' : (\tau, \tau') \in \mathcal{S}\\ \theta': (\tau, \theta') \in \mathcal{S}}} 	(\lambda s')^2\mathrm{GW}_d^{(\lambda s')}(\tau')\mathrm{GW}_d^{(\lambda s')}(\theta') \\
		&= 9 \, (\lambda^2 s s')^2 \sum_{(\tau, \tau') \in \mathcal{S}}  \mathrm{GW}_d^{(\lambda s)}(\tau) \mathrm{GW}_d^{(\lambda s')}(\tau') \sum_{(\theta, \theta') \in \mathcal{S}}  \mathrm{GW}_d^{(\lambda s')}(\theta'     \mathrm{GW}_d^{(\lambda s)}(\theta)\\
		&= 9 \, (\lambda^2 s s')^2 \,\Pind_d(\mathcal{S})^2. 
	\end{align*}
	Putting all things together, we obtain the desired inequality:
	\[
	\Einddplus\Big[Z^4 \Big]  = \sum_{r,m = 1}^{2}\Sigma(r, m) \leq 36 \, (\lambda^2 s s')^2 \,\Pind_d(\mathcal{S})^2 + 13 \lambda^3 s s' \, \Pind_d(\mathcal{S})
	\]

	\paragraph{Proof of (d).} Reusing the notations from the proof of (b), we can write
	\begin{align*}
		&\Ecorrdplus\Big[Z^2\Big] \\
		& =\hspace{-8pt} \sum_{\substack{(\tau, \tau') \in \mathcal{S},\\ (\theta, \theta') \in \mathcal{S}}}  \hspace{-8pt}\Ecorrdplus\Big[\Big( \widetilde\Delta_\tau + \textstyle \sum_{\beta'} \widetilde\Gamma_{\tau, \beta'}\Big) \Big(\widetilde \Delta'_{\tau'} + \textstyle\sum_{\beta} \widetilde \Gamma_{\beta, \tau'}\Big)\Big( \widetilde\Delta_\theta + \textstyle\sum_{\gamma'} \widetilde\Gamma_{\theta, \gamma'}\Big) \Big(\widetilde \Delta'_{\theta'} + \textstyle \sum_{\gamma} \widetilde \Gamma_{\gamma, \theta'}\Big)\Big].
	\end{align*}
	The product inside the expectation expands into a polynomial in the variables $( \widetilde \Delta, \widetilde \Delta', \Gamma )$ indexed accordingly by trees. Since these random variables are independent and centered, the only monomes not being canceled by the expectation are of the following forms:
	\begin{itemize}
		\item $\widetilde{\Delta}_\tau^2 \Big(\widetilde{\Delta}'_{\tau'}\Big)^2$; denote the sum over these monomes as $\Sigma(2, 2, 0)$,
		\item $\widetilde{\Delta}_\tau^2 \Big(\textstyle\sum_{\beta} \widetilde \Gamma_{\beta, \tau'}\Big)^2$ ; denote the sum over these monomes as $\Sigma(2, 0, 2)$,
		\item $\Big(\widetilde{\Delta}'_{\tau'}\Big)^2 \Big(\textstyle\sum_{\gamma'} \widetilde\Gamma_{\tau, \gamma'}\Big)^2$;  denote the sum over these monomes as $\Sigma(0, 2, 2)$,
		\item $\Big(\textstyle\sum_{\beta} \widetilde \Gamma_{\beta, \tau'}\Big)^2 \Big(\textstyle\sum_{\gamma'} \widetilde\Gamma_{\tau, \gamma'}\Big)^2$; denote the sum over these monomes as $\Sigma(0, 0, 4)$.
	\end{itemize}
	As in the proof of (c), we compute an upper bound of these terms one by one, starting with
	\begin{align*}
		\Sigma(2,2,0) & = \sum_{(\tau, \tau') \in \mathcal{S}} \Ecorrdplus\Big[\widetilde{\Delta}_\tau^2 \Big] \Ecorrdplus\Big[ \Big(\widetilde{\Delta}'_{\tau'}\Big)^2\Big] \\
		&= \lambda^2 ss' (1-s) (1-s') \underbrace{\sum_{(\tau, \tau') \in \mathcal{S}} \mathrm{GW}_d^{(\lambda s)}(\tau) \, \mathrm{GW}_d^{(\lambda s')}(\tau')}_{=\Pind_d(\mathcal{S})}.
	\end{align*}
	Next,
	\begin{align*}
		\Sigma(2,0,2) & = \sum_{(\tau, \tau') \in \mathcal{S}} \Ecorrdplus\Big[\widetilde{\Delta}_\tau^2 \Big] \Ecorrdplus\Big[ \Big(\textstyle\sum_{\beta} \widetilde \Gamma_{\beta, \tau'}\Big)^2\Big] =  \lambda^2 s^2 s' (1-s') \,\Pind_d(\mathcal{S})
	\end{align*}
	and similarly $\Sigma(0, 2, 2) = \lambda^2 (s')^2 s (1-s) \,\Pind_d(\mathcal{S})$. The more involved term is
	\begin{align*}
		&\Sigma(0,0,4) = \sum_{\substack{(\tau, \tau') \in \mathcal{S},\\ (\theta, \theta') \in \mathcal{S} }} \, \, \sum_{\substack{ \beta, \beta' \in \mathcal{X}_d, \\ \gamma, \gamma' \in \mathcal{X}_d }} \Ecorrdplus\Big[  \widetilde \Gamma_{\tau, \beta'} \, \widetilde \Gamma_{\beta, \tau'}\, \widetilde \Gamma_{\theta, \gamma'}\, \widetilde \Gamma_{\gamma, \theta'}\Big]\\
		&= \sum_{ (\tau, \tau') \in \mathcal{S} } \Ecorrdplus[  \widetilde \Gamma_{\tau, \tau'}^4 ] \, +\,  \sum_{ (\tau, \tau') \in \mathcal{S} }\Ecorrdplus[  \widetilde \Gamma_{\tau, \tau'}^2 ] \sum_{\substack{(\theta, \theta') \in \mathcal{S},\\ (\theta, \theta') \neq (\tau, \tau')}}  \Ecorrdplus[  \widetilde \Gamma_{\theta, \theta'}^2 ]\\
		& \quad+  \sum_{\substack{(\tau, \tau') \in \mathcal{S} \\
				\beta, \beta' \in \mathcal{X}_d, \\ (\beta, \tau') \neq (\tau, \beta')}}   \Ecorrdplus[  \widetilde \Gamma_{\tau, \beta'}^2 ]  \Ecorrdplus[  \widetilde \Gamma_{\beta, \tau'}^2 ]  +\sum_{\substack{(\tau, \tau') \in \mathcal{S},\\ (\theta, \theta') \in \mathcal{S}, \\
				(\tau, \theta') \neq (\theta, \tau')}} \Ecorrdplus[  \widetilde \Gamma_{\tau, \theta'}^2 ] \Ecorrdplus[  \widetilde \Gamma_{\theta, \tau'}^2 ].
	\end{align*}
	Recalling $ \Gamma_{\tau, \tau'} \sim \mathrm{Poi}\Big( \lambda s s' \, \Pcorrd(\tau, \tau') \Big) $, we compute an upper bound the summands one by one. First, one has
	\begin{align*}
		\sum_{ (\tau, \tau') \in \mathcal{S} } \Ecorrdplus[  \widetilde \Gamma_{\tau, \tau'}^4 ] &= \sum_{ (\tau, \tau') \in \mathcal{S} } 3 (\lambda ss')^2 \Pcorrd(\tau, \tau') ^2 + \lambda s s' \, \Pcorrd(\tau, \tau')\\
		&= (\lambda s s') \, \Pcorrd(\mathcal{S}) +  3 (\lambda ss')^2  \sum_{ (\tau, \tau') \in \mathcal{S} }\Pcorrd(\tau, \tau') ^2,
	\end{align*}
	and 
	\begin{align*}
		&\sum_{ (\tau, \tau') \in \mathcal{S} }\Ecorrdplus[  \widetilde \Gamma_{\tau, \tau'}^2 ] \sum_{\substack{(\theta, \theta') \in \mathcal{S},\\ (\theta, \theta') \neq (\tau, \tau')}}  \Ecorrdplus[  \widetilde \Gamma_{\theta, \theta'}^2 ]\\
		&\qquad \qquad\qquad= (\lambda s s')^2\sum_{ (\tau, \tau') \in \mathcal{S}} \Pcorrd(\tau, \tau') \sum_{\substack{(\theta, \theta') \in \mathcal{S},\\ (\theta, \theta') \neq (\tau, \tau')}}  \Pcorrd(\theta, \theta')\\
		&\qquad \qquad\qquad=  (\lambda s s')^2 \Big(\Pcorrd(\mathcal{S})^2  - \sum_{ (\tau, \tau') \in \mathcal{S}} \Pcorrd(\tau, \tau')^2\Big).
	\end{align*}
	Next,
	\begin{align*}
		\sum_{ (\tau, \tau') \in \mathcal{S} }& \,  \, \sum_{\beta, \beta' \in \mathcal{X}_d}  \Ecorrdplus[  \widetilde \Gamma_{\tau, \beta'}^2 ]  \Ecorrdplus[  \widetilde \Gamma_{\beta, \tau'}^2 ]\,  \indicator{(\beta, \tau') \neq (\tau, \beta')} \\
		&= (\lambda ss')^2 \sum_{ (\tau, \tau') \in \mathcal{S}} \sum_{\beta \in \mathcal{X}_d}\Pcorr_d(\beta, \tau') \sum_{\beta' \in \mathcal{X}_d}\Pcorr_d(\tau, \beta') \indicator{(\beta, \tau') \neq (\tau, \beta')}\\
		&= (\lambda ss')^2 \sum_{ (\tau, \tau') \in \mathcal{S}} \sum_{\beta \in \mathcal{X}_d}\Pcorr_d(\beta, \tau') \Big( -\Pcorr_d(\beta, \tau') + \textstyle\sum_{\beta'}\Pcorr_d(\tau, \beta') \Big)\\
		&= - (\lambda ss')^2 \sum_{ (\tau, \tau') \in \mathcal{S}} \sum_{\beta \in \mathcal{X}_d}\Pcorr_d(\beta, \tau')^2\\
		& \qquad \qquad \qquad \qquad + (\lambda ss')^2 \sum_{ (\tau, \tau') \in \mathcal{S}} \sum_{\beta \in \mathcal{X}_d}\Pcorr_d(\beta, \tau') \mathrm{GW}^{(\lambda s')}_d(\tau')\\
		&\leq -  (\lambda ss')^2 \sum_{ (\beta, \tau') \in \mathcal{S}}\Pcorrd(\beta, \tau')^2 + (\lambda ss')^2 \sum_{ (\tau, \tau') \in \mathcal{S}} \mathrm{GW}^{(\lambda s')}_d(\tau')  \, ^{(\lambda s)}_d(\tau)  \\
		&= (\lambda ss')^2 \Big(\Pind_d(\mathcal{S}) - \sum_{ (\tau, \tau') \in \mathcal{S}} \Pcorrd(\tau, \tau')^2\Big)
	\end{align*}
	where we have used $\sum_{ (\tau, \tau') \in \mathcal{S}} \sum_{\beta \in \mathcal{X}_d}\Pcorr_d(\beta, \tau')^2 \geq \sum_{ (\beta, \tau') \in \mathcal{S}}\Pcorr_d(\beta, \tau')^2 $. This inequality can be recycled for the last summand: Using $\mathcal{S} \subset \mathcal{X}_d^2$ for a first upper bound, followed by the above inequality where $\beta$ is replaced by $\theta$, one obtains
	\begin{align*}
		\sum_{ (\tau, \tau') \in \mathcal{S} } \, \sum_{(\theta, \theta') \in \mathcal{S}} &\Ecorrdplus[  \widetilde \Gamma_{\tau, \theta'}^2 ]  \Ecorrdplus[  \widetilde \Gamma_{\theta, \tau'}^2 ]\,  \indicator{(\tau, \theta') \neq (\theta, \tau')} \\
		&\leq \sum_{ (\tau, \tau') \in \mathcal{S} } \,  \, \sum_{\theta, \theta' \in \mathcal{X}_d}  \Ecorrdplus[  \widetilde \Gamma_{\tau, \theta'}^2 ]  \Ecorrdplus[  \widetilde \Gamma_{\theta, \tau'}^2 ]\,  \indicator{(\theta, \tau') \neq (\tau, \theta')} \\
		&\leq	(\lambda ss')^2 \Big(\Pind_d(\mathcal{S}) - \sum_{ (\tau, \tau') \in \mathcal{S}} \Pcorrd(\tau, \tau')^2\Big)
	\end{align*}
	Putting all the information on the summands together, we obtain
	\begin{align*}
		\Sigma(0,0,4) & \leq  3 (\lambda ss')^2  \sum_{ (\tau, \tau') \in \mathcal{S} }\Pcorrd(\tau, \tau') ^2  + (\lambda s s')^2 \Big(\Pcorrd(\mathcal{S})^2  - \hspace{-10pt} \sum_{ (\tau, \tau') \in \mathcal{S}} \hspace{-8pt} \Pcorrd(\tau, \tau')^2\Big)\\
		&\quad \quad \quad + 2 (\lambda ss')^2 \Big(\Pind_d(\mathcal{S}) - \sum_{ (\tau, \tau') \in \mathcal{S}} \Pcorrd(\tau, \tau')^2\Big) + (\lambda s s') \, \Pcorrd(\mathcal{S}) \\
		&= (\lambda ss')^2 \, \Pcorrd(\mathcal{S})^2 + 2 (\lambda ss')^2\, \Pind_d(\mathcal{S}) +  (\lambda s s') \, \Pcorrd(\mathcal{S})\\
		&= 2 (\lambda ss')^2\, \Pind_d(\mathcal{S}) + \Ecorrdplus[Z]^2 +  \Ecorrdplus[Z]
	\end{align*}
	where the last equality follows from (b). This lets us conclude
	\begin{align*}
		\Ecorrdplus\Big[Z^2\Big] &= \Sigma(2,2,0) + \Sigma(2,0,2) + \Sigma(0,2,2) + \Sigma(0,0,4)\\
		&\leq \lambda^2 s s'\Big(1 + ss' \Big) \,\Pind_d(\mathcal{S})  + \Ecorrdplus[Z]^2 +  \Ecorrdplus[Z]
	\end{align*}
	where the factor multiplying $\Pind_d(\mathcal{S}) $ is obtained from
	\[
	\lambda ^2 s s' (1-s) (1-s') + \lambda^2 (s')^2 s (1-s) + \lambda^2 s^2 s' (1-s') + 2 \lambda^2 s^2(s')^2 = \lambda^2 s s'\Big(1 + ss' \Big). 
	\]
	and (d) follows from
	\[
	\mathrm{Var}_{d+1}^\text{corr}\Big[Z\Big]  = \Ecorrdplus\Big[Z^2\Big] - \Ecorrdplus\Big[Z\Big]^2 \leq \lambda^2 s s'\Big(1 + ss' \Big) \,\Pind_d(\mathcal{S})  +  \Ecorrdplus[Z].
	\]
\end{proof}


\begin{ack}
	We thank Luca Ganassali for insightful and enriching discussions, which helped us clarify the most intricate details. We also express our gratitude to the two anonymous reviewers for their very detailed and knowledgeable feedback. 
\end{ack}

\begin{funding}
	This work was partially supported by the French government under management of
	Agence Nationale de la Recherche as part of the “Investissements d’avenir” program, reference ANR19-P3IA-0001 (PRAIRIE 3IA Institute).
\end{funding}


\bibliographystyle{emss}  
\bibliography{references}  

\end{document}